\newcommand{\hc}{\mbox{h.c.}}
\let\a=\alpha     \let\g=\gamma     \let\d=\delta     \let\e=\varepsilon
        \let\k=\kappa     \let\l=\lambda
    \let\n=\nu      \let\x=\xi                
\let\s=\sigma \let\t=\tau         \let\ph=\varphi   \let\c=\chi
 \let\D=\Delta   \let\Th=\Theta        \let\X=\Xi
\def\cE{{\cal E}}\def\cV{{\cal V}}
\def\cC{{\cal C}}\def\cF{{\cal F}}\def\cH{{\cal H}}
\def\cN{{\cal N}}
\def\cL{{\cal L}}\def\cJ{{\cal J}}
\def\cD{{\cal D}}\def\cA{{\cal A}}\def\cG{{\cal G}}
\def\cO{{\cal O}}\def\cK{{\cal K}}\def\cU{{\cal U}}
 \def\bP{{\bf P}}
  \def\v0{{\vec 0}}
\def\bal{{\bar \l}}
\def\tl#1{{\tilde{#1}}}
\newcommand{\dbtilde}[1]{\hat #1}
\def\bR{\mathbb{R}}
\def\cU{\mathcal{U}}
\def\cV{\mathcal{V}}
\def\cF{\mathcal{F}}
\def\cG{\mathcal{G}}
\def\cL{\mathcal{L}}
\def\cN{\mathcal{N}}
\def\cE{\mathcal{E}}
\def\cK{\mathcal{K}}
\def\cH{\mathcal{H}}
\def\ph{\varphi}
\def\NNN{\mathbb{N}}  
\def\ZZZ{\mathbb{Z}} 
\def\RRR{\mathbb{R}}
\def\bN{\mathbb{N}}  
\def\bZ{\mathbb{Z}} 
\def\bR{\mathbb{R}} 
\def\bP{\mathbb{P}}
\def\indic{\hbox{\raise-2pt \hbox{\indbf 1}}}
\let\io=\infty
\def\*{{\hfill\break\null\hfill\break}}
\def\la{{\langle}}
\def\ra{{\rangle}}
\def\norm#1{{\left|\hskip-.05em\left|#1\right|\hskip-.05em\right|}}
\def\tende#1{\,\vtop{\ialign{##\crcr\rightarrowfill\crcr
			\noalign{\kern-1pt\nointerlineskip}
			\hskip3.pt${\scriptstyle #1}$\hskip3.pt\crcr}}\,}
\def\otto{\,{\kern-1.truept\leftarrow\kern-5.truept\to\kern-1.truept}\,}
\def\wt#1{\widetilde{#1}}
\def\sqt[#1]#2{\root #1\of {#2}}
\def\hc{{\rm h.c.}\,}
\def\tr{{\rm tr}}
\def\wt{\widetilde}
\def\wt{\widetilde}
\def\gxt{\gamma_{x}}
\def\sxt{\sigma_{x}}
\def\be{\begin{equation}}
	\def\ee{\end{equation}}
\def\bea{\begin{eqnarray}}\def\eea{\end{eqnarray}}
\def\bean{\begin{eqnarray*}}\def\eean{\end{eqnarray*}}
\def\bfr{\begin{flushright}}\def\efr{\end{flushright}}
\def\bc{\begin{center}}\def\ec{\end{center}}
\def\bal{\begin{align}} 
	\def\eal{\end{align}}
\def\spl#1\spl{\[ \begin{split}#1\end{split} \]}
\def\bd{\begin{description}}\def\ed{\end{description}}
\def\Halmos{\hfill\vrule height10pt width4pt depth2pt \par\hbox to \hsize{}}
\newtheorem{theorem}{Theorem}[section]
\newtheorem{prop}[theorem]{Proposition}
\newtheorem{lemma}[theorem]{Lemma} 
\theoremstyle{remark}
\numberwithin{equation}{section}
\def \aa{{\mathfrak a}}
\newcommand{\wln}{w_{N,\ell}}
\newcommand{\wlnxy}{w_{N,\ell}(x-y)}
\newcommand{\fln}{f_{N,\ell}}
\newcommand{\wmnxy}{w_{N,m}(x-y)}
\newcommand{\fmnxy}{f_{N,m}(x-y)}
\newcommand{\VN}{V_N}
\newcommand{\VNxy}{V_N(x-y)}
\newcommand{\VNxs}{V_N(x-s)}
\newcommand{\VNys}{V_N(y-s)}
\newcommand{\VNrs}{V_N(r-s)}
\newcommand{\axx}{a_x}
\newcommand{\axxs}{a_x^*}
\newcommand{\ayy}{a_y}
\newcommand{\ayys}{a_y^*}
\newcommand{\azz}{a_z}
\newcommand{\azzs}{a_z^*}
\newcommand{\arr}{a_r}
\newcommand{\arrs}{a_r^*}
\newcommand{\ass}{a_s}
\newcommand{\asss}{a_s^*}
\newcommand{\bxx}{b_x}
\newcommand{\bxxs}{b_x^*}
\newcommand{\byys}{b_y^*}
\newcommand{\bzz}{b_z}
\newcommand{\bzzs}{b_z^*}
\newcommand{\brr}{b_r}
\newcommand{\brrs}{b_r^*}
\newcommand{\bss}{b_s}
\newcommand{\bsss}{b_s^*}
\newcommand{\Bxx}{B_x}
\newcommand{\Brr}{B_r}
\newcommand{\Brrs}{B_r^*}
\newcommand{\bgx}{b(\g_x)}
\newcommand{\bgxs}{b^*(\g_x)}
\newcommand{\bsx}{b(\s_x)}
\newcommand{\bsxs}{b^*(\s_x)}
\newcommand{\bgys}{b^*(\g_y)}
\newcommand{\bsy}{b(\s_y)}
\newcommand{\agx}{a(\g_x)}
\newcommand{\asxs}{a^*(\s_x)}
\newcommand{\agr}{a(\g_r)}
\newcommand{\agrs}{a^*(\g_r)}
\newcommand{\asr}{a(\s_r)}
\newcommand{\asrs}{a^*(\s_r)}
\newcommand{\bgr}{b(\g_r)}
\newcommand{\bgrs}{b^*(\g_r)}
\newcommand{\bsr}{b(\s_r)}
\newcommand{\bxxtilde}{\tilde{b}_x}
\newcommand{\bxxstilde}{\tilde{b}_x^*}
\newcommand{\byytilde}{\tilde{b}_y}
\newcommand{\byystilde}{\tilde{b}_y^*}
\newcommand{\bgxtilde}{\tilde{b}(\g_x)}
\newcommand{\bsxstilde}{\tilde{b}^*(\s_x)}
\newcommand{\Bxxtilde}{\tilde{B}_x}
\newcommand{\bgu}{b(\g_u)}
\newcommand{\bgus}{b^*(\g_u)}
\newcommand{\bsu}{b(\s_u)}
\newcommand{\bsus}{b^*(\s_u)}
\newcommand{\dxxs}{d_x^*}
\newcommand{\dyys}{d_y^*}
\newcommand{\intxy}{\int dxdy}
\newcommand{\intrs}{\int drds}
\newcommand{\cVN}{\cV_N}
\newcommand{\cVh}{\cV_N^{1/2}}
\newcommand{\cHN}{\cH_N}
\newcommand{\cHNplusoneh}{(\cH_N+1)^{1/2}}
\newcommand{\cNplusoneto}[1]{(\cN + 1)^{#1}}
\newcommand{\cNplusone}{(\cN + 1)}
\newcommand{\cNplusoneh}{(\cN + 1)^{1/2}}
\newcommand{\cKh}{\cK^{1/2}}
\newcommand{\cHNplusNh}{(\cHN + \cN +1)^{1/2}}
\newcommand{\cHNplusN}{(\cHN + \cN +1)}
\newcommand{\ptt}{\wt{\ph}_t}
\def\ptx{\wt{\ph}_t(x)}
\def\pty{\wt{\ph}_t(y)}
\def\ptr{\wt{\ph}_t(r)}
\def\pts{\wt{\ph}_t(s)}
\def\dptt{\dot{\wt{\ph}}_t}
\def\dptx{\dptt (x)}
\def\dpty{\dptt (y)}
\def\gyt{\gamma_{y}}
\def\syt{\sigma_{y}}
\newcommand{\Ftru}{\cF^{\leq N}}
\newcommand{\Fperpt}{\cF^{\leq N}_{\perp\ptt}}
\newcommand{\cutoffNlessM}{\mathbbm{1}^{\leq M}}
\newcommand{\expt}{e^{c\vert t \vert}}
\newcommand{\expexpt}{e^{c\expt}}
\newcommand{\abs}[1]{\lvert #1 \rvert}
\renewcommand{\norm}[1]{\| #1 \|}
\title{Quantum Fluctuations of Many-Body Dynamics  \\ around the Gross-Pitaevskii Equation}
\author{
Cristina Caraci$^{*,}$\footnote{Electronic mail: cristina.caraci@math.uzh.ch}\;, 
Jakob Oldenburg$^{*,}$\footnote{Electronic mail: jakob.oldenburg@math.uzh.ch}\;, and Benjamin Schlein$^{*,}$\footnote{Electronic mail: benjamin.schlein@math.uzh.ch} \\[0.2cm]
{\footnotesize $^{*}$Institute of Mathematics, University of Zurich, Winterthurerstrasse 190, 8057 Zurich.}\\
}
\date{\today}
\begin{document}
	\maketitle

\begin{abstract}
We consider the evolution of a gas of $N$ bosons in the three-dimensional Gross-Pitaevskii regime (in which particles are initially trapped in a volume of order one and interact through a repulsive potential with scattering length of the order $1/N$). We construct a quasi-free approximation of the many-body dynamics, whose distance to the solution of the Schr\"odinger equation converges to zero, as $N \to \infty$, in the $L^2 (\bR^{3N})$-norm. To achieve this goal, we let the Bose-Einstein condensate evolve according to a time-dependent Gross-Pitaevskii equation. After factoring out the microscopic correlation structure, the evolution of the orthogonal excitations of the condensate is governed instead by a Bogoliubov dynamics, with a time-dependent generator quadratic in creation and annihilation operators. As an application, we show a central limit theorem for fluctuations of bounded observables around their expectation with respect to the Gross-Pitaevskii dynamics. 
\end{abstract}

\section{Introduction}

In the Gross-Pitaevskii regime, we consider systems of $N$ bosons confined by an external field in a volume of order one (after appropriate choice of the length unit) and interacting through a repulsive potential with small effective range of the order $1/N$. The corresponding Hamilton operator is given by 
\begin{equation}\label{eq:trapped} H_N^\text{trap} = \sum_{j=1}^N \left[ -\Delta_{x_j} + V_\text{ext} (x_j) \right]   + \sum_{i<j}^N N^2 V (N (x_i - x_j)) \end{equation} 
with $V_\text{ext} (x) \to \infty$ as $|x| \to \infty$ and $V \geq 0$ compactly supported. According to bosonic statistics, $H_N^\text{trap}$ acts as a self-adjoint operator on $L^2_s (\bR^{3N})$, the subspace of $L^2 (\bR^{3N})$ consisting of wave functions that are symmetric with respect to permutations of the $N$ particles. In \cite{LSY,NRS} it was proven that, to leading order in $N$, the ground state energy $E_N^\text{trap}$ of (\ref{eq:trapped}) satisfies  
\begin{equation}\label{eq:conv1}  \lim_{N \to \infty} \frac{E_N^\text{trap}}{N} = \min_{\ph \in L^2 (\bR^3) : \| \ph \|_2 =1} \cE_\text{GP} (\ph) \end{equation} 
with the Gross-Pitaevskii energy functional 
\begin{equation}\label{eq:EGP} \cE_\text{GP} (\ph) = \int \left[ |\nabla \ph |^2 + V_\text{ext} |\ph|^2 + 4 \pi \frak{a} |\ph|^4 \right] dx\,. \end{equation} 
Here $\frak{a} > 0$ denotes the scattering length of the interaction potential $V$, which is defined through the solution of the zero-energy scattering equation 
\[ \left[ -\Delta + \frac{1}{2} V \right] f = 0 \]
with the boundary condition $f(x) \to 1$, as $|x| \to \infty$, by requiring that $f(x) = 1 - \frak{a}/ |x|$ outside the support of $V$.

Let $\ph_\text{GP} \in L^2 (\bR^3)$ denote the (unique, up to a phase) normalized minimizer of (\ref{eq:EGP}). It turns out that the ground state vector of (\ref{eq:trapped}) and, in fact, every sequence of approximate ground states, exhibit complete Bose-Einstein condensation in the one-particle state $\ph_\text{GP}$. In other words, let us consider a normalized sequence $\psi_N \in L^2_s (\bR^{3N})$ satisfying 
\[ \frac{1}{N} \langle \psi_N, H_N^\text{trap} \psi_N \rangle \to \cE_\text{GP} (\ph_\text{GP}) \]
as $N \to \infty$ (ie. $\psi_N$ is a sequence of approximate ground states). Let $\gamma_N$ denote the one-particle reduced density matrix associated with $\psi_N$, which is defined as the non-negative operator on $L^2 (\bR^3)$ with the integral kernel 
\[ \gamma_N (x;y) = \int dx_2 \dots dx_N \, \overline{\psi}_N (x , x_2, \dots , x_N) \psi_N (y, x_2, \dots, x_N) \, , \]
normalized so that $\tr \, \gamma_N = 1$. Then, as first proven in \cite{LS1,LS2,NRS}, 
\begin{equation}\label{eq:BEC} \lim_{N \to \infty} \langle \ph_\text{GP} , \gamma_N \ph_\text{GP} \rangle = 1\,. \end{equation} 
The convergence (\ref{eq:BEC}) implies that, in the states $\psi_N$, the fraction of particles orthogonal to $\ph_\text{GP}$ vanishes, in the limit $N \to \infty$. 

Recently, the estimates (\ref{eq:conv1}), (\ref{eq:BEC}) have been improved in \cite{NNRT,BSS,NT,BSS1} for integrable interaction potentials, through a rigorous version of Bogoliubov theory. In the translation invariant setting, these improvements have been previously achieved in \cite{BBCS4,BBCS3}. This approach determines the ground state energy $E_N^\text{trap}$ of (\ref{eq:trapped}), up to errors vanishing as $N \to \infty$. Moreover, it gives precise information on the low-energy excitation spectrum of (\ref{eq:trapped}) and on the depletion of the Bose-Einstein condensate (in particular, it shows that the rate of convergence in (\ref{eq:BEC}) is proportional to $1/N$) and it provides a norm-approximation for the ground state vector.  It is interesting to remark that, in the Gross-Pitaevskii regime, such precise estimates on the spectrum of the Hamiltonian cannot be obtained restricting the attention on quasi-free states. Instead, it is important to take into account corrections that can be described by the action of a unitary operator on the Fock space of excitations of the condensate, given by the exponential of a cubic expression in creation and annihilation operators. 
For related recent results concerning equilibrium properties of Bose gases in the (translation invariant) Gross-Pitaevskii regime, beyond the Gross-Pitaevskii regime and in the thermodynamic limit, see \cite{HST, ABS, BCaS, BCOPS1,BCOPS2, HHNST, BoS, CCS1, CCS2, FS1,FS2, BaCS}.

Gross-Pitaevskii theory is not only useful to predict the ground state energy of Bose gases described by the Hamilton operator (\ref{eq:trapped}). It can also be used to approximate their time-evolution. From the point of view of physics, it is relevant to study the dynamics of an equilibrium state of (\ref{eq:trapped}), after the external fields are switched off (so that the system is no longer at equilibrium). At (or very close to) zero temperature, it is therefore interesting to study the solution of the time-dependent Schr\"odinger equation 
\begin{equation}\label{eq:schro}  i \partial_t \psi_{N,t} = H_N \psi_{N,t} \end{equation} 
with the translation invariant Hamiltonian (obtained after turning off the trap) 
\begin{equation}\label{eq:HN} 
H_N = \sum_{j=1}^N -\Delta_{x_j} + \sum_{i<j}^N N^2 V(N (x_i - x_j)) 
\end{equation}
for initial data $\psi_{N,0}$ approximating the ground state of (\ref{eq:trapped}). In \cite{ESY1,ESY2,ESY3,ESY4}, it was first proven that the time-evolution $\psi_{N,t}$ of an initial data $\psi_{N,0}$ exhibiting Bose-Einstein condensate in a one-particle state $\ph \in L^2 (\bR^3)$ still exhibits Bose-Einstein condensation, in a new one-particle state $\ph_t$, given by the solution of the nonlinear time-dependent Gross-Pitaevskii equation 
\begin{equation}\label{eq:GPtd}  i\partial_t \ph_t = -\Delta \ph_t+ 8\pi \frak{a} |\ph_t|^2 \ph_t \end{equation} 
with the initial data $\ph_{t=0} = \ph$. More precisely, denoting by $\gamma_{N,t}$ the one-particle reduced density associated with the solution $\psi_{N,t} \in L^2_s (\bR^{3N})$ of the Schr\"odinger equation (\ref{eq:schro}), it turns out that 
\begin{equation}\label{eq:BECt} \lim_{N \to \infty} \langle \ph_t, \gamma_{N,t} \ph_t \rangle = 1 \end{equation} 
for any fixed $t \in \bR$, if (\ref{eq:BECt}) holds true at time $t=0$. Analogous stability results have been later established in \cite{P,BDS}. In \cite{BS}, the convergence (\ref{eq:BECt}) is shown to hold with the optimal rate $1/N$, for every fixed $t \in \bR$. It is easy to check that (\ref{eq:BECt}) implies the convergence $\gamma_{N,t} \to |\ph_t \rangle \langle \ph_t|$ in the trace-class topology (and also the convergence of the $k$-particle reduced density matrix associated with $\psi_{N,t}$ towards the product $|\ph_t \rangle \langle \ph_t|^{\otimes k}$, for every fixed $k \in \bN$). However, (\ref{eq:BECt}) does not provide an approximation for the many-body wave function $\psi_{N,t}$ in the strong $L^2 (\bR^{3N})$ topology. To obtain a norm-approximation, it is not enough to approximate the evolution of the condensate. It is instead crucial to take into account the evolution of its orthogonal excitations. 

Norm-approximations for many-body dynamics have been derived in the mean-field setting, where particles are initially trapped in a volume of order one and interact weakly through a potential whose range is comparable with the size of the trap (so that every particle interacts effectively with all other particles in the system). In this case, as shown in \cite{Hepp,GV,RoS,GMM,GMM2,XC,LNS}, the solution of the many-body Schr\"odinger equation can be approximated, after removing the condensate wave function (whose evolution is described here by the nonlinear Hartree equation, see also \cite{Spohn,EY,BGM,FKP,FKS}), by a unitary dynamics on the Fock space of excitations, with a generator quadratic in creation and annihilation operators, acting as a family of time-dependent Bogoliubov transformations (an approximation to arbitrary precision has been recently obtained in \cite{BPPS}). A similar norm-approximation has been derived in \cite{BCS,BNNS} for the many-body evolution generated by Hamilton operators having the form 
\begin{equation}\label{eq:Hbeta} H^\beta_N = \sum_{j=1}^N -\Delta_{x_j} + \frac{1}{N} \sum_{i<j}^N N^{3\beta} V (N^\beta (x_i - x_j)) \end{equation} 
with $\beta \in (0;1)$, interpolating between the mean-field and the Gross-Pitaevskii scaling (analogous results have been also obtained in \cite{GM,GM1,NN1} for $\beta <1/3$, and in \cite{K,NN}, for $\beta < 1/2$; higher order estimates have been derived, for sufficiently small $\beta > 0$, in \cite{BPaPS}).  To achieve this goal, it was important to combine the unitary dynamics with quadratic generator, describing the evolution of excitations on macroscopic scales, with an additional Bogoliubov transformation generating the correct microscopic correlation structure. 

In the present paper, we prove a norm-approximation for the many-body dynamics generated by the Hamilton operator (\ref{eq:HN}), in the Gross-Pitaevskii regime (ie. with   $\beta =1$ in (\ref{eq:Hbeta})). Compared with the techniques developed in \cite{BCS,BNNS} for $\beta < 1$, there is an important difference in the construction of the approximating wave function. In fact, as already observed in  \cite{BBCS4,NT,BSS1} in the time-independent setting, for $\beta =1$ microscopic correlations cannot be precisely modelled only through a Bogoliubov transformation; they require instead an additional unitary conjugation with a phase cubic in creation and annihilation operators. This makes our analysis significantly more involved.  While the inclusion of the cubic phase is crucial to compare the generators of the full many-body evolution and of the quadratic dynamics (and thus to establish convergence for the corresponding evolutions), at the end it does not substantially change the $L^2 (\bR^{3N})$-norm of the approximation and it can therefore be removed, providing a quasi-free norm-approximation to the many-body evolution, similar to those obtained in \cite{BCS,BNNS} for $\beta \in (0;1)$.

\medskip

{\it Acknowledgment.} We gratefully acknowledge partial support from the Swiss National Science Foundation through the Grant ``Dynamical and energetic properties of Bose-Einstein condensates'', from the European Research Council through the ERC-AdG CLaQS and from the NCCR SwissMAP. C.C. warmly acknowledges the GNFM Gruppo Nazionale per la Fisica Matematica - INDAM.

\section{Setting and Main Results} 

We aim at approximating the solution $\psi_{N,t}$ of the many-body Schr\"odinger equation (\ref{eq:schro}), for a class of initial data  exhibiting complete Bose-Einstein condensation in a normalized one-particle wave function $\ph \in L^2 (\bR^3)$. In view of  (\ref{eq:BEC}), from the point of view of physics it is interesting to choose $\ph$ as the minimizer of a Gross-Pitaevskii functional of the form (\ref{eq:EGP}). Here, we will keep the choice of $\ph$ open, requiring only sufficient regularity. 

First of all, we need to approximate the evolution of the condensate. While (\ref{eq:GPtd}) provides a good approximation at the level of reduced density matrices,  to derive a norm-approximation it is more convenient to consider a slightly modified, $N$-dependent, nonlinear equation. In the modified equation, the  interaction potential appearing in the Hamilton operator (\ref{eq:HN}) is corrected, to take into account correlations among particles. In order to describe correlations, we fix $\ell \in (0;1/2)$ and we consider the ground state solution of the Neumann problem
\begin{equation}\label{eq:scatl} \left[ -\Delta + \frac{1}{2} V \right] f_{\ell} = \lambda_{\ell} f_\ell \end{equation}
on the  ball $|x| \leq N\ell$ (we omit here the $N$-dependence in the notation for $f_\ell$ and for $\lambda_\ell$; notice that $\lambda_\ell$ scales as $N^{-3}$), with the normalization $f_\ell (x) = 1$ for $|x| = N \ell$. We extend $f_\ell$ to $\bR^3$, setting $f_\ell (x) = 1$ for all $|x| > N\ell$ and we also introduce the notation $w_\ell (x) = 1 - f_\ell (x)$. To describe correlations created by the rescaled interaction appearing in (\ref{eq:trapped}) and in  (\ref{eq:HN}), we will use the functions $f_{N,\ell} (x) = f_\ell (Nx)$, $w_{N,\ell} (x) = w_\ell (Nx) = 1- f_{N,\ell} (x)$. By scaling, we observe that 
\be\label{eq:scatlN} \left[ -\Delta + \frac{N^2}{2} V (N \cdot ) \right] f_{N,\ell} = N^2 \lambda_\ell f_{N,\ell} \ee
on the ball $|x| \leq \ell$. Some important properties of $\lambda_\ell, f_\ell, w_\ell$ are collected in Lemma \ref{3.0.sceqlemma} in Appendix \ref{app:eta}.

With $f_{N,\ell}$, we can now define the condensate wave function at time $t \in \bR$ as the solution $\wt\ph_t$ of the modified Gross-Pitaevskii equation 
	 \begin{equation}
		\label{eq:GPmod}
		i\partial_t\wt{\ph}_t = -\D\wt{\ph}_t + (N^3V(N\cdot)f_\ell(N\cdot)*|\wt{\ph}_t|^2)\wt{\ph}_t\,,
	\end{equation}
	with initial data $\wt{\ph}_{t=0}=\ph$. As discussed in Lemma \ref{3.0.sceqlemma}, we have
\[ \left|  \int  N^3 V(N x) f_\ell (Nx) dx - 8\pi \aa   \right|  = \left|  \int  V(x) f_\ell (x) dx - 8\pi \aa   \right| \leq \frac{C \aa^2}{\ell N} \, . \]
It is therefore easy to check that, as $N \to \infty$, $\wt{\ph}_t$ converges to the solution of the limiting Gross-Pitaevskii equation (\ref{eq:GPtd}) (with the same initial data). This convergence is part of the statement of 
Prop. \ref{prop:propertiesphit}, in Appendix \ref{app:eta}, where we also collect some standard properties of the solutions of (\ref{eq:GPtd}) and of (\ref{eq:GPmod}) which will be used throughout the paper.

As explained in the introduction, to obtain a norm-approximation for the time-evolution it is not enough to approximate the evolution of the condensate; we also need to take into account its excitations. To this end, it is convenient to factor out the (time-dependent) condensate wave function introducing, for every $t \in \bR$, the unitary map $U_{N,t} : L^2_s (\bR^{3N}) \to \cF^{\leq N}_{\perp \wt{\ph}_t}$ into the Fock space of excitations
\begin{equation*}\label{eq:FS}
 \cF^{\leq N}_{\perp \wt{\ph}_t} = \bigoplus_{n=0}^N L^2_{\perp \wt{\ph}_t} (\bR^3)^{\otimes_s n} \end{equation*} 
 over the orthogonal complement $L^2_{\perp \wt{\ph}_t} (\bR^3)$ of $\wt{\ph}_t$. The map $U_{N,t}$ is defined so that $U_{N,t} \psi_{N,t} = \{\alpha_{N,t}^{(0)},\dots, \alpha_{N,t}^{(N)}\}$, corresponding to the unique decomposition 
\[\psi_{N,t} = \sum_{n=0}^N\alpha_{N,t}^{(n)}\otimes_s\wt\ph_t^{\otimes(N-n)},\]
where $\alpha_{N,t}^{(n)} \in L^2_{\perp\wt\ph_t}(\bR^3)^{\otimes_sn}$ is symmetric with respect to permutation and orthogonal to $\wt\ph_t$ in each coordinate. Denoting by $a(f), a^*(g)$ the usual creation and annihilation operators, the action of $U_{N,t}$ is characterized (see \cite{LNSS,BS}) by the rules 
	\begin{equation}\label{eq:U-rules}
		\begin{split} 
			U_{N,t} \, a^*(\wt\ph_t) a(\wt\ph_t) \, U_{N,t}^* &= N - \cN   \\  
			U_{N,t} \, a^*(f) a(\wt\ph_t) \, U_{N,t}^* &= a^*(f) \sqrt{N-\cN} =: \sqrt{N} b(f)^* \\		
			U_{N,t} \, a^*(\wt\ph_t) a(g) \, U_{N,t}^* &= \sqrt{N-\cN} \, a(g) =: \sqrt{N} b(g) \\
			U_{N,t} \, a^*(f) a(g) \, U_{N,t}^* &= a^*(f) a(g) \,,   \end{split}
	\end{equation} 
	where $\cN  = \int a^*_x a_x \, dx $ is the number of particles operator, and $b^*(f), b(g)$ are modified creation and annihilation operators satisfying the commutation relations 
	\begin{equation}\label{eq:bcomm} \begin{split} [ b(f), b^*(g) ] &= \Big( 1 - \frac{\cN}{N} \Big) \langle f,g\rangle - \frac{1}{N} a^*(f) a(g), \quad
		 [ b(f), b(g) ] = [b^*(f) , b^*(g)] = 0 \,,
\end{split} \end{equation} 
for all $f, g \in L^2_{\perp\wt\ph_t}(\bR^3)$. Denoting by $b_x, b_x^*, a_x, a_x^*$ the corresponding operator valued distributions, we also find 
	\begin{equation} \label{eq:bcomm2} \begin{split} 
	[b_x, b^*_y ] &= \delta (x-y) \Big( 1- \frac{\cN}{N} \Big) - \frac{1}{N} a^*_y a_x ,  \\
			[b_x, a^*_y a_z] &= \d(x-y) b_z, \,\qquad [b^*_x, a^*_y a_z] = - \d(y-z) b^*_x\,,
	\end{split} \end{equation}
for all $x,y,z \in \bR^3$. 

After factoring out the condensate with the unitary operator $U_{N,t}$, we need to approximate the evolution of its orthogonal excitations in $\cF^{\leq N}_{\perp \wt{\ph}_t}$. Here, we need to distinguish between microscopic excitations, varying on small length scales between $1/N$ and $\ell$ (which is chosen of order one) and macroscopic excitations, varying on scales of order one. Let us first worry about the microscopic excitations, characterising all low-energy states. It is natural to include them on the initial data and to propagate them along the time-evolution. These excitations only depend on time through the time-dependence of the condensate wave function $\wt{\ph}_t$. We are going to describe them through a (generalized) Bogoliubov transformation. 

We define the integral kernel 
	\begin{equation}\label{eq:defk}
		k_{t}(x,y) = -Nw_\ell(N(x-y))\wt{\ph}_t(x)\wt{\ph}_t(y)\,.
	\end{equation}
With Lemma \ref{3.0.sceqlemma}, we find $k_{t} \in L^2(\bR^3\times \bR^3)$ (with bounded norm, uniform in $N$). Hence, $k_t$ defines a Hilbert-Schmidt operator on $L^2 (\bR^3)$, which we will again denote by $k_t$. To obtain a Bogoliubov transformation acting on the Hilbert space $\cF_{\perp\wt\ph_t}^{\leq N}$, defined on the orthogonal complement of the condensate wave function $\wt{\ph}_t$, we set $\wt{q}_t = 1 - |\wt{\ph}_t\rangle\langle \wt{\ph}_t|$ and 
\begin{equation}
		\label{eq:defeta}
		\eta_t = (\wt{q}_t \otimes \wt{q}_t) k_t	\,.
	\end{equation}
We also denote $\mu_t = \eta_t - k_t$. With $\eta_t$, we define the antisymmetric operator
\begin{equation}\label{eq:defB}
	B_t = \frac 12 \int \big[\eta_t(x;y) b_x^*b_y^* -\overline{\eta}_t(x;y)b_xb_y\big] dx dy 
\end{equation}
and we consider the unitary operator $e^{B_t}$ on $\cF^{\leq N}_{\perp \wt{\ph}_t}$. An important consequence of the bound $\| \eta_t \|_2 \leq \| k_t \|_2 \leq C$, uniformly in $N \in \bN$ and $t \in \bR$, is the fact that $e^{B_t}$ does not substantially change the number of excitations. The proof of the following lemma can be found, for example, in \cite[Lemma 3.1]{BS}. 
\begin{lemma}
	\label{lm:gron-B}
	Let $B_t$ be the anti-symmetric operator defined in \eqref{eq:defB}. Then for every $n \in \bZ$ there exists a constant $C>0$ (that depends only on $\|\eta_t\|$) such that 
	\[ e^{-B_t}(\cN+1)^n e^{B_t} \leq C(\cN+1)^n\]
	as an operator inequality on $\cF^{\leq N}$.
\end{lemma}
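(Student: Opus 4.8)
\medskip

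\noindent\textbf{Proof sketch.} The plan is a Grönwall argument in the parameter of the Bogoliubov transformation, using crucially that the antisymmetry of $B_t$ removes the dangerous part of the generator, so that only a commutator — which is far better behaved than $B_t$ itself — must be controlled. Since $e^{B_t}$ leaves $\cF^{\leq N}$ invariant, on that space $\cN\le N$ and all operators below are bounded, so the manipulations need no domain justification; the only point is to obtain constants independent of $N$, which is guaranteed by $\|\eta_t\|\le C$ uniformly. Fix $\xi\in\cF^{\leq N}$ and, for $s\in[0,1]$, set
\[ g(s) = \big\| (\cN+1)^{n/2}\, e^{sB_t}\,\xi \big\|^2 = \big\langle e^{sB_t}\xi,\, (\cN+1)^n\, e^{sB_t}\xi\big\rangle\,, \]
where $(\cN+1)^{n/2}$ is defined through the spectral theorem for any $n\in\bZ$. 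Writing $\phi_s=e^{sB_t}\xi$ and differentiating,
\[ g'(s) = 2\,\Re\,\big\langle (\cN+1)^{n/2} B_t \phi_s,\, (\cN+1)^{n/2}\phi_s\big\rangle = 2\,\Re\,\big\langle \big[(\cN+1)^{n/2}, B_t\big]\phi_s,\, (\cN+1)^{n/2}\phi_s\big\rangle\,, \]
because, substituting $(\cN+1)^{n/2}B_t = [(\cN+1)^{n/2},B_t] + B_t (\cN+1)^{n/2}$, the second term produces $2\Re\langle B_t\psi,\psi\rangle$ with $\psi=(\cN+1)^{n/2}\phi_s$, which vanishes since $B_t^*=-B_t$.

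Consequently, by Cauchy--Schwarz it is enough to establish the operator bound
\[ \big\| \big[(\cN+1)^{n/2}, B_t\big]\,\phi\big\| \leq C_n\,\|\eta_t\|\,\big\|(\cN+1)^{n/2}\phi\big\| \qquad \text{for all } \phi \in \cF^{\leq N}\,, \]
for then $|g'(s)|\le 2C_n\|\eta_t\|\,g(s)$, and Grönwall on $[0,1]$ gives $g(1)\le e^{2C_n\|\eta_t\|}g(0)$, i.e. $\langle\xi, e^{-B_t}(\cN+1)^n e^{B_t}\xi\rangle\le e^{2C_n\|\eta_t\|}\langle\xi,(\cN+1)^n\xi\rangle$, which is the claim (the case $n=0$ being trivial). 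For the commutator bound I would use that $b_x^* b_y^*$ raises, and $b_x b_y$ lowers, the number of excitations by $2$, so that
\[ \big[(\cN+1)^{n/2}, b_x^* b_y^*\big] = b_x^* b_y^*\, D_+(\cN)\,, \qquad \big[(\cN+1)^{n/2}, b_x b_y\big] = b_x b_y\, D_-(\cN)\,, \]
with $D_+(\cN)=(\cN+3)^{n/2}-(\cN+1)^{n/2}$ and $D_-(\cN)=(\cN-1)^{n/2}-(\cN+1)^{n/2}$. By the mean value theorem applied to $t\mapsto t^{n/2}$ on the interval between $\cN+1$ and $\cN\pm1+2$ — which lies in $[1,\infty)$ on the sectors where the relevant $b$-operators do not vanish, so no negative base occurs — one has $|D_\pm(\cN)|\le C_n(\cN+1)^{n/2-1}$ uniformly, for any real exponent. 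Inserting this into $B_t=\tfrac12\int\eta_t b_x^* b_y^*\,dxdy-\hc$, commuting the scalar function $D_\pm(\cN)$ past the pair $b^\#_x b^\#_y$ (which only shifts its argument by a bounded amount), and applying the standard bounds $\|\int\eta\, b_x^* b_y^*\,\chi\|\le C\|\eta\|_2\|(\cN+1)\chi\|$ and $\|\int\eta\, b_x b_y\,\chi\|\le C\|\eta\|_2\|(\cN+1)\chi\|$ (which follow sectorwise from $\|b^\#(f)\chi\|\le\|f\|\,\|(\cN+1)^{1/2}\chi\|$ and the contraction estimate for the symmetric tensor product), the factor $(\cN+1)^{n/2-1}$ coming from $D_\pm$ combines with the factor $(\cN+1)$ produced by the pair of modified operators to reconstruct exactly $(\cN+1)^{n/2}$, which is the asserted bound.

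The main point — and the only genuinely delicate step — is the commutator estimate of the previous paragraph: one has to move the scalar function $D_\pm(\cN)$ of the number operator past the particle-number-shifting operators $b^\#_x b^\#_y$, keep track of the accompanying shifts so that the powers of $\cN+1$ distribute correctly, and check the uniform mean-value bound $|(\cN+a)^{\alpha}-(\cN+b)^{\alpha}|\le C_{a,b,\alpha}(\cN+1)^{\alpha-1}$ also for negative $\alpha$. Everything else — the differentiation of $s\mapsto e^{sB_t}$, the cancellation of the leading term via $B_t^*=-B_t$, the Grönwall step, and (if one prefers to work with forms on a dense domain rather than directly on the bounded operators of $\cF^{\leq N}$) the density argument — is routine; indeed, as noted by the authors, the statement with essentially this proof appears in \cite[Lemma 3.1]{BS}.
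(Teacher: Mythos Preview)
The paper does not give its own proof of this lemma; it simply refers the reader to \cite[Lemma~3.1]{BS}. Your Gr\"onwall argument in the flow parameter $s$ of $e^{sB_t}$, using the antisymmetry of $B_t$ to reduce to a commutator estimate and then the number-shift identities $(\cN+1)^{n/2} b_x^\# b_y^\# = b_x^\# b_y^\# (\cN+1\pm 2)^{n/2}$ together with a mean-value bound, is correct and is exactly the standard proof one finds in that reference --- as you yourself acknowledge in your closing sentence.
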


On states with few excitations (on which the commutation relations (\ref{eq:bcomm}) are almost canonical), $e^{B_t}$ approximately acts as a Bogoliubov transformation. In fact, we can write 
\begin{equation}\label{eq:defd} e^{B_t} b (f) e^{-B_t} = b (\gamma_t (f)) + b^* (\sigma_t (\bar{f})) + d_t (f) \end{equation} 
where we introduced the notation  
\begin{equation}\label{eq:defcoshsinh}
 \gamma_t = \cosh(\eta_t) = \sum_{j=0}^\infty \frac{(\eta_t\bar\eta_t)^j}{(2j)!}\,;\hspace{2cm}  \sigma_t = \sinh(\eta_t)=\sum_{j=0}^\infty \frac{(\eta_t\bar\eta_t)^j\eta_t}{(2j+1)!}
\end{equation} 
and where, from \cite[Lemma 3.4]{BBCS3}, we have the bounds (denoting with $d_x, d_x^*$ the operator valued distributions associated with the operators $d, d^*$) 
\begin{equation}\label{eq:d-bds} 
\begin{split}
\| (\cN+1)^k d^\sharp_t (f) \xi \| &\leq \frac{C}{N} \|f\| \| (\cN+1)^{k+3/2} \xi \| \\
\| (\cN + 1)^{n/2} a_y d_x \xi \| &\leq \frac{C}{N} \, \Big[  (\| \eta_t (x; \cdot) \|_2 \| \eta_t (\cdot ; y) \|_2  + |\eta_t (x;y)|)  \| (\cN +1)^{(n+2)/2}  \xi \|  \\  &\hspace{1cm} +  \| \eta_t (\cdot ; y) \|_2  \| a_x (\cN+1)^{(n+1)/2} \xi \| \\ &\hspace{1cm}  +  \| \eta_t (x; \cdot) \|_2   \|a_y (\cN + 1)^{(n+3)/2} \xi \| +  \| a_x a_y (\cN +1)^{(n+2)/2}  \xi \|   \, \Big] \\
		\| (\cN + 1)^{n/2} d_x d_y \xi \| 
			   &\leq \frac C {N^2} \Big[ (\| \eta_t (x; \cdot) \|_2 \| \eta_t (\cdot ; y) \|_2 +  |\eta_t (x;y)| ) \| (\cN + 1)^{(n+4)/2}  \xi \| \\ &\hspace{1cm} + \| \eta_t (\cdot ; y) \|_2  \| {a}_x (\cN + 1)^{(n+5)/2} \xi \| \\  &\hspace{1cm}  + \| \eta_t (x; \cdot) \|_2  \|{a}_y (\cN + 1)^{(n+5)/2} \xi \|+ \|{a}_x {a}_y (\cN +  1)^{(n+4)/2} \xi \| \; \Big] \,, \end{split} 
 \end{equation}  
 for $\sharp = *, \cdot$, for all $f \in L^2 (\bR^3)$, $x,y \in \bR^3$ and $n \in \bZ$. Some bounds on the operators $\eta_t, \mu_t, \gamma_t, \sigma_t$ are collected in Lemma \ref{lm:propeta}, in Appendix \ref{app:eta}. 

We are interested in the time-evolution of initial data having the form 
\begin{equation}\label{eq:psi-in} \psi_N = U_{N,0}^* e^{B_0} \xi_N, \end{equation} 
under appropriate conditions on the excitation vector $\xi_N \in \cF_{\perp \ph}^{\leq N}$ (we will make assumptions on moments of number of particles and kinetic energy operators, in the state $\xi_N$). This 
allows us to consider initial data which are expected to describe the ground state vector of trapped Hamiltonian like (\ref{eq:trapped}) (see the Remark after Theorem \ref{th:main} for more details). 

We consider the many-body evolution $\psi_{N,t} = e^{-iH_N t} \psi_N$ of (\ref{eq:psi-in}), generated by the translation invariant Hamilton operator (\ref{eq:HN}). At time $t \not = 0$, we expect $\psi_{N,t}$ to have again approximately the form (\ref{eq:psi-in}), but now with $\ph$ replaced by the solution $\wt{\ph}_t$ of (\ref{eq:GPmod}). For this reason, we introduce the excitation vector $\xi_{N,t} \in \cF_{\perp \wt\ph_t}^{\leq N}$ requiring that 
\[ e^{-i H_N t} \psi_N = U_{N,t}^* e^{B_t} \xi_{N,t}\,. \]
In other words, $\xi_{N,t} = \bar{\cU}_N (t;0) \xi_N$, with the fluctuation dynamics 
\begin{equation}\label{eq:fluct-dyn} \bar{\cU}_N (t;s) = e^{-B_t} U_{N,t} e^{-i H_N (t-s)} U_{N,s}^* e^{B_s}\,.  \end{equation} 

While $e^{B_t}$ takes care of the microscopic excitations of the condensate, to derive a norm approximation for 
$e^{-i H_N t} \psi_N$ we still need to take into account the evolution of the macroscopic excitations. To reach this goal, we introduce a unitary dynamics $\cU_{2,N} (t;0)$, whose generator is quadratic in creation and annihilation operators (a time-dependent Bogoliubov transformation), approximating the fluctuation dynamics $\bar{\cU}_N (t;0)$ and providing therefore an approximation of $\xi_{N,t} = \bar{\cU}_N (t;0) \xi_N$. To this end, let us introduce the ``projected'' modified creation and annihilation operators 
\be\label{eq:bxxtildedef}
\bxxtilde^* = b^* (\wt{q}_{x}) = \bxxs - \overline{\ptx} b^* (\ptt) , \qquad \bxxtilde = b(\wt{q}_{x}) = \bxx - \ptx b(\ptt) 
\ee
where $q_x (y) = q_t (y,x) = \delta (x-y) - \overline{\wt{\ph}}_t (x) \wt{\ph}_t (y) $ is the kernel of the projection orthogonal to the condensate wave function. Then, we define the time-dependent self-adjoint operator \begin{equation}\label{eq:generatorapprox} 
	\begin{split}
		\cJ_{2,N}(t)=&\, \cJ_{2,N}^K(t) + \cJ_{2,N}^V(t) \\
		& + N^3 \l_\ell\int dxdy\, \c_\ell(x-y)\wt\ph_t(x)\wt\ph_t(y)\tilde{b}^*_x\tilde{b}^*_y +\hc\\
		&+\frac12\int dxdy\, Nw_{N,\ell}(x-y)[\D\wt\ph_t(x)\wt\ph_t(y) +\wt\ph_t(x)\D\wt\ph_t(y)]\tilde{b}^*_x\tilde{b}^*_y +\hc\\
		&+ \int dxdy\, N\nabla w_{N,\ell}(x-y)[\nabla\wt\ph_t(x)\wt\ph_t(y) - \wt\ph_t(x)\nabla\wt\ph_t(y)]\tilde{b}^*_x\tilde{b}^*_y +\hc\\
		&-\int_0^1ds \int dxdy\, \dot{\eta}_t(x,y) \big[\tilde{b}^*(\g_x^{(s)})\tilde{b}^*(\g_y^{(s)})+\tilde{b}^*(\g_x^{(s)})\tilde{b}(\s_y^{(s)}) \\
		&\hskip4cm+\tilde{b}(\s_x^{(s)})\tilde{b}(\s_y^{(s)})+ \tilde{b}^*(\g_y^{(s)})\tilde{b}(\s_x^{(s)})  
		\big] +\hc \\
		&+\int dx N^3 (V_N \fln) * \abs{\ptt}^2 (x) (\axxs\axx - \bxxstilde\bxxtilde)\\
		\eqqcolon &\;\cK + \int dx N^3 (V_N \fln) * \abs{\ptt}^2 (x) \axxs\axx \\
        &+ \intxy \Big(G_t (x,y) \bxxstilde\byytilde + H_t (x,y)\bxxstilde\byystilde + \overline{H}_t (x,y) \bxxtilde\byytilde\Big)\,,
\end{split}\end{equation}
where 
\begin{equation}\label{eq:J2NK}
\begin{split}
		\cJ_{2,N}^K(t) =&\,\cK 
		+ \int dx \big[\tilde{b}^*(-\D_x p_x)\tilde{b}_x
		+ \frac{1}{2} \tilde{b}^*(\nabla_x p_x) \tilde{b}(\nabla_x p_x) 
		+ \bxxstilde \tl b^*(-\D_x \mu_x)\\
		&+ \tilde{b}^*(-\D_xp_x)\tilde{b}^*(\eta_x) 
		+ \tilde{b}^*(p_x)\tilde{b}^*(-\D_x r_x) 
		+ \tilde{b}_x^* \tilde{b}^*(-\D_x r_x)\\
		&+\frac{1}{2} \tilde{b}^*( \nabla_x \eta_x) \tilde{b}(\nabla_x \eta_x)
		+ \tilde{b}^*(\eta_x)\tilde{b}(-\D_x r_x)
		+ \frac{1}{2} \tilde{b}^*(\nabla_x r_x)\tilde{b}(\nabla_x r_x) +\hc \big]
\end{split}\end{equation}
and 
\begin{equation} \label{eq:cJ2NV}
	\begin{split}
		\cJ_{2,N}^V(t) =&\,
		\frac 12 \int dxdy\, N^3(V_Nf_{N,\ell})(x-y)\wt{\ph}_t(x)\wt{\ph}_t(y)\big[\tilde{b}^*(p_{x})\tilde{b}^*_{y} + \tilde{b}^*(\g_{x})\tilde{b}^*(p_{y})\big] + \hc \\
		&+\frac 12 \int dxdy\, N^3(V_Nf_{N,\ell})(x-y)\wt{\ph}_t(x)\wt{\ph}_t(y)\big[ \tilde{b}^*(\g_{y})\tilde{b}(\s_{x})  \\
		&\hspace{6cm}+ \tilde{b}^*(\g_{x})\tilde{b}(\s_{y})+ \tilde{b}(\s_{x})\tilde{b}(\s_{y}) \big]+ \hc\\
		&+\int dx\, (N^3(V_N f_{N,\ell})*|\wt{\ph}_t|^2)(x)\big(\tilde{b}^*(\gxt)\tilde{b}(\gxt) +\tilde{b}(\sxt)\tilde{b}(\gxt)  \\
		&\hspace{6cm}+ \tilde{b}^*(\gxt)\tilde{b}^*(\sxt)+ \tilde{b}^*(\sxt)\tilde{b}(\sxt) \big)\\
		& +\int dxdy\, N^3 (V_N f_{N,\ell})(x-y)\wt{\ph}_t(x)\overline{\wt{\ph}_t(y)}\big(\tilde{b}^*(\gxt)\tilde{b}(\g_{y}) +\tilde{b}(\sxt)\tilde{b}(\g_{y})\\
		&\hspace{6cm}+ \tilde{b}^*(\gxt)\tilde{b}^*(\s_{y})  +  \tilde{b}^*(\syt)\tilde{b}(\sxt)   \big)\,. \\
\end{split}\end{equation}

Here we used the kernels $\eta_t, \mu_t, \gamma_t, \sigma_t$ introduced in (\ref{eq:defeta}), (\ref{eq:defcoshsinh}) and, additionally, we defined $p_t = \gamma_t - 1$ and $r_t = \sigma_t - \eta_t$. Moreover, for $s \in [0;1]$, $\gamma_t^{(s)}$ and $\sigma_t^{(s)}$ are defined as $\gamma_t , \sigma_t$, but with $\eta_t$ replaced by $s \eta_t$. Furthermore, we used the notation $f_x (y) = f_t (y,x)$ for kernels of operators acting on $L^2(\bR^3)$ (here we drop the label $t$, to keep the notation as light as possible). Also, we set $V_N (x) = V(Nx)$ and we used the notation \[ \cK = \int dx \, \nabla_x a_x^* \nabla_x a_x \] for the kinetic energy operator. The self-adjoint operator $	\cJ_{2,N}(t)$ generates a two-parameter family of unitary transformations $\cU_{2,N}$, satisfying the equation
\be\label{eq:evolutionapprox}
i\partial_t \, \cU_{2,N}(t;s)= \cJ_{2,N}(t) \, \cU_{2,N}(t;s)
\ee
with $\cU_{2,N} (s;s) = 1$ for all $s \in \bR$ (the well-posedness of (\ref{eq:evolutionapprox}) is part of the statement of the next theorem; it will be established in Prop. \ref{prop:wellposedness}). 

In our first main theorem, we show that the Bogoliubov dynamics $\cU_{2,N}$ can be used to describe the evolution of macroscopic excitations of the condensate, providing a norm-approximation for the solution of the many-body Schr\"odinger equation. 
\begin{theorem}\label{th:main}
Let $V\in L^3(\RRR^3)$ be non-negative, spherically symmetric and compactly supported. Let $\varphi\in H^6(\RRR^3)$. Let $\eta_t$ be defined as in (\ref{eq:defeta}), with parameter $\ell > 0$ small enough. Then (\ref{eq:evolutionapprox}) defines a unique 2-parameter strongly continuous unitary group $\cU_{2,N}$, with $\cU_{2,N} (t;s) : \cF^{\leq N}_{\perp \wt{\ph}_s} \to \cF^{\leq N}_{\perp \wt{\ph}_t}$. Moreover, let $B_t$ be defined as in (\ref{eq:defB}) and  
\begin{equation}
	\label{eq:cubicphasek1}
	\begin{split}
		\k_{N}(t) =& \frac{N}{2} \la \ptt, [N^3\VN(1-2\fln)*|\ptt|^2] \ptt \ra
		-\frac{1}{2} \la \ptt, (N^3 V_Nf_{N,\ell}\ast \vert \ptt^2\vert ) \ptt\ra \\
		&+ \frac{1}{2}  \intxy N^2 \VNxy \vert \la \g_y,\s_x \ra\vert^2  \\
		&+\int dx\, (N^3\VN f_{N,\ell} *|\wt{\ph}_t|^2)(x)
		\langle \sxt, \sxt\rangle \\
		&+ \int dxdy\, N^3\VN f_{N,\ell} (x-y)\wt{\ph}_t(x)\overline{\wt{\ph}_t(y)} \langle \sxt, \s_y\rangle \\
		&+\frac 12 \int dxdy\, N^3\VN(x-y)\wt{\ph}_t(x)\wt{\ph}_t(y)\big \langle\s_x,\g_y\rangle +\hc +\|\nabla_1\s_t\|^2 \, \\
		&- \int_0^1ds \int dxdy\, \dot{\eta}_t(x,y)\la\s_x^{(s)},\g_y^{(s)}\ra\, +\hc
\end{split}\end{equation}
Consider a sequence of normalized initial data $\psi_N\in L^2(\RRR^3)^{\otimes_s N}$, with excitation vectors 
\be \label{eq:init-main}
\xi_N = e^{-B_0} U_{N,0} \psi_N
\ee
satisfying 
\begin{equation}\label{eq:init-bd} \la \x_N , (\cK^2 +\cN^6)\, \x_N \ra \leq C\,, \end{equation} 
uniformly in $N \in \bN$. Then there exist constants $C, c >0$ such that 
	\begin{equation}\label{eq:main} 
	    \norm{
	    e^{-i H_N t} \psi_N
	    - e^{-i\ \int_0^t \kappa_N (s) ds} U^*_{N,t} e^{B_t} \cU_{2,N}(t) \xi_N
	    }
	    \leq Ce^{c \expt}  N^{-1/8}
	\end{equation}
	for any $t \in \bR$, and $N \in \bN$ large enough. 
	\end{theorem}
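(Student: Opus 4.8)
The plan is to control the difference between $\psi_{N,t} = e^{-iH_N t}\psi_N$ and its approximation by estimating the growth of the norm $\| \psi_{N,t} - e^{-i\int_0^t \kappa_N(s)ds} U_{N,t}^* e^{B_t}\cU_{2,N}(t)\xi_N\|$ in $t$ via a Gr\"onwall argument. Since $U_{N,t}$ and $e^{B_t}$ are unitary and $U_{N,t}^* e^{B_t}$ maps $\cF_{\perp\wt\ph_t}^{\leq N}$ back to $L^2_s(\bR^{3N})$, the quantity to control is equivalently $\| \bar{\cU}_N(t;0)\xi_N - e^{-i\int_0^t\kappa_N(s)ds}\cU_{2,N}(t;0)\xi_N\|$, where $\bar{\cU}_N$ is the fluctuation dynamics \eqref{eq:fluct-dyn}. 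Writing $\xi_{N,t} = \bar{\cU}_N(t;0)\xi_N$ and $\xi_{N,t}^{(2)} = e^{-i\int_0^t\kappa_N(s)ds}\cU_{2,N}(t;0)\xi_N$, I would differentiate $\| \xi_{N,t} - \xi_{N,t}^{(2)}\|^2$ in time. Both vectors satisfy Schr\"odinger-type equations: $i\partial_t \xi_{N,t} = \cL_N(t)\xi_{N,t}$ with the generator $\cL_N(t) = (i\partial_t e^{-B_t})e^{B_t} + e^{-B_t}(i\partial_t U_{N,t})U_{N,t}^* e^{B_t} + e^{-B_t}U_{N,t}H_N U_{N,t}^* e^{B_t}$, while $i\partial_t\xi_{N,t}^{(2)} = (\cJ_{2,N}(t) + \kappa_N(t))\xi_{N,t}^{(2)}$. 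Hence
\[
\partial_t \| \xi_{N,t} - \xi_{N,t}^{(2)}\|^2 = 2\,\mathrm{Im}\,\big\langle \xi_{N,t} - \xi_{N,t}^{(2)}, \big(\cL_N(t) - \cJ_{2,N}(t) - \kappa_N(t)\big)\xi_{N,t}^{(2)}\big\rangle\,,
\]
using that $\cL_N(t)$ is self-adjoint so its diagonal contribution drops. The task reduces to showing that $\| (\cL_N(t) - \cJ_{2,N}(t) - \kappa_N(t))\xi_{N,t}^{(2)}\| \leq C e^{c e^{c|t|}} N^{-1/8}(\text{controlled moments of }\xi_{N,t}^{(2)})$, together with \emph{a priori} bounds propagating the moment assumption \eqref{eq:init-bd} along both dynamics.

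The core computation is the expansion of the full generator $\cL_N(t)$. First I would compute $U_{N,t}H_N U_{N,t}^*$ using the rules \eqref{eq:U-rules}, producing an expression in $b_x, b_x^*, a_x, a_x^*$ with terms of order $N$, $N^{1/2}$, $N^0$, and $N^{-1/2}$; the terms $(i\partial_t U_{N,t})U_{N,t}^*$ contribute the $\dot{\wt\ph}_t$-dependent pieces, which combine with the kinetic and potential parts via the modified Gross-Pitaevskii equation \eqref{eq:GPmod} to cancel the dangerous $O(N)$ and $O(N^{1/2})$ contributions (this is the standard mechanism, but here one must use the $f_\ell$-corrected equation, not \eqref{eq:GPtd}). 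Next I would conjugate with $e^{B_t}$: this is where the cubic phase enters. Since $e^{B_t}$ is only a \emph{generalized} Bogoliubov transformation, conjugating the cubic-in-$b$ terms of $U_{N,t}H_N U_{N,t}^*$ (those of order $N^{1/2}$, of the form $\sqrt N\, b^*(V_N * \ldots)$-type) generates, through the commutator with $B_t$, both a quadratic piece (which must match $\cJ_{2,N}(t)$) and a genuinely cubic remainder that the pure quadratic dynamics cannot absorb. The key structural input, following \cite{BBCS4,NT,BSS1}, is that this cubic remainder is small in the relevant norm once the scattering equation \eqref{eq:scatlN} is used — i.e., $N^3 V_N f_{N,\ell} = N^3 V_N - $ (a term producing $-\Delta w_{N,\ell}$ plus lower order) — so that after integrating against $\eta_t$ the leftover cubic term carries an extra factor decaying in $N$. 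The constant $\kappa_N(t)$ in \eqref{eq:cubicphasek1} is precisely the scalar (fully contracted) part generated along this conjugation, which has no operator content and is removed by the explicit phase $e^{-i\int_0^t\kappa_N(s)ds}$.

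The remaining steps are bookkeeping-heavy but conceptually routine: (i) establish well-posedness of \eqref{eq:evolutionapprox} and the moment propagation $\langle \xi_{N,t}^{(2)}, (\cK^2 + \cN^6)\xi_{N,t}^{(2)}\rangle \leq C e^{c e^{c|t|}}$ via a Gr\"onwall estimate on $\cU_{2,N}$ using the structure of $\cJ_{2,N}(t)$ and the bounds on $\eta_t,\mu_t,\gamma_t,\sigma_t$ from Lemma \ref{lm:propeta} and Prop. \ref{prop:propertiesphit}; (ii) bound each error operator appearing in $\cL_N(t) - \cJ_{2,N}(t) - \kappa_N(t)$ — these fall into the classes of cubic remainders ($O(N^{-1/2})$), the quartic interaction $\cV_N$ paired against $\eta_t$ ($O(N^{-1/2})$ after using $\|\eta_t\|$ bounds, with one factor absorbed by $\cV_N^{1/2}$), the $d_t$-type remainders from \eqref{eq:d-bds} ($O(N^{-1})$), and commutator errors from the non-canonical relations \eqref{eq:bcomm} ($O(N^{-1})$ times number operators) — using the operator bounds \eqref{eq:d-bds}, \eqref{eq:bcomm2} and the kinetic energy $\cK$ to control derivatives hitting $V_N$; the worst of these scales like $N^{-1/8}$ after optimizing the choice of $\ell$ (or an intermediate cutoff $M$) against the moment growth, which is the origin of the $N^{-1/8}$ rate and of the requirement $\ell$ small enough. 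The main obstacle is step (ii) for the cubic and quartic terms: extracting the genuine cancellation requires rewriting $N^3 V_N f_{N,\ell}$ via the Neumann scattering equation to trade the singular potential for $\Delta w_{N,\ell}$, then carefully tracking which pieces are reproduced in $\cJ_{2,N}(t)$ (note the $-\Delta w_{N,\ell}$, $\nabla w_{N,\ell}$, and $N^3\lambda_\ell \chi_\ell$ terms explicitly present there) versus which survive as errors; controlling these survivors uniformly in $N$ demands the full strength of the six-derivative regularity $\varphi\in H^6$ (propagated to $\wt\ph_t$ with $e^{c|t|}$ growth via Prop. \ref{prop:propertiesphit}) so that kinetic factors landing on $\wt\ph_t$ after commutators stay bounded. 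I would also need the preliminary comparison between the dynamics generated with the cutoff at scale $\ell$ and the removal of the cubic phase, showing as claimed in the introduction that dropping it changes the $L^2$-norm only by a vanishing amount — but for the statement as phrased this is folded into the single Gr\"onwall estimate above.
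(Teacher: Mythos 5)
Your central move is a single Gr\"onwall estimate directly on $\|\bar{\cU}_N(t;0)\xi_N - e^{-i\int_0^t\kappa_N}\cU_{2,N}(t;0)\xi_N\|^2$, with the cubic phase ``folded in'' at the end. This cannot work in the Gross--Pitaevskii ($\beta=1$) regime, and the paper explicitly explains why at the start of Section~3: the approximating vector $U_{N,t}^* e^{B_t}\cU_{2,N}(t)\xi_N$ has the wrong energy, so the generator mismatch is not small. Concretely, after computing $\cG_N(t) = (i\partial_t e^{-B_t})e^{B_t} + e^{-B_t}\cL_N(t)e^{B_t}$, one finds (Prop.~\ref{prop:cGNt}) that
\[
\cG_N(t) = \kappa_\cG(t) + \cG_{2,N}(t) + \cC_{N,t} + \cV_N + \cE_{\cG_N}(t)\,,
\]
where $\cC_{N,t} = \int N^{5/2}V_N(x-y)\wt\ph_t(y)\,b_x^* b_y^*[b(\gamma_x)+b^*(\sigma_x)]\,dx\,dy + \mathrm{h.c.}$ is a cubic operator whose matrix elements between states with bounded kinetic energy and moments are $O(1)$, \emph{not} vanishing as $N\to\infty$; Lemma~\ref{lm:commutatorHA} gives $|\langle\xi_1,\cC_{N,t}\xi_2\rangle|\le Ce^{c|t|}\|(\cH_N+\cN+1)^{1/2}\xi_1\|\|(\cH_N+\cN+1)^{1/2}\xi_2\|$ with constant independent of $N$. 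Your belief that ``after using the scattering equation the cubic remainder carries an extra factor decaying in $N$'' is exactly where the argument fails: for $\beta<1$ (as in \cite{BCS,BNNS}) that is true, but for $\beta=1$ the cubic term survives at order one. When you differentiate the squared distance you get a term $\langle \xi_{N,t}-\xi_{N,t}^{(2)}, \cC_{N,t}\,\xi_{N,t}^{(2)}\rangle$ with no smallness, and Gr\"onwall does not close.

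The cubic phase $e^{A_t}$ is therefore not a cosmetic norm correction that can be inserted at the end; it is the mechanism that cancels $\cC_{N,t}$ before the Gr\"onwall argument. Indeed, $[\cH_N,A_t] = -\cC_{N,t} + \text{(small)}$, and conjugating $\cH_N + \cC_{N,t}$ by $e^{A_t}$ produces $\cH_N + \tfrac12(\Xi_1+\Xi_2) + \text{(small)}$, leaving only quadratic remainders $\Xi_1,\Xi_2$ that are absorbed into $\cJ_{2,N}(t)$. The paper's proof proceeds by first establishing $\|e^{A_0}\xi_N-\xi_N\|\le CN^{-3/8}$ and the analogous estimate at time $t$ (so that passing from $\bar{\cU}_N$ to $\cU_N = e^{-A_t}\bar{\cU}_Ne^{A_0}$ costs $N^{-3/8}$ in norm), \emph{then} running Gr\"onwall on $\|\cU_N\xi_N - e^{-i\int\kappa_N}\cU_{2,N}\xi_N\|^2$ using Prop.~\ref{prop:cJNt} (which gives $\cJ_N(t) = \kappa_N(t) + \cJ_{2,N}(t) + \cV_N + \cE_{\cJ_N}(t)$ with a genuinely $N^{-1/4}$-small error), and controlling the surviving $\cV_N$ contribution via $\cV_N\le C\cK\cN$ and the second moments of $\cK$ along $\cU_{2,N}$. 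Your proposal identifies most of the constituent ingredients (expansion of the generator, use of \eqref{eq:GPmod}, bounds from \eqref{eq:d-bds}), but the backbone --- cancellation of $\cC_{N,t}$ by $e^{A_t}$ \emph{before} comparing to $\cU_{2,N}$ --- is missing, and without it the estimate \eqref{eq:main} does not follow.
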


{\it Remark.} From (\ref{eq:init-main}), we have $\psi_N = U_{N,0}^* e^{B_0} \xi_N$. Thus, (\ref{eq:main}) is equivalent to 
\begin{equation}\label{eq:main2} \| \bar{\cU}_N (t;0) \xi_N - e^{-i \int_0^t \kappa_N (s) ds} \, \cU_{2,N} (t;0) \xi_N \| \leq C e^{ce^{c|t|}} N^{-1/8} \end{equation}
with the fluctuation dynamics (\ref{eq:fluct-dyn}). In other words, the proof of Theorem \ref{th:main} reduces to the comparison of the evolution $\bar{\cU}_N$ with its quadratic approximation $\cU_{2,N}$. 

{\it Remark. } Observe that the choice $\xi_N = e^B \Omega$, with  
\[ B = \frac{1}{2} \int \big[ \tau (x;y) b_x^* b_y^* - \overline{\tau} (x;y) b_x b_y \big] dx dy \]
and with a kernel $\tau \in (q_0 \times q_0) H^2 (\bR^3 \times \bR^3)$ (with $q_0 = 1- |\ph \rangle \langle \ph|$ projecting orthogonally to the condensate wave function) is compatible with the condition (\ref{eq:init-bd}).
This allows us to consider initial many-body wave functions of the form $\psi_N = U_{N,0}^* e^{B_0} e^{B} \Omega$  which are expected to approximate (in the $L^2 (\bR^{3N})$ norm which is preserved over time) the ground state vector of Hamilton operators of the form (\ref{eq:trapped}), describing trapped Bose gases in the Gross-Pitaevskii regime. This fact has been proven in \cite{BBCS4} in the translation invariant setting, for a gas confined on the unit torus. Notice that, in \cite[Eq. (6.7)]{BBCS4}, the norm-approximation of the ground state vector includes also the unitary operator $e^A$, with $A$ cubic in creation and annihilation operators; in fact, this cubic phase can be removed, at the expense of another error of order smaller than $N^{-1/4}$ (arguing similarly as we do below, in the proof of Theorem \ref{th:main}, to show (\ref{eq:eA-insert})). If the trapping potential is sufficiently smooth, it is also easy to verify the condition $\ph \in H^6 (\bR^3)$, with $\ph$ minimizing the functional (\ref{eq:EGP}).

{\it Remark. } The double exponential in time on the r.h.s. of (\ref{eq:main}) is due to the fact that we use estimates on high Sobolev norms for the solution of the modified Gross-Pitaevskii equation (\ref{eq:GPmod}) and we apply a Gronwall argument for the propagation of many-body bounds. Assuming scattering for the solution of (\ref{eq:GPmod}), the dependence on $t$ could be substantially improved.

{\it Remark. } Theorem \ref{th:main} confirms a conjecture formulated in \cite[Sect. 10]{GM1}, for a different class of initial data (through the assumption (\ref{eq:init-bd}), we impose a correlation structure on the initial wave function, generated by $e^{B_0}$, which is absent from the initial data discussed in \cite{GM1}; it is not clear to us whether a norm approximation similar to (\ref{eq:main}) with an explicit rate can be obtained for ``flat'' data).

\medskip

In (\ref{eq:main}), after factoring out the evolving Bose-Einstein condensate and the microscopic correlation structure, we approximate the evolution of the macroscopic correlations by the Bogoliubov dynamics $\cU_{2,N} (t)$, which still depends on $N$. It is thus natural to ask whether $\cU_{2,N} (t)$ approaches a limiting, $N$-independent, quadratic evolution $\cU_{2,\infty} (t)$, as $N$ tends to infinity. To answer this question, we start by defining the pointwise limit of $N w_{N,\ell}(x)$, setting 
\begin{equation}
	\label{eq:defwio}
	w_{\io,\ell} (x) = 
	\begin{cases}
	 \aa\Big[\frac{1}{|x|} -\frac{3}{2\ell} + \frac{x^2}{2 \ell^3}\Big]\quad &|x|\leq \ell\\
	0 \quad &\text{otherwise}
	\end{cases} \, , 
\end{equation}
and the corresponding limiting integral kernel $k_{t,\io} \in L^2(\bR^3 \times \bR^3)$ by 
\begin{equation}
	\label{eq:limitk}
	k_{t,\io}(x;y) =  - w_{\io,\ell}(x-y)\ph_t(x)\ph_t(y)\,,
\end{equation}
where $\ph_t$ is the solution to the limiting ($N$-independent) Gross-Pitaevskii equation \eqref{eq:GPtd}. 
Similarly to (\ref{eq:defeta}), we define $\eta_{t,\io} = (q_t \otimes q_t) k_{t,\io}$, projecting along $q_t = 1 - |\ph_t \rangle \langle \ph_t |$, orthogonally to $\ph_t$ and $\mu_{t,\io} = \eta_{t,\io} - k_{t,\io}$. We also introduce the notation 
\be\label{eq:ginfty}
\g_{t,\io}= \cosh(\eta_{t,\io})= 1 + p_{t,\io} \textrm{ and } \s_{t,\io}= \sinh(\eta_{t,\io})= \eta_{t,\io} + r_{t,\io}.
\ee
With this notation,  we can introduce the generator of the limiting quadratic evolution, setting 
\begin{equation}\label{eq:generatorinfapprox}
	\begin{split}
		\cJ_{2,\io}(t)=&\, \cJ_{2,\io}^K(t) + \cJ_{2,\io}^V(t) \\
		& + \frac{3\aa}{\ell^3}\int dxdy\, \c_\ell(x-y)\ph_t(x)\ph_t(y)\dbtilde{a}^*_x\dbtilde{a}^*_y +\hc\\
		&+\frac12\int dxdy\, w_{\io,\ell}(x-y)[\D\ph_t(x)\ph_t(y) +\ph_t(x)\D\ph_t(y)]\dbtilde{a}^*_x\dbtilde{a}^*_y +\hc\\
		&+ \int dxdy\, \nabla w_{\io,\ell}(x-y)[\nabla\ph_t(x)\ph_t(y) - \ph_t(x)\nabla\ph_t(y)]\dbtilde{a}^*_x\dbtilde{a}^*_y +\hc\\
		&-\int_0^1ds \int dxdy\, 
		\dot{\eta}_{t, \io}(x,y) 
		\big[
		\dbtilde{a}^*(\g_{\io,x}^{(s)})\dbtilde{a}^*(\g_{\io,y}^{(s)})
		+\dbtilde{a}^*(\g_{\io,x}^{(s)})\dbtilde{a}(\s_{\io,y}^{(s)})
		 \\
		&\hskip3.5cm+\dbtilde{a}(\s_{\io,x}^{(s)})\dbtilde{a}(\s_{\io,y}^{(s)})+ \dbtilde{a}^*(\g_{\io,y}^{(s)})\dbtilde{a}(\s_{\io,x}^{(s)}) 
		\big] +\hc \\
		&+8\pi\aa \int dx \abs{\ph_t(x)}^2 (\axxs\axx-\dbtilde{a}_x^*\dbtilde{a}_x)
		\\ \eqqcolon &\; \cK + 8\pi\aa \int dx \abs{\ph_t(x)}^2 \axxs\axx \\
        &+ \intxy \Big(G_{t,\io} (x,y) \dbtilde{a}^*_x\dbtilde{a}_y 
        + H_{t,\io} (x,y) \dbtilde{a}^*_x\dbtilde{a}^*_y 
        + \overline{H}_{t,\io} (x,y) \dbtilde{a}_x\dbtilde{a}_y\Big)\\
\end{split}\end{equation}
where $\dbtilde{a}, \dbtilde{a}^*$ denote annihilation and creation operators, projected on $\cF_{\perp\ph_t}$ and where 
\begin{equation*}
\begin{split}
		\cJ_{2,\io}^K(t) =&\,\cK 
		+ \int dx \big[\dbtilde{a}^* (-\D_x p_{\io,x})\dbtilde{a}_x
		+ \frac{1}{2} \dbtilde{a}^*(\nabla_x p_{\io,x}) \dbtilde{a}(\nabla_x p_{\io,x}) 
		+ \dbtilde{a}^*_x \dbtilde{a}^*(-\D_x \mu_{\io,x})\\
		&+ \dbtilde{a}^*(-\D_x p_{\io,x})\dbtilde{a}^*(\eta_{\io,x}) 
		+ \dbtilde{a}^*(p_{\io,x})\dbtilde{a}^*(-\D_x r_{\io,x}) 
		+ \dbtilde{a}_x^* \dbtilde{a}^*(-\D_x r_{\io,x})\\
		&+\frac{1}{2} \dbtilde{a}^*( \nabla_x \eta_{\io,x}) \dbtilde{a}(\nabla_x \eta_{\io,x})
		+ \dbtilde{a}^*(\eta_{\io,x})\dbtilde{a}(-\D_x r_{\io,x})
		+ \frac{1}{2} \dbtilde{a}^*(\nabla_x r_{\io,x})\dbtilde{a}(\nabla_x r_{\io,x}) +\hc \big]
\end{split}\end{equation*}
and
\begin{equation}\label{eq:cJinftyV}
	\begin{split}
		\cJ_{2,\io}^V(t) =&\,
		4\pi\aa \int dx\, \ph_t(x)^2
		\big[\dbtilde{a}^*(p_{\io,x})\dbtilde{a}^*_{x} 
		    + \dbtilde{a}^*(\g_{\io,x})\dbtilde{a}^*(p_{y})\big] + \hc \\
		&+4\pi\aa \int dx\, \ph_t(x)^2
		\big[2 \dbtilde{a}^*(\g_{\io,x})\dbtilde{a}(\s_{\io,x})
		    + \dbtilde{a}(\s_{\io,x})\dbtilde{a}(\s_{\io,x}) \big]+ \hc\\
		&+ 16\pi\aa\int dx\, |\ph_t|^2(x)
		\big(\dbtilde{a}^*(\g_{\io,x})\dbtilde{a}(\g_{\io,x}) 
		    +\dbtilde{a}(\s_{\io,x})\dbtilde{a}(\g_{\io,x})\\
		&\hphantom{+ 16\pi\aa\int dx\, |\ph_t|^2(x)\big(}   + \dbtilde{a}^*(\g_{\io,x})\dbtilde{a}^*(\s_{\io,x})
		    + \dbtilde{a}^*(\s_{\io,x})\dbtilde{a}(\s_{\io,x}) \big)\,.
\end{split}\end{equation}
Here $\gamma^{(s)}_{t,\infty}, \sigma^{(s)}_{t,\infty}$ are defined like $\gamma_{t,\infty}, \sigma_{t,\infty}$, but with $\eta_{t,\io}$ replaced by $s \eta_{\io,t}$. The two-parameter unitary evolution generated by $\cJ_{2,\infty} (t)$ is denoted by $\cU_{2,\io} (t;s)$. It satisfies the Schr\"odinger equation 
\begin{equation}
	\label{eq:limitingdynamics}
	i \partial_t \cU_{2,\io}(t,s) = \cJ_{2,\io}(t) \, \cU_{2,\io}(t,s)\,,
\end{equation}
 with initial condition $\cU_{2,\io}(s,s) =1 $ for all $s \in \bR$ (the well-posedness of (\ref{eq:limitingdynamics}) is part of the statement of next theorem). Notice that $\cU_{2,\io} (t;s)$ maps the Fock space $\cF_{\perp \ph_s}$ into $\cF_{\perp \ph_t}$; here, there is no truncation on the number of particles. 
 \begin{theorem}
 	\label{th:limitingdynamics}
Let $V\in L^3(\RRR^3)$ be non-negative, spherically symmetric and compactly supported. Let $\varphi\in H^6(\RRR^3)$ and $\eta_{t,\infty}$ be defined as after \eqref{eq:limitk}. Then, there is a unique two-parameter strongly continuous unitary group $\cU_{2,\infty}$ satisfying (\ref{eq:limitingdynamics}), with $\cU_{2,\infty} (t;s) : \cF_{\perp \ph_s} \to \cF_{\perp \ph_t}$. For normalized $\xi \in\Ftru_{\perp\ph}$ satisfying
 	\begin{equation}\label{eq:exp-io}
 	\la \x , (\cK^2 +\cN^2)\, \x \ra < \infty \,,
 	\end{equation}
we find $C, c >0$ (only depending on the expectation (\ref{eq:exp-io})) such that 
 	\begin{equation} \label{eq:thm2}
 		\norm{
 	 \cU_{2,N} (t;0) \xi
 			-  \cU_{2,\io}(t;0) \xi 
 		}
 		\leq Ce^{c \expt}  N^{-1/4}
 	\end{equation}
 	for any $t \in \bR$, and $N \in \bN$ large enough. In (\ref{eq:thm2}), both $\cU_{2,N} (t;0) \xi$ and $\cU_{2,\io} (t;0) \xi$ are thought of as vectors in the full Fock space $\cF$, and $\| . \|$ denotes the norm in this space. 
 \end{theorem}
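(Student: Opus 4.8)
\emph{Well-posedness of $\cU_{2,\infty}$ and a priori bounds.} The operator $\cJ_{2,\infty}(t)$ is manifestly symmetric, every summand appearing with its hermitian conjugate. Using $\ph\in H^6$ and the standard estimates on the solution $\ph_t$ of the limiting equation (\ref{eq:GPtd}) from Prop.~\ref{prop:propertiesphit}, together with the fact that $w_{\infty,\ell}$ has only an integrable $|x|^{-1}$ singularity at the origin, one obtains bounds on $\eta_{t,\infty},\mu_{t,\infty},p_{t,\infty},r_{t,\infty}$ and the relevant derivatives, paralleling Lemma~\ref{lm:propeta}. These give, on a suitable core, commutator estimates $\pm\big(\cJ_{2,\infty}(t)-\cK\big)\le C(t)(\cN+1)$, $\pm i[\cJ_{2,\infty}(t),\cN+1]\le C(t)(\cN+1)$ and $\pm i[\cJ_{2,\infty}(t),\cK]\le C(t)(\cK+\cN^2+1)$, with $C(t)$ locally bounded (and of double-exponential size in $|t|$). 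Standard results on time-dependent generators then provide the unique strongly continuous unitary propagator $\cU_{2,\infty}$, and, since the building blocks $\dbtilde{a}_x,\dbtilde{a}_x^*$ are projected orthogonally to $\ph_t$, one gets $\cU_{2,\infty}(t;s):\cF_{\perp\ph_s}\to\cF_{\perp\ph_t}$. A Gronwall argument on the same commutator bounds yields the moment propagation
\[
\big\langle \cU_{2,\infty}(t;0)\xi,\,(\cK^2+\cN^2)\,\cU_{2,\infty}(t;0)\xi\big\rangle \le C\,e^{c\expt}\,\big\langle \xi,\,(\cK^2+\cN^2)\,\xi\big\rangle ,
\]
and the analogue for $\cU_{2,N}$ (already needed in the proof of Theorem~\ref{th:main}, cf.\ Prop.~\ref{prop:wellposedness}); in particular the component of $\cU_{2,\infty}(t;0)\xi$ living on more than $N$ particles has norm $\le N^{-1}\|\cN\,\cU_{2,\infty}(t;0)\xi\|\le C\,e^{c\expt}N^{-1}$.

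\emph{Reduction to a generator estimate.} Read $\cJ_{2,N}(t)$ and $\cJ_{2,\infty}(t)$ as self-adjoint operators on a dense domain of the full Fock space $\cF$ (with $b(f)=\sqrt{(N-\cN)_+/N}\,a(f)$). For $\xi$ as in (\ref{eq:exp-io}) set $\chi(t)=\cU_{2,N}(t;0)\xi-\cU_{2,\infty}(t;0)\xi\in\cF$. Differentiating $\|\chi(t)\|^2$ and using self-adjointness to cancel the diagonal terms $\mathrm{Im}\langle\cJ\,\zeta,\zeta\rangle$ (after a routine density argument), one obtains
\[
\big|\partial_t\|\chi(t)\|^2\big| \le 2\,\big\|\big(\cJ_{2,N}(t)-\cJ_{2,\infty}(t)\big)\,\cU_{2,\infty}(t;0)\xi\big\|\,\|\chi(t)\| ,
\]
hence $\|\chi(t)\|\le \int_0^{t}\big\|\big(\cJ_{2,N}(s)-\cJ_{2,\infty}(s)\big)\,\cU_{2,\infty}(s;0)\xi\big\|\,ds$. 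Everything thus reduces to the bound $\big\|\big(\cJ_{2,N}(s)-\cJ_{2,\infty}(s)\big)\zeta\big\|\le C\,e^{c\,e^{c|s|}}\,N^{-1/4}$ for states $\zeta$ with $\langle\zeta,(\cK^2+\cN^2)\zeta\rangle\le C\,e^{c\,e^{c|s|}}$, applied to $\zeta=\cU_{2,\infty}(s;0)\xi$; integrating in $s$ then gives (\ref{eq:thm2}). Note that, since we compare two \emph{quadratic} dynamics, the moments $\cK^2+\cN^2$ suffice here, in contrast with $\cK^2+\cN^6$ in Theorem~\ref{th:main}.

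\emph{Comparing the generators.} The bare operators $\cK$ cancel in $\cJ_{2,N}(s)-\cJ_{2,\infty}(s)$, and the remaining building blocks are matched with their $\infty$-counterparts using Lemma~\ref{3.0.sceqlemma} and Prop.~\ref{prop:propertiesphit}: (a) replacing $\bxxtilde,\bxxstilde=b^\sharp(\wt{q}_{s,x})$ by $\dbtilde{a}_x,\dbtilde{a}_x^*=a^\sharp(q_{s,x})$ costs $O(\cN/N)$ through $\sqrt{(N-\cN)_+/N}-1$ (harmless, by the small high-particle tail of $\zeta$) and $O(1/N)$ through $\wt\ph_s\to\ph_s$ in $H^6$; (b) the scattering parameters converge, $\int N^3V_Nf_{N,\ell}\to 8\pi\aa$, $N^3\lambda_\ell\to 3\aa/\ell^3$, $Nw_{N,\ell}\to w_{\infty,\ell}$ (the last with $L^2$-error $O(N^{-1/2})$), which controls $\eta_s\to\eta_{s,\infty}$, $\dot\eta_s\to\dot\eta_{s,\infty}$ and $p,r,\gamma,\sigma\to p_{\infty},r_{\infty},\gamma_{\infty},\sigma_{\infty}$ in the needed norms; (c) for the ``potential'' term one has $\|N^3(V_Nf_{N,\ell})\ast|\wt\ph_s|^2-8\pi\aa|\ph_s|^2\|_\infty=O(N^{-1})$ by a mollifier/total-mass estimate; (d) the remaining off-diagonal terms carry the singular kernels $N\nabla w_{N,\ell}$, $Nw_{N,\ell}$ paired with $\Delta\wt\ph_s$, and $N^3\lambda_\ell\,\chi_\ell$ (and their analogues inside $\cJ_{2,N}^V$), whose limits $\nabla w_{\infty,\ell}$ (with its non-$L^2$ singularity $\sim|x|^{-2}$) and $(3\aa/\ell^3)\chi_\ell$ are \emph{not} Hilbert--Schmidt.

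\emph{The main obstacle.} Group (d) is the delicate part and sets the final rate. These terms cannot be estimated via $L^2$-kernel differences; as in the Gross--Pitaevskii bookkeeping one must absorb a derivative into $\cK$ (integrating $\nabla w_{\infty,\ell}$ by parts against $\ph_s$ or against $\zeta$, and using $\Delta w_{\infty,\ell}=-4\pi\aa\,\delta+(3\aa/\ell^3)\,\chi_\ell$ together with its $N$-dependent version from (\ref{eq:scatlN})). The pointwise convergence $N\nabla w_{N,\ell}(x)\to\nabla w_{\infty,\ell}(x)$ holds only for $|x|\gtrsim N^{-1}$, whereas on the ball $|x|\lesssim N^{-1}$ both sides are of size $\sim N^2$ and $\nabla w_{\infty,\ell}\notin L^2_{\mathrm{loc}}$; isolating this inner region and estimating it separately --- paying a factor $\cK^{1/2}$ for the lost integrability and, on $\cF^{\le N}$, a factor $N^{1/2}$ to reduce the required joint moments in $\cK$ and $\cN$ to $\cK^2+\cN^2$ --- is what brings the $O(N^{-1/2})$ (or better) bounds of groups (a)--(c) down to $O(N^{-1/4})$. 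Collecting all contributions and integrating in time against the $e^{c\expt}$ moment bounds yields $\|\chi(t)\|\le C\,e^{c\expt}\,N^{-1/4}$, which is (\ref{eq:thm2}).
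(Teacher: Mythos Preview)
Your overall scaffolding is reasonable, but the ``main obstacle'' you identify is not real, and your explanation of where the $N^{-1/4}$ comes from is incorrect.

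The kernels you flag in group (d) as ``not Hilbert--Schmidt'' are in fact Hilbert--Schmidt. The point, exploited already in the proof of Prop.~\ref{pr:boundsquadraticgenerator} (see \eqref{eq:badterm}), is that $\nabla w_{\infty,\ell}(x-y)$ is paired with the \emph{antisymmetric} combination $\nabla\ph_t(x)\ph_t(y)-\ph_t(x)\nabla\ph_t(y)$; a first-order Taylor expansion recovers one power of $|x-y|$, turning the apparent $|x-y|^{-2}$ singularity into $|x-y|^{-1}$, which is square-integrable near the origin. The same mechanism controls the difference $N\nabla w_{N,\ell}-\nabla w_{\infty,\ell}$ via Lemma~\ref{lm:bds-infty}(ii). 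So no derivative needs to be absorbed into $\cK$ for these off-diagonal terms, and your story about ``paying a factor $N^{1/2}$'' to reduce moments is invented. The only place a $\cK^{1/2}$ is genuinely needed is the diagonal term $\int N^3(V_Nf_{N,\ell})(x-y)\wt\ph_t(x)\overline{\wt\ph}_t(y)\,b_x^*b_y$ (your group (c), essentially), where $N^3V_Nf_{N,\ell}\to 8\pi\aa\,\delta$ acts directly on $a_x^*a_y$ without a smoothing convolution; there one does write $a_x-a_{x+z/N}=-\int_0^1\frac{z}{N}\cdot\nabla a_{x+sz/N}\,ds$ and pays $\cK^{1/2}$ once. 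All told, the generator difference satisfies the \emph{form} bound (Prop.~\ref{prop:cJN-cJinfty})
\[
\big|\langle\xi_1,(\cJ_{2,N}(t)-\cJ_{2,\infty}(t))\xi_2\rangle\big|\le \frac{Ce^{ce^{c|t|}}}{\sqrt N}\Big[\|(\cN+1)\xi_1\|\,\|(\cN+1)\xi_2\|+\|(\cN+1)^{1/2}\xi_1\|\,\|\cK^{1/2}\xi_2\|\Big],
\]
with rate $N^{-1/2}$, not $N^{-1/4}$.

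The $N^{-1/4}$ in \eqref{eq:thm2} arises for a different, purely structural reason. The paper does not differentiate $\|\chi(t)\|^2$ and bound $\|(\cJ_{2,N}-\cJ_{2,\infty})\cU_{2,\infty}\xi\|$ in norm as you propose; instead it writes
\[
\|\chi(t)\|^2 = 2-2\,\mathrm{Re}\,\langle\cU_{2,\infty}(t;0)\xi,\ \mathbbm{1}(\cN\le N)\,\cU_{2,N}(t;0)\xi\rangle,
\]
differentiates the inner product, moves $\cJ_{2,\infty}$ past the projector (generating a commutator supported on $\{N-2\le\cN\le N\}$, controlled by $\cN/N$), and then applies the form bound above with $\xi_2=\cU_{2,N}(t;0)\xi\in\cF^{\le N}$ and $\xi_1=\mathbbm{1}(\cN\le N)\cU_{2,\infty}(t;0)\xi$. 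Crucially, this puts the $\cK^{1/2}$ on the $\cU_{2,N}$-leg, for which the $\cK^2$ propagation of Prop.~\ref{prop:wellposedness} is already available; on the $\cU_{2,\infty}$-leg only $\cN$-moments (Prop.~\ref{prop:wp-infty}) are used. Integrating gives $|1-\langle\cdot,\cdot\rangle|\le Ce^{ce^{c|t|}}N^{-1/2}$, hence $\|\chi(t)\|^2\le Ce^{ce^{c|t|}}N^{-1/2}$, and the final $N^{-1/4}$ is simply the square root. Your Duhamel route instead requires an \emph{operator-norm} bound on $(\cJ_{2,N}-\cJ_{2,\infty})\cU_{2,\infty}\xi$ and $\cK^2$-propagation along $\cU_{2,\infty}$, neither of which is established (and the form bound above, with $\|(\cN+1)\xi_1\|$ on the test side, does not directly dualize to a norm bound compatible with only $\cK^2+\cN^2$ moments).
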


{\it Remark.} Since here we compare two quadratic evolutions, the condition (\ref{eq:exp-io}) is milder than the corresponding assumption (\ref{eq:init-bd}) in Theorem \ref{th:main}, where we compare the full many-body dynamics with a quadratic approximation. 

Because its generator is quadratic in creation and annihilation operators, the evolution $\cU_{2,\infty}$ acts as time-dependent Bogoliubov transformations. Thus, its action on annihilation and creation operators can be calculated explicitly. 
\begin{prop}
\label{prop:Bogtrasf}
Under the same assumptions as in Theorem \ref{th:main}, let $\cU_{2,\io}$ be the limiting quadratic evolution defined in Eq. \eqref{eq:limitingdynamics}. For every $t,s \in \bR$ there exists a bounded linear map  
\[
 \Theta(t;s) : L^2(\bR^3) \oplus L^2(\bR^3) \to L^2(\bR^3) \oplus L^2(\bR^3) 
\]
such that
\[
\cU_{2,\io}^*(t;s) A(f,g) \cU_{2,\io}(t;s) = A(\Theta(t;s) \, (f,g))
\]
for all $f,g \in L^2(\bR^3)$. Here, $A(f,g) = a(f)+a^*(\overline{g})$. The map $\Theta(t;s)$ satisfies 
\[
\Theta(t;s) \cJ = \cJ\Theta(t;s),\quad 
S = \Theta(t;s)^*S\Theta(t;s),
\]
where $\cJ = \begin{pmatrix}
0 & J \\
J & 0 
\end{pmatrix}$ with $J$ denoting complex conjugation on $L^2(\bR^3)$ and $S = \begin{pmatrix}
1 & 0 \\
0 & -1 
\end{pmatrix}$. It can be written as 
\begin{equation}\label{eq:ThetaUV}
\Theta (t;s) = \begin{pmatrix}
U(t,s) & JV(t,s)J \\
V(t,s) & JU(t,s)J
\end{pmatrix}
\end{equation} 
for bounded linear maps $U(t,s), V(t,s) : L^2(\bR^3) \to L^2(\bR^3)$ satisfying 
\[
U^*(t,s)U(t,s) -V^*(t,s)V(t,s) =1, \quad U^*(t,s)JV(t,s)J= V^*(t,s)JU(t,s)J\,.
\]
\end{prop}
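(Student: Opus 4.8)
The plan is to fix $s \in \bR$, study the Heisenberg evolution of the field operators under $\cU_{2,\io}$, and exploit that the generator $\cJ_{2,\io}(t)$ is quadratic. For $f,g \in L^2(\bR^3)$ and $t \in \bR$ I would consider
\begin{equation*}
A_t(f,g) := \cU_{2,\io}^*(t;s)\, A(f,g)\, \cU_{2,\io}(t;s)\,,
\end{equation*}
with $\cU_{2,\io}$ regarded as a unitary group on the full Fock space $\cF$ (the generator $\cJ_{2,\io}(t)$ of \eqref{eq:generatorinfapprox}, understood as a closed quadratic form comparable to $\cK + \cN + 1$, is self-adjoint on $\cF$; this and the well-posedness of $\cU_{2,\io}$ are part of Theorem \ref{th:limitingdynamics}). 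Differentiating in $t$ gives $i\partial_t A_t(f,g) = \cU_{2,\io}^*(t;s)\,[A(f,g),\cJ_{2,\io}(t)]\,\cU_{2,\io}(t;s)$, and since $\cJ_{2,\io}(t)$ is quadratic in $a,a^*$, the commutator $[A(f,g),\cJ_{2,\io}(t)]$ is again linear, i.e. equal to $A(\mathcal L(t)(f,g))$ for an operator $\mathcal L(t)$ on $L^2(\bR^3)\oplus L^2(\bR^3)$, read off from \eqref{eq:generatorinfapprox}--\eqref{eq:cJinftyV} by pairing $f$ or $g$ against the kernels $\eta_{t,\io},\gamma_{t,\io},\sigma_{t,\io},p_{t,\io},r_{t,\io},w_{\io,\ell}$ and $\ph_t$. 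I would split $\mathcal L(t)=\mathcal L_0+\mathcal L_1(t)$ with $\mathcal L_0=\mathrm{diag}(-\Delta,\Delta)$, which generates the free unitary Schr\"odinger flow $\mathrm{diag}(e^{i(t-s)\Delta},e^{-i(t-s)\Delta})$ on $L^2\oplus L^2$, and $\mathcal L_1(t)$ collecting the remaining terms.

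The first, technical step is to show that $\mathcal L_1(t)$ is bounded on $L^2\oplus L^2$, with $\|\mathcal L_1(t)\|$ locally bounded and strongly continuous in $t$, and then to solve the linear Cauchy problem $i\partial_t\Theta(t;s)=\Theta(t;s)\mathcal L(t)$, $\Theta(s;s)=\mathrm{Id}$. For the boundedness of $\mathcal L_1(t)$ one should integrate by parts in the kinetic-type terms of $\cJ_{2,\io}^K(t)$ that carry gradients or Laplacians of $\eta_{t,\io},p_{t,\io},r_{t,\io}$, so that the derivatives fall on these fixed kernels; by their regularity and that of $\ph_t$ (with $\ph\in H^6(\bR^3)$ propagated via Prop. \ref{prop:propertiesphit}, and with the bounds on $\eta_{t,\io},\gamma_{t,\io},\sigma_{t,\io}$ being the limiting analogue of those in Lemma \ref{lm:propeta}) the resulting operators are bounded on $L^2$, the only delicate point being that the distributional contributions coming from the $|x|^{-1}$ singularity of $w_{\io,\ell}$ are harmless, being multiplied by powers of $\ph_t$. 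Passing to the interaction picture with respect to $\mathcal L_0$ (which replaces $\mathcal L_1(t)$ by its conjugate under the free flow, still bounded and strongly continuous), the Cauchy problem has bounded generator and is solved by a norm-convergent Dyson series; this yields a unique $\Theta(t;s)$, bounded on $L^2\oplus L^2$ uniformly for $t$ in compact sets. A Duhamel comparison of $A_t(f,g)$ and $A(\Theta(t;s)(f,g))$ in the interaction picture (carried out first for $f,g$ in a dense subset and then extended by density, using that $A(h_1,h_2)(\cN+1)^{-1/2}$ is bounded with norm $\lesssim\|(h_1,h_2)\|$ and that $\cU_{2,\io}$ propagates moments of $\cN$) identifies $A_t(f,g)$ with $A(\Theta(t;s)(f,g))$ for all $f,g\in L^2(\bR^3)$.

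Once $\Theta(t;s)$ is a bounded operator, its structural properties follow at the one-particle level from $A(f,g)^*=A(\overline g,\overline f)=A(\cJ(f,g))$ and $[A(f_1,g_1),A(f_2,g_2)^*]=\langle(f_1,g_1),S(f_2,g_2)\rangle\,\mathbf{1}$ together with the unitarity of $\cU_{2,\io}(t;s)$. Indeed $A_t(f,g)^*=A_t(\cJ(f,g))$, so by injectivity of $A(\cdot)$ on $L^2\oplus L^2$ one gets $\cJ\,\Theta(t;s)=\Theta(t;s)\,\cJ$; and since conjugation by a unitary preserves the (scalar) commutator, $\langle\Theta(t;s)(f_1,g_1),S\,\Theta(t;s)(f_2,g_2)\rangle=\langle(f_1,g_1),S(f_2,g_2)\rangle$ for all $f_i,g_i$, i.e. $\Theta(t;s)^*S\,\Theta(t;s)=S$. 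Writing $\Theta(t;s)=\bigl(\begin{smallmatrix}\Theta_{11}&\Theta_{12}\\\Theta_{21}&\Theta_{22}\end{smallmatrix}\bigr)$, the relation $\Theta\cJ=\cJ\Theta$ forces $\Theta_{22}=J\Theta_{11}J$ and $\Theta_{12}=J\Theta_{21}J$; setting $U(t,s):=\Theta_{11}$ and $V(t,s):=\Theta_{21}$ gives the block form \eqref{eq:ThetaUV}, and expanding $\Theta^*S\Theta=S$ into $2\times2$ blocks yields exactly $U^*U-V^*V=1$ and $U^*JVJ=V^*JUJ$.

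The hard part is the unbounded kinetic term $\cK$ in $\cJ_{2,\io}(t)$, which rules out a naive Gronwall/Dyson estimate directly on $L^2\oplus L^2$ and forces one to treat $\mathcal L_0$ exactly, through the free Schr\"odinger evolution; combined with this, one has to untangle the renormalized, individually singular structure of $\cJ_{2,\io}(t)$ (for instance the terms built from $\nabla_x\eta_{t,\io}$) via the integrations by parts described above and the regularity of $\ph_t$, $\eta_{t,\io}$ and their relatives. The well-posedness of $\cU_{2,\io}$ and the self-adjointness of $\cJ_{2,\io}(t)$ asserted in Theorem \ref{th:limitingdynamics}, together with the moment-propagation bounds for $\cN$ under $\cU_{2,\io}$, enter as inputs.
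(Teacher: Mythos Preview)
Your argument is correct in outline, but it takes a genuinely different route from the paper. The paper does not prove this proposition directly: it simply cites \cite[Theorem 2.2]{BAKS} and \cite[Prop.~1.3]{Rademacher}, which are abstract results asserting exactly this Bogoliubov structure for quadratic evolutions whose generator has the form $\cK$ plus a diagonal part with bounded operator kernel $G_{t,\infty}$ plus an off-diagonal part with Hilbert-Schmidt kernel $H_{t,\infty}$. The only work the paper does is to observe that the required bounds $\|G_{t,\infty}\|_{\mathrm{op}}, \|H_{t,\infty}\|_2 < \infty$ follow from the analysis in the proof of Prop.~\ref{prop:cJN-cJinfty} (and, implicitly, from the arguments in the proof of Prop.~\ref{pr:boundsquadraticgenerator}, which establish the analogous bounds for $G_t, H_t$).

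Your approach, by contrast, re-proves the content of those cited results from scratch: you compute the commutator $[A(f,g),\cJ_{2,\io}(t)]$, pass to the one-particle ODE, split off the free Laplacian, and solve the remainder by a Dyson series in the interaction picture. This is self-contained and pedagogically clearer, but it is longer, and the ``technical step'' you describe---showing that $\mathcal L_1(t)$ is bounded by integrating by parts and controlling the $|x|^{-1}$ singularity of $w_{\io,\ell}$---is precisely the verification that $\|G_{t,\infty}\|_{\mathrm{op}}, \|H_{t,\infty}\|_2 < \infty$, which the paper isolates as the only nontrivial input. In other words, both arguments rest on the same analytic estimate; the paper outsources the operator-theoretic machinery around it, while you rebuild that machinery. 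Note also that the representation in the last two lines of \eqref{eq:generatorinfapprox} already packages $\cJ_{2,\io}(t)-\cK$ in the form you need, so you can shortcut your ``integration by parts'' discussion by invoking those bounds directly.
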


\textit{Remark.} Differentiating the action of $\cU_{2,\infty}$ on $A(f,g)$ yields the differential equation $-i\partial_t \Theta(t,s)=\Theta(t,s) \cA(t)$ with
\[
\cA(t)= \begin{pmatrix}
-\Delta + 8\pi\aa \abs{\ph_t}^2 +q_t G_{t,\infty}q_t & -J(2 \overline{q_t H_{t,\infty}}q_t )J \\
2 \overline{q_t H_{t,\infty}}q_t & -J(-\Delta + 8\pi\aa \abs{\ph_t}^2 +q_t G_{t,\infty}q_t)J 
\end{pmatrix}
\]
This can be compared with \cite[Eq. (2.22)]{BAKS}, in the mean-field setting.

The proof follows from \cite[Theorem 2.2]{BAKS} and \cite[Prop.1.3]{Rademacher}; using the notation introduced in the last two lines of \eqref{eq:generatorinfapprox}, the bounds $\| G_{t,\infty} \|_\text{op}$ , $\| H_{t,\io} \|_2 < \infty$, which are assumed in \cite{Rademacher} follow from the analysis in the proof of Prop. \ref{prop:cJN-cJinfty}.

Using the approximation in terms of the Bogoliubov dynamics $\cU_{2,\infty}$, we can establish a central limit theorem for the evolution of initial data approximating ground states of the trapped Hamiltonian (\ref{eq:trapped}). Similar results have been obtained for the evolution generated by the Hamilton operator \eqref{eq:Hbeta} in  \cite{BAKS,BSS2} (if $\beta = 0$) and in \cite{Rademacher} (for $0 < \beta < 1$). In the time-independent setting, an analogous central limit theorem has been established in \cite{RS}, in the Gross-Pitaevskii regime.
\begin{theorem}
\label{th:clt}
Under the same assumptions as in Theorem \ref{th:main}, consider initial wave functions having the form \begin{equation*}\label{eq:clt-init} \psi_N=U_{N,0}^*e^{B_0} e^{B} \Omega , \end{equation*} where $B_0$ is defined as in \eqref{eq:defB}, and  \begin{equation}\label{eq:Btau} B = \frac{1}{2}  \int  \big[ \t (x,y)  b_x^* b_y^*  - \bar{\tau} (x;y) b_x b_y \big] dx dy  \end{equation} 
with kernel $\t\in (q_0 \otimes q_0) H^2(\bR^3\times\bR^3)$ (where $q_0 = 1 - |\ph \rangle \langle \ph|$ projects orthogonally to the condensate wave function $\ph$ used in $U_{N,0}^*$). For a bounded operator $O$ on $L^2 (\bR^3)$ we define 
\begin{equation}\label{eq:ONt}  \cO_{N,t} = \frac{1}{\sqrt{N}} \sum_{j=1}^N \big( O^{(j)} - \langle \ph_t, O \ph_t \rangle \big) \end{equation} 
where $O^{(j)} = 1 \otimes \dots \otimes O \otimes \dots \otimes 1$ is the operator $O$ acting only on the $j$-th particle. We set, with $U,V$ indicating the linear maps introduced in (\ref{eq:ThetaUV}), 
\begin{equation*}
\begin{split}
&h_{\infty, t} = \cosh(\eta_{t,\io}) q_{\ph_t} O \ph_t +\sinh(\eta_{t,\io})\overline{q_{\ph_t} O \ph_t} \\
&n_{t} = U(t,0) h_{\infty, t} + \overline{V(t,0)} \overline{h_{\infty, t}}\\
&f_{t}= \cosh(\t) n_{t}+ \sinh(\t)\overline{n_{t}}
\end{split}
\end{equation*}
Then there exists $c > 0$ and, for all $-\infty < a < b < \infty$, $C> 0$ such that, in the state $\psi_{N,t} = e^{-i H_N t} \psi_N$, 
\be \label{eq:CLT0} 
\abs{\bP_{\psi_{N,t}} (\cO_{N,t} \in [a, b]) - \bP (\frak{G}_t \in [a, b])}
\leq C \expexpt N^{-1/16}
\ee
for all $N$ large enough. Here $\frak{G}_t$ is a centered Gaussian with variance $\norm{f_t}^2$.
\end{theorem}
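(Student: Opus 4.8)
The plan is to reduce the central limit theorem for the many-body evolution to a central limit theorem for an explicit quasi-free (Bogoliubov) state, using the norm-approximation of Theorem~\ref{th:main} together with the comparison of $\cU_{2,N}$ and $\cU_{2,\io}$ from Theorem~\ref{th:limitingdynamics}, and the explicit Bogoliubov structure from Proposition~\ref{prop:Bogtrasf}. First, I would express the distribution function $\bP_{\psi_{N,t}}(\cO_{N,t}\in[a,b])$ as an expectation of a (regularized) indicator of $\cO_{N,t}$, and reduce the problem to controlling the characteristic function $\EEE_{\psi_{N,t}}[e^{i\lambda \cO_{N,t}}]$, using the standard fact that closeness of characteristic functions (with a suitable decay/cutoff in $\lambda$) implies closeness of distribution functions with a polynomial loss in the rate — this is the source of the $N^{-1/16}$ exponent coming from $N^{-1/8}$ in Theorem~\ref{th:main}, roughly by optimizing a $\lambda$-cutoff. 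Then I would conjugate $\cO_{N,t}$ through $U_{N,t}$ to rewrite it in terms of $b,b^*,a,a^*$ on the excitation Fock space: since $O$ is bounded, $\cO_{N,t}$ becomes, up to a negligible error controlled by $\cN/N$ on states with few excitations, a field operator $\phi_+(q_{\ptt}O\ptt) := b^*(q_{\ptt}O\ptt)+b(q_{\ptt}O\ptt)$ plus a term diagonal in $\cN$ that contributes a deterministic (vanishing) shift, so that $\EEE_{\psi_{N,t}}[e^{i\lambda\cO_{N,t}}]\approx \la \xi_{N,t}, e^{i\lambda\,\phi_+(h_t)}\xi_{N,t}\ra$ with $h_t = q_{\ptt}O\ptt$ and $\xi_{N,t}=\bar\cU_N(t;0)\xi_N$.

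Next I would replace $\bar\cU_N(t;0)$ by $e^{B_t}\cU_{2,N}(t;0)$ up to the error $N^{-1/8}$ from \eqref{eq:main2}, then replace $\cU_{2,N}$ by $\cU_{2,\io}$ up to $N^{-1/4}$ from \eqref{eq:thm2} (checking that $\xi_N$ satisfies the milder hypothesis \eqref{eq:exp-io}, which follows from \eqref{eq:init-bd} and Lemma~\ref{lm:gron-B} applied to $e^{B}\Omega$). The Bogoliubov transformation $e^{B_t}$ maps $\phi_+(h_t)$ to a field operator with argument $\gamma_t h_t+\sigma_t\overline{h_t}$ up to the $d_t$-errors controlled by \eqref{eq:d-bds}; since $\gamma_t,\sigma_t\to\gamma_{t,\io},\sigma_{t,\io}$ this produces $h_{\infty,t}=\cosh(\eta_{t,\io})q_{\ph_t}O\ph_t+\sinh(\eta_{t,\io})\overline{q_{\ph_t}O\ph_t}$. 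Then Proposition~\ref{prop:Bogtrasf} gives $\cU_{2,\io}^*(t;0)\phi_+(h_{\infty,t})\cU_{2,\io}(t;0)=\phi_+(n_t)$ with $n_t=U(t,0)h_{\infty,t}+\overline{V(t,0)}\overline{h_{\infty,t}}$ (possibly after also checking that the $N$-truncated nature of $\cU_{2,N}$ versus the untruncated $\cU_{2,\io}$ costs only $\cN/N$-errors on the relevant states, using moment propagation). Finally, applying $e^{B}$ to $\Omega$ transforms $\phi_+(n_t)$ into $\phi_+(f_t)$ with $f_t=\cosh(\tau)n_t+\sinh(\tau)\overline{n_t}$, so the characteristic function becomes $\la\Omega, e^{i\lambda\phi_+(f_t)}\Omega\ra = e^{-\lambda^2\|f_t\|^2/2}$ by the standard Gaussian formula for the vacuum expectation of an exponentiated field operator; this is exactly the characteristic function of a centered Gaussian with variance $\|f_t\|^2$, yielding \eqref{eq:CLT0}.

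The main obstacle I expect is not the Bogoliubov algebra but the quantitative control of all the error terms uniformly in the test parameter $\lambda$ when passing from characteristic functions to distribution functions. Concretely: (i) the replacement $\cO_{N,t}\rightsquigarrow\phi_+(h_t)$ and each Bogoliubov conjugation generates commutator/$d_t$/truncation errors that must be bounded by negative powers of $N$ times $\lambda$ (or $\lambda^2$) times controlled moments $\la\xi_{N,t},(\cN+1)^k\xi_{N,t}\ra$ and $\la\xi_{N,t},\cK\xi_{N,t}\ra$, which in turn requires propagating the bound \eqref{eq:init-bd} along $\bar\cU_N$ — a Gronwall estimate that is presumably part of the machinery proving Theorem~\ref{th:main}; (ii) one must differentiate $\la e^{i\lambda\phi_+(\cdot)}\ra$ in $\lambda$ and iterate, so one needs control of $(\cN+1)^k$-moments under the exponentiated field, i.e. a bound of the form $e^{-i\lambda\phi_+(f)}(\cN+1)^k e^{i\lambda\phi_+(f)}\leq C(1+\lambda^2)^k(\cN+1)^k$; (iii) optimizing the cutoff $|\lambda|\leq N^{\delta}$ against the accumulated errors $N^{-1/8}(1+|\lambda|)^{O(1)}$ produces the final rate $N^{-1/16}$. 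The bookkeeping of these $\lambda$-dependent error bounds, and confirming that the dominant loss indeed comes from Theorem~\ref{th:main}'s $N^{-1/8}$, is the technical heart of the argument; the passage $\cU_{2,N}\to\cU_{2,\io}$ contributes only a lower-order $N^{-1/4}$ correction and the remaining steps are exact Bogoliubov identities.
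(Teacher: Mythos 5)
Your proposal follows essentially the same route as the paper's proof: starting from the characteristic function, applying the norm approximation of Theorem~\ref{th:main}, conjugating $\cO_{N,t}$ through $U_{N,t}$, $e^{B_t}$, $\cU_{2,N}\to\cU_{2,\io}$ (via Theorem~\ref{th:limitingdynamics}), and $e^{B}$, and finally evaluating the vacuum expectation of the exponentiated field operator, with the $N^{-1/16}$ rate coming from the standard smoothing argument applied to the $N^{-1/8}$-type characteristic-function estimate. The only minor imprecision is that the term $\frac{1}{\sqrt N}d\Gamma(\wt q_t O'\wt q_t)$ arising from the $U_{N,t}$-conjugation is particle-number-conserving rather than ``diagonal in $\cN$'' and is bounded by $\cN/\sqrt N$ (not $\cN/N$), but your conclusion that it is negligible is correct; you also correctly anticipate the key moment-propagation estimate \eqref{eq:phi-N} and the subtlety of comparing the truncated $b$-operators (on $\Ftru$) with the $a$-operators used in the limiting Bogoliubov transformation.
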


\textit{Remark.} The bound (\ref{eq:CLT0}) follows through standard arguments (see \cite[Proof of Corollary 1.2]{BSS2}) from an estimate of the form 
\begin{equation}\label{eq:exp-CLT2}
\begin{split}
\Big| \mathbb{E}_{\psi_{N,t}} g (\cO_{N,t}) &- \frac{1}{\sqrt{2 \pi} \| f_t \|} \int dx \, g(x) e^{-\frac{x^2}{2 \| f_t \|^2}} \Big| \\
&\leq  C e^{ce^{c|t|}}  \int |\hat{g} (s)| (N^{-1/8}+ N^{-1/2}|s|^3 \| O \|^3 + N^{-1} |s|^4 \| O \|^4) ds  
\end{split}
\end{equation} 
for the expectation of the random variable $g (\cO_{N,t})$, valid for any $g \in L^1 (\bR)$, with $(1+s^4) \hat{g} (s) \in L^1 (\bR)$. The proof of (\ref{eq:exp-CLT2}), which is based on the approximation (\ref{eq:main}) of the many body evolution, is given in Section \ref{sec:CLT}. Similarly to \cite{BSS2, Rademacher, RS}, we could also extend (\ref{eq:exp-CLT2}) to a multivariate central limit theorem, proving that expectations of products of observables of the form (\ref{eq:ONt}) approach a Gaussian limit, as $N \to \infty$.

\section{Fluctuation Dynamics} 

While the wave function $e^{-i\ \int_0^t \kappa_N (s) ds} U^*_{N,t} e^{B_t} \cU_{2,N}(t) \xi_N$ appearing in (\ref{eq:main}) provides a good norm-approximation for the full evolution $e^{-iH_N t} \psi_N = e^{-i H_N t} U_{N,0}^* e^{B_0} \xi_N$, the difference of their energy does not converge to zero, as $N \to \infty$. For this reason, it seems difficult to show (\ref{eq:main}) directly. To circumvent this problem, we introduce an alternative approximation for the many-body evolution, this time having the correct energy. At the end, we will show that the two approximations are close in norm. 

To define the new approximation of the many-body evolution, we will use a cubic phase. First, we define, as in \cite{NT}, a cutoff $\Th\colon \NNN \to \RRR $ in the number of particles, setting 
\begin{equation*}
\Th(n)= \begin{cases}
	1  & n \leq \frac{M}{2}+10 \\
	\frac{1}{2}\left( \frac{4n-3M}{40-M}+ 1\right) & \frac{M}{2}+10\leq n\leq  M-10\\
	0 &n \geq M-10
\end{cases}
\end{equation*}
for $M=N^{\e}\geq 50$ where $\e < 1$. Throughout this paper we always assume that $N$ is sufficiently large so that this last condition holds true.
Next we define the kernel
\be\label{eq:defnu}
\nu_t(x,y)= -Nw_m(N(x-y))\pty 
\ee
with $m=N^{-\a}$ and we introduce the antisymmetric operator  
\be\label{eq:defA}
\begin{split}
A_t &= \frac{\Th(\cN)}{\sqrt{N}}\int \nu_t(x,y) \bxxstilde\byystilde[\bgxtilde + \bsxstilde]  dx dy -\hc
\end{split}\ee
where we recall the definition (\ref{eq:bxxtildedef}) of the projected operators $\bxxtilde = b(\wt{q}_{x}) = \bxx - \ptx b(\ptt)$. Notice that, in contrast with the quadratic kernel (\ref{eq:defk}), we cut off (\ref{eq:defnu}) on a length scale $m = N^{-\alpha}$, vanishing as $N \to \infty$. This makes sure that, as discussed in Lemma \ref{lm:propnu}, $\| \nu \|_2 \leq C \sqrt{m} \leq C N^{-1/4}$ is small and therefore it allows us to compute the action of the phase $e^{A_t}$, which will be used in combination with the generalized Bogoliubov transformation $e^{B_t}$ to generate the microscopic correlation structure, expanding the exponential to first and second order (all higher order contributions will be negligible). In the following, we will choose $\a = \e = 1/2$. 

An important observation is that conjugation with $e^{A_t}$ does not substantially change the number of excitations and their energy. The proof of the next lemma is deferred to Section \ref{sec:gronA}. 
\begin{lemma}\label{lm:gron-A}
Let $A_t$ be the anti-symmetric operator defined in \eqref{eq:defA}. For every $k \in \bN$ there exists a constant $C>0$ such that 
\begin{equation} \label{eq:gron-NA} e^{-A_t} (\cN+1)^k e^{A_t} \leq C e^{c|t|} (\cN+1)^k\,. \end{equation} 
Let $\cH_N = \cK+ \cV_N$, with 
\begin{equation}\label{eq:KVN} \cK = \int \nabla_x a_x^* \nabla_x a_x, \quad \cV_N = \frac{1}{2} \int N^2 V(N(x-y)) a_x^* a_y^* a_x a_y \end{equation}
Then, for every $k \in \bN$, there is $C > 0$ such that 
\begin{equation}\label{eq:gron-A} e^{-A_t} (\cH_N + \cN + 1) (\cN+1)^k e^{A_t} \leq C e^{c|t|} (\cH_N + \cN+ 1) (\cN+1)^k \end{equation} 
holds true as an operator inequality on $\cF^{\leq N}_{\perp \wt{\ph}_t}$. 
\end{lemma}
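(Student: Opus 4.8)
The plan is to prove both operator inequalities by a Gronwall argument applied to the vector $\xi_s = e^{-sA_t}\Phi$, where I conjugate with the family $e^{-sA_t}$, $s\in[0,1]$, rather than with $e^{-A_t}$ directly; here $A_t$ is the fixed (in $s$) operator from \eqref{eq:defA}. For \eqref{eq:gron-NA}, I set $g(s) = \langle \xi_s, (\cN+1)^k \xi_s\rangle$ and compute $g'(s) = -\langle \xi_s, [A_t,(\cN+1)^k]\xi_s\rangle$. The key point is that $A_t$ is a sum of terms that change the number of particles by $\pm 2$, so the commutator $[A_t,(\cN+1)^k]$ is controlled by $(\cN+1)^k$ itself up to the small prefactor $N^{-1/2}\|\nu_t\|$ times powers of $\cN/\sqrt N$ coming from the three $\tilde b$'s (each $b,b^*$ is bounded by $(\cN+1)^{1/2}$ on $\cF^{\leq N}$, and the cutoff $\Th(\cN)$ confines us to $\cN\lesssim M=N^{1/2}$). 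Using Lemma \ref{lm:propnu} ($\|\nu_t\|_2\le C\sqrt m\le CN^{-1/4}$) together with the $d$-operator bounds \eqref{eq:d-bds} to handle the $d$-parts hidden inside $\tilde b^\sharp$ after they are expressed through the rules around \eqref{eq:defd}, one gets $|g'(s)|\le C e^{c|t|} g(s)$, and Gronwall gives \eqref{eq:gron-NA}. The $e^{c|t|}$ enters only through the growth of Sobolev norms of $\wt\ph_t$ (via Prop. \ref{prop:propertiesphit}) which appear in $\nu_t$ and in $\ptt$ inside the projected operators.

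For the energy bound \eqref{eq:gron-A}, I proceed the same way with $G(s) = \langle \xi_s, (\cH_N+\cN+1)(\cN+1)^k\xi_s\rangle$, so that $G'(s) = -\langle \xi_s, [A_t,(\cH_N+\cN+1)(\cN+1)^k]\xi_s\rangle$. The commutator with the $(\cN+1)^k$ factor is handled as above; the new ingredient is $[A_t,\cH_N] = [A_t,\cK] + [A_t,\cV_N]$. Since $A_t$ is cubic in (modified, projected) creation/annihilation operators, commuting with the kinetic energy $\cK$ produces terms of the same structure as $A_t$ but with a Laplacian hitting one of the kernels $\nu_t$, $\ptt$, $\g_x$ or $\s_x$; commuting with $\cV_N$ produces terms with an extra factor $N^2V(N\cdot)$. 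The strategy is to bound each resulting monomial by $\delta\,(\cH_N+\cN+1)(\cN+1)^k + C\delta^{-1}e^{c|t|}(\cN+1)^{k+\text{const}}$, absorbing the first piece into $G(s)$ for $\delta$ small and controlling the second by \eqref{eq:gron-NA} already established. Here one crucially exploits that differentiating $\nu_t(x,y)=-Nw_m(N(x-y))\pty$ costs a factor $N$ from the chain rule but this is compensated: $\|N\Delta w_m(N\cdot)\|$-type quantities, paired with the $N^{-1/2}$ prefactor and with $\|\nu_t\|_2\le C\sqrt m = CN^{-1/4}$ on the other kernel in the product, remain bounded — this is exactly why the short length scale $m=N^{-\alpha}$ with $\alpha = 1/2$ is chosen. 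The potential term is treated by the standard bound $\int N^2V(N(x-y))\,a_x^*a_y^*\,(\cdots)\,\lesssim \cV_N^{1/2}(\cdots)$ after a Cauchy–Schwarz in the $V$-weighted measure.

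The main obstacle I anticipate is the bookkeeping for the terms generated by $[A_t,\cV_N]$: one must carefully track which of the $\sim 5$ creation/annihilation operators in each monomial carries the spatial singularity of $N^2V(N\cdot)$, extract one factor of $\cV_N^{1/2}$ to pair against the $(\cH_N+\cN+1)^{1/2}$ coming from $G(s)$, and bound the remaining operators by powers of $(\cN+1)$ using $\|\Th(\cN)\cN^j/N^{j/2}\|\le C$; the $d$-contributions inside the $\tilde b^\sharp$'s in \eqref{eq:defA}, expanded via \eqref{eq:d-bds}, give strictly smaller powers of $N$ and are harmless but tedious. A secondary technical point is the well-definedness of $s\mapsto\xi_s$ in the relevant domains, which is handled by first proving the inequalities on a dense set of sufficiently regular vectors (e.g. with bounded number of particles) and then extending by density, exactly as in the proof of Lemma \ref{lm:gron-B} cited from \cite{BS}.
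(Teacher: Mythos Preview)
Your Gronwall strategy matches the paper's, and the treatment of \eqref{eq:gron-NA} is essentially correct, up to two small corrections: $A_t$ changes $\cN$ by $\pm 1$ and $\pm 3$ (from the $b^*b^*b(\gamma)$ and $b^*b^*b^*(\sigma)$ pieces), not $\pm 2$; and the projected operators $\tilde b_x = b_x - \ptx\, b(\ptt)$ of \eqref{eq:bxxtildedef} involve no $d$-operators whatsoever---\eqref{eq:defd} and \eqref{eq:d-bds} concern conjugation by $e^{B_t}$, which plays no role inside $A_t$.

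The energy bound \eqref{eq:gron-A}, however, has a real gap. Expanding $[A_t,(\cH_N+\cN+1)(\cN+1)^k]$ by Leibniz produces two pieces. For the first, $[A_t,\cH_N](\cN+1)^k$, your proposed splitting into $\delta(\cH_N+\cN+1)(\cN+1)^k + C\delta^{-1}e^{c|t|}(\cN+1)^{k+\text{const}}$ closes the Gronwall loop only if $\text{const}\le 1$, which you do not verify; the paper sidesteps this by first proving (Lemma~\ref{lm:commutatorHA}) the \emph{balanced} form bound
\[
|\langle\xi_1,[\cH_N,A_t]\xi_2\rangle|\le Ce^{c|t|}\,\|(\cH_N+\cN+1)^{1/2}(\cN+1)^{n/2}\xi_1\|\,\|(\cH_N+\cN+1)^{1/2}(\cN+1)^{-n/2}\xi_2\|,
\]
which, taken with $\xi_1=e^{sA_t}\xi$, $\xi_2=(\cN+1)^k e^{sA_t}\xi$ and $n=k$, feeds directly into $G'(s)\le Ce^{c|t|}G(s)+\dots$ with no increase of the $\cN$-exponent. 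More seriously, you claim the second piece $(\cH_N+\cN+1)[A_t,(\cN+1)^k]$ is ``handled as above''. It is not: the argument for \eqref{eq:gron-NA} only yields a \emph{form} bound $\pm[A_t,(\cN+1)^k]\le Ce^{c|t|}(\cN+1)^k$, and one cannot left-multiply a form inequality by the unbounded operator $\cH_N$. The paper treats this term by rewriting $\cH_N[\cN,A_t]$ as $[\cH_N,A_t^{1,\gamma}+3A_t^{1,\sigma}]+\big((A_t^{1,\gamma}+3A_t^{1,\sigma})\cH_N+\hc\big)+\cH_N[\cN,A_t-A_t^1]$, then normal-ordering the product $(A_t^{1,\sharp})\cH_N$ explicitly (commuting $\cK$ and $\cV_N$ past the three $b$'s) and using the cutoff $\Theta(\cN)$ to trade surplus factors of $\cN^{1/2}$ for $M^{1/2}=N^{1/4}$. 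This manipulation---not a repetition of the number-only argument---is exactly what keeps the exponent of $(\cN+1)$ unchanged on the right of \eqref{eq:gron-A}, and is the reason the cutoff was inserted in \eqref{eq:defA} in the first place.
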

{\it Remark.} In (\ref{eq:gron-A}) it is crucial that the exponent $k$ is the same on both sides of the inequality. In fact, this is the reason for the introduction of the cutoff $\Theta (\cN)$ in (\ref{eq:defA}). For cubic phases without cutoff, as the one used in the time-independent setting in \cite{BSS1}, an estimate similar to (\ref{eq:gron-A}) holds true, but only with an additional term $(\cN+1)^{k+2}$ on the r.h.s. of the equation; see \cite[Lemma 5.8]{BSS1}.  

With the cubic phase, we define a new fluctuation dynamics $\cU_N$, setting 
\begin{equation}\label{eq:fluct-dyn2}  \cU_N (t;s) = e^{-A_t} \bar{\cU}_N (t;s) e^{A_s} = e^{-A_t} e^{-B_t} U_{N,t} e^{-i H_N (t-s)} U_{N,s}^* e^{B_s} e^{A_s}\,. \end{equation} 
To prove Theorem \ref{th:main}, we will first show that the difference between $\bar{\cU}_N (t;0) \xi_N$ and $\cU_N (t;0) \xi_N$ is small in norm, in the limit $N \to \infty$. Afterwards, we will prove (\ref{eq:main2}), but with $\bar{\cU}_N$ replaced by $\cU_N$. To this end, we will need some properties of the cubically renormalized fluctuation dynamics $\cU_N$, which we establish in the rest of this section. 

First of all, we need to control the growth of the number of particles and of the energy along the evolution $\cU_N$. The proof of the next proposition is based on the estimates in \cite[Prop. 6.1]{BS} for the dynamics $\bar{\cU}_N$.  
\begin{prop}\label{prop:propenfulldyn}
Under the same assumptions as in Theorem \ref{th:main}, let $\cU_N$ be defined as in (\ref{eq:fluct-dyn2}).  
Then there exists $C,c > 0$ such that  
\begin{equation*}\label{eq:apri} 
		\la \cU_{N}(t;0) \x, (\cH_N+ \cN) \, \cU_{N}(t;0) \x\ra 
		\leq C \exp(c\exp(c\vert t \vert)) \la \x, (\cH_N+\cN+1)\x\ra\,,
	\end{equation*}
	for all $\x\in\Ftru_{\perp \ph}$ and all $t\in\RRR$.
\end{prop}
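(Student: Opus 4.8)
\textbf{Proof strategy for Proposition \ref{prop:propenfulldyn}.}

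The plan is to reduce the desired estimate on the fully renormalized dynamics $\cU_N (t;0) = e^{-A_t} \bar{\cU}_N (t;0) e^{A_0}$ to the corresponding bound for the dynamics $\bar{\cU}_N$, which is available from \cite[Prop.~6.1]{BS}. First I would insert the cubic phases: writing $\cU_N (t;0) \xi = e^{-A_t} \bar{\cU}_N (t;0) e^{A_0} \xi$, I conjugate the operator $\cH_N + \cN$ and estimate
\[
\la \cU_N (t;0)\xi, (\cH_N+\cN)\,\cU_N(t;0)\xi\ra
= \la \bar{\cU}_N(t;0) e^{A_0}\xi,\; e^{A_t}(\cH_N+\cN)e^{-A_t}\;\bar{\cU}_N(t;0) e^{A_0}\xi\ra .
\]
By Lemma \ref{lm:gron-A} (applied with $k=0$, and with $A_t$ replaced by $-A_t$, which satisfies the same bounds), $e^{A_t}(\cH_N+\cN+1)e^{-A_t} \leq C e^{c|t|}(\cH_N+\cN+1)$ as forms on $\cF^{\leq N}_{\perp \wt\ph_t}$, so the right-hand side is bounded by $C e^{c|t|} \la \bar{\cU}_N(t;0) e^{A_0}\xi, (\cH_N+\cN+1)\,\bar{\cU}_N(t;0) e^{A_0}\xi\ra$.

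Next I apply the a priori bound for $\bar{\cU}_N$ from \cite[Prop.~6.1]{BS}, which controls $\la \bar{\cU}_N(t;0)\chi, (\cH_N+\cN+1)\,\bar{\cU}_N(t;0)\chi\ra$ by $C\exp(c\exp(c|t|))\,\la \chi, (\cH_N+\cN+1)\chi\ra$; here one must check that the hypotheses of that proposition (regularity of $\ph$, smallness of $\ell$, and whatever moment bounds are needed on the excitation vector) are implied by the assumptions of Theorem \ref{th:main} — in particular $\ph \in H^6$ and the choice of $\ell$ small guarantee the required Sobolev norm control on $\wt\ph_t$ through Prop.~\ref{prop:propertiesphit}. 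Applying this with $\chi = e^{A_0}\xi$, it remains to bound $\la e^{A_0}\xi, (\cH_N + \cN+1)\,e^{A_0}\xi\ra$ in terms of $\la \xi, (\cH_N+\cN+1)\xi\ra$; this is exactly the content of the second inequality \eqref{eq:gron-A} in Lemma \ref{lm:gron-A} at $k=0$ and $t=0$, giving $\la e^{A_0}\xi, (\cH_N+\cN+1)\,e^{A_0}\xi\ra \leq C \la \xi, (\cH_N+\cN+1)\xi\ra$. Chaining the three estimates and absorbing the $e^{c|t|}$ prefactor into the double exponential yields the claim.

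The only genuine subtlety — and the step I would expect to demand the most care — is the compatibility of function spaces and the domain questions hidden in the formal manipulation above: $\bar{\cU}_N(t;0)$ and the conjugation by $e^{\pm A_t}$ act between the $N$-dependent spaces $\cF^{\leq N}_{\perp \ph}$ and $\cF^{\leq N}_{\perp \wt\ph_t}$, so one should first establish the estimates on a dense set of sufficiently regular vectors (e.g.\ vectors with finitely many particles in the range of smooth projections) where all quantities are finite, and then pass to the limit; the operator inequalities in Lemma \ref{lm:gron-A} and in \cite[Prop.~6.1]{BS} are stated precisely so that this is legitimate. A minor related point is that \cite[Prop.~6.1]{BS} is stated for $\bar\cU_N$ with a possibly different normalization of the initial data, so one should verify that applying it to $\chi = e^{A_0}\xi \in \cF^{\leq N}_{\perp\ph}$ (rather than to a genuine excitation vector of a condensate state) is covered — it is, since that proposition is purely an estimate on the propagator and makes no use of the specific structure of the initial vector beyond finiteness of the right-hand side.
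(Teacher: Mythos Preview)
Your proposal is correct and follows essentially the same approach as the paper's proof: conjugate by the cubic phase, apply Lemma~\ref{lm:gron-A} at time $t$, invoke \cite[Prop.~6.1]{BS} for $\bar{\cU}_N$, then apply Lemma~\ref{lm:gron-A} again at time $0$. Your parenthetical remark that Lemma~\ref{lm:gron-A} must be used with the sign of $A_t$ reversed is a point the paper leaves implicit, and your discussion of domain issues is more cautious than the paper's (which simply chains the three estimates without comment); neither of these constitutes a genuine difference in method.
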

\begin{proof}
From Lemma \ref{lm:gron-A} (with $k=0$), we find
\[\begin{split} \langle \cU_N (t;0) \xi , (\cH_N+ \cN) \, \cU_{N}(t;0) \x\ra &=  \langle \bar{\cU}_N (t;0) e^{A_0} \xi , e^{A_t} (\cH_N+ \cN) \, e^{-A_t} \bar{\cU}_{N}(t;0) e^{A_0}  \x\ra \\ &\leq C e^{c|t|} \langle \bar{\cU}_N (t;0) e^{A_0} \xi , (\cH_N+ \cN+1) \, \bar{\cU}_{N}(t;0) e^{A_0}  \x\ra\,. \end{split} \]
From \cite[Prop. 6.1 and following remark]{BS} and applying again Lemma \ref{lm:gron-A}, we conclude 
\[ \begin{split}  \langle \cU_N (t;0) \xi , (\cH_N+ \cN) \, \cU_{N}(t;0) \x\ra &\leq C e^{c|t|} \langle e^{A_0} \xi , (\cH_N+ \cN+1) e^{A_0}  \x\ra \\ &\leq C e^{c|t|} \langle \xi, (\cH_N + \cN + 1) \xi \rangle\,. \end{split} \]
\end{proof}

While controlling the growth of the expectation of $\cN, \cH_N$ with respect to the fluctuation dynamics $\cU_N$ is enough to establish convergence of the reduced density, to obtain a norm approximation for the dynamics $\cU_N$ we need more precise information on its generator. To this end, we remark that (\ref{eq:fluct-dyn2}) satisfies the Schr\"odinger type equation 
\[i\partial_t \, \cU_{N}(t;s) = \cJ_{N} (t) \cU_{N}(t;s)\]
with the time-dependent generator $\cJ_{N} (t)$ given by 
\begin{equation}
\begin{split}\label{eq:genJ}	
			\cJ_{N} (t) =  [i\partial_te^{-A_t}]e^{A_t} +
			e^{-A_t}\Big[(i\partial_te^{-B_t})e^{B_t} +e^{-B_t}\big[(i\partial_tU_{N,t})U^*_{N,t} +  U_{N,t} H_N U^*_{N,t}\big]e^{B_t}\Big]e^{A_t}\,.
		\end{split}	
	\end{equation}
In the next proposition, we compute $\cJ_{N} (t)$ up to errors vanishing in the limit $N \to \infty$. The proof of this proposition is deferred to Section \ref{sec:proof-prop}.
\begin{prop} \label{prop:cJNt} 
Under the same assumptions as in Theorem \ref{th:main}, let $\cJ_N (t)$ be the generator of the fluctuation dynamics $\cU_N$, as defined in (\ref{eq:genJ}). Then we have, as a quadratic form on $\Fperpt\times\Fperpt$, 
	\begin{equation*}
		\cJ_{N}(t) = 	\k_{N}(t) + \cJ_{2,N}(t) + \cVN + \cE_{\cJ_N}(t),
	\end{equation*}
where $\k_{N}(t)$ and $ \cJ_{2,N}(t)$ are defined as in \eqref{eq:cubicphasek1}, \eqref{eq:generatorapprox} and where the error term $\cE_{\cJ_N}(t)$ satisfies 
	\[\begin{split}
		\abs{\la \xi_1, \cE_{\cJ_N}(t)\xi_2\ra}
		\leq &\,C\expt  N^{-1/4} 
		\norm{\cHNplusNh\x_1}\norm{(\cH_N + \cN^3+1)^{1/2} (\cN+1) \x_2} 
	\end{split}\]
	for any $\xi_1,\xi_2 \in \Fperpt$. 
\end{prop}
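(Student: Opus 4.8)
The strategy is to compute the generator $\cJ_N(t)$ in \eqref{eq:genJ} term by term, following the order in which the conjugations are applied, and to collect all contributions into $\kappa_N(t) + \cJ_{2,N}(t) + \cV_N + \cE_{\cJ_N}(t)$. I would proceed in three stages, matching the nested structure of \eqref{eq:genJ}.

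\emph{Stage 1: the inner layer $e^{-B_t}\big[(i\partial_t U_{N,t})U_{N,t}^* + U_{N,t} H_N U_{N,t}^*\big] e^{B_t}$.} First I would expand $U_{N,t} H_N U_{N,t}^*$ using the rules \eqref{eq:U-rules}, writing it in terms of the operators $b, b^*$ and $\cN$, $\cK$, $\cV_N$; this is the standard Bogoliubov-type expansion (cf. \cite{BS,BBCS3}) producing constant terms, quadratic terms, a cubic term and the quartic $\cV_N$. Next, the term $(i\partial_t U_{N,t})U_{N,t}^*$ produces contributions proportional to $\dot{\wt\ph}_t$; using the modified Gross-Pitaevskii equation \eqref{eq:GPmod} to replace $i\dot{\wt\ph}_t = -\Delta\wt\ph_t + (N^3 V(N\cdot)f_\ell(N\cdot)*|\wt\ph_t|^2)\wt\ph_t$, these partly cancel the corresponding terms coming from $U_{N,t}H_NU_{N,t}^*$, a cancellation that is the reason for choosing the modified equation in the first place. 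Then I would conjugate the result with $e^{B_t}$, using \eqref{eq:defd}, \eqref{eq:defcoshsinh}, the bounds \eqref{eq:d-bds} and the scattering equation \eqref{eq:scatlN} for $f_{N,\ell}$; the crucial input here is that the singular part of $\cV_N$ combines with the action of $e^{B_t}$ on the kinetic and potential terms to reproduce the renormalized coefficients appearing in $\cJ_{2,N}^K(t)$, $\cJ_{2,N}^V(t)$ and in the $N^3\lambda_\ell$, $\Delta w_{N,\ell}$, $\nabla w_{N,\ell}$ lines of \eqref{eq:generatorapprox}, up to the stated error. The term $(i\partial_t e^{-B_t})e^{B_t}$ contributes the $\dot\eta_t$ lines of \eqref{eq:generatorapprox} and of $\kappa_N(t)$, via the Duhamel/Volterra expansion $(i\partial_t e^{-B_t})e^{B_t} = -\int_0^1 e^{-sB_t}(i\dot B_t) e^{sB_t}\,ds$ together with the interpolated kernels $\gamma_t^{(s)}, \sigma_t^{(s)}$.

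\emph{Stage 2: the cubic layer.} I would then conjugate the quadratic and quartic output of Stage 1, together with the cubic term, with $e^{A_t}$, using $\cU_N = e^{-A_t}\bar\cU_N e^{A_t}$ and the smallness $\|\nu\|_2 \leq CN^{-1/4}$ from Lemma \ref{lm:propnu}. Because $A_t$ is cubic with a small coefficient, I expand $e^{-A_t}(\,\cdot\,)e^{A_t} = (\,\cdot\,) + [\,\cdot\,, A_t] + \tfrac12[[\,\cdot\,,A_t],A_t] + \dots$; the commutator of the cubic term with $A_t$ contributes a further constant (going into $\kappa_N(t)$) plus quadratic terms, while all higher commutators and the commutators of $\cV_N$ and $\cK$ with $A_t$ are controlled using Lemma \ref{lm:gron-A}, the $d$-bounds \eqref{eq:d-bds}, and the moment assumptions, and absorbed into $\cE_{\cJ_N}(t)$. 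The term $[i\partial_t e^{-A_t}]e^{A_t}$ is again treated by a Duhamel expansion in $A_t$ and contributes only to the error, since $\dot\nu_t$ carries the same smallness in $m=N^{-1/2}$.

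\emph{Stage 3: bookkeeping and error estimates.} Finally I would verify that all the \emph{constant} (scalar) contributions from Stages 1--2 add up exactly to $\kappa_N(t)$ as written in \eqref{eq:cubicphasek1}, that the surviving quadratic contributions are precisely $\cJ_{2,N}(t)$, and that the quartic term is exactly $\cV_N$. Every remaining term must be shown to be bounded by $C e^{c|t|} N^{-1/4}\|(\cH_N+\cN+1)^{1/2}\xi_1\|\,\|(\cH_N+\cN^3+1)^{1/2}(\cN+1)\xi_2\|$; this uses the Sobolev bounds on $\wt\ph_t$ from Prop.~\ref{prop:propertiesphit}, the $L^2$ and operator bounds on $\eta_t,\mu_t,\gamma_t,\sigma_t$ from Lemma \ref{lm:propeta}, the commutation relations \eqref{eq:bcomm2}, and the standard estimates relating $b^\sharp, \tilde b^\sharp$ to $\cN^{1/2}$ and relating $\cV_N$-controlled quantities to the potential energy. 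The main obstacle is Stage 1: the delicate algebra of matching the $e^{B_t}$-conjugated kinetic and singular potential terms against the precise list of operators defining $\cJ_{2,N}^K(t)$ and $\cJ_{2,N}^V(t)$, and tracking which pieces of $\cV_N$ get renormalized and which must be kept as the genuine quartic term, while ensuring the error is of size $N^{-1/4}$ rather than merely $o(1)$. The asymmetry in the powers of $(\cN+1)$ between $\xi_1$ and $\xi_2$ on the right-hand side reflects that, throughout, I place all "extra" factors of $(\cN+1)$ (needed to control the $d_t$-terms and the cubic contributions) on the $\xi_2$ slot.
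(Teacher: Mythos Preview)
Your Stage 1 is essentially the content of Proposition \ref{prop:cGNt} in the paper and is correct in outline: after conjugating with $e^{B_t}$ one obtains $\kappa_\cG(t) + \cG_{2,N}(t) + \cC_{N,t} + \cV_N$ plus an error of size $N^{-1/2}$, where $\cC_{N,t}$ is the explicit cubic operator \eqref{eq:cubic}.

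Your Stage 2, however, has a genuine gap. You write that ``the commutators of $\cV_N$ and $\cK$ with $A_t$ are controlled using Lemma \ref{lm:gron-A} \dots\ and absorbed into $\cE_{\cJ_N}(t)$''. This is false: the commutator $[\cH_N, A_t] = [\cK+\cV_N, A_t]$ is \emph{not} small. By design of the kernel $\nu_t$ and the scattering equation, it equals $-\cC_{N,t}$ up to a genuine $N^{-1/4}$ error (this is exactly Lemma \ref{lm:commutatorHA}, eq.\ \eqref{eq:commutatorKA}). The entire purpose of the cubic phase $e^{A_t}$ is to generate this $-\cC_{N,t}$, which then cancels the cubic leftover $\cC_{N,t}$ from Stage 1. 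If you absorb $[\cH_N, A_t]$ into the error, the term $\cC_{N,t}$ survives in $\cJ_N(t)$; it is $O(1)$ in the norm on the right-hand side and is not part of the claimed decomposition, so the proposition would fail.

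The correct Stage-2 bookkeeping, as carried out in the paper, is: expand $e^{-A_t}\cH_N e^{A_t}$ by Duhamel to \emph{second} order (first order gives $-\cC_{N,t}$; second order gives $-\tfrac12(\Xi_1+\Xi_2)$ via Lemma \ref{lm:commutatorCNA}); expand $e^{-A_t}\cC_{N,t}e^{A_t}$ to first order (giving $\cC_{N,t}+\Xi_1+\Xi_2$); and show $e^{-A_t}(\cG_{2,N}-\cK)e^{A_t} = (\cG_{2,N}-\cK) + \text{error}$ via Lemma \ref{lm:cubicquadraticterms}. Summing, the cubic terms cancel and one is left with $\cG_{2,N}+\cV_N+\tfrac12(\Xi_1+\Xi_2)$. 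The quadratic operators $\tfrac12(\Xi_1+\Xi_2)$ (after replacing $m$ by $\ell$, an $N^{-1/2}$ correction) then combine with $\cG_{2,N}$, converting the remaining factors $V_N$ into $V_N f_{N,\ell}$, to produce $\cJ_{2,N}$; their constant parts do the same to $\kappa_\cG$, producing $\kappa_N$. Your claim that the constants in $\kappa_N$ come from $[\cC_{N,t},A_t]$ is thus only half right: the other half of the mechanism is the second-order term $[[\cH_N,A_t],A_t]$, which you have discarded.
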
 

\section{Quadratic evolution and proof of Theorem \ref{th:main}} 

In this section, we study the quadratic evolution defined in (\ref{eq:evolutionapprox}). First of all, we establish important properties of the time-dependent generator $\cJ_{2,N}$.
\begin{prop}\label{pr:boundsquadraticgenerator}
Let $\cJ_{2,N}$ be defined as in Eq. \eqref{eq:generatorapprox}. Under the same assumptions as in Theorem \ref{th:main}, there exist constant $C,c > 0$ such that 
\begin{align}
	    \label{eq:quadgeninequality1}
	\pm(\cJ_{2,N} (t) - \cK) & \leq C\expt \cNplusone,\\
	\label{eq:quadgeninequality1b}
	(\cJ_{2,N} (t) - \cK)^2 & \leq C\expt (\cN+1)^2  ,\\
	\label{eq:quadgeninequality2}
	\pm \dot{\cJ}_{2,N} (t) &\leq C\expt \cNplusone\\
	\label{eq:quadgeninequality2b}
	\dot{\cJ}^2_{2,N} (t) &\leq C\expt (\cN + 1)^2 \\
	\label{eq:quadgeninequality3}
	\abs{\la\x_1, [\cN, \cJ_{2,N} (t)] \x_2\ra}
	&\leq C\expt \norm{\cNplusone^{(1+j)/2}\x_1}\norm{\cNplusone^{(1-j)/2}\x_2}
	\end{align}
	for $j\in\ZZZ$ and $\x_1, \x_2\in\Ftru$. 
	\end{prop}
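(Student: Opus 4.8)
The plan is to estimate each of the terms entering the generator $\cJ_{2,N}(t)$ in \eqref{eq:generatorapprox}, \eqref{eq:J2NK}, \eqref{eq:cJ2NV} separately, using the Cauchy--Schwarz inequality in the form $\pm(a^*(f)b(g) + \text{h.c.}) \leq \|f\| \, \|g\| (\cN+1)$ (and similarly for the $b^* b^*$ and $bb$ terms), together with the fact that the $\wt b^\sharp$ operators are bounded in terms of $(\cN+1)^{1/2}$, exactly as the ordinary $a^\sharp$. Concretely, a generic summand of $\cJ_{2,N}(t) - \cK$ is of the form $\int dx\,dy\, M_t(x,y) \, \wt b^\sharp(h^1_{t,x}) \wt b^\sharp(h^2_{t,y})$ (or a one-particle-kernel version $\int dx\, \wt b^\sharp(k^1_{t,x})\wt b^\sharp(k^2_{t,x})$), and I would bound it by $C \sup_t \big(\text{relevant norm of } M_t, h^i_t\big) \, (\cN+1)$. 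The needed norms --- things like $\|\eta_t\|_2$, $\|\nabla_1 \eta_t\|_2$, $\|p_t\|_2$, $\|r_t\|$, $\|\Delta \wt\ph_t\|$, $\|N^3 V_N f_{N,\ell}\|_1$, $\|N w_{N,\ell}\|_2$, $\|N^3 \lambda_\ell\|$, $\|\dot\eta_t\|$, etc. --- are all uniformly bounded in $N$ and grow at most like $e^{c|t|}$ in $t$; this is precisely the content of Lemma \ref{3.0.sceqlemma}, Lemma \ref{lm:propeta}, and Prop. \ref{prop:propertiesphit} in Appendix \ref{app:eta}, which I invoke. The terms containing a factor of $N^3 V_N f_{N,\ell}$ paired with a slowly varying function require the observation, already recorded after \eqref{eq:GPmod}, that $\int N^3 V_N f_{N,\ell} \to 8\pi\aa$, so these contribute bounded constants; no cancellation between terms is needed since we only want one-sided bounds by $(\cN+1)$ rather than the sharp constant. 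This establishes \eqref{eq:quadgeninequality1}.

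For \eqref{eq:quadgeninequality1b}, I would write $\cJ_{2,N}(t) - \cK = \sum_i T_i$ as a finite sum of the summands above, use $(\sum_i T_i)^2 \leq C \sum_i T_i^2$, and for each $T_i$ estimate $\|T_i \xi\|$ rather than the quadratic form: for a term $\int M_t(x,y)\wt b^*(h^1_x)\wt b^*(h^2_y)$ one has $\|T_i \xi\| \leq C \|M_t\| \, \|h^1_t\| \, \|h^2_t\| \, \|(\cN+1)\xi\|$ by moving one $(\cN+1)^{1/2}$ to each operator factor and using that $\wt b^\sharp$ maps $(\cN+1)^{n/2}$-bounded vectors to $(\cN+1)^{(n\pm1)/2}$-bounded ones; squaring and summing gives $\|(\cJ_{2,N}(t)-\cK)\xi\|^2 \leq C e^{c|t|} \|(\cN+1)\xi\|^2$, i.e. \eqref{eq:quadgeninequality1b}. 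The estimates \eqref{eq:quadgeninequality2} and \eqref{eq:quadgeninequality2b} for $\dot{\cJ}_{2,N}(t)$ are obtained in exactly the same way after differentiating in $t$: the time derivative hits either $\wt\ph_t$ (producing $\dot{\wt\ph}_t = i\Delta\wt\ph_t - i(\ldots)\wt\ph_t$, controlled in the needed Sobolev norms by Prop. \ref{prop:propertiesphit}), or the kernels $\eta_t, \gamma_t, \sigma_t, p_t, r_t$ (producing $\dot\eta_t$ and expressions built from it, controlled by Lemma \ref{lm:propeta}), or the cutoff-independent projections $\wt q_t$ (producing $\dot{\wt q}_t$, again a rank-two operator with $\wt\ph_t, \dot{\wt\ph}_t$). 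Each resulting term has the same structure as before with kernels still uniformly bounded in $N$ and exponentially bounded in $t$, so the same Cauchy--Schwarz/operator-norm bookkeeping applies.

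Finally \eqref{eq:quadgeninequality3}: since $\cJ_{2,N}(t) - \cK$ is a sum of terms each of which changes the number of particles by a fixed amount ($0$ for the $\wt b^* \wt b$-type terms, $\pm 2$ for the $\wt b^* \wt b^*$- and $\wt b\wt b$-type terms), one has $[\cN, \cK] = 0$ and $[\cN, T_i] = c_i T_i$ with $c_i \in \{0, \pm 2\}$, so $[\cN, \cJ_{2,N}(t)] = \sum_i c_i T_i$ is again a sum of terms of the same type, and the bound \eqref{eq:quadgeninequality1}-style estimate — now with $(\cN+1)^{(1+j)/2}$ on the left vector and $(\cN+1)^{(1-j)/2}$ on the right, which is legitimate because each $T_i$ carries two operator factors absorbing one $(\cN+1)^{1/2}$ each and shifting $\cN$ by a bounded amount — yields the claim for every $j \in \bZ$. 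The main obstacle, and where the bulk of the work lies, is organizational rather than conceptual: one must go through every one of the many summands in \eqref{eq:J2NK} and \eqref{eq:cJ2NV} (and their $t$-derivatives), identify the correct kernel norm to pull out in each case — in particular handling the terms where $\Delta$ falls on a factor inside $\cJ_{2,N}^K$, which forces the use of $H^2$-type bounds on $\eta_t, r_t$ and hence, through \eqref{eq:defk} and \eqref{eq:defeta}, the $H^6$ regularity of $\ph$ assumed in Theorem \ref{th:main} to propagate enough Sobolev regularity of $\wt\ph_t$ — and check that no term secretly requires a power of $\cN$ higher than what is claimed; the singular factors $N^3 V_N f_{N,\ell}$ and $N w_{N,\ell}$ must be absorbed via their $L^1$, resp. $L^2$, norms rather than pointwise, which is the one place where the Gross-Pitaevskii scaling genuinely enters.
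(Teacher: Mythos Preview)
Your overall strategy is exactly the paper's: write $\cJ_{2,N}(t)-\cK$ as a finite sum of quadratic terms with explicit kernels, bound each via Lemma~\ref{lm:aprioribounds} (or the Cauchy--Schwarz version you describe), and feed in the kernel estimates from Lemma~\ref{3.0.sceqlemma}, Prop.~\ref{prop:propertiesphit} and Lemma~\ref{lm:propeta}. The commutator and time-derivative arguments are also correct in spirit.

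However, two specific terms slip through the ``bound every kernel norm'' scheme you outline, and both are singled out in the paper's proof for exactly this reason. First, you list $\|\nabla_1\eta_t\|_2$ among the quantities that are ``uniformly bounded in $N$''; this is false---Lemma~\ref{lm:propeta} only gives $\|\nabla_j\eta_t\|\le C\sqrt N$. For the term $\int dx\,\tilde b^*(\nabla_x\eta_x)\tilde b(\nabla_x\eta_x)$ in \eqref{eq:J2NK} your product-of-HS-norms bound would therefore produce a factor $N$. The fix is to recognise this as an $A_3$-type term with composed kernel $u_t(z,w)=\int dx\,\nabla_x\eta_t(z,x)\,\overline{\nabla_x\eta_t(w,x)}$ and bound $\|u_t\|_{\mathrm{op}}$ (or, as the paper does, $\|u_t\|_2$) directly; this requires an extra computation, not just looking up a norm in Lemma~\ref{lm:propeta}. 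Second, the off-diagonal term carrying $N\nabla w_{N,\ell}(x-y)$ in \eqref{eq:generatorapprox} is a $\tilde b^*\tilde b^*$ term, so you genuinely need the $L^2$-norm of its kernel, but $\|N\nabla w_{N,\ell}\|_2^2\sim N$. The bound only becomes uniform once you exploit the antisymmetry of the bracket $[\nabla\wt\ph_t(x)\wt\ph_t(y)-\wt\ph_t(x)\nabla\wt\ph_t(y)]$ to gain a factor $|x-y|$ via Taylor expansion, turning the $|x-y|^{-2}$ singularity into an integrable $|x-y|^{-1}$; see \eqref{eq:badterm}. The same manoeuvre applied to $\dot\cJ_{2,N}$ is where $\|\dot{\wt\ph}_t\|_{H^4}$ (and hence $\ph\in H^6$) actually enters---not, as you suggest, in the $\Delta$-terms of $\cJ_{2,N}^K$, which only need $\|\Delta_j p_t\|,\|\Delta_j r_t\|,\|\Delta_j\mu_t\|\le Ce^{c|t|}$.
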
 

To show Prop. \ref{pr:boundsquadraticgenerator} (and some of the other bounds discussed in this section), the following lemma will be useful. Its proof is a straighforward adaptation of \cite[Lemma 3.4, 3.6]{BCS}. 
\begin{lemma}
\label{lm:aprioribounds}
Let $F$ be a bounded operator and $J_1,J_2$ two Hilbert-Schmidt operators on $L^2 (\bR^3)$. We also denote by $F, J_1, J_2$ the integral kernels of the three operators ($J_1, J_2 \in L^2 (\bR^3 \times \bR^3)$, while $F$ is in general a distribution). Let 
\begin{equation}\label{eq:A1A3} 
\begin{split}
A_1 &= \int dx\, b^\# (J_{1,x}) b^\# (J_{2,x}) \, , \quad A_2 = \int dx \, b^\#(J_{1,x}) b_x \,,\quad 
\quad A_3= \int dx dy \, F (x,y) b_x^* b_y \end{split}\end{equation} 
where $\#= *, \cdot$. Then, we have 
\be
\label{eq:op1}
\begin{split}
    |\la \xi_1, A_1 \xi_2 \ra| &\leq C \|J_1\|_2 \norm{\cNplusone^{(1+p)/2}\x_1}
    \norm{\cNplusone^{(1-p)/2}\x_2}\\
    |\la \xi_1, A_2 \xi_2 \ra| &\leq C \|J_1\|_2\|J_2\|_2 \norm{\cNplusone^{(1+p)/2}\x_1}
    \norm{\cNplusone^{(1-p)/2}\x_2}\\
    |\la \xi_1,  A_3 \xi_2\ra |&
    \leq C  \| F \|_\text{op} \norm{\cNplusone^{(1+p)/2}\x_1}
    \norm{\cNplusone^{(1-p)/2}\x_2}\\
\end{split}
\ee
for all $p \in \bZ$ and $\xi_1, \xi_2 \in \cF^{\leq N}$. Moreover,  
\be
\label{eq:op2}
\begin{split}
    A_1^*A_1 + A_1A_1^* &\leq C\|J_1\|_2^2(\cN+1)^2\\
     A_2^*A_2 + A_2A_2^* &\leq C\|J_1\|^2_2 \|J_2\|^2_2 (\cN+1)^2 \,\\
     A_3^2 &\leq C \|F \|_\text{op}^2 (\cN+1)^2\,.
\end{split}
\ee
\end{lemma}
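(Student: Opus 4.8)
The plan is to prove the three types of bounds in Lemma \ref{lm:aprioribounds} separately, following the standard pull-through / Cauchy-Schwarz techniques for operators that are quadratic in the modified creation and annihilation operators. I would first record the elementary pointwise bounds $\| b_x \xi \| \leq \| a_x \xi \|$ and $\| b_x^* \xi \| \leq \| a_x \xi \| + N^{-1/2}\|\xi\| \leq C \| (\cN+1)^{1/2} \xi\|$, together with $\int dx\, \| a_x \cNplusone^{p/2}\xi\|^2 = \|\cN^{1/2}\cNplusone^{p/2}\xi\|^2 \leq \|\cNplusone^{(p+1)/2}\xi\|^2$, which are the only inputs really needed. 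The subtlety to keep track of is that $b_x^*$ and $b_x$ do \emph{not} commute with $\cN$ in the naive way, but rather shift it by one; so when moving powers of $\cNplusone$ past a $b^\#$ I would use $b^\#(f) \cNplusone^{p/2} = \cNplusone^{p/2} b^\#(f) (\text{bounded factor close to }1)$ together with $\|(\cN+1)^{p/2}(\cN+2)^{-p/2}\| \leq C$, absorbing such harmless factors throughout.

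For $A_3 = \intxy F(x,y) b_x^* b_y$, I would write, after inserting $\cNplusone^{\mp p/2}\cNplusone^{\pm p/2}$ appropriately,
\[
|\la \xi_1, A_3 \xi_2\ra| \leq \intxy |F(x,y)|\, \| a_x \cNplusone^{p/2}\xi_1\|\, \| a_y \cNplusone^{-p/2}\xi_2\|,
\]
then apply Cauchy-Schwarz in $x,y$ jointly: the double integral factorizes after bounding $|F(x,y)| \leq$ (kernel of $|F|$), and using $\int dx\, \|a_x \cNplusone^{p/2}\xi_1\|^2 \leq \|\cNplusone^{(1+p)/2}\xi_1\|^2$ together with the operator-norm bound $\| |F| \|_{\mathrm{op}} = \|F\|_{\mathrm{op}}$ on the quadratic form $\intxy |F(x,y)| g(x)\overline{h(y)}$. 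The operator bound $A_3^2 \leq C\|F\|_{\mathrm{op}}^2 (\cN+1)^2$ then follows by applying the quadratic-form estimate with $\xi_1 = A_3\xi$, $\xi_2 = \cNplusone^{-1}A_3\xi$ (or directly by writing $A_3^2$ and commuting $\cN$'s through), and similarly for $A_1^*A_1 + A_1 A_1^*$, $A_2^*A_2 + A_2 A_2^*$.

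For $A_1 = \int dx\, b^\#(J_{1,x}) b^\#(J_{2,x})$ and $A_2 = \int dx\, b^\#(J_{1,x}) b_x$, I would handle the four sign combinations $(\#,\#) \in \{(*,*),(*,\cdot),(\cdot,*),(\cdot,\cdot)\}$ by the same scheme: bound $\|b(J_{1,x})\xi\| \leq \|J_{1,x}\|_2 \|\cNplusone^{1/2}\xi\|$ and $\|b^*(J_{1,x})\xi\| \leq \|J_{1,x}\|_2 \|\cNplusone^{1/2}\xi\|$, pull the inner operator-valued distribution onto $\xi_2$, the outer smeared operator onto $\xi_1$, insert the appropriate powers of $\cNplusone$, and then Cauchy-Schwarz in $x$ using $\int dx\, \|J_{1,x}\|_2^2 = \|J_1\|_2^2$ and $\int dx\, \|J_{2,x}\|_2^2 = \|J_2\|_2^2$ (resp. $\int dx\, \|a_x(\cdots)\|^2$ for the $b_x$ factor in $A_2$). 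The case $(*, \cdot)$ in $A_1$, where the creation operator is on the outside, requires normal-ordering $b^*(J_{1,x}) b(J_{2,x})$ or else a small additional commutator term of order $N^{-1}$; either way it is absorbed. The operator inequalities in \eqref{eq:op2} come out of the same estimates applied with $\xi_1 = A_i \xi$ and suitable $\xi_2$, commuting the $\cN$-weights through as above.

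The main obstacle — really the only place care is needed — is the bookkeeping of the $\cN$-shifts induced by $b^\#$: one must make sure the powers $(1\pm p)/2$ land correctly on the two sides and that the commutators $[b_x,\cN]$, $[b_x^*,\cN]$ and the $O(1/N)$ corrections in the commutation relations \eqref{eq:bcomm}, \eqref{eq:bcomm2} contribute only negligible (indeed, for these estimates, harmless) terms. Since the statement asserts the bounds hold on $\cF^{\leq N}$, where $\cN \leq N$, every $1/N$ correction is dominated by the leading term, so no genuine difficulty arises; the proof is, as the authors say, a straightforward adaptation of \cite[Lemma 3.4, 3.6]{BCS}.
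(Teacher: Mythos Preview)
Your approach to $A_1$ and $A_2$ is correct and is exactly the Cauchy--Schwarz / pull-through argument of \cite[Lemmas 3.4, 3.6]{BCS} that the paper invokes; the bookkeeping of $\cN$-shifts you describe is all that is needed there.

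Your treatment of $A_3$, however, has a genuine gap. You write
\[
|\la \xi_1, A_3 \xi_2\ra| \leq \intxy |F(x,y)|\, \| a_x \cNplusone^{p/2}\xi_1\|\, \| a_y \cNplusone^{-p/2}\xi_2\|
\]
and then try to bound the right-hand side via ``$|F(x,y)| \leq$ (kernel of $|F|$)'' together with $\||F|\|_{\mathrm{op}}=\|F\|_{\mathrm{op}}$. Neither step is legitimate for a general bounded $F$: the kernel is only a distribution, so $|F(x,y)|$ need not make sense, and even when it does the integral operator with kernel $|F(x,y)|$ can have operator norm much larger than $\|F\|_{\mathrm{op}}$ (take $F$ the Fourier transform: $|F(x,y)|$ is a nonzero constant, but $\|F\|_{\mathrm{op}}=1$). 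So the Cauchy--Schwarz route through $|F(x,y)|$ cannot close with only $\|F\|_{\mathrm{op}}$ on the right.

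The correct argument --- and what is done in \cite{BCS} --- avoids pointwise kernels entirely. Since $b_x^* b_y = a_x^* \tfrac{N-\cN}{N} a_y$, one has $A_3 = \tfrac{N-\cN+1}{N}\, d\Gamma(F)$, with $d\Gamma(F) = \int dxdy\, F(x,y) a_x^* a_y$ the second quantization of $F$. The operator $d\Gamma(F)$ commutes with $\cN$ and, on the $n$-particle sector, $d\Gamma(F)=\sum_{j=1}^n F^{(j)}$ has norm at most $n\|F\|_{\mathrm{op}}$; hence $\|(\cN+1)^{-1} d\Gamma(F)\|\leq \|F\|_{\mathrm{op}}$. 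Inserting $(\cN+1)^{(1+p)/2}(\cN+1)^{-(1+p)/2}$ on the left and the complementary power on the right then yields \eqref{eq:op1} for $A_3$ directly, and $A_3^2 \leq C\|F\|_{\mathrm{op}}^2 (\cN+1)^2$ follows immediately. This is the piece you are missing.
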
 

{\it Remark.} The bounds in Lemma \ref{lm:aprioribounds} continue to hold true if we replace the operator $b, b^*$ (or the corresponding operator-valued distributions) with the projected operators $\tilde{b}, \tilde{b}^*$, introduced in (\ref{eq:bxxtildedef}) and used in (\ref{eq:generatorapprox}) to define the generator $\cJ_{2,N} (t)$. In fact, it is easy to see that switching from $b, b^*$ to $\tilde{b}, \tilde{b}^*$ corresponds to multiplying the operators $J_1, J_2, F$ with the orthogonal projection $\wt{q}_t = 1 - |\wt{\ph}_t \rangle \langle \wt{\ph}_t|$ on the right and/or on the left; this does not increase the norms $\| J_1 \|_2 , \| J_2 \|_2, \| F \|_\text{op}$ appearing on the r.h.s. of (\ref{eq:op1}), (\ref{eq:op2}). Furthermore, (\ref{eq:op1}), (\ref{eq:op2}) (and their proof) hold true also for operators $\wt{A}_1, \wt{A}_2, \wt{A}_3$ on the full Fock space $\cF$ (without truncation to $\cN \leq N$), defined like $A_1, A_2, A_3$, but with $b, b^*$ replaced by the standard creation and annihilation operators $a, a^*$ or by their projected version $\dbtilde{a}, \dbtilde{a}^*$, used in the definition of the limiting generator (\ref{eq:generatorinfapprox}). 

\begin{proof}[Proof of Prop. \ref{pr:boundsquadraticgenerator}]
With the notation introduced in the last two lines on the r.h.s. of \eqref{eq:generatorapprox}, we claim that $\norm{G_t}_{op}, \norm{H_t}_2\leq C\expt$. For most terms this follows easily from Lemma \ref{3.0.sceqlemma}, Prop. \ref{prop:propertiesphit} and Lemma \ref{lm:propeta}. In fact, all contributions (to $G_t$ and to $H_t$) arising from the kinetic part (\ref{eq:J2NK}) that involve derivatives of $p, r$ or $\mu$ have $L^2$-norm (and therefore also operator norm) less than $C\expt$. Also the term 
\[ \begin{split}  \int dx \, \tilde{b}^* (\nabla_x \eta_x) \tilde{b} (\nabla_x \eta_x)  &= \int dz dw \, \left[ \int dx \, \nabla_x \eta_t (z;x) \nabla_x \eta_t (x,w) \right] \tilde{b}^*_z \tilde{b}_w \\ &=: \int dz dw \, u_t (z,w) \tilde{b}^*_z \tilde{b}_w \end{split} \]
can be bounded with Lemma \ref{lm:aprioribounds}, since $\| u_t \|_\text{op} \leq \| u_t \|_2 \leq C e^{c|t|}$ by Lemma \ref{lm:propeta}.
To handle contributions arising from the potential part (\ref{eq:cJ2NV}), we observe that $N^3 (V_N f_{N,\ell}) (x-y) = N^3 V (N (x-y)) f_\ell (N(x-y))$ is the integral kernel of the differential operator $\widehat{V f_\ell} (i\nabla / N)$ on $L^2 (\bR^3)$ (with operator norm bounded by $\hat{V} (0)$), that $\g_t, p_t$ and $\s_t$ are also bounded operators on $L^2 (\bR^3)$ and that all off-diagonal terms (contributing to $H$) involve at least one factor of $p_t$ or $\s_t$ (with bounded Hilbert-Schmidt norm, uniformly in $N$). Finally, let us consider the contributions from the other terms on the r.h.s. of (\ref{eq:generatorapprox}) (line 2 to line 7). Most of them can be handled as above, with the estimates from Lemma \ref{3.0.sceqlemma}, Prop. \ref{prop:propertiesphit} and Lemma \ref{lm:propeta} (notice, in particular, that $N^3\l_\ell \leq C$). Some more attention is needed for the term involving $\nabla w_{N,\ell}$. Using Lemma \ref{3.0.sceqlemma}, we bound 
\be\label{eq:badterm}
\begin{split}
   &\abs{N\nabla \wlnxy [\nabla\ptx\pty - \ptx\nabla\pty]}\\
    &\leq \frac{C \chi(\abs{x-y}\leq \ell)}{\abs{x-y}^2} \left( 
    \abs{\nabla\ptx} \abs{\pty - \ptx} + \abs{\ptx} \abs{\nabla\ptx - \nabla\pty}
    \right)\\
    &\leq \frac{C \norm{\ptt}_{H^4} \chi(\abs{x-y}\leq \ell)}{\abs{x-y}} \left( 
    \abs{\nabla\ptx}+ \abs{\ptx} 
    \right)\,.\\
\end{split}
\ee
Thus, the $L^2$-norm of the l.h.s. is bounded, uniformly in $N$. This concludes the proof of the bounds $\norm{G_t}_{op}, \norm{H_t}_2\leq C\expt$. From Lemma \ref{lm:aprioribounds} (and from the remark after the lemma) we arrive at \eqref{eq:quadgeninequality1} and, using (\ref{eq:op2}), to  \eqref{eq:quadgeninequality1b} (the second term on the r.h.s. of \eqref{eq:generatorapprox}, the one proportional to $a_x^* a_x$, can be handled in the same way). Since 
\[ [\cN, \cJ_{2,N} (t) ] = \intxy \Big[ 2 H_t (x,y)\bxxstilde\byystilde - 2 
\overline{H}_t (x,y) \bxxtilde\byytilde\Big] \, \] we also conclude (\ref{eq:quadgeninequality3}). 

As for \eqref{eq:quadgeninequality2} and \eqref{eq:quadgeninequality2b}, we observe that contributions to the time-derivative $\dot{\cJ}_{2,N} (t)$ have the same form as contributions to $\cJ_{2,N} (t)$, either with a factor $\wt{\ph}_t$ replaced by $\dot{\wt{\ph}}_t$, or with one of the kernel $\eta_t, \gamma_t, \sigma_t, p_t, r_t$ replaced by $\dot{\eta}_t, \dot{\gamma}_t, \dot{\sigma}_t,\dot{p}_t, \dot{r}_t$, or with one operator $\tilde{b}$, $\tilde{b}^*$ replaced by its time-derivative (the projection depends on time). Using 
\begin{equation*}
\partial_t \bxxstilde =  -\overline{\dptx}b^*(\ptt) -\overline{\ptx}b^*(\dptt) ,
\end{equation*}
the similar formula for $\partial_t \tilde{b}_x$ and the bounds in Lemma \ref{3.0.sceqlemma}, Prop. \ref{prop:propertiesphit} and Lemma \ref{lm:propeta}, we conclude that $\dot{\cJ}_{2,N} (t)$ can be written as the sum of terms of the form (\ref{eq:A1A3})\footnote{In fact, from (\ref{eq:badterm}) we find a contribution proportional to $\| \dot{\wt{\ph}}_t \|_{H^4}$; this is the only term where control of the $H^6$-norm of $\wt{\ph}_t$ is needed.} (with two projected operators $\tilde{b}^\sharp$ or with one $\tilde{b}^\sharp$ and one $b^\sharp$). For this reason, Lemma \ref{lm:aprioribounds} also implies \eqref{eq:quadgeninequality2} and \eqref{eq:quadgeninequality2b} (again, the term proportional to $a_x^* a_x$ on the r.h.s. of (\ref{eq:generatorapprox}) can be handled similarly). 
\end{proof} 

With the help of Prop. \ref{pr:boundsquadraticgenerator}, we obtain well-posedness of the equation (\ref{eq:evolutionapprox}) (existence and uniqueness of the unitary quadratic evolution $\cU_{2,N}$) and control on the growth of kinetic energy and number of particles. Compared with the bounds obtained in Prop. \ref{prop:propenfulldyn} for the full fluctuation dynamics, here we can derive stronger estimates, controlling arbitrary moments of the number of particles operator $\cN$ and the second moment of the kinetic energy operator $\cK$. These improvements (which we can only show for $\cU_{2,N}$ and not for the full fluctuation dynamics $\cU_N$) will play a crucial role in the proof of Theorem \ref{th:main}. 

\begin{prop} \label{prop:wellposedness}
Under the same assumptions as in Theorem \ref{th:main}, let $\cJ_{2,N}$ be defined as in Eq. \eqref{eq:generatorapprox}. Then, $\cJ_{2,N} (t)$ generates a unique strongly continuous two-parameter group $\cU_{2,N}$ of unitary transformations satisfying the Schr\"odinger equation (\ref{eq:evolutionapprox}). For every $t,s \in \bR$, $\cU_{2,N} (t;s): \cF^{\leq N}_{\perp \wt{\ph}_s} \to\Fperpt$. Moreover, for every $k \in \bN$ there exists $C,c > 0$ such that 
\begin{equation}
	\label{eq:propagationquadratic}
	\begin{split}
		\la \cU_{2,N}(t;0) \x, \cNplusoneto{k}\, \cU_{2, N}(t;0) \x\ra 
		& \leq C \exp(c\exp(c\vert t \vert)) \la \x,\cNplusoneto{k} \x\ra\\
		\la \cU_{2,N}(t;0) \x, \cK^2\, \cU_{2, N}(t;0) \x\ra 
	& \leq C \exp(c\exp(c\vert t \vert)) \la \x,(\cK^2+ \cN^2 + 1) \x\ra\,.
		\end{split}
	\end{equation}
\end{prop}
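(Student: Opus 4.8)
The plan is to establish well-posedness of \eqref{eq:evolutionapprox} first, and then the moment bounds \eqref{eq:propagationquadratic} by a Gronwall argument, using the quadratic-form estimates of Prop. \ref{pr:boundsquadraticgenerator} as the main input. For well-posedness, since $\cJ_{2,N}(t)$ is not bounded (it contains the kinetic operator $\cK$), I would not apply the naive Dyson-series/Picard argument directly. Instead, the standard route is to regularize: for a fixed cutoff parameter, introduce $\cJ_{2,N}^{(M)}(t) = \cJ_{2,N}(t)$ with all operator-valued distributions $\tilde b^\sharp_x, a^\sharp_x$ flanked by the bounded projection $\cutoffNlessM = \mathbbm{1}(\cN \leq M)$ onto the sector of at most $M$ excitations (recall that on $\cF^{\leq N}$ we already have $\cN \leq N$, so in fact $\cJ_{2,N}(t)$ is bounded on each such space, but with a norm growing in $N$; the point of the regularization together with the a priori bounds is to get $N$-uniform control). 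The regularized generator is bounded and self-adjoint, depends strongly continuously on $t$, so it generates a unique unitary two-parameter family $\cU_{2,N}^{(M)}(t;s)$ by the usual iteration. One then shows that $\cU_{2,N}^{(M)}(t;s)\xi$ is Cauchy as $M \to \infty$ (or simply $M = N$) using the energy estimates below to control the difference, and that the limit solves \eqref{eq:evolutionapprox}; uniqueness follows from \eqref{eq:quadgeninequality1} by a Gronwall estimate on the difference of two solutions. The mapping property $\cU_{2,N}(t;s): \cF^{\leq N}_{\perp \wt{\ph}_s} \to \cF^{\leq N}_{\perp \wt{\ph}_t}$ is built into the construction because $\cJ_{2,N}(t)$ is written in terms of the projected operators $\tilde b^\sharp$, whose range lies in $\cF^{\leq N}_{\perp \wt{\ph}_t}$; one checks this is preserved along the flow.

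For the number-of-particles bound, I would compute, for $\xi_t := \cU_{2,N}(t;0)\xi$ (first working with the regularized dynamics so that all manipulations are justified, then passing to the limit),
\[
\frac{d}{dt}\, \la \xi_t, \cNplusoneto{k} \xi_t\ra = i\, \la \xi_t, [\cJ_{2,N}(t), \cNplusoneto{k}] \xi_t\ra.
\]
Since $\cNplusoneto{k}$ commutes with $\cK$ and with the diagonal $\bxxstilde\byytilde$ and $\axxs\axx$ pieces, the commutator only picks up the off-diagonal (particle-number-changing) terms $H_t(x,y)\bxxstilde\byystilde$ and its conjugate. Writing $\cNplusoneto{k} - (\cN-1)^k$, telescoping, and using \eqref{eq:quadgeninequality3} together with the elementary operator bound $(\cN+3)^{k-1}\leq C(\cN+1)^{k-1}$ on $\cF^{\leq N}$ gives
\[
\Big|\frac{d}{dt}\, \la \xi_t, \cNplusoneto{k} \xi_t\ra\Big| \leq C e^{c|t|}\, \la \xi_t, \cNplusoneto{k} \xi_t\ra,
\]
and Gronwall yields the first line of \eqref{eq:propagationquadratic} with the double-exponential in $t$ coming from integrating $e^{c|t|}$. (Alternatively, one commutes $\cN$ through step by step $k$ times using \eqref{eq:quadgeninequality3} with $j = \pm 1$.)

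For the second line, I would differentiate $\la \xi_t, \cK^2 \xi_t\ra$. The awkward point is that $\cK$ does \emph{not} commute with $\cJ_{2,N}(t)$, so $[\cJ_{2,N}(t),\cK^2]$ produces terms with derivatives hitting the kernels $\eta_t,\gamma_t,\sigma_t$ and the condensate $\wt\ph_t$; this is exactly where the $H^6$-regularity of $\ph$ (hence high Sobolev bounds on $\wt\ph_t$ from Prop. \ref{prop:propertiesphit}) and the derivative bounds on $\eta_t$ from Lemma \ref{lm:propeta} get used, and it is the main obstacle. The strategy is to bound $\la \xi_t, [\cJ_{2,N}(t),\cK^2]\xi_t\ra$ — or more precisely $\frac{d}{dt}\la\xi_t,(\cK+1)^2\xi_t\ra$ — by $C e^{c|t|}\la\xi_t,(\cK+1)^2\xi_t\ra + C e^{c|t|}\la\xi_t,\cNplusoneto{2}\xi_t\ra$, i.e. to show that each commutator term is controlled by one factor of $(\cK+1)$ on each side (absorbed into the quantity being estimated) plus lower-order contributions in $\cN$ handled by the already-established first line. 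Concretely, one splits $\cJ_{2,N}(t) = \cK + (\text{rest})$; $[\cK,\cK^2]=0$, and for the rest one uses that each monomial is a bounded-kernel expression in two $\tilde b^\sharp$ (or $a^\sharp$) operators, so commuting with $\cK$ moves a gradient onto a kernel, and the resulting operator is again of the form covered by Lemma \ref{lm:aprioribounds} — with the crucial observation that terms like $\tilde b^*(-\Delta_x p_x)\tilde b^*(\eta_x)$ in $\cJ_{2,N}^K(t)$, which carry a Laplacian, have to be paired against one explicit $\cK^{1/2}$ factor from the state using Cauchy–Schwarz ($\|\tilde b_x \xi\| \leq \|a_x\xi\|$ and $\int \|\nabla_x a_x \xi\|^2 dx = \la\xi,\cK\xi\ra$), giving the needed single power of $(\cK+1)$ per side. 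Integrating the resulting differential inequality via Gronwall then produces the stated bound with $\cK^2 + \cN^2 + 1$ on the right, the $\cN^2$ term entering through the lower-order pieces and through the initial commutator remainders. Throughout, these formal computations are made rigorous by first carrying them out for the $\cN\leq M$-regularized dynamics (where everything is bounded) and then letting $M\to N$, the bounds being uniform.
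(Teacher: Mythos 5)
Your well-posedness sketch is fine and is essentially the same as the paper's argument (the paper appeals to \cite[Theorems 7, 8]{LNS}, which is precisely the regularize-and-pass-to-the-limit machinery you describe, with \eqref{eq:quadgeninequality1}, \eqref{eq:quadgeninequality2}, \eqref{eq:quadgeninequality3} as the needed hypotheses). The first moment bound is also handled in the same way, via \eqref{eq:quadgeninequality3} and Gr{\"o}nwall. However, there are two points where your proposal departs from the paper in a way that creates real problems.

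First, your justification of the mapping property $\cU_{2,N}(t;s):\cF^{\leq N}_{\perp\wt\ph_s}\to\cF^{\leq N}_{\perp\wt\ph_t}$ --- that it ``is built into the construction because $\cJ_{2,N}(t)$ is written in terms of the projected operators $\tilde b^\sharp$'' --- is not correct as a reason. The generator \eqref{eq:generatorapprox} also contains $\cK$ and $\int N^3(V_N f_{N,\ell})*|\ptt|^2\,a_x^*a_x\,dx$, which are \emph{not} in terms of $\tilde b^\sharp$, and these terms are exactly what makes the flow stay in the moving subspace: differentiating $\|a(\ptt)\cU_{2,N}(t;s)\xi\|^2$ in $t$ produces $a(\dptt)$ from the time-dependence of the projection, and this has to be exactly cancelled by $[a(\ptt),\,\cK + \int N^3(V_N f_{N,\ell})*|\ptt|^2 a_x^* a_x] = a(i\dptt)$, which is the modified Gross--Pitaevskii equation \eqref{eq:GPmod} in disguise. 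Without this computation (which the paper carries out) the mapping property is not at all automatic; if the non-projected part of $\cJ_{2,N}(t)$ were chosen differently, it would fail.

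Second, and more importantly, your strategy for the $\cK^2$ bound --- differentiate $\langle\xi_t,\cK^2\xi_t\rangle$ and estimate $\langle\xi_t,[\cJ_{2,N}(t),\cK^2]\xi_t\rangle$ --- runs into a genuine obstruction. Since $[\cJ_{2,N}(t),\cK^2]=[\cQ,\cK]\cK + \cK[\cQ,\cK]$ with $\cQ=\cJ_{2,N}(t)-\cK$, you need to control $[\cQ,\cK]$, and that commutator puts spatial Laplacians onto the kernels of $\cQ$. For terms coming from \eqref{eq:cJ2NV} that involve a ``bare'' $\tilde b^*_y$ next to $N^3(V_Nf_{N,\ell})(x-y)$, this produces factors like $N^3\Delta(V_Nf_{N,\ell})(x-y)\sim N^5$, which are not uniformly bounded in $N$; rescuing this requires delicate integration by parts moving the Laplacian back onto $\ptx,\ptt$ or a regularizing kernel (exploiting the convolution structure), and you would have to verify this term by term. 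The paper avoids this entirely by a cleaner trick: instead of $\cK^2$ it propagates $\langle\xi_t,\cJ_{2,N}(t)^2\xi_t\rangle$, exploiting that $[\cJ_{2,N}(t),\cJ_{2,N}(t)^2]=0$, so the only derivative that appears is the \emph{time}-derivative $\dot\cJ_{2,N}(t)$, which is controlled by \eqref{eq:quadgeninequality2b}. The comparison between $\cK^2$ and $\cJ_{2,N}(t)^2$ is then done at fixed $t$ using \eqref{eq:quadgeninequality1b}, with no spatial commutators at all. Your route is not obviously hopeless, but it is strictly harder and you have not addressed the $N$-dependence that the extra spatial derivatives bring in; this is the main gap.
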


{\it Remark.} From \cite[Lemma 3.10]{BCS} (extended trivially to the case $\beta=1$), we have 
\begin{equation}
    \label{eq:boundVN}
    \cV_N \leq \frac{C}{N}\big(\cK^2 + \cN^2\big)\,,
\end{equation}
which also implies $\cHN \leq C(\cK^2+\cN^2)$ and 
\begin{equation*}\label{eq:boundHN}
\cHNplusN\cNplusone 
\leq C(\cK^2+\cNplusone^2).
\end{equation*}

\begin{proof}
To prove the well-posedness, we proceed as in \cite[Theorem 7]{LNS}. Note that \eqref{eq:quadgeninequality1}, \eqref{eq:quadgeninequality2} and \eqref{eq:quadgeninequality3} are precisely the inequalities shown in  \cite[Lemma 9]{LNS} and needed to apply the abstract result in \cite[Theorem 8]{LNS}. 

To show that $\cU_{2,N} (t;s)$ maps $\cF^{\leq N}_{\perp \wt{\ph}_s}$ into $\Fperpt$, let us consider $\x \in \cF^{\leq N}_{\perp \wt{\ph}_s}$, which implies that $a(\wt{\ph}_s)\x=0$. We compute
	\[
	\begin{split}
		\frac{d}{dt}\norm{a(\ptt) \, \cU_{2,N}(t;s) \x}^2 
		= &\; 2 \textrm{Re} \la a(\ptt) \, \cU_{2,N}(t;s)\x, \frac{d}{dt}(a(\ptt) \, \cU_{2,N}(t;s)\x)\ra\\
		=&\; 2 \textrm{Im} \la a(\ptt) \, \cU_{2,N}(t;s) \x, (-a(i \dptt ) + a(\ptt)\cJ_{2,N}(t) ) \, \cU_{2,N}(t;s) \x\ra\\
		=&\; -2 \textrm{Im} \la \, \cU_{2,N}(t;s) \x, a^*(\ptt)a(i \dptt ) \, \cU_{2,N}(t;s) \x\ra\\
		&-i  \la \, \cU_{2,N}(t;s) \, \x,  [a^*(\ptt) a(\ptt), \cJ_{2,N}(t)] \, \cU_{2,N}(t;s) \x\ra = 0
	\end{split}
	\]
	where we used that $[a^*(\ptt) a(\ptt), \bxxstilde]=0$ and $[a(\ptt), \cK + \int dx N^3 (V_N \fln) * \abs{\ptt}^2 \axxs\axx] = a(i \dptt)$. 
	
Let us now show (\ref{eq:propagationquadratic}). From \eqref{eq:quadgeninequality3}, we find 
\[ \begin{split}
\Big|\frac{d}{dt}\la \cU_{2,N}(t;0) \x, \cNplusoneto{k} \cU_{2, N}(t;0) \x\ra \Big|
&=\big| \la \cU_{2,N}(t;0) \x, i [ \cNplusoneto{k}, \cJ_{2,N}(t)]\, \cU_{2, N}(t;0) \x\ra\big| \\
&\leq C\expt \la \cU_{2,N}(t;0) \x, \cNplusoneto{k} \cU_{2, N}(t;0) \x\ra \,.
\end{split} \]
The first estimate in (\ref{eq:propagationquadratic}) follows from Gr{\"o}nwall's Lemma. 

As for the second bound in (\ref{eq:propagationquadratic}), we first apply (\ref{eq:quadgeninequality1b}) to estimate 
\begin{equation}\label{eq:UK2U} \begin{split} \langle \cU_{2,N} &(t;0) \xi, \cK^2 \cU_{2,N} (t;0) \xi \rangle  \\ &\leq 2 \langle \cU_{2,N} (t;0) \xi , \cJ^2_{2,N} (t) \cU_{2,N} (t;0) \xi \rangle + 2 \langle \cU_{2,N} (t;0) \xi, (\cJ_{2,N} (t) - \cK)^2 \cU_{2,N} (t;0) \xi \rangle \\ &\leq 2 \langle \cU_{2,N} (t;0) \xi, \cJ^2_{2,N} (t)  \cU_{2,N} (t;0) \xi \rangle + C e^{c|t|} \langle \cU_{2,N} (t;0) \xi, (\cN+1 )^2 \cU_{2,N} (t;0) \xi \rangle \\ &\leq 2 \langle \cU_{2,N} (t;0) \xi, \cJ^2_{2,N} (t)  \cU_{2,N} (t;0) \xi \rangle + C e^{c e^{c|t|}} \langle \xi, (\cN+1 )^2 \xi \rangle \end{split}  \end{equation}
where, in the last inequality, we applied the first bound in (\ref{eq:propagationquadratic}). To control the first term on the r.h.s. of the last equation, we observe that 
\[ \begin{split} 
\frac{d}{dt} \langle \cU_{2,N} (t;0) \xi , &\cJ_{2,N}^2 (t)  \cU_{2,N} (t;0) \xi  \rangle  \\ &= \big\langle \cU_{2,N} (t;0) \xi , \big[ \cJ_{2,N} (t) \dot{\cJ}_{2,N} (t) + \dot{\cJ}_{2,N} (t) \cJ_{2,N} (t) \big]  \cU_{2,N} (t;0) \xi  \big\rangle \\ &\leq \big\langle \cU_{2,N} (t;0) \xi , \cJ^2_{2,N} (t)  \cU_{2,N} (t;0) \xi  \big\rangle  + \big\langle \cU_{2,N} (t;0) \xi ,\dot{\cJ}^2_{2,N} (t)  \cU_{2,N} (t;0) \xi  \big\rangle\,. \end{split} \]
With (\ref{eq:quadgeninequality2b}) and with the first bound in (\ref{eq:propagationquadratic}), we conclude that 
\[ \begin{split} 
\frac{d}{dt} \langle \cU_{2,N} &(t;0) \xi , \cJ_{2,N}^2 (t)  \cU_{2,N} (t;0) \xi  \rangle  \\ &\leq \big\langle \cU_{2,N} (t;0) \xi , \cJ^2_{2,N} (t)  \cU_{2,N} (t;0) \xi  \big\rangle  + C e^{c|t|} \big\langle \cU_{2,N} (t;0) \xi , (\cN+1)^2  \cU_{2,N} (t;0) \xi  \big\rangle \\  &\leq \big\langle \cU_{2,N} (t;0) \xi , \cJ^2_{2,N} (t)  \cU_{2,N} (t;0) \xi  \big\rangle  + C e^{c e^{c|t|}}\big\langle \xi , (\cN+1)^2 \xi  \big\rangle\,.\end{split} \]
Applying Gr{\"o}nwall's Lemma, we conclude that 
\[ \langle \cU_{2,N} (t;0) \xi , \cJ_{2,N}^2 (t)  \cU_{2,N} (t;0) \xi  \rangle \leq C e^{c e^{c|t|}} \langle \xi, (\cJ_{2,N}^2 (0) + \cN^2 +1) \xi \rangle\,. \]
With (\ref{eq:quadgeninequality1b}) (at time $t=0$), we find therefore 
 \[ \langle \cU_{2,N} (t;0) \xi , \cJ_{2,N}^2 (t)  \cU_{2,N} (t;0) \xi  \rangle \leq C e^{c e^{c|t|}} \langle \xi, (\cK^2 + \cN^2 +1) \xi \rangle\,. \]
Inserting in (\ref{eq:UK2U}), we obtain the desired bound. 
\end{proof}

Combining Prop. \ref{prop:propenfulldyn}, Prop. \ref{prop:cJNt} and Prop. \ref{prop:wellposedness}, we can now proceed with the proof of our first main theorem. 

\begin{proof}[Proof of Theorem \ref{th:main}]

As observed in the remark after Theorem \ref{th:main}, we have
\[ \| e^{-i H_N t} \psi_N - e^{-i \int_0^t \kappa_N (s) ds} U_{N,t}^* e^{B_t} \cU_{2,N} (t;0) \xi_N \| = \| \bar{\cU}_N (t;0) \xi_N - e^{-i \int_0^t \kappa_N (s) ds} \cU_{2,N} (t;0) \xi_N \|\,. \]
Next, we introduce cubic phases to pass from $\bar{\cU}_N$ to the new fluctuation dynamics $\cU_N$. To this end, we estimate 
\[ \begin{split} \| e^{-i H_N t} \psi_N - e^{-i \int_0^t \kappa_N (s) ds} U_{N,t}^* e^{B_t} &\cU_{2,N} (t;0) \xi_N \| \\ \leq \; &\| \bar{\cU}_N (t;0) e^{A_0} \xi_N - e^{-i \int_0^t \kappa_N (s) ds} e^{A_t} \cU_{2,N} (t;0) \xi_N \| \\ &+ \| e^{A_0} \xi_N - \xi_N \| + \| e^{A_t} \cU_{2,N} (t;0) \xi_N  - \cU_{2,N} (t;0) \xi_N \|\,. \end{split} \]
Writing 
\[ \|  e^{A_0} \xi_N - \xi_N \|^2 = 2 - 2 \text{Re } \langle \xi_N, e^{-A_0} \xi_N \rangle = \text{Re} \int_0^1 ds \,  \langle \xi_N, A_0 e^{-s A_0} \xi_N \rangle  \]
and estimating (recalling the definition (\ref{eq:defA})) 
\[ \begin{split}  &\big| \langle \xi_1, A_0 \xi_2 \rangle \big| \\
&\quad \leq C N^{-1/2} 
\int |\nu_0 (x,y)| \, \| a_x a_y \xi_1 \| \left[ \| a_x \xi_2 \| + \| (\cN+1)^{1/2} \xi_2 \|  \right] dx dy \\ 
&\quad \;+ C N^{-1/2} 
\int |\nu_0 (x,y)| \, \| a_x a_y (\cN+1)^{-1/2}\xi_2 \| \left[ \| a_x (\cN+1)^{1/2} \xi_1 \| + \| (\cN+1) \xi_1 \|  \right] dx dy \\ 
&\quad \leq C N^{-1/2} \big[ \sup_x \| \nu_{0,x} \|_2  + \| \nu_0 \|_2 \big] \, \| \cN \xi_1 \| \| (\cN+1)^{1/2} \xi_2 \| \end{split} \] 
we obtain, using the estimates $\sup_x \norm{\nu_{0,x}}_2 , \| \nu_0 \|_2 \leq C \sqrt{m} \leq C N^{-1/4}$ from Lemma \ref{lm:propnu},  
\[ \| e^{A_0} \xi_N - \xi_N \|^2 \leq C N^{-3/4} \int_0^1 ds  \| (\cN+1) \xi_N \| \| (\cN + 1)^{1/2} e^{-s A_0} \xi_N \|\,. \] 
With Lemma \ref{lm:gron-A}, we arrive at 
\begin{equation}\label{eq:eA-insert} \| e^{A_0} \xi_N - \xi_N \|^2 \leq C N^{-3/4}  \| (\cN+1) \xi_N \| \| (\cN+1)^{1/2} \xi_N \|\,.  \end{equation}
Similarly, using also Prop. \ref{prop:wellposedness}, we find  
\[ \| e^{-A_0}  \cU_{2,N} (t;0) \xi_N - \cU_{2,N} (t;0) \xi_N \|^2 \leq C e^{ce^{c|t|}} N^{-3/4} \| (\cN+1) \xi_N \| \| (\cN+1)^{1/2} \xi_N \|\,. \]
Hence, we conclude that 
\begin{equation}\label{eq:pr-main} \begin{split} \| e^{-i H_N t} \psi_N &- e^{-i \int_0^t \kappa_N (s) ds} U_{N,t}^* e^{B_t} \cU_{2,N} (t;0) \xi_N \| \\ &\leq \| \cU_N (t;0)  \xi_N - e^{-i \int_0^t \kappa_N (s) ds} \cU_{2,N} (t;0) \xi_N \| + C e^{ce^{c|t|}} N^{-3/8}\,.  \end{split} \end{equation} 

We now compute 
\[ \begin{split}
		\frac{d}{dt}\Big\|\cU_{N}(t;0) \x_N - &e^{-i\int_0^t ds\k_{N}(s)} \cU_{2,N}(t;0) \x_N\Big\|^2\\
		&=2\textrm{Im } \la \cU_{N}(t)\x_N, \big(\cJ_N(t)-\cJ_{2,N}(t)- \k_{N}(t)\big) e^{-i\int_0^t ds \k_{N}(s)} \cU_{2,N}(t)\x_N\ra\,. \end{split} \]
With Prop. \ref{prop:cJNt} and with the bound \eqref{eq:boundVN}, we obtain  
\[ \begin{split} 
\frac{d}{dt} &\Big\|\cU_{N}(t;0) \x_N - e^{-i\int_0^t ds\k_{N}(s)} \cU_{2,N}(t;0) \x_N\Big\|^2\\
		&\leq \big|\la \cU_{N}(t;0)\x_N, \cE_{\cJ_N}(t)\,  \cU_{2,N}(t;0)\x_N\ra\big|+ \| \cV_N^{1/2} \cU_N (t;0) \xi_N \| \| \cV_N^{1/2} \cU_{2,N} (t;0) \xi_N \| \\
		&\leq \big|\la \cU_{N}(t;0)\x_N, \cE_{\cJ_N}(t)\,  \cU_{2,N}(t;0)\x_N\ra\big|+ \frac{1}{\sqrt{N}} \| \cV_N^{1/2} \cU_N (t;0) \xi_N \| \| (\cK + \cN)  \,\cU_{2,N} (t;0) \xi_N \| \,.
	\end{split}
	\]
From Prop.  \ref{prop:cJNt}, we conclude that 
\[ \begin{split} 
\frac{d}{dt} \Big\| &\cU_{N}(t;0) \x_N - e^{-i\int_0^t \k_{N}(s) ds}  \cU_{2,N}(t;0) \x_N\Big\|^2\\
\leq\; &\frac{C e^{c|t|}}{N^{1/4}} \| (\cH_N + \cN + 1)^{1/2}  \cU_N (t;0) \xi_N \| \| (\cH_N + \cN^3 + 1)^{1/2} (\cN+1)  \, \cU_{2,N} (t;0) \xi_N \|  
\\ &+  \frac{1}{\sqrt{N}} \| \cV_N^{1/2} \cU_N (t;0) \xi_N \| \| (\cK + \cN)  \,\cU_{2,N} (t;0) \xi_N \|\,. 
\end{split} \]
Using $\cV_N \leq C \cK \cN$ (which follows from Sobolev inequality, since $\| N^2 V (N.) \|_{3/2} \leq \| V \|_{3/2} < \infty$), we arrive at 
\[ \begin{split} 
\frac{d}{dt} \Big\| \cU_{N}(t;0) \x_N - e^{-i\int_0^t ds\k_{N}(s)} &\cU_{2,N}(t;0) \x_N\Big\|^2\\
\leq\; &\frac{C e^{c|t|}}{N^{1/4}} \langle  \cU_N (t;0) \xi_N ,  (\cH_N + \cN + 1)  \cU_N (t;0) \xi_N \rangle \\ &+ \frac{C e^{c|t|}}{N^{1/4}} \langle \cU_{2,N} (t;0) \xi_N , (\cK^2 + \cN^6 + 1) \cU_{2,N} (t;0) \xi_N \rangle\,. 
\end{split} \]
With Prop. \ref{prop:propenfulldyn} and Prop. \ref{prop:wellposedness}, we obtain 
\[ \begin{split} 
\frac{d}{dt} \Big\| &\cU_{N}(t;0) \x_N - e^{-i\int_0^t \k_{N}(s) ds} \cU_{2,N}(t;0) \x_N\Big\|^2 \leq \frac{C e^{ce^{c|t|}}}{N^{1/4}} \langle \xi_N, (\cK^2 + \cN^6 + 1) \xi_N \rangle\,.  \end{split} \]
Integrating over $t$, using the assumption (\ref{eq:init-bd}) and combining with (\ref{eq:pr-main}), we find (\ref{eq:main}). 
\end{proof}

\section{Limiting quadratic evolution and proof of Theorem \ref{th:limitingdynamics}} 

In this section we show the well-posedness of the limiting Schr\"odinger equation (\ref{eq:limitingdynamics}), we control the growth of the number of particles w.r.t. the limiting quadratic evolution $\cU_{2,\infty}$ and we show the convergence of $\cU_{2,N}$ to $\cU_{2,\infty}$ in the limit $N \to \infty$, as stated in Theorem  \ref{th:limitingdynamics}.

\begin{prop}\label{prop:wp-infty} 
Under the same assumptions as in Theorem \ref{th:main}, let $\cJ_{2,\infty}$ be defined as in \eqref{eq:generatorinfapprox}. Then, $\cJ_{2,\infty} (t)$ generates a unique strongly continuous two-parameter group $\cU_{2,\infty}$ of 
unitary transformations satisfying (\ref{eq:limitingdynamics}). For every $t,s \in \bR$, $\cU_{2,\infty} (t;s): \cF_{\perp \ph_s} \to \cF_{\perp \ph_t}$, where $\ph_t$ denotes the solution of the limiting Gross-Pitaevskii equation (\ref{eq:GPtd}). Moreover, for every $k \in \bN$ there exists $C,c > 0$ such that 
\begin{equation}
	\label{eq:propag-infty}
	\begin{split}
		\la \cU_{2,\infty}(t;0) \x, \cNplusoneto{k}\, \cU_{2, \infty}(t;0) \x\ra 
		& \leq C \exp(c\exp(c\vert t \vert)) \la \x,\cNplusoneto{k} \x\ra		
		\end{split}
	\end{equation}
for all $\xi \in \cF_{\perp \ph}$. 
\end{prop}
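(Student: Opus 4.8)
The plan is to follow the scheme used in the proof of Prop.~\ref{prop:wellposedness}, now for the $N$-independent generator $\cJ_{2,\infty}(t)$. The first step is to establish the limiting analogue of Prop.~\ref{pr:boundsquadraticgenerator}, namely
\[
\pm(\cJ_{2,\infty}(t)-\cK)\leq C\expt\,\cNplusone,\qquad \pm\dot{\cJ}_{2,\infty}(t)\leq C\expt\,\cNplusone,
\]
\[
\bigl|\langle\xi_1,[\cN,\cJ_{2,\infty}(t)]\xi_2\rangle\bigr|\leq C\expt\,\bigl\|\cNplusoneto{(1+j)/2}\xi_1\bigr\|\,\bigl\|\cNplusoneto{(1-j)/2}\xi_2\bigr\|
\]
for $j\in\ZZZ$ and $\xi_1,\xi_2\in\cF_{\perp\ph_t}$. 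I would deduce these from Lemma~\ref{lm:aprioribounds} — which, by the remark following it, holds on the full Fock space and for the projected operators $\dbtilde{a},\dbtilde{a}^*$ appearing in \eqref{eq:generatorinfapprox} — once I know that the kernels $G_{t,\infty},H_{t,\infty}$ introduced in the last two lines of \eqref{eq:generatorinfapprox} satisfy $\|G_{t,\infty}\|_{\text{op}},\|H_{t,\infty}\|_2\leq C\expt$. These bounds follow, just as in the proof of Prop.~\ref{pr:boundsquadraticgenerator}, from the explicit expression \eqref{eq:defwio} for $w_{\infty,\ell}$, from the regularity of $\ph_t$ collected in Prop.~\ref{prop:propertiesphit}, and from the bounds on $\eta_{t,\infty},p_{t,\infty},r_{t,\infty}$, which are proved exactly as their $N$-dependent counterparts in Lemma~\ref{lm:propeta}. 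The only delicate contribution is the one carrying $\nabla w_{\infty,\ell}$, whose kernel obeys $|\nabla w_{\infty,\ell}(x-y)|\leq C\,\chi(\abs{x-y}\leq\ell)\,|x-y|^{-2}$; exactly as in \eqref{eq:badterm}, I would use that $\nabla\ph_t(x)\ph_t(y)-\ph_t(x)\nabla\ph_t(y)$ vanishes on the diagonal to gain a factor $|x-y|$, producing a kernel in $L^2(\bR^3\times\bR^3)$ with norm $\leq C\|\ph_t\|_{H^4}$. For $\dot{\cJ}_{2,\infty}(t)$ I would observe, as in Prop.~\ref{pr:boundsquadraticgenerator}, that every term has the same structure with one factor $\ph_t$ replaced by $\dot\ph_t$, or one of $\eta_{t,\infty},\gamma_{t,\infty},\sigma_{t,\infty}$ replaced by its time derivative, or one projected operator differentiated via $\partial_t\dbtilde{a}_x^*=-\overline{\dot\ph_t(x)}\,a^*(\ph_t)-\overline{\ph_t(x)}\,a^*(\dot\ph_t)$; at this point control of $\|\ph_t\|_{H^6}$ from Prop.~\ref{prop:propertiesphit} is what is used.

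With the three inequalities above in hand, which are precisely the hypotheses of the abstract construction \cite[Lemma~9 and Theorem~8]{LNS}, existence and uniqueness of a strongly continuous two-parameter unitary group $\cU_{2,\infty}$ satisfying \eqref{eq:limitingdynamics} follow arguing as in \cite[Theorem~7]{LNS}, in exactly the same way as in the proof of Prop.~\ref{prop:wellposedness}.

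To show that $\cU_{2,\infty}(t;s)$ maps $\cF_{\perp\ph_s}$ into $\cF_{\perp\ph_t}$, I would fix $\xi\in\cF_{\perp\ph_s}$, so that $a(\ph_s)\xi=0$, and compute, as in the proof of Prop.~\ref{prop:wellposedness},
\[
\frac{d}{dt}\norm{a(\ph_t)\,\cU_{2,\infty}(t;s)\xi}^2 = -2\,\textrm{Im}\,\langle\cU_{2,\infty}(t;s)\xi,\,a^*(\ph_t)a(i\dot\ph_t)\,\cU_{2,\infty}(t;s)\xi\rangle - i\,\langle\cU_{2,\infty}(t;s)\xi,\,[a^*(\ph_t)a(\ph_t),\cJ_{2,\infty}(t)]\,\cU_{2,\infty}(t;s)\xi\rangle\,.
\]
Since the projected operators are built from $q_t=1-|\ph_t\rangle\langle\ph_t|$ one has $[a^*(\ph_t)a(\ph_t),\dbtilde{a}_x]=0$, and $[a(\ph_t),\,\cK+8\pi\aa\int dx\,|\ph_t(x)|^2 a_x^*a_x]=a(i\dot\ph_t)$ is just the limiting Gross--Pitaevskii equation \eqref{eq:GPtd}; hence the right-hand side vanishes and $a(\ph_t)\,\cU_{2,\infty}(t;s)\xi=0$ for every $t$. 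Finally, for \eqref{eq:propag-infty} I would write $[\cNplusoneto{k},\cJ_{2,\infty}(t)]$ as a telescoping sum of commutators $[\cN,\cJ_{2,\infty}(t)]$, apply the third bound above to get $\bigl|\tfrac{d}{dt}\langle\cU_{2,\infty}(t;0)\xi,\cNplusoneto{k}\cU_{2,\infty}(t;0)\xi\rangle\bigr|\leq C\expt\,\langle\cU_{2,\infty}(t;0)\xi,\cNplusoneto{k}\cU_{2,\infty}(t;0)\xi\rangle$, and conclude by Gr\"onwall's lemma.

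The hard part is the first step, i.e.\ the bounds $\|G_{t,\infty}\|_{\text{op}},\|H_{t,\infty}\|_2\leq C\expt$ together with the analogous estimate for $\dot{\cJ}_{2,\infty}(t)$; within that, the term with the singular kernel $\nabla w_{\infty,\ell}$ is the only genuinely delicate point, handled as in \eqref{eq:badterm} through the on-diagonal cancellation of $\nabla\ph_t(x)\ph_t(y)-\ph_t(x)\nabla\ph_t(y)$, which is where the hypothesis $\ph\in H^6(\bR^3)$ enters. The remaining steps are an essentially verbatim repetition of the arguments in the proof of Prop.~\ref{prop:wellposedness}.
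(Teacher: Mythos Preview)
Your proposal is correct and follows essentially the same route as the paper: the paper's proof says to combine Lemma~\ref{lm:aprioribounds} (and the remark after it, allowing $\dbtilde{a},\dbtilde{a}^*$ on the full Fock space) with the estimates of Lemma~\ref{lm:bds-infty} to obtain the limiting analogues of \eqref{eq:quadgeninequality1}, \eqref{eq:quadgeninequality2}, \eqref{eq:quadgeninequality3}, and then to argue exactly as in the proof of Prop.~\ref{prop:wellposedness}. The only cosmetic difference is that the bounds on $\eta_{t,\infty},p_{t,\infty},r_{t,\infty},\mu_{t,\infty}$ and their derivatives, which you say are proved ``exactly as their $N$-dependent counterparts in Lemma~\ref{lm:propeta}'', are already collected in the paper as Lemma~\ref{lm:bds-infty}~i), so you can cite that directly.
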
 

 \begin{proof} The proof is very similar to the proof of Prop. \ref{prop:wellposedness}. Combining Lemma \ref{lm:aprioribounds} (and the remark after it) with the estimates in Lemma \ref{lm:bds-infty}, one can prove that the limiting generator $\cJ_{2,\infty} (t)$ satisfies bounds analogous to (\ref{eq:quadgeninequality1}), (\ref{eq:quadgeninequality2}), (\ref{eq:quadgeninequality3}) in Prop. \ref{pr:boundsquadraticgenerator}. The well-posedness of (\ref{eq:limitingdynamics}), the fact that $\cU_{2,\infty} (t;s)$ maps $\cF_{\perp \ph_s}$ into $\cF_{\perp \ph_t}$ and the bound \eqref{eq:propag-infty} can then be shown arguing exactly as in the proof of Prop. \ref{prop:wellposedness}.
 \end{proof} 
 
 To prove the convergence of $\cU_{2,N}$ towards $\cU_{2,\infty}$, we bound the difference of the two generators. 
 Since $\cJ_{2,N} (t)$ is only defined on the truncated Fock space $\cF^{\leq N}$, our estimate is restricted to this space. 
 \begin{prop}\label{prop:cJN-cJinfty}
 Under the assumptions of Theorem \ref{th:main}, we have, for every $\xi_1, \xi_2 \in \cF^{\leq N}$ and for every $t \in \bR$, 
 \[ \begin{split} \Big| \big\langle \xi_1, \big( \cJ_{2,N} (t;0) &- \cJ_{2,\infty} (t;0) \big) \xi_2 \big\rangle \Big| \\ &\leq \frac{Ce^{ce^{c|t|}}}{\sqrt{N}}\left[  \| (\cN+1) \xi_1 \| \| (\cN+1)\xi_2 \| + \| (\cN+1)^{1/2} \xi_1 \| \| \cK^{1/2} \xi_2 \| \right]\,. \end{split} \]
 \end{prop}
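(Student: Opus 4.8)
The plan is to expand $\cJ_{2,N}(t)-\cJ_{2,\infty}(t)$ into a sum of structurally matched pairs of terms --- one coming from \eqref{eq:generatorapprox}, \eqref{eq:J2NK}, \eqref{eq:cJ2NV} and the corresponding one from \eqref{eq:generatorinfapprox} --- and to bound each difference as a quadratic form on $\cF^{\leq N}$ by means of Lemma \ref{lm:aprioribounds} and the remark after it. Both generators are read as operators on $\cF^{\leq N}$ through their explicit formulas (the projections $\wt q_t$, $q_t$ are built in), so no orthogonality of $\xi_1,\xi_2$ is used; the kinetic operators $\cK$ cancel exactly. The discrepancies come from four sources: (i) the modified operators $\tilde b,\tilde b^*$ versus the (projected) creation/annihilation operators $\dbtilde a,\dbtilde a^*$; (ii) $\wt\ph_t$, $\wt q_t$ versus $\ph_t$, $q_t$; (iii) the kernels $\eta_t,\gamma_t,\sigma_t,p_t,r_t,\dot{\eta}_t$ versus $\eta_{t,\infty},\gamma_{t,\infty},\dots$; and (iv) the $N$-dependent coefficients $N^3\lambda_\ell$, $N^3(V_N f_{N,\ell})(x-y)$, $Nw_{N,\ell}(x-y)$, $N\nabla w_{N,\ell}(x-y)$ versus $3\aa/\ell^3$, $8\pi\aa\,\delta(x-y)$, $w_{\infty,\ell}(x-y)$, $\nabla w_{\infty,\ell}(x-y)$.

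The contributions that are only of order $N^{-1}$ I would dispatch first. For (i): on $\cF^{\leq N}$ one has $\tilde b(f)=\sqrt{(N-\cN)/N}\,a(\wt q_t f)$, so replacing each $\tilde b^\#$ in a degree-two monomial by the corresponding $a^\#$ changes it by a relative factor $\bigl(N^{-1}((N-\cN)(N-\cN-1))^{1/2}-1\bigr)$ in front of the $\tilde b^*\tilde b^*$ piece (and likewise for $\tilde b^*\tilde b$, $\tilde b\,\tilde b$), the bracket being of order $(\cN+1)/N$; Lemma \ref{lm:aprioribounds} with $p=1$ and the bounds $\|G_t\|_\text{op},\|H_t\|_2\le Ce^{c|t|}$ from the proof of Prop. \ref{pr:boundsquadraticgenerator} then give terms of order $N^{-1}\|(\cN+1)\xi_1\|\,\|(\cN+1)\xi_2\|$ --- the first summand in the claim. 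For (ii): by Prop. \ref{prop:propertiesphit}, $\|\wt\ph_t-\ph_t\|_{H^k}\le Ce^{ce^{c|t|}}N^{-1}$ for the finitely many $k\le 6$ that appear, and $\|\wt q_t-q_t\|_\text{op}\le 2\|\wt\ph_t-\ph_t\|_2$; inserting these differences term by term and using Lemma \ref{lm:aprioribounds} (the perturbed kernels keep operator/Hilbert--Schmidt norms bounded uniformly in $N$, by Lemma \ref{lm:propeta} and Lemma \ref{lm:bds-infty}) costs $O(N^{-1})$. For (iii): $\|\eta_t-\eta_{t,\infty}\|_2$, $\|p_t-p_{t,\infty}\|_2$, $\|\gamma_t-\gamma_{t,\infty}\|_\text{op}$, etc. reduce --- via Lemma \ref{lm:propeta}, Lemma \ref{lm:bds-infty}, Lemma \ref{3.0.sceqlemma}, expanding the hyperbolic functions and using $\|\eta_t\|_2,\|\eta_{t,\infty}\|_2\le C$ --- to $\|Nw_{N,\ell}-w_{\infty,\ell}\|$ in suitable norms together with $\|\wt\ph_t-\ph_t\|_{H^k}$, hence to $Ce^{ce^{c|t|}}N^{-1}$. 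For (iv): $N^3\lambda_\ell\to 3\aa/\ell^3$ and $\int N^3 V_N f_{N,\ell}\to 8\pi\aa$ at rate $O(N^{-1})$ by Lemma \ref{3.0.sceqlemma}, and replacing the smeared kernel $N^3(V_N f_{N,\ell})(x-y)$ (the integral kernel of $\widehat{Vf_\ell}(i\nabla/N)$) by $8\pi\aa\,\delta(x-y)$ costs only $O(N^{-2})$: Taylor-expanding the smooth factor $\ptx\pty$ and the matched operator monomial around the diagonal, the zeroth order is absorbed in the coefficient mismatch, the first order vanishes by sphericality of $Vf_\ell$, and the second order carries $\int N^3(V_N f_{N,\ell})(x-y)|x-y|^2\,dy=O(N^{-2})$ (the surviving $\Delta$ acting on the matched monomial being controlled by Lemma \ref{lm:propeta}).

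The hard part --- and the origin of both the $N^{-1/2}$ rate and the $\cK^{1/2}$ in the claim --- are the terms built from singular kernels. First, in the explicit lines of \eqref{eq:generatorapprox}, the term $\int N\nabla w_{N,\ell}(x-y)[\nabla\ptx\pty-\ptx\nabla\pty]\bxxstilde\byystilde$ compared with its limit: here $N\nabla w_{N,\ell}(x-y)$ and $\nabla w_{\infty,\ell}(x-y)$ both behave like $\aa|x-y|^{-2}$ near the diagonal, and their difference is only of size $\sim N^{-1}|x-y|^{-3}$ for $|x-y|\gtrsim N^{-1}$, of size $\sim N^2$ for $|x-y|\lesssim N^{-1}$; this is not square-integrable on its own, but the factor $\nabla\ptx\pty-\ptx\nabla\pty$ vanishes on the diagonal and is $O(|x-y|\,\|\ptt\|_{H^k})$ there, so the product has Hilbert--Schmidt norm of order $N^{-1/2}$ (split at $|x-y|\sim N^{-1}$ and use $\int_{N^{-1}\le|z|\le\ell}|z|^{-4}\,dz\sim N$, $\int_{|z|\le N^{-1}}N^4\,dz\sim N$), whence by Lemma \ref{lm:aprioribounds} a contribution $\le Ce^{ce^{c|t|}}N^{-1/2}\|(\cN+1)^{1/2}\xi_1\|\,\|(\cN+1)^{1/2}\xi_2\|$. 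The analogous cancellation-free terms with $Nw_{N,\ell}(x-y)[\Delta\ptx\pty+\ptx\Delta\pty]$ and with $\dot{\eta}_t(x,y)$ are treated the same way, the near-diagonal region $|x-y|\lesssim N^{-1}$ alone already contributing $O(N^{-1})$ to the squared Hilbert--Schmidt norm. Second, in $\cJ_{2,N}^K(t)$ versus $\cJ_{2,\infty}^K(t)$ the gradient kernels $\nabla_1\eta_t$, $\nabla_1 p_t,\dots$ have Hilbert--Schmidt norms that grow like $N^{1/2}$ (their limits not even Hilbert--Schmidt), so I would work instead with the contracted kernels $u_t(z,w)=\int\overline{\nabla_x\eta_t(z;x)}\,\nabla_x\eta_t(w;x)\,dx$ etc., which are Hilbert--Schmidt (as in the proof of Prop. \ref{pr:boundsquadraticgenerator}); their differences are again of order $N^{-1/2}$, and for the few terms where this is seen only after absorbing one derivative into a kinetic factor, a $\cK^{1/2}$ on $\xi_2$ shows up --- the second summand. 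Summing all contributions and using $\cN\le N$ on $\cF^{\leq N}$ to normalise stray powers of $\cN$ gives the stated bound. The principal obstacle is therefore, beyond the extensive bookkeeping of matching the many terms in \eqref{eq:generatorapprox}--\eqref{eq:generatorinfapprox}, handling these singular kernels correctly --- in particular exploiting the diagonal cancellation in the $\nabla w_{N,\ell}$ term to beat the apparent non-integrability, and identifying precisely which comparisons genuinely require the kinetic-energy factor.
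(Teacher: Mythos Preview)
Your overall strategy---pair off structurally matched terms and bound each difference via Lemma \ref{lm:aprioribounds}---is exactly the paper's. Most of what you write is right: the $\tilde b$ vs.\ $\dbtilde a$ corrections are $O((\cN+1)/N)$, the $\wt\ph_t$ vs.\ $\ph_t$ and kernel differences are handled by Prop.~\ref{prop:propertiesphit} and Lemma~\ref{lm:bds-infty}, and your treatment of the $N\nabla w_{N,\ell}$ term (splitting at $|x-y|\sim N^{-1}$ and exploiting the diagonal cancellation) correctly produces a Hilbert--Schmidt norm of order $N^{-1/2}$.

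The gap is in (iv), and it is precisely where the $\cK^{1/2}$ lives. Your claim that replacing $N^3(V_N f_{N,\ell})(x-y)$ by $8\pi\aa\,\delta(x-y)$ always costs $O(N^{-2})$ by Taylor-expanding ``the matched operator monomial'' and invoking Lemma~\ref{lm:propeta} fails for one term: the identity--identity part of $\tilde b^*(\gamma_x)\tilde b(\gamma_y)$ in the last line of \eqref{eq:cJ2NV}, i.e.\ $\int N^3(V_N f_{N,\ell})(x-y)\wt\ph_t(x)\overline{\wt\ph_t(y)}\,\tilde b_x^*\tilde b_y$. Here there is \emph{no} smooth kernel against which the approximate $\delta$ is convolved---the difference $a_y-a_x$ falls directly on the state, so the ``surviving $\Delta$'' in your second-order remainder cannot be absorbed by Lemma~\ref{lm:propeta} (that lemma bounds derivatives of $p,r,\mu$, not of field operators). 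The paper isolates this term as $P_t$ (their term III), writes $a_{x+z/N}-a_x=\int_0^1 (z/N)\cdot\nabla a_{x+sz/N}\,ds$, and bounds it by $CN^{-1}\|(\cN+1)^{1/2}\xi_1\|\,\|\cK^{1/2}\xi_2\|$; they note explicitly that this is the \emph{only} place kinetic energy is needed. All other $V_N f_{N,\ell}\to\delta$ comparisons hit a factor $\sigma_x,\sigma_y,p_x$ or $p_y$, and there the mean-value trick costs $\|\nabla_1\sigma_t\|_2\sim\sqrt N$, yielding the $N^{-1/2}$ rate without any $\cK$.

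So: you have attributed the $\cK^{1/2}$ to the kinetic block $\cJ_{2,N}^K$ (the contracted $u_t$ kernels), but the paper gets those purely from Hilbert--Schmidt bounds via Lemma~\ref{lm:bds-infty}(iii); the genuine need for $\cK^{1/2}$ sits in $\cJ_{2,N}^V$, in the single term you thought was $O(N^{-2})$. Extract that term separately, drop the second-order Taylor claim, and the rest of your argument goes through.
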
 
 
 \begin{proof}
 From (\ref{eq:generatorapprox}) and (\ref{eq:generatorinfapprox}), we write 
 \[ \begin{split}
\cJ_{2,N}(t)= &\; \cK + \int dx N^3 (V_N \fln * \abs{\ptt}^2) (x) \axxs\axx  \\ &+ \int dx dy \, N^3 (V_N f_{N,\ell}) (x-y) \wt{\ph}_t (x) \overline{\wt{\ph}}_t (y) \tilde{b}^*_x \tilde{b}_y  \\
        &+ \intxy \Big(G'_t (x,y) \bxxstilde\byytilde + H_t (x,y)\bxxstilde\byystilde + \overline{H}_t (x,y) \bxxtilde\byytilde\Big) \, , \\
\cJ_{2,\infty}(t)= &\; \cK + 8 \pi \frak{a} \int dx \, |\ph_t (x)|^2 a_x^* a_x + 8\pi \frak{a} \int dx \, |\ph_t (x)|^2  \dbtilde{a}^*_x \dbtilde{a}_x \\
        &+ \int dx dy \,  \Big(G'_{t,\infty} (x,y) \, \dbtilde{a}^*_x \dbtilde{a}_y + H_{t,\infty}  (x,y) \, \dbtilde{a}^*_x \dbtilde{a}^*_y + \overline{H}_{t,\infty}  (x,y) \, \dbtilde{a}_x \dbtilde{a}_y \Big)\,. \end{split} \]
Observe that, with respect to (\ref{eq:generatorapprox}), we extracted the operator on the second line from the $\tilde{b}^* (\gamma_x) \tilde{b} (\gamma_y)$-contribution in the last summand in (\ref{eq:cJ2NV}), denoting by $G'_t$ the difference between $G_t$ (as appearing in the last line of (\ref{eq:generatorapprox})) and this term (we isolate this term because it will require some additional care). In the expansion for $\cJ_{2,\infty}$, we extracted the corresponding term, proportional to $\dbtilde{a}^*_x \dbtilde{a}_x$, from $G_{t,\infty}$. Thus, we have 
\[ \cJ_{2,N} (t) - \cJ_{2,\infty} (t) = \text{I} + \text{II} + \text{III}  + \text{IV}\]
where
\[ \begin{split} 
\text{I} = \; &\int dx \Big[ N^3 (V_N f_{N,\ell} *|\wt{\ph}_t|^2) (x) - 8\pi \frak{a} |\ph_t (x)|^2 \Big] \, a_x^* a_x \\
\text{II} = \; &\int dx dy \, \Big[ \wt{q}_t  \, G'_t \, \wt{q}_t - q_t \, G'_{t,\infty} \, q_t \Big] (x,y) a_x^* a_y - \int dx dy \, \big[  \wt{q}_t \, G'_t \, \wt{q}_t \big] (x,y) \, a_x^* \, \frac{\cN}{N} \, a_y \\
\text{III} = \; &\int dx dy \,  \Big[ \wt{q}_t \, P_{t} \, \wt{q}_t - q_t \, P_{t,\io} \, q_t \Big] (x,y) a_x^* a_y - \int dx dy \, \big[\wt{q}_t \, P_t \, \wt{q}_t \big] (x,y) a_x^* \frac{\cN}{N} a_y \\
\text{IV} = \: &\int dx dy \Big[ (\wt{q}_t \otimes \wt{q}_t) H_t - (q_t \otimes q_t) H_{t,\infty} \Big] (x,y) a_x^* a_y^* \\ &\hspace{1cm} + \int 
dx dy \,  \big[ (\wt{q}_t \otimes \wt{q}_t) H_t \big] (x,y) \, a_x^* a_y^* \Big[ \sqrt{1-(\cN+1)/N} \sqrt{1- \cN/N} - 1\Big] + \text{h.c.} 
\end{split} \]
where we recall that $\wt{q}_t = 1 - |\wt{\ph}_t \rangle \langle \wt{\ph}_t|$ and $q_t = 1 - |\ph_t \rangle \langle \ph_t|$ and where we introduced the notation $P_{t} , P_{t,\io}$ for the operators with the integral kernels $P_{t} (x,y) = N^3 (V_N f_{N,\ell}) (x-y) \wt{\ph}_t (x) \overline{\wt{\ph}}_t (y)$, $P_{t,\io} (x,y) = 8\pi \frak{a} |\ph_t (x)|^2 \, \delta (x-y) $. 
By Lemma \ref{lm:aprioribounds}, we can bound
\[ | \langle \xi_1, \text{I } \xi_2 \rangle | \leq C \| N^3 (V_N f_{N,\ell} *|\wt{\ph}_t|^2)  - 8\pi \frak{a} |\ph_t|^2 \|_\infty \, \| (\cN+1)^{1/2} \xi_1 \| \| (\cN+1)^{1/2} \xi_2 \|\,. \]
For any $x \in \bR^3$, 
we have  
\[ \begin{split} \Big| N^3 (V_N f_{N,\ell} &*|\wt{\ph}_t|^2) (x)  - 8\pi \frak{a} |\ph_t (x)|^2 \Big| \\ \leq \; &\int dz \, V(z) f_\ell (z) \Big| |\wt{\ph}_t (x+z/N)|^2 - |\wt{\ph}_t (x)|^2 \Big| \\ &+ \Big| \int dz \, V(z) f_\ell (z) - 8\pi \frak{a} \Big| |\wt{\ph}_t (x)|^2 + 8\pi \frak{a} \Big|  |\wt{\ph}_t (x)|^2 - |\ph_t (x)|^2 \Big|\,. \end{split} \]
Estimating  
\[ \Big| |\wt{\ph}_t (x+z/N)|^2 - |\wt{\ph}_t (x)|^2 \Big| = \Big| \int_0^1 ds \frac{d}{ds} |\wt{\ph}_t (x+sz/N)|^2 \Big| \leq 2 |z| \| \nabla \wt{\ph}_t \|_\infty \| \wt{\ph}_t \|_\infty / N \]
and observing that $|  |\wt{\ph}_t (x)|^2 - |\ph_t (x)|^2 | \leq  ( \| \wt{\ph}_t \|_\infty + \| \ph_t \|_\infty) \, \| \wt{\ph}_t - \ph_t \|_\infty$ we conclude with Prop. \ref{prop:propertiesphit} and with Eq. (\ref{eq:Vfa0}) from Lemma \ref{3.0.sceqlemma} that 
\[  | \langle \xi_1, \text{I } \xi_2 \rangle | \leq \frac{C e^{ce^{c|t|}}}{N} \| (\cN+1)^{1/2} \xi_1 \| \| (\cN+1)^{1/2} \xi_2 \|\,. \]

Let us now consider the term $\text{II}$. Again with Lemma \ref{lm:aprioribounds}, we can estimate 
\[ \begin{split} | \langle \xi_1, \text{II } \xi_2 \rangle | \leq \; &C \| \wt{q}_t \, G'_t \, \wt{q}_t  - q_t \, G'_{t,\infty} q_t \|_\text{op} \| (\cN+1)^{1/2} \xi_1 \| \| (\cN+1)^{1/2} \xi_2 \| \\ &+ \frac{C}{N} \| \wt{q}_t \, G'_t \, \wt{q}_t \|_\text{op} \| (\cN+1) \xi_1 \| \| (\cN + 1) \xi_2 \|\,.\end{split}  \]
Since $\| \wt{q}_t - q_t \|_\text{op} \leq 2 \| \ph_t - \wt{\ph}_t \|_2$, we obtain, with Prop. \ref{prop:propertiesphit}, 
\[ \| \wt{q}_t \, G'_t \, \wt{q}_t  - q_t \, G'_{t,\infty} q_t \|_\text{op} \leq \frac{C e^{c e^{c|t|}}}{N} \| G'_t \|_\text{op} + \| G'_t - G'_{t,\infty} \|_\text{op}\,. \]
Going through the several contributions to $G'_t, G'_{t,\infty}$ in (\ref{eq:generatorapprox}) and (\ref{eq:generatorinfapprox}) (all diagonal terms) and applying the bounds in Prop. \ref{prop:propertiesphit} (part iv)), Lemma \ref{lm:propeta} and Lemma \ref{lm:bds-infty}, we find that $\| G'_t \|_\text{op} \leq C e^{c|t|}$ (as already discussed in the proof of Prop. \ref{prop:wellposedness}) and that 
\begin{equation}\label{eq:wtG-G} \| G'_t - G'_{t,\infty} \|_\text{op} \leq \frac{Ce^{ce^{c|t|}}}{\sqrt{N}}\,. \end{equation} 
In fact, to compare contributions from (\ref{eq:cJ2NV}) with the corresponding contributions in \eqref{eq:cJinftyV}, we often need to control the convergence of $N^3 V (N(x-y)) f_\ell (N(x-y))$ towards $8\pi \frak{a} \, \delta (x-y)$. Here, it is important to observe that, in all terms contributing to $G'_t$ (which are compared with terms in $G'_{t,\infty}$), the factor $N^3 V (N(x-y)) f_\ell (N(x-y))$ appears in convolution with a kernel $\sigma_x, \sigma_y, p_x, p_y$ (this is not the case for \text{III}; that's why this term has to be handled separately). To further illustrate this point, consider for example the difference $D_t - D_{t,\infty}$ contributing to $G'_t - G'_{t,\infty}$, with 
\[ \begin{split} D_t (y,z) &= \int dx \, N^3 V(N(x-y)) f_\ell (N(x-y)) \wt{\ph}^2_t (y) \sigma_{t} (x,z) , \\ D_{t,\io} (y,z) &= 8 \pi \frak{a} \, \wt{\ph}^2_t (y)  \sigma_{t} (y,z) \end{split} \]
(this difference arises from the term proportional to $\tilde{b}^* (\gamma_y) \tilde{b} (\sigma_x)$ in the second summand in (\ref{eq:cJ2NV})). With (\ref{eq:Vfa0}), we can bound
\begin{equation}\label{eq:DtDinf} \begin{split}  | D_t (y,&z) - D_{t,\io} (y,z)|  \\ = \; &\Big| \int dw V(w) f_\ell (w)  \wt{\ph}^2_t (y) \sigma_{t} (y+w/N , z) - 8 \pi \frak{a} \,  \wt{\ph}^2_t (y) \sigma_{t} (y,z) \Big| \\  \leq \; &Ce^{c|t|} \int dw V(w) f_\ell (w) \,   \big| \sigma_{t} (y+w/N, z) - \sigma_{t} (y,z) \big| + \frac{Ce^{c|t|}}{N}   |\sigma_{t} (y,z)| \\
\leq \; &\frac{Ce^{c|t|}}{N} \int dw  \int_0^1 ds \, V(w) f_\ell (w) |w|   | \nabla_1 \sigma_{t} (y+sw/N , z)| + \frac{Ce^{c|t|}}{N}  |\sigma_{t} (y,z)|  \end{split} \end{equation} 
which leads to 
 \[ \| D_t - D_{t,\io} \|_\text{op} \leq \| D_t - D_{t,\io} \|_2 \leq \frac{C e^{c|t|}}{N} \big[  \| \nabla_1 \sigma_{t} \|_2 + \| \sigma_{t} \|_2 \big] \leq \frac{Ce^{c|t|}}{\sqrt{N}}\,. \] 

With \eqref{eq:wtG-G}, we obtain 
\[ |  \langle \xi_1, \text{II } \xi_2 \rangle |  \leq \frac{Ce^{ce^{c|t|}}}{\sqrt{N}} \| (\cN+1) \xi_1 \| \| (\cN + 1) \xi_2 \|\,. \]
Similarly, we can also estimate 
\[ \begin{split} | \langle \xi_1 , \text{III } \xi_2 \rangle | \leq \; &\frac{Ce^{ce^{c|t|}}}{\sqrt{N}} \| (\cN+1) \xi_1 \| \| (\cN + 1) \xi_2 \| \\ &+ \Big| \int dz dx \,  V(z) f_{\ell} (z) |\ph_t (x)|^2 \, \langle a_x \xi_1 , \big( a_x - a_{x+z/N}\big) \xi_2 \rangle \Big|\,. \end{split}  \] 
To bound the last term, we proceed as in \cite[Lemma 5.2]{BCS}. We find 
\[ \begin{split} 
\Big| \int  dz dx \, V(z) f_{\ell} (z) &|\ph_t (x)|^2 \, \langle a_x \xi_1 , \big( a_x - a_{x+z/N}\big) \xi_2 \rangle \Big| \\ = \; &\Big| \int  dz dx \int_0^1 ds \, V(z) f_{\ell} (z) |\ph_t (x)|^2 \, \frac{d}{ds} \langle a_x \xi_1 , a_{x+sz/N} \xi_2 \rangle \Big| \\ \leq\; &\frac{1}{N}  \int  dz dx \int_0^1 ds \, V(z) f_{\ell} (z) |z| |\ph_t (x)|^2 \, \| a_x \xi_1 \|  \| \nabla_x a_{x+s z/N} \xi_2 \| \\ \leq \; &\frac{Ce^{c|t|}}{N} \| (\cN+1)^{1/2} \xi_1 \| \| \cK^{1/2} \xi_2 \|\,.  \end{split} \] 
This is the only contribution where the kinetic energy is needed (exactly because, in contrast with contributions in $G'_t$, here the difference $N^3 V (N(x-y)) f_\ell (N (x-y)) - 8\pi \frak{a} \delta (x-y)$ acts directly on the operators $a_x^* a_y$, without convolution; therefore, some regularity of $\xi_1, \xi_2$ is needed). 

Finally, to control the term $\text{IV}$, we bound
\[ \begin{split} 
|\langle \xi_1, \text{IV} \xi_2 \rangle | \leq \; &C \|  (\wt{q}_t \otimes \wt{q}_t) H_t - (q_t \otimes q_t) H_{t,\infty} \|_2 \| (\cN+1)^{1/2} \xi_1 \| \| (\cN+1)^{1/2} \xi_2 \| \\ &+ C \| (\wt{q}_t \otimes \wt{q}_t) H_t \|_2 \| (\cN+1) \xi_1 \| \| \big[\sqrt{1 - (\cN+1)/N} \sqrt{1- \cN/N} -1 \big] \xi_2 \|\,. \end{split} \]
Using that $\| \wt{q}_t - q_t \|_\text{op} \leq C e^{c e^{c|t|}} / N$, that 
\[  \| \big[\sqrt{1 - (\cN+1)/N} \sqrt{1- \cN/N} -1 \big] \xi_2 \|  \leq \frac{C}{N} \| (\cN+1) \xi_2 \| \]
and that, going through the several contributions to $H_t, H_{t,\infty}$ (the off-diagonal terms) in (\ref{eq:generatorapprox}) and (\ref{eq:generatorinfapprox}) and applying the bounds in Prop. \ref{prop:propertiesphit} (part iv)), Lemma \ref{lm:propeta} and Lemma \ref{lm:bds-infty}, $\| H_t \|_2 \leq C e^{c|t|}$, $\| H_t - H_{t,\infty} \|_2 \leq C e^{ce^{c|t|}} / \sqrt{N}$, we find that  
\[  |\langle \xi_1, \text{IV} \, \xi_2 \rangle | \leq \frac{C e^{c e^{c|t|}}}{\sqrt{N}} \| (\cN+1) \xi_1 \| \| (\cN+1) \xi_2 \| \]
which concludes the proof of the proposition. 
\end{proof}

We can now proceed with the proof of Theorem \ref{th:limitingdynamics}.
\begin{proof}[Proof of Theorem \ref{th:limitingdynamics}] 
First of all, we observe that 
\begin{equation}\label{eq:UUin1} \begin{split} \| \cU_{2,N} (t;0) \xi - \cU_{2,\infty} (t;0) \xi \|^2 &= 2 - 2 \text{Re } \langle \cU_{2,\infty} (t;0) \xi , \cU_{2,N} (t;0) \xi \rangle \\ &= 2 - 2 \text{Re } \langle \cU_{2,\infty} (t;0) \xi , \mathbbm{1} (\cN \leq N) \cU_{2,N} (t;0) \xi \rangle \end{split} \end{equation} 
because $\cU_{2,N} (t;0) \xi  = \mathbbm{1} (\cN \leq N) \cU_{2,N} (t;0) \xi$.  We compute 
\[ \begin{split} -i \frac{d}{dt}  \langle \cU_{2,\infty} (t;0) &\xi , \mathbbm{1} (\cN \leq N) \cU_{2,N} (t;0) \xi \rangle \\ &=  \langle \cU_{2,\infty} (t;0) \xi , \big[ \cJ_{2,\infty} (t) \mathbbm{1} (\cN \leq N) -  \mathbbm{1} (\cN \leq N)  \cJ_{2,N} (t) \big] \cU_{2,N} (t;0) \xi \rangle\,. \end{split} \]
While we cannot move $\cJ_{2,N} (t)$ to the left of the projection $\mathbbm{1} (\cN \leq N)$, we can move $\cJ_{2,\infty} (t)$ to its right, generating a commutator. Thus
\[ \begin{split} -i \frac{d}{dt}  \langle \cU_{2,\infty} (t;0) \xi , \mathbbm{1} (\cN \leq &N) \cU_{2,N} (t;0) \xi \rangle \\ = \; & \langle \cU_{2,\infty} (t;0) \xi , \mathbbm{1} (\cN \leq N) \big(\cJ_{2,\infty} (t) - \cJ_{2,N} (t) \big)   \cU_{2, N} (t;0) \xi \rangle \\ &+  \langle \cU_{2,\infty} (t;0) \xi , \big[ \cJ_{2,\infty} (t), \mathbbm{1} (\cN \leq N) \big] \cU_{2, N} (t;0) \xi \rangle\,. \end{split} \]
With Prop. \ref{prop:cJN-cJinfty} and recalling the expression in the last two lines of (\ref{eq:generatorinfapprox}) for the limiting generator $\cJ_{2,\infty} (t)$, we find 
\[ \begin{split} \Big| \frac{d}{dt}  \langle \cU_{2,\infty} (t;0) &\xi , \mathbbm{1} (\cN \leq N) \cU_{2, N} (t;0) \xi \rangle \Big|  \\ \leq \; & \frac{Ce^{ce^{c|t|}}}{\sqrt{N}} \| (\cN+1) \cU_{2,\infty} (t;0) \xi \| \| (\cK + \cN^2 + 1)^{1/2} \cU_{2,N} (t;0) \xi \|
\\ &+ \Big| \int dx dy H_\infty (x,y)|  \langle  \cU_{2,\infty} (t;0) \xi , \mathbbm{1} (N-2 \leq \cN \leq N)\, \dbtilde{a}_x \dbtilde{a}_y  \cU_{2,N} (t;0) \xi \rangle \Big|
\\ &+ \Big| \int dx dy \, H_\infty (x,y)  \langle  \cU_{2,N} (t;0) \xi , \dbtilde{a}^*_x \dbtilde{a}^*_y  \mathbbm{1} (N \leq 
\cN \leq N+2) \, \cU_{2,\infty} (t;0) \xi \rangle \Big| \,.
\end{split} \]
In the last two terms, we estimate $\mathbbm{1} (\cN \geq N-2) ,  \mathbbm{1}  (\cN \geq N) \leq C \cN/ N$ and we use Lemma \ref{lm:aprioribounds} in combination with $\| H_\infty \|_2 \leq C e^{c|t|}$. We obtain 
\[  \begin{split} \Big| \frac{d}{dt}  \langle \cU_{2,\infty} (t;0) &\xi , \mathbbm{1} (\cN \leq N) \cU_{2, N} (t;0) \xi \rangle \Big|  \\ \leq \; & \frac{Ce^{ce^{c|t|}}}{\sqrt{N}} \| (\cN+1) \cU_{2,\infty} (t;0) \xi \| \| (\cK + \cN^2 + 1)^{1/2} \cU_{2,N} (t;0) \xi \|\,.   \end{split} \]
From Prop. \ref{prop:wellposedness} and Prop. \ref{prop:wp-infty}, we conclude that 
\[  \begin{split} \Big| \frac{d}{dt}  \langle \cU_{2,\infty} (t;0) &\xi , \mathbbm{1} (\cN \leq N) \cU_{2, N} (t;0) \xi \rangle \Big| \leq  \frac{Ce^{ce^{c|t|}}}{\sqrt{N}} \| (\cN+1) \xi \| \| (\cK + \cN^2 + 1)^{1/2} \xi \| \,.  \end{split} \]
Integrating over $t$ and with the assumption (\ref{eq:exp-io}), we arrive at
\[ \Big| 1 - \langle \cU_{2,\infty} (t;0) \xi , \mathbbm{1} (\cN \leq N) \cU_{2,N} (t;0) \xi \rangle \Big| \leq  \frac{Ce^{ce^{c|t|}}}{\sqrt{N}} \,. \]
Inserting on the r.h.s. of (\ref{eq:UUin1}) proves the desired estimate. 
\end{proof}

\section{Central Limit Theorem: Proof of Theorem \ref{th:clt}}
\label{sec:CLT}

Following the remark after Theorem \ref{th:clt}, in this section we aim at proving that 
\begin{equation} \label{eq:clt-claim}
\begin{split}
\Big| \mathbb{E}_{\psi_{N,t}} g (\cO_{N,t}) &- \frac{1}{\sqrt{2 \pi} \| f_t \|} \int dx \, g(x) e^{-\frac{x^2}{2 \| f_t \|^2}} \Big| \\
&\leq C  e^{c e^{c |t|}} \int ds \, |\hat{g} (s)| \, (N^{-1/8} +N^{-1/2} \abs{s}^3 \| O \|^3+ N^{-1} s^4 \| O \|^4) 
\end{split}
\end{equation} 
for every $g \in L^1 (\bR)$ with $\hat{g} \in L^1 (\bR, (1+s^4) ds)$. 

For the initial wave function $\psi_N = U_{N,0}^* e^{B_0} e^B \Omega$ with $B_0$ defined as in (\ref{eq:defB}) and with $B$ given by \eqref{eq:Btau}, with $\tau \in (q_0 \otimes q_0) H^2 (\bR^3 \times \bR^3)$, we find that (\ref{eq:init-bd}) is satisfied, with $\xi_N = e^B \Omega$. Thus, Theorem \ref{th:main} provides the norm approximation 
\[ \| e^{-iH_N t} \psi_N - U_{N,t}^* e^{B_t} \cU_{2,N} (t) e^B \Omega \| \leq C e^{c e^{c|t|}} N^{-1/8} \, .\]
Writing 
\[ \mathbb E _{\psi_{N,t}} [g (\cO_{N,t})] = \la \psi_{N,t},g (\cO_{N,t}) \psi_{N,t}\ra =\int ds \,\hat g (s) \la \psi_{N,t} , e^{is \cO_{N,t}} \psi_{N,t}\ra \, , \]
setting 
\[
\wt\cO_{N,t} = \frac{1}{\sqrt N} \sum_{p=1}^N\big(O^{(p)} - \la \wt\ph_{t},O \wt\ph_{t}\ra\big)
\]
and observing that, by Prop. \ref{prop:propertiesphit}, $\| \cO_{N,t} - \wt\cO_{N,t} \|_\text{op} \leq C e^{c e^{c|t|}} /\sqrt{N}$, we can therefore estimate 
\begin{equation}\label{eq:clt-bd1} \begin{split}
\Big|\mathbb E _{\psi_{N,t}} [g (\cO_{N,t})]   - \int ds \, \hat g (s)   \la U_{N,t}^* e^{B_t}  &\cU_{2,N} (t) e^B \Omega,  e^{is \wt \cO_{N,t}} U_{N,t}^* e^{B_t} \cU_{2,N} (t) e^B \Omega \ra\Big|\\
    & \leq C e^{ce^{c|t|}}  \int |\hat{g} (s)| (N^{-1/8}+ N^{-1/2}|s| \| O \|) ds \,.
\end{split} \end{equation} 

Next, we conjugate the observable $e^{is \wt \cO_{N,t}}$ with the unitary operators defining the norm approximation. 
With the rules (\ref{eq:U-rules}), we find 
\begin{equation}\label{eq:UNt-act}  U_{N,t} \wt\cO_{N,t} U_{N,t}^* = \frac{1}{\sqrt{N}} d\Gamma (\wt{q}_t \, O' \wt{q}_t) + \phi (\wt{q}_t O \wt{\ph}_t) \end{equation}
with $O' = O - \langle \wt{\ph}_t , O \wt{\ph}_t \rangle$. Here we set $\phi (f) = b^* (f) + b (f)$, while $d\Gamma (R)$ denotes the second quantization of the one-particle operator $R$ and $\wt{q}_t = 1 - |\wt{\ph}_t \rangle \langle \wt{\ph}_t|$. 

When inserting in (\ref{eq:clt-bd1}), the contribution of the first term on the r.h.s. of (\ref{eq:UNt-act}) is small. 
Proceeding as in \cite[Step 1 in Proof of Theorem 1.1]{RS}, we arrive at 
\begin{equation}\label{eq:clt-bd2} \begin{split} \Big|\mathbb E _{\psi_{N,t}} [g (\cO_{N,t})]   - \int ds \, \hat g (s)   \la e^{B_t} &\cU_{2,N} (t) e^B \Omega,  e^{is \phi (\wt{q}_t O \wt{\ph}_t)} e^{B_t}  \cU_{2,N} (t) e^B \Omega \ra\Big|\\
    & \leq C e^{ce^{c|t|}}  \int |\hat{g} (s)| (N^{-1/8}+ N^{-1/2}|s|^3 \| O \|^3) ds \,.
\end{split} \end{equation} 
This bound relies on the control of the growth of the number of particles operator, which follows from Lemma \ref{lm:gron-B}, Lemma \ref{lm:gron-A}, Prop. \ref{prop:wellposedness} and from the estimate  
\begin{equation}\label{eq:phi-N}  \la \xi, e^{-i\phi(f)} \big(\cN + \a \big)^k e^{i\phi(f)}\xi \ra
    \leq C\la \xi, \big(\cN+\a + \|f\|^2\big)^k \xi \ra\end{equation} 
for the action of the modified Weyl operator $e^{i\phi (f)}$. From (\ref{eq:defd}) and (\ref{eq:d-bds}), the action of $e^{B_t}$ is given by
\[ \begin{split}
e^{-B_t} \phi(\wt{q}_t O \wt{\ph}_t)e^{B_t} =& \, \phi(h_{t}) + D
\end{split}
\]
with $h_t = \gamma_t (\wt{q}_t O \wt{\ph}_t) + \sigma_t (\overline{\wt{q}_t O \wt{\ph}_t})$ and with an error $D$ satisfying  
\[ \| D \xi \| \leq \frac{C}{N} \| (\cN+1)^{3/2} \xi \|\,. \]
Proceeding as in \cite[Step 2 in Proof of Theorem 1.1]{RS}, from (\ref{eq:clt-bd2}) we therefore arrive at 
\begin{equation}\label{eq:clt-bd3} \begin{split} \Big|\mathbb E _{\psi_{N,t}} [g (\cO_{N,t})]   &- \int ds \, \hat g (s)   \la  \cU_{2,N} (t) e^B \Omega,  e^{is \phi (h_t)} \cU_{2,N} (t) e^B \Omega \ra\Big| \\
    & \leq C e^{ce^{c|t|}}  \int |\hat{g} (s)| (N^{-1/8}+ N^{-1/2}|s|^3 \| O \|^3 + N^{-1} |s|^4 \| O \|^4) ds\,.
\end{split} \end{equation}

Before proceeding with the last two unitary conjugations, we replace now the field $\phi (h_t) = b^* (h_t) + b (h_t)$  with $\phi_a (h_t) = a^* (h_t) + a (h_t)$. To this end, we observe that 
\begin{equation}\label{eq:phiphia}
\abs{ \la \x_1,  (\phi(f)-\phi_a(f)) \x_2 \ra} 
\leq \frac{C\norm{f}}{N} \| (\cN+1)^{1/2} \xi_1 \| \| (\cN+1) \xi_2 \| \end{equation} 
for all $\x_1, \x_2 \in\Ftru$ (the operators $b, b^*$ are only defined on the truncated Fock space).
Thus, for $\xi \in \cF^{\leq N}$, we have, with the notation $\mathbbm{1}^{\leq N} = \mathbbm{1} (\cN \leq N)$, 
\[ \begin{split} \Big| \la \xi, &e^{is \phi(h_{t})}  \xi \ra  - \la \xi , e^{is \phi_a(h_{t})} \xi \ra \Big| \\ &= \Big| \int_0^s d\l \, \frac{d}{d\lambda}  \langle \xi , e^{i\l \phi (h_t)} \mathbbm{1}^{\leq N} e^{i (s-\l) \phi_a (h_t)} \xi \rangle \Big|  \\ &= \Big| \int_0^s d\l \,  \langle \xi , e^{i\l \phi (h_t)} \Big\{  (\phi (h_t) - \phi_a (h_t) ) \mathbbm{1}^{\leq N} + \big[ \mathbbm{1}^{\leq N}, \phi_a (h_t) \big] \Big\}  e^{i (s-\l) \phi_a (h_t)} \xi \rangle \Big|\,. 
\end{split} \]
Writing $[ \mathbbm{1}^{\leq N}, a (h_t) ] = a (h_t) \mathbbm{1} (N \leq \cN \leq N+1)$, $[ \mathbbm{1}^{\leq N}, a^* (h_t) ] = -  \mathbbm{1} (N \leq \cN \leq N+1) a^* (h_t)$, estimating $\mathbbm{1} (N \leq \cN \leq N+1) \leq \cN / N$ and applying (\ref{eq:phiphia}) and (\ref{eq:phi-N}) (and the analogous bound for the action of $e^{i \phi_a (f)}$), we conclude that 
\[   \Big| \la \xi, e^{is \phi(h_{t})}  \xi \ra  - \la \xi , e^{is \phi_a(h_{t})} \xi \ra \Big|  \leq \frac{C|s| \| h_t \|}{N}  \| (\cN + s^2 \| h_t \|^2)^{1/2} \xi_1 \| \| (\cN+ s^2 \| h_t \|^2) \xi_2 \|\,. \]
Inserting in (\ref{eq:clt-bd3}) we find, with Lemma \ref{lm:gron-B} and Prop. \ref{prop:wellposedness},
\begin{equation}\label{eq:clt-bd4} \begin{split} \Big|\mathbb E _{\psi_{N,t}} [g (\cO_{N,t})]   &- \int ds \, \hat g (s)   \la  \cU_{2,N} (t) e^B \Omega,  e^{is \phi_a (h_t)} \cU_{2,N} (t) e^B \Omega \ra\Big| \\
    & \leq C e^{ce^{c|t|}}  \int |\hat{g} (s)| (N^{-1/8}+ N^{-1/2}|s|^3 \| O \|^3 + N^{-1} |s|^4 \| O \|^4) ds\,. 
\end{split} \end{equation} 

Next, we apply Theorem \ref{th:limitingdynamics} to replace the quadratic evolution $\cU_{2,N}$ with its limit $\cU_{2,\infty}$. Moreover, we replace $h_t$ with  
\[ h_{\infty,t} = \gamma_{t,\infty} (q_t O \ph_t) + \sigma_{t,\infty} (\overline{q_t O \ph_t}) \]
where $\gamma_{t,\infty}, \sigma_{t,\infty}$ are defined as in (\ref{eq:ginfty}) and $q_t = 1 - |\ph_t \rangle \langle \ph_t|$. From Prop. \ref{prop:propertiesphit} and Lemma \ref{lm:bds-infty}, we find 
\[ \| h_t - h_{\infty ,t} \|_2 \leq C e^{c e^{c|t|}} \| O \| / \sqrt{N}\,. \]
From (\ref{eq:clt-bd4}), we therefore obtain 
\begin{equation*} \begin{split} \Big|\mathbb E _{\psi_{N,t}} [g (\cO_{N,t})]   
&- \int ds \, \hat g (s)   \la  \cU_{2,\infty} (t;0) e^B \Omega,  e^{is \phi_a (h_{\infty,t})} \cU_{2,\infty} (t;0) e^B \Omega \ra\Big| \\
    & \leq C e^{ce^{c|t|}}  \int |\hat{g} (s)| (N^{-1/8}+ N^{-1/2}|s|^3 \| O \|^3 + N^{-1} |s|^4 \| O \|^4) ds\,. 
\end{split} \end{equation*} 

The action of $\cU_{2,\infty} (t;0)$ on the operators $a, a^*$ appearing in $\phi_a$ is explicit and described by Prop. \ref{prop:Bogtrasf}. Setting $n_t = U(t;0) h_{\infty,t} + \overline{V(t;0) h_{\infty,t}}$, we find 
\begin{equation}\label{eq:clt-bd5} \begin{split} \Big|\mathbb E _{\psi_{N,t}} [g (\cO_{N,t})]   &- \int ds \, \hat g (s)   \la   e^B \Omega,  e^{is \phi_a (n_{t})} e^B \Omega \ra\Big| \\
    & \leq C e^{ce^{c|t|}}  \int |\hat{g} (s)| (N^{-1/8}+ N^{-1/2}|s|^3 \| O \|^3 + N^{-1} |s|^4 \| O \|^4) ds\,.  
\end{split} \end{equation} 
Finally, we need to compute the action of $e^B$. To this end, we first replace the operator $B$ in (\ref{eq:Btau}) with 
\[ B_a = \frac{1}{2} \int dx dy \left[ \tau (x;y) a_x^* a_y^* - \text{h.c.} \right] , \]
proceeding as we did above to replace $\phi (h_t)$ with $\phi_a (h_t)$ to show that $\| e^B \Omega - e^{B_a} \Omega \| \leq C/\sqrt{N}$. Then we use the explicit formula for the action of the Bogoliubov transformation $e^{B_a}$, which implies, setting $f_{t}= \cosh(\t) n_{t}+ \sinh(\t)\overline{n_{t}}$, that 
\[ \la e^{B_a} \Omega,  e^{is \phi_a (g_{t})} e^{B_a} \Omega \ra = \la \Omega,e^{i s \phi_a(f_{t})} \Omega\ra = 
\la \Omega,  e^{-s^2 \norm{f_t}^2/2}e^{i s a^*(f_t)} e^{i s a(f_t)}  \Omega\ra = e^{-s^2\norm{f_t}^2/2}\,. 
\] 
From (\ref{eq:clt-bd5}), we obtain 
\[\begin{split}
 \Big|\mathbb E _{\psi_{N,t}} [g (\cO_{N,t})]   &- \int ds \, \hat g (s)   e^{-s^2 \| f_t \|^2/2} \Big| \\
&\leq  C e^{ce^{c|t|}}  \int |\hat{g} (s)| (N^{-1/8}+ N^{-1/2}|s|^3 \| O \|^3 + N^{-1} |s|^4 \| O \|^4) ds  
\end{split}\] 
which immediately implies (\ref{eq:clt-claim}). The statement of Theorem \ref{th:clt} now follows by standard arguments (see, for example, \cite[Corollary 1.2]{BSS2}).

\section{Control of action of $A_t$ and proof of Lemma \ref{lm:gron-A}} 
\label{sec:gronA}

In this section, we consider the action of the cubic phase $e^{A_t}$ on number and energy of excitations. To this end, we compute commutators of $A_t$ with the Hamilton operator $\cH_N = \cK + \cV_N$.  
\begin{lemma}\label{lm:commutatorHA}
Recall the definition of $A_t$ in \eqref{eq:defA}, with parameter $M=m^{-1}=N^{1/2}$, and recall the notation $\cH_N = \cK+ \cV_N$, with $\cK$, $\cV_N$ the kinetic and potential energy operators on $\Fperpt$. On $\Fperpt\times\Fperpt$, we have
    \begin{equation}\label{eq:commutatorKA}
    [\cH_N, A_t]= -  \intxy  N^{5/2} \VNxy  \pty \bxxs\byys[\bgx + \bsxs] + \hc +\cE_{[\cH, A_t]}
    \end{equation} 
    where 
    \[
    \begin{split}
    &\abs{\la \x_1, \cE_{[\cH, A_t]} \x_2\ra} \leq C\expt N^{-1/4}
    \norm{(\cHN + \cN +1)^{1/2}\x_1}\norm{(\cHN + \cN +1)^{1/2}\cNplusoneh\x_2}\,.
    \end{split}
    \]
   Furthermore, 
    \begin{equation}\label{eq:KA2}
    \begin{split}
    &\abs{\la \x_1, [\cH_N , A_t] \x_2\ra}
    \\ &\hspace{1cm} \leq   C\expt \norm{\cHNplusNh\cNplusone^{n/2}\x_1}\norm{\cHNplusNh\cNplusone^{-n/2}\x_2}
    \end{split}
    \end{equation} 
    for all $n \in\ZZZ$.
\end{lemma}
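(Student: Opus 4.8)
The plan is to split $\cH_N = \cK + \cV_N$ and, since both $\cK$ and $\cV_N$ commute with $\cN$ (hence with the cutoff $\Th(\cN)$ in \eqref{eq:defA}), to write
\[
[\cH_N, A_t] = \frac{\Th(\cN)}{\sqrt N}\int dxdy\, \nu_t(x,y)\,\big[\cH_N,\, \bxxstilde\byystilde(\bgxtilde+\bsxstilde)\big] - \hc
\]
and to expand the inner commutator by the Leibniz rule over the three creation/annihilation ``slots''. All the work is to single out the one contribution of order $N^{5/2}$ and to show that everything else fits into $\cE_{[\cH,A_t]}$.

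I would first handle $[\cK, \cdot]$. Using $[\cK, b^*(f)] = b^*(-\Delta f)$, $[\cK, b(f)] = -b(-\Delta f)$ and $\bxxstilde = \bxxs - \overline{\ptx}b^*(\ptt)$ (so that $\cK$ does not see the time-dependent projection), every term in $[\cK, \bxxstilde\byystilde(\bgxtilde+\bsxstilde)]$ is again a cubic monomial with one kernel replaced by its Laplacian, except the one in which $-\Delta_x$ — after integrating by parts against $\nu_t$ — lands on the convolution kernel itself. By the Neumann scattering equation \eqref{eq:scatl} for $f_m$ (with $m = N^{-1/2}$, the short scale ensuring $w_m$ is $C^1$ across $|z| = Nm$ so no boundary term appears) one has, on $|x-y|\le m$,
\[
-\Delta_x \nu_t(x,y) = N^3\big[\lambda_m f_m(N(x-y)) - \tfrac12 \VNxy f_m(N(x-y))\big]\pty\,.
\]
The $\lambda_m$-piece is harmless ($N^3\lambda_m \le C$ by Lemma \ref{3.0.sceqlemma}, $\|f_m(N\cdot)\|_2 \le Cm^{3/2}$) and goes into the error; the $-\tfrac12 \VNxy f_m(N\cdot)$-piece is half of the leading term.

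Next I would treat $[\cV_N, \cdot]$: writing $\cV_N$ as in \eqref{eq:KVN} and commuting it through the cubic monomial by the canonical relations together with \eqref{eq:bcomm2}, the pairings in which \emph{both} annihilation operators of $\cV_N$ contract against $\bxxstilde, \byystilde$ produce, up to corrections, $-N^{5/2}\Th(\cN)\int \VNxy\, w_m(N(x-y))\pty\,\bxxstilde\byystilde(\bgxtilde+\bsxstilde) + \hc$, while every other pairing leaves one uncontracted $a^\sharp$ from $\cV_N$, i.e. a quintic expression carrying an extra $N^2\VNxy$. Adding the doubly-contracted piece to the $\VNxy f_m(N\cdot)$-term from $[\cK,\cdot]$ and using $f_m + w_m = 1$ on $\{V\neq 0\}$ gives precisely $-\int N^{5/2}\VNxy\pty\,\bxxs\byys(\bgx+\bsxs) + \hc$, i.e. \eqref{eq:commutatorKA}; passing from the projected/dressed $\bxxstilde,\byystilde,\bgxtilde,\bsxstilde$ to the bare operators and discarding the $\cN/N$-corrections each costs a factor $\|\nu\|_2 \le C\sqrt m \le CN^{-1/4}$ (Lemma \ref{lm:propnu}), or $1/N$ with $\|V\|_{3/2}$-bounds for the potential contractions, and thus stays inside $\cE_{[\cH,A_t]}$.

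It then remains to bound $\cE_{[\cH,A_t]}$ and to prove \eqref{eq:KA2}. The error has two kinds of contributions: (i) cubic terms from $[\cK,\cdot]$ where the derivative hits $\gamma_x$, $\sigma_x$ or $\ptt$ — here one keeps the surviving $\cK^{1/2}$ rather than moving a full $\Delta$ onto $\eta_t$ (whose $L^2$-norm is not uniform in $N$), which is exactly why $(\cHN+\cN+1)^{1/2}$ enters; these are estimated by Lemma \ref{lm:aprioribounds}-type inequalities using $\|\nu\|_2 \le CN^{-1/4}$, the uniform bounds on $\nabla\eta_t,\nabla\gamma_t,\nabla\sigma_t$ (Lemma \ref{lm:propeta}) and on $\ptt$ in $H^6$ (Prop. \ref{prop:propertiesphit}); and (ii) the quintic, singly-contracted terms from $[\cV_N,\cdot]$, bounded by extracting a $\cV_N^{1/2}$ on each side and converting it via $\cV_N \le C\cK\cN$ and \eqref{eq:boundVN} into $(\cHN+\cN+1)^{1/2}$-factors, again with a $\|\nu\|_2$-gain; the asymmetric $\cNplusoneh$ on $\xi_2$ reflects that $A_t$ is cubic and sits to the right. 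For \eqref{eq:KA2} one does \emph{not} isolate the leading term: each contribution (the $N^{5/2}\VNxy$-monomial included) is estimated directly as a cubic or quintic normally-orderable expression in $a/b$, using $\cV_N \le C(\cK^2+\cN^2)/N$ and $\cV_N\le C\cK\cN$ for the quintic ones, and the arbitrary shift $n\in\bZ$ is obtained by commuting $(\cN+1)^{\pm n/2}$ through the monomial (which changes the particle number only boundedly). The main obstacle is the bookkeeping in $[\cV_N,\cdot]$: one must enumerate all pairings of the two creation and two annihilation operators of $\cV_N$ against the projected, number-operator-dressed $\bxxstilde,\byystilde,\bgxtilde,\bsxstilde$, keep the one doubly-contracted piece completing the leading term, and verify that \emph{every} other pairing — partially contracted as well as all projection- and $\cN/N$-corrections — lands inside the error bound; this rests throughout on the smallness $\|\nu\|_2 \le C\sqrt m = CN^{-1/4}$ from the short cutoff $m=N^{-1/2}$ and on carefully tracking how the weight $w_m(N(x-y))$ pairs with $V_N(x-y)$ (e.g. $\|V_N\, w_m(N\cdot)\|_1 \le CN^{-3}$) so as not to lose powers of $N$.
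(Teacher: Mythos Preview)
Your strategy coincides with the paper's: split $[\cH_N,A_t]=[\cK,A_t]+[\cV_N,A_t]$, use the scattering equation on the kinetic side to extract a $-N^{5/2}V_N f_{N,m}$ piece, extract the $-N^{5/2}V_N w_{N,m}$ piece from the potential side via the fully contracted pairing, and combine them through $f_{N,m}+w_{N,m}=1$. However, there is a genuine coefficient error in your kinetic step that prevents the identity from closing.

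You compute only $-\Delta_x\nu_t(x,y)=N^3[\lambda_m f_m-\tfrac12 V_N f_m]\,\pty$ and call the resulting $-\tfrac12 N^{5/2}V_N f_{N,m}$ contribution ``half of the leading term''. But the commutator $[\cK,\bxxs\byys B_x]$ also produces $-\Delta_y$ on the integrand, and since $\nu_t(x,y)=-Nw_m(N(x-y))\pty$ depends on $y$ through the \emph{same} convolution kernel $w_m(N(x-y))$, the $-\Delta_y$ piece contributes an identical second copy of the scattering-equation term (plus lower-order cross terms $-2N\nabla w_{N,m}\cdot\nabla\pty$ and $Nw_{N,m}\Delta\pty$, which go into the error). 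In the paper this is written as the single expression $(-\Delta_x-\Delta_y)\nu_t$, which already gives the \emph{full} $-N^{5/2}V_N f_{N,m}$. With only half, your arithmetic does not close: $-\tfrac12 V_N f_m - V_N w_m = -V_N+\tfrac12 V_N f_m\neq -V_N$. Once you include the $-\Delta_y$ contribution, the sum with the doubly-contracted $[\cV_N,\cdot]$ term gives exactly $-N^{5/2}V_N\pty\,\bxxs\byys B_x+\hc$, i.e.\ \eqref{eq:commutatorKA}.

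The rest of your sketch (removing $\Theta(\cN)$ via $|\Theta-1|\le\mathbbm{1}(\cN\ge M/2)$ and Markov, controlling the quintic singly-contracted terms from $[\cV_N,\cdot]$ by $\cV_N^{1/2}$ on each side, handling the projected-vs-bare difference $A_t-A_t^1$ through \eqref{eq:differenceprojectionA}, and the direct estimate for \eqref{eq:KA2} using the cutoff to trade $(\cN+1)^{1/2}$ for $M^{1/2}$) is in line with the paper.
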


{\it Remark.} To apply Lemma \ref{lm:commutatorHA} in the proof of Lemma \ref{lm:gron-A}, it is important that the total exponent of $\cH_N +\cN$ on the r.h.s. of  (\ref{eq:KA2}) is one (so that (\ref{eq:gron-A}) follows by Gr{\"o}nwall's Lemma). To reach this goal, we inserted the cutoff $\Theta (\cN)$ in the definition (\ref{eq:defA}) of $A_t$, similarly as done in \cite{NT} (in \cite[Lemma 5.7]{BSS1}, where the cubic phase does not have a cutoff, the exponent of $\cN$ increases). 

\begin{proof}
We proceed similarly as in the proof of \cite[Lemma 5.7]{BSS1}. We define $A_t^1$ as $A_t$ in (\ref{eq:defA}), but with $\wt{b}, \wt{b}^*$ replaced by $b, b^*$. Using $[\cK, \axxs]=-\Delta_x\axxs$ and that $\cK$ commutes with $\cN$, we have
\[
    \begin{split}
    [\cK, A_t] 
     =&\frac{\Th(\cN)}{\sqrt{N}}\intxy (-\Delta_x-\Delta_y) \nu_t(x,y) \bxxs\byys[\bgx + \bsxs] + \hc\\
    &-\frac{\Th(\cN)}{\sqrt{N}}\intxy \nabla_x \nu_t(x,y) \bxxs\byys[2 \nabla_x \bxx + 2 b(\nabla_x p_x) + b^*(\nabla_x \sxt)] + \hc\\
    &+\frac{\Th(\cN)}{\sqrt{N}}\intxy \nu_t(x,y) \bxxs \byys[ b(-\Delta_x p_x)+ b(\Delta p_x)] + \hc\\
    &+\frac{\Th(\cN)}{\sqrt{N}}\intxy \nu_t(x,y) \nabla_x \bxxs \byys b^*(\nabla_x \sxt) +\hc\\
    &+\frac{\Th(\cN)}{\sqrt{N}}\int dxdydz \nu_t(x,y) \bxxs \byys\nabla_z\bzzs  \nabla_z \s (z,x))] + \hc\\
    &+[\cK, A_t-A_t^1] 
    \eqqcolon \sum_{i=1}^5M_i + \hc + [\cK, A_t-A_t^1]\,.
    \end{split}
\]
With Lemma \ref{lm:propeta} and Lemma \ref{lm:propnu}, we find 
\[
\begin{split}
\abs{\la \x_1, M_3 \x_2\ra} 
&\leq C\expt \sqrt{\frac{m}{N}} \left( \norm{\Delta_1 p} +\norm{\Delta_2 p}\right)\norm{\cNplusoneh\x_1}\norm{\cNplusone\x_2}\\
&\leq C\expt N^{-3/4} \norm{\cNplusoneh\x_1}\norm{\cNplusone\x_2}\,.
\end{split}
\]
Writing $\nabla_x \nu_t(x,y) = - N \nabla_y \wmnxy \pty$ and integrating by parts, we can bound 
\begin{equation}\label{eq:M2-term}
\begin{split}
\abs{\la \x_1, M_2 \x_2\ra} 
&\leq C\expt \sqrt{\frac{m}{N}} 
\norm{\cHNplusNh\x_1}\\
&\hspace{2cm} \times\left(\norm{\cKh\cNplusoneh \x_2}
+\left( \norm{\nabla_2 p}+\norm{\nabla_2 \s}\right) \norm{\cNplusone\x_2}\right)\\
&\leq C\expt N^{-1/4} \norm{\cHNplusNh\x_1}\norm{\cHNplusNh\cNplusoneh\x_2}\,.
\end{split}
\end{equation} 
$M_4$ can be controlled analogously. Also $M_5$ satisfies the same bound, since $\norm{\nabla_1 \s}\leq C\expt \sqrt{N}$ by Lemma \ref{lm:propeta}. As for $M_1$, the scattering equation \eqref{eq:scatlN} yields 
\[
\begin{split}
(-\Delta_x-\Delta_y) \nu_t(x,y)
= &- N^3 \VNxy\fmnxy \pty
+ N^3 \l_m\fmnxy \chi_m (x-y)\\
&- 2 N \nabla \wmnxy \nabla \pty
+ N \wmnxy\Delta \pty.
\end{split}
\]
Let us denote the four contributions to $M_1$ corresponding to the four terms in the last identity by $M_{11},\dots , M_{14}$. Then $M_{14}$ is bounded similarly to $M_3$; $M_{13}$ satisfies the same bound as $M_2$ after integration by parts (as usual, we use Prop. \ref{prop:propertiesphit} to bound norms of $\ph_t$ and of its derivatives). Since $N^3 \l_m \leq C m^{-3}$ by \eqref{eq:lambdaell} and $0 \leq \fmnxy\leq 1$ we can bound
\[
\begin{split}
\abs{\la \x_1, M_{12} \x_2\ra} 
&\leq C\expt \frac{1}{\sqrt{N}m^3}
\left(\intxy \frac{m^2}{\abs{x-y}^2}\norm{\axx\ayy\cNplusoneto{-1/2}\x_1}^2 \right)^{1/2}\\
&\times\left(\intxy \chi_m (x-y) \norm{\cNplusoneto{1/2}[\bgx + \bsxs] \x_2}^2 \right)^{1/2}\\
&\leq C\expt \, N^{-1/4} \norm{\cHNplusNh \x_1}\norm{\cHNplusNh \cNplusoneh\x_2}
\end{split}
\]
where we used Hardy's inequality in the first integral and the choice $m = N^{-1/2}$. As for $M_{11}$, with $\Th = 1 + (\Th-1)$ we write 
 \begin{equation}\label{eq:M11} \begin{split} 
M_{11} &= - \Th(\cN) N^{5/2} \intxy  \VNxy\fmnxy \pty \bxxs\byys[\bgx + \bsxs] \\ 
& =  -  N^{5/2} \intxy  \VNxy\fmnxy \pty \bxxs\byys [\bgx + \bsxs] + \cE
\end{split} \end{equation} 
where, using $\Th(\cN)-1 \leq \mathbbm{1}(\cN\geq M/2)$ and Markov's inequality, we  can estimate 
\[
\begin{split}
\abs{\la \x_1, \cE \x_2\ra} 
&\leq C\expt \norm{\cVh \mathbbm{1}(\cN\geq M/2)\x_1}\norm{\cNplusoneh \mathbbm{1}(\cN\geq M/2)\x_2}\\
&\leq \frac{C\expt}{N^{1/4}} \norm{ (\cVN+\cN+1)^{1/2}\x_1}\norm{ (\cVN+\cN+1)^{1/2}\cNplusoneh\x_2}.
\end{split}
\]

Finally, we have to control the term arising from the difference $A_t-A_t^1$. We observe that, on $\Ftru\times\Ftru$, 
\be\label{eq:differenceprojectionA}   
\begin{split}
	&\bxxs\byys[\bgx + \bsxs]-\bxxstilde\byystilde[\bgxtilde + \bsxstilde]
	\\ &= \bxxs\byys \ptx b(\ptt)
	+b^*(\ptt)\overline{\pty}\bxxs[\bgxtilde + \bsxstilde]
    + b^*(\ptt)\overline{\ptx}\byystilde [\bgxtilde + \bsxstilde].
\end{split}
\ee
Here we used the fact that, because of the projection in the kernel $\eta_t(x,y)$ and the definition of $p_t$ and $\s_t$ in \eqref{eq:defcoshsinh}, we have
$
\la \ptt, \sxt\ra =0 = \la \ptt, p_x\ra.
$
Note that as quadratic forms on $\Fperpt\times\Fperpt$, we have, for any operator $D$, with the shorthand notation $B_x = \bgx + \bsxs$ and $\wt{B}_x = \bgxtilde + \bsxstilde$,  
\[
[D, \bxxs\byys\Bxx-\bxxstilde\byystilde\Bxxtilde ]
= \bxxs\byys \ptx [D, b(\ptt)]
+[D, b^*(\ptt)] \overline{\pty}\bxxs \Bxxtilde
+ [D, b^*(\ptt)] \overline{\ptx}\byystilde \Bxxtilde.
\]
Applying this with $D=\cK$, contributions arising from the commutator of this difference with $\cK$ can be controlled as before, using the bounds of Prop. \ref{prop:propertiesphit} for $\wt{\ph}_t$ and its derivatives. We conclude that 
\[
\begin{split}
\abs{\la \x_1, [\cK, A_t-A_t^1] \x_2\ra} 
\leq C\expt N^{-3/4} \norm{\cNplusoneh\x_1}\norm{\cNplusone\x_2}.
\end{split}
\]

Let us now consider $[\cV_N, A_t]$. With $[\cVN, \bxxs]= \int ds N^2 \VNxs \bxxs\asss\ass$ we can write 
\[
    \begin{split}
    [\cVN, A_t] 
    = &-\Th(\cN)N^{5/2} \intxy  \VNxy\wmnxy \pty \bxxs\byys [\bgx + \bsxs] +\hc\\
    &+\frac{\Th(\cN)}{\sqrt{N}}\int dxdyds \nu_t(x,y) N^2 \VNxs\bxxs\byys\asss\ass [\bgx + \bsxs]  + \hc\\
    &+\frac{\Th(\cN)}{\sqrt{N}}\int dxdyds \nu_t(x,y) N^2 \VNys \bxxs \byys\asss\ass [\bgx + \bsxs]  + \hc\\
    &-\frac{\Th(\cN)}{\sqrt{N}}\int dxdyds \nu_t(x,y) N^2 \VN (x-s) \bxxs \byys \asss\ass \bxx + \hc\\
    &-\frac{\Th(\cN)}{\sqrt{N}}\int dxdydzds \nu_t(x,y) N^2 \VN (z-s) \overline{p(z,x)} \, \bxxs \byys \asss\ass \bzz + \hc\\
    &+\frac{\Th(\cN)}{\sqrt{N}}\int dxdydzds \nu_t(x,y) N^2 \VN (z-s) \s(z,x)  \,\bxxs \byys \bzzs \asss\ass  + \hc\\
    &+[\cVN, A_t-A_t^1]\\
    \eqqcolon &\sum_{i=1}^6 N_i +\hc +[\cVN, A_t-A_t^1]\,.
    \end{split}
\] 
Proceeding as in (\ref{eq:M11}) to replace the cutoff $\Theta (\cN)$ by 1, we find 
\begin{equation}\label{eq:N1}
N_1= -N^{5/2} \intxy  \VNxy\wmnxy \pty \bxxs\byys\Bxx + \cE'
\end{equation} 
where 
\[
\begin{split}
\abs{\la \x_1, \cE' \, \x_2\ra} &\leq \frac{C\expt}{N^{1/4}} \norm{ (\cVN+\cN+1)^{1/2}\x_1}\norm{ (\cVN+\cN+1)^{1/2}\cNplusoneh\x_2}.
\end{split}
\]
Noticing that $[a_s^* a_s , b^* (\sigma_x)] = \sigma_t (x;s) \, b^*_s$,  and applying Lemma \ref{lm:propeta} to prove that  $|\sigma_t (x;s)| \leq C N$, we find  
\begin{equation}\label{eq:N2-term}
\abs{\la \x_1, N_2 \x_2\ra}\leq C\expt N^{-1/4} 
\norm{\cHNplusNh\x_1}\norm{\cHNplusNh\cNplusoneh\x_2}.
\end{equation}
The term $N_3$, $N_4$ can be bounded similarly. As for $N_5, N_6$, they can be estimated by Cauchy-Schwarz, using Lemma \ref{lm:propeta} to show that $\sup_z \norm{p_z}$, $ \sup_z \norm{\s_z} \leq C\expt$. We find 
\[ \begin{split} 
\abs{\la \x_1, (N_5+N_6) \x_2\ra} &\leq C\expt \frac{\sqrt{mM}}{N} \norm{\cHNplusNh\x_1}\norm{\cHNplusNh\cNplusoneh\x_2}\\ &\leq C\expt N^{-1} 
\norm{\cHNplusNh\x_1}\norm{\cHNplusNh\cNplusoneh\x_2}\,.
\end{split} \]
To bound $[\cVN, A_t-A_t^1]$ we argue as we did above to handle $[\cK, A_t - A_t^1]$, using the identity  \eqref{eq:differenceprojectionA}. Combining all the estimates above, we obtain (\ref{eq:commutatorKA}) (in particular, the large term on the r.h.s. of (\ref{eq:commutatorKA}) emerges summing the r.h.s. of (\ref{eq:M11}) and the r.h.s. of (\ref{eq:N1}). 

The second claim in the Lemma follows analogously, noticing that powers of $(\cN+1)$ can be moved freely from one norm to the other, that in the bounds (\ref{eq:M2-term}), (\ref{eq:N2-term}) we can use the cutoff to remove the operator $(\cN+1)^{1/2}$ applied to $\xi_2$, at the expense of an additional factor $M^{1/2} \leq C N^{1/4}$ and 
that also the main terms on the r.h.s. of (\ref{eq:M11}) and (\ref{eq:N1}) can be controlled (with   Cauchy-Schwarz) by the r.h.s. of (\ref{eq:KA2}). 
\end{proof} 

Now we are ready to show Lemma \ref{lm:gron-A}.
\begin{proof}[Proof of Lemma \ref{lm:gron-A}]
The proof is similar to the proof of \cite[Lemma 5.8]{BSS1}. To show (\ref{eq:gron-NA}), we observe that, since $\cN$ preserves the space $\cF^{\leq N}_{\perp \wt{\ph}_t}$, 
\begin{equation}\label{eq:comNA1}[ \cN, A_t ] = [ \cN , A_t^1 ] =  A_t^{1,\gamma} + 3 A_t^{1,\sigma} + \text{h.c.} \end{equation}
where, as above, $A_t^1$ is defined as $A_t$, but with $\tl{b}, \tl{b}^*$ replaced by $b, b^*$ and where 
\[ A_t^{1,\g}= \frac{\Th(\cN)}{\sqrt{N}}\intxy \nu_t(x,y) \bxxs\byys\bgx, \qquad A_t^{1,\sigma} = \frac{\Th(\cN)}{\sqrt{N}}\intxy \nu_t(x,y) \bxxs\byys\bsxs\,.
\]
From (\ref{eq:comNA1}), it is easy to check that $\pm [\cN, A_t] \leq C\expt (\cN+1)$. With Gr{\"o}nwall's Lemma, we obtain (\ref{eq:gron-NA}) (first with $k=1$ but then, using $[\cN^k, A_t]  = \sum_{j=1}^{k-1}\cN^j [ \cN, A_t ] \cN^{k-j-1}$, also for arbitrary $k \in \bN$). 

Let us now consider (\ref{eq:gron-A}). For $\x\in\Fperpt$ and $s\in [0,1]$ we define 
\[
f_\x(s)= \la \x, e^{-s A_t}  \cH_N \cNplusone^k e^{s A_t}\x\ra.
\]
We have
\begin{equation}\label{eq:f'xi} 
f_\x'(s)=  \left\langle \x, e^{-s A_t} \Big( 
[\cH_N, A_t] \cNplusone^k 
+\cH_N [\cNplusone^k ,A_t]
\Big)e^{s A_t}\x \right\rangle .
\end{equation} 
For $k=0$, Lemma \ref{lm:commutatorHA} implies, together with (\ref{eq:gron-NA}), that 
\begin{equation}\label{eq:gronf}
\abs{f_\x'(s)}\leq C\expt  \big( f_\x(s) + \la \xi , (\cN+1) \xi \ra \big) \,.
\end{equation} 
The desired bound follows therefore from Gr{\"o}nwall's Lemma. 

For $k \geq 1$, the contribution of the term proportional to $[\cH_N, A]$ is bounded similarly; using (\ref{eq:KA2}), with $\xi_1 = e^{s A_t} \xi$, $\xi_2 = (\cN+1)^{k} e^{s A_t} \xi$ and $n=k$, we find 
\[ \begin{split}  \la \x, e^{-s A_t}  [\cH_N, A_t] \cNplusone^k e^{s A_t}\x \rangle &\leq C e^{c|t|} \la e^{s A_t}\x, \cHNplusN\cNplusone^k e^{s A_t}\x\ra \\ &= C e^{c|t|} \, \big( f_\x (s) + \la \xi , (\cN+1)^{k+1} \xi \ra \big).
\end{split} \] 
To handle the second contribution on the r.h.s. of (\ref{eq:f'xi}), we use (\ref{eq:comNA1}). For $k=1$, we write 
\begin{equation}\label{eq:HcNA}
\cHN [\cN  ,A_t] = [\cHN, A_t^{1,\g} +3 A_t^{1,\s}] + \left( (A_t^{1,\g} +3 A_t^{1,\s}) \cHN +\hc\right)
+\cHN [\cN,A_t-A_t^1]\,.
\end{equation}

Contributions arising from the commutator $[\cHN, A_t^{1,\g} +3 A_t^{1,\s}]$ have been already estimated in the proof of Lemma \ref{lm:commutatorHA}. Thus 
\[
\abs{\la \x, e^{-s A_t} [\cHN, A_t^{1,\g} +3 A_t^{1,\s}]
e^{s A_t}\x\ra}
\leq C\expt f_\x(s).
\]

Contributions to the second term on the r.h.s. of (\ref{eq:HcNA}) have the form  
\begin{equation}\label{eq:AAAA}
\begin{split}
A_t^{1,\g}\cK =& \frac{\Th(\cN)}{\sqrt{N}}\intxy \nu_t(x,y) \bxxs\byys\int dz \nabla_z \azzs
\nabla_z \azz\bgx\\
&+\frac{\Th(\cN)}{\sqrt{N}}\intxy \nu_t(x,y) \bxxs\byys (-\Delta_x\bxx + b(-\Delta p_x)),\\
A_t^{1,\s}\cK =& \frac{\Th(\cN)}{\sqrt{N}}\intxy \nu_t(x,y) \bxxs\byys\bsxs \int dz \nabla_z \azzs
\cNplusoneto{-1}\cNplusone\nabla_z \azz ,\\
A_t^{1,\g}\cVN =& \frac{\Th(\cN)}{2\sqrt{N}}\intxy \nu_t(x,y) \bxxs\byys\int drdz N^2\VN (r-z)\arrs\azzs
\arr \azz\bgx\\
&+\frac{\Th(\cN)}{\sqrt{N}}\intxy \nu_t(x,y) \bxxs\byys \int drdz N^2\VN (r-z)\arrs
\arr \bzz \overline{\g(z,x)},\\ 
A_t^{1,\s}\cVN =& \frac{\Th(\cN)}{\sqrt{N}}\intxy \nu_t(x,y) \bxxs\byys\bsxs \int drdz N^2\VN (r-z)\arrs\azzs
\arr \azz. \end{split} \end{equation}
The two commutators contributing to $A_t^{1,\g} \cK$, $A_t^{1,\g} \cV_N$ can be bounded similarly to the terms $M_2, M_3$ and, respectively, $N_4,N_5$ in the proof of Lemma \ref{lm:commutatorHA}. All other terms can be bounded directly with Cauchy-Schwarz; for instance
\[
\begin{split}
&\Big| \Big\langle e^{s A_t}\x , \frac{\Th(\cN)}{\sqrt{N}}\intxy \nu_t(x,y) \, \bxxs\byys\int dz \nabla_z \azzs
\nabla_z \azz\bgx e^{s A_t}\x \Big\rangle \Big| \\
&\leq \frac{C\expt}{\sqrt{N}}\left(\int dxdydz \norm{\axx\ayy \nabla_z \azz \Th(\cN)e^{s A_t}\x}^2 \right)^{1/2}\\
&\hspace{4cm}
 \times\left(\int dxdydz N^2\wmnxy^2 \norm{\agx\nabla_z \azz e^{s A_t}\x}^2 \right)^{1/2}\\
&\leq C\expt \norm{\cHNplusoneh\cNplusoneh e^{s A_t}\x}^2.
\end{split}
\]
The hermitian conjugates of the terms in (\ref{eq:AAAA}) satisfy similar estimates. We handle the last term on the r.h.s. of (\ref{eq:HcNA}) analogously as we handled the terms proportional to $A_t - A_t^1$ in the proof of Lemma \ref{lm:commutatorHA}, using the identity (\ref{eq:differenceprojectionA}). Observe here that on $\Fperpt\times\Fperpt$ we have $\cH_N b^* (\ph_t)= [\cH_N, b^* (\ph_t)]$ allowing to recover a commutator which has been estimated before.
Thus, we find that (\ref{eq:gronf}) also holds true for $k=1$. For $k\geq 2$, we write 
\[
\begin{split}
 \cH_N [\cNplusone^k ,A_t] 
 &= \sum_{j=1}^k (\cN+1)^j \cHN [\cN ,A_t] (\cN+1)^{k-j-1} 
\end{split}
\]
and we argue similarly as we did for $k=1$, after appropriately pulling factors of $(\cN+1)^{1/2}$ through the commutator $[\cN, A_t]$. We obtain again the estimate (\ref{eq:gronf}). The desired bounds follows 
by Gr{\"o}nwall's Lemma.
\end{proof}

\section{Generator of fluctuation dynamics: proof of Prop. \ref{prop:cJNt}} 
\label{sec:proof-prop}

In this section, we study the properties of the generator $\cJ_N (t)$ of the full fluctuation dynamics (\ref{eq:fluct-dyn2}). We start from the expression (\ref{eq:genJ}). A first step in the proof of Prop.  \ref{prop:cJNt} consists in applying the rules (\ref{eq:U-rules}) to compute the generator 
\[ \cL_N (t) = (i\partial_tU_{N,t})U^*_{N,t} +  U_{N,t} H_N U^*_{N,t}\,. \]
We find $\cL_{N} (t) = \sum_{j=0}^{4}\cL_{N,t}^{(j)}$ where as quadratic forms on $\Fperpt\times\Fperpt$
	\begin{equation}
		\label{eq:cLNt}
	\begin{split}
		\cL_{N,t}^{(0)} &  = \frac{1}{2} \langle \wt{\ph}_t, [N^3V_N(1-2f_{N,\ell})*|\wt{\ph}_t|^2]\wt{\ph}_t\rangle (N-\cN) \\
		&\quad -\frac{1}{2} \langle \wt{\ph}_t, [N^3V_N*|\wt{\ph}_t|^2]\wt{\ph}_t\rangle (\cN +1)\frac{(N-\cN)}{N}\\
		\cL_{N,t}^{(1)} & = \sqrt{N} b([N^3V_Nw_{N,\ell}*|\wt{\ph}_t|^2] \wt{\ph}_t ) -\frac{\cN+1}{\sqrt{N}} b([N^3V_N*|\wt{\ph}_t|^2 ] \wt{\ph}_t) + \hc\\
		\cL_{N,t}^{(2)} & =\cK + \int dxdy\, N^3V_N(x-y)|\wt{\ph}_t(y)|^2\Big(b_x^*b_x -\frac{1}{N}a_x^*a_x\Big)\\
		&\quad + \int dxdy\, N^3V_N(x-y)\wt{\ph}_t(x)\bar{\wt{\ph}}_t(y)\Big(b_x^*b_y -\frac{1}{N}a_x^*a_y\Big)\\
		& \quad + \frac 12 \left[\int dxdy\, N^3V_N(x-y)\wt{\ph}_t(x)\wt{\ph}_t(y)\big(b_x^*b_y^* +\hc\big)\right]\\
		\cL_{N,t}^{(3)} & = \int dx dy N^{5/2}V_N(x-y)\wt{\ph}_t(y)b_x^*a_y^*a_x + \hc\\
		\cL_{N,t}^{(4)} & =  \cV_N\,, \\
	\end{split}
	\end{equation}
where we recall that $\cK$ and $\cV_N$ are the kinetic and the potential energy operators, as defined on $\cF_{\perp \wt{\ph}_t}^{\leq N}$ in (\ref{eq:KVN}). 

Next, we have to consider the effect of the quadratic conjugation with the generalized Bogoliubov transformation $e^{B_t}$. We define the renormalized generator 
\begin{equation}\label{eq:cGNt} \cG_N (t) =(i\partial_te^{-B_t})e^{B_t} +e^{-B_t}\cL_N (t) e^{B_t}\,. \end{equation} 
In order to describe the operator $\cG_N (t)$, we define 
\begin{equation}
	\label{eq:phasek1}
	\begin{split}
		\k_{\cG}(t) =& \frac{N}{2} \la \ptt, [N^3\VN(1-2\fln)*|\ptt|^2] \ptt \ra
		-\frac{1}{2} \la \wt{\ph}_t, (N^3 V_Nf_{N,\ell}\ast \vert \wt{\ph}_t^2\vert ) \wt{\ph}_t\ra \\
		&+ \frac{1}{2}  \intxy N^2 \VNxy \vert \la \g_y,\s_x \ra\vert^2  \\
		&+\int dx\, (N^3\VN*|\wt{\ph}_t|^2)(x)
		\langle \sxt, \sxt\rangle \\
		&+ \int dxdy\, N^3\VN(x-y)\wt{\ph}_t(x)\overline{\wt{\ph}_t(y)} \langle \sxt, \s_y\rangle \\
		&+\frac 12 \int dxdy\, N^3\VN(x-y)\wt{\ph}_t(x)\wt{\ph}_t(y)\big \langle\s_x,\g_y\rangle+ \hc + \|\nabla_1\s_t\|^2 \, \\
		&- \int_0^1ds \int dxdy\, \dot{\eta}_t(x,y)\la\s_x^{(s)},\g_y^{(s)}\ra +\hc\,
\end{split}\end{equation}
and the quadratic operator 
\begin{equation}
	\label{eq:quadgenfinal}
	\begin{split}
		\cG_{2,N}(t)=&\, \cG_{2,N}^{K}(t) + \cG_{2,N}^{V}(t)   \\
		& + N^3 \l_\ell\int dxdy\, \c_\ell(x-y)\wt{\ph}_t(x)\wt\ph_t(y)b^*_xb^*_y +\hc\\
		&+\frac12\int dxdy\, Nw_{N,\ell}(x-y)[\D\wt\ph_t(x)\wt\ph_t(y) +\wt\ph_t(x)\D\wt\ph_t(y)]b^*_xb^*_y +\hc\\
		&+ \int dxdy\, N\nabla w_{N,\ell}(x-y)[\nabla\wt\ph_t(x)\wt\ph_t(y) - \wt\ph_t(x)\nabla\wt\ph_t(y)]b^*_xb^*_y 
		+\hc\\
		&-\int_0^1ds \int dxdy\, \dot{\eta}_t(x,y)  \\
		&\hskip.2cm \times \big[ b^*(\g_x^{(s)})b^*(\g_y^{(s)})+b^*(\g_x^{(s)})b(\s_y^{(s)})+b(\s_x^{(s)})b(\s_y^{(s)})+ b^*(\g_y^{(s)})b(\s_x^{(s)})  \big] +\hc 
\end{split}\end{equation}
where 
\begin{equation}
	\label{eq:genquadraticK}
	\begin{split}
		\cG_{2,N}^{K}(t) =&\,\cK 
		+ \int dx \big[b^*(-\D_x p_x)b_x
		+ \frac{1}{2} b^*(\nabla_x p_x) b(\nabla_x p_x) 
		+ \bxxs b^*(-\D_x \mu_x)\\
		&+ b^*(-\D_xp_x)b^*(\eta_x) 
		+ b^*(p_x)b^*(-\D_x r_x) 
		+ b_x^* b^*(-\D_x r_x)\\
		&+\frac{1}{2} b^*( \nabla_x \eta_x) b(\nabla_x \eta_x)
		+ b^*(\eta_x)b(-\D_x r_x)
		+ \frac{1}{2} b^*(\nabla_x r_x)b(\nabla_x r_x) +\hc \big]\\
\end{split}
\end{equation}
and 
\begin{equation}
	\label{eq:genquadraticV}
	\begin{split}
		\cG_{2,N}^{V}(t) =&\,
		\frac 12 \int dxdy\, N^3(V_Nf_{N,\ell})(x-y)\wt{\ph}_t(x)\wt{\ph}_t(y)\big[b^*(p_{x})b^*_{y} + b^*(\g_{x})b^*(p_{y})\big] + \hc \\
		&+\frac 12 \int dxdy\, N^3(V_Nf_{N,\ell})(x-y)\wt{\ph}_t(x)\wt{\ph}_t(y)  \\
		&\hspace{1cm} \times \big[ b^*(\g_{y})b(\s_{x}) + b^*(\g_{x})b(\s_{y})+ b(\s_{x})b(\s_{y}) \big]+ \hc\\
		&+\int dx\, (N^3V_N*|\wt{\ph}_t|^2)(x)  \\
		&\hspace{1cm} \times \big(b^*(\gxt)b(\gxt) +b(\sxt)b(\gxt)+ b^*(\gxt)b^*(\sxt)+ b^*(\sxt)b(\sxt) \big)\\
		& +\int dxdy\, N^3V_N(x-y)\wt{\ph}_t(x)\overline{\wt{\ph}_t(y)}\\
		&\hspace{1cm} \times \big(b^*(\g_x)b(\g_{y}) + b(\sxt)b(\g_{y})+ b^*(\gxt)b^*(\s_{y})  +  b^*(\syt)b(\sxt)   \big)\,. \\
\end{split}\end{equation}

The following proposition establishes, up to negligible errors, the form of $\cG_N (t)$, in terms of (\ref{eq:phasek1}), (\ref{eq:genquadraticK}), (\ref{eq:genquadraticV}). 
\begin{prop}\label{prop:cGNt} 
Let $V \in L^3(\bR^3)$ non-negative, spherically symmetric, and  compactly supported.
 Let $\cG_{N}(t)$ be defined as in Eq. \eqref{eq:cGNt}. 
	Assume $\ell$ in \eqref{eq:defk} is small enough but of order one in $N$. Let 
	\begin{equation}
		\label{eq:cubic}
		\cC_{N,t} = \intxy N^{5/2} \VNxy\wt\ph_t(y) \bxxs\byys[\bgx + \bsxs]\, + \text{h.c.} 
	\end{equation}	
	Then on $\Fperpt\times\Fperpt$
	\begin{equation}\label{eq:cGNt-prop}
		\cG_{N}(t) = \k_{\cG}(t) +  \cG_{2,N}(t) + \cC_{N,t}+ \cVN + \cE_{\cG_N} (t),
	\end{equation}
	where the phase $\k_{\cG}(t)$, $\cG_{2,N}(t)$ and $\cV_N$ are given in Eq. \eqref{eq:phasek1},\eqref{eq:quadgenfinal}, \eqref{eq:cLNt} respectively, and the error term $\cE_{\cG_N} (t)$ satisfies
	\begin{equation}\label{eq:cEGN} \begin{split}|\langle \xi_1, \cE_{\cG_N} (t)\xi_2\rangle| & \leq \frac{C\expt}{\sqrt{N}} \norm{\cHNplusNh\x_1}\norm{(\cH_N + \cN^3 + 1)^{1/2} (\cN+1) \xi_2} \end{split}\end{equation} 
	for any $\xi_1,\xi_2 \in \Fperpt$. 		
\end{prop}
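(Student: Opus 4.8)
The plan is to compute $\cG_N(t)$ by conjugating the generator $\cL_N(t)=\sum_{j=0}^4\cL_{N,t}^{(j)}$ of \eqref{eq:cLNt} term by term with the generalized Bogoliubov transformation $e^{B_t}$, then adding the contribution $(i\partial_te^{-B_t})e^{B_t}$ arising from the time dependence of $B_t$, and finally sorting the resulting operators by their degree in creation and annihilation operators: purely scalar contributions are collected into the phase $\k_{\cG}(t)$, quadratic ones into $\cG_{2,N}(t)$, the single surviving cubic piece into $\cC_{N,t}$, the quartic term $\cL_{N,t}^{(4)}=\cV_N$ is kept unchanged, and everything else is shown to be negligible and absorbed into $\cE_{\cG_N}(t)$.

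For the low-degree pieces $\cL_{N,t}^{(0)},\cL_{N,t}^{(1)},\cL_{N,t}^{(2)}$ I would use repeatedly the identity \eqref{eq:defd} (together with its analogue for $e^{-B_t}$, obtained by replacing $\eta_t$ with $-\eta_t$), the remainder bounds \eqref{eq:d-bds}, and Lemma \ref{lm:gron-B} to control the growth of $(\cN+1)^k$ along $e^{\pm B_t}$. Expanding the products $e^{-B_t}b_x^\# b_y^\# e^{B_t}$ produces the quadratic terms built from the kernels $\gamma_t=1+p_t$ and $\sigma_t=\eta_t+r_t$ that appear in \eqref{eq:genquadraticK}--\eqref{eq:genquadraticV}, plus remainder terms that are estimated with Lemma \ref{lm:aprioribounds}, Lemma \ref{lm:propeta} and Prop. \ref{prop:propertiesphit}; the scalar terms generated by the commutators $[b(f),b^*(g)]$ acting on $\eta_t$-type kernels are exactly those collected in \eqref{eq:phasek1}. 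The decisive input at this stage is the scattering equation \eqref{eq:scatlN}: it allows us to rewrite $-\Delta_x\eta_t(x,y)$, which is produced when the kinetic energy inside $\cL_{N,t}^{(2)}$ is conjugated, in terms of $N^2V(N\cdot)f_{N,\ell}$ (plus a term proportional to $N^2\l_\ell f_{N,\ell}\c_\ell$), so that, after combining with the potential contributions of $\cL_{N,t}^{(2)}$ and with those coming from the conjugations of the potential and cubic terms, all $N^2V(N\cdot)$-type singular kernels cancel and one is left only with the regular pieces $N^3\l_\ell\,\c_\ell(x-y)\wt{\ph}_t(x)\wt{\ph}_t(y)b_x^*b_y^*$ and the terms carrying $Nw_{N,\ell}$ (multiplying $\Delta\wt{\ph}_t$) and $N\nabla w_{N,\ell}$ (multiplying $\nabla\wt{\ph}_t$) that appear explicitly in $\cG_{2,N}(t)$; here the estimates of Lemma \ref{3.0.sceqlemma} on $\l_\ell$ and $w_\ell$ are used.

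The cubic generator $\cC_{N,t}$ of \eqref{eq:cubic} arises from $e^{-B_t}\cL_{N,t}^{(3)}e^{B_t}$: at leading order one of the operators in $\cL_{N,t}^{(3)}$ gets dressed, replacing a $b_x$ by $\bgx+\bsxs$, while the subleading contributions (those involving the remainder operators $d_t$, or terms of the wrong degree) are bounded using \eqref{eq:d-bds} and the estimate $|\eta_t(x,y)|\leq CN$ from Lemma \ref{lm:propeta}, and absorbed into $\cE_{\cG_N}(t)$. The quartic term $\cL_{N,t}^{(4)}=\cV_N$ is treated through the Duhamel identity $e^{-B_t}\cV_Ne^{B_t}=\cV_N+\int_0^1ds\,e^{-sB_t}[\cV_N,B_t]e^{sB_t}$; the commutator $[\cV_N,B_t]$ is cubic, and after a further expansion it produces a contribution which combines with the one from $\cL_{N,t}^{(3)}$ and is partly absorbed into $\cC_{N,t}$ and partly into the error, using $\cV_N\leq\tfrac{C}{N}(\cK^2+\cN^2)$ and $\cV_N\leq C\cK\cN$ (the remark after Prop. \ref{prop:wellposedness} and the Sobolev inequality). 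Finally, $(i\partial_te^{-B_t})e^{B_t}=-i\int_0^1ds\,e^{-sB_t}\dot{B}_te^{sB_t}$, where $\dot{B}_t$ has the same shape as $B_t$ with $\eta_t$ replaced by $\dot{\eta}_t$ (which also includes the $\partial_t\wt{q}_t$ contribution); expanding the conjugation by $e^{sB_t}$ as before yields the $\dot\eta_t$-dependent quadratic terms in \eqref{eq:quadgenfinal}, built from $\gamma_t^{(s)}$ and $\sigma_t^{(s)}$, and the scalar term $-\int_0^1ds\int dxdy\,\dot{\eta}_t(x,y)\la\s_x^{(s)},\g_y^{(s)}\ra+\hc$ in \eqref{eq:phasek1}.

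I expect the main obstacle to be the bookkeeping of the ``large'' contributions. Through the kinetic, potential and cubic conjugations one repeatedly generates terms of size $O(1)$ or $O(N)$, or terms carrying singular kernels of the form $N^2V(N\cdot)$ (whose weak limit is $8\pi\aa\,\delta$), and the whole content of the proposition is that, after invoking the scattering equation \eqref{eq:scatlN} and the bounds of Lemma \ref{3.0.sceqlemma} on $\l_\ell$ and $w_\ell$, these reorganize \emph{exactly} into $\k_{\cG}(t)+\cG_{2,N}(t)+\cC_{N,t}+\cV_N$, with every leftover fitting into the stated bound \eqref{eq:cEGN} --- in particular the factor $(\cN+1)$ acting on $\xi_2$ and the factor $(\cH_N+\cN+1)^{1/2}$ on $\xi_1$, which originate from the $d_t$-estimates in \eqref{eq:d-bds} and from controlling $\cV_N^{1/2}$ and $\cK^{1/2}$ by $(\cH_N+\cN+1)^{1/2}$.
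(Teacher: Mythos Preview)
Your overall strategy matches the paper's: conjugate each $\cL_{N,t}^{(j)}$ via \eqref{eq:defd} and the $d$-bounds \eqref{eq:d-bds}, invoke the scattering equation \eqref{eq:scatlN} for the key cancellation, and sort by degree. The paper in fact defers most of the term-by-term computations to \cite{BSS1} (their Corollary~4.3 and Lemmas~4.4, 4.6, 4.8, 4.9, 4.10), spelling out in detail only the time-derivative piece $(i\partial_te^{-B_t})e^{B_t}$, which you describe correctly.

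There is, however, a concrete misconception in your treatment of $\cL_{N,t}^{(4)}=\cV_N$. You say the cubic commutator $[\cV_N,B_t]$ ``combines with the one from $\cL_{N,t}^{(3)}$ and is partly absorbed into $\cC_{N,t}$''. That is not how the pieces assemble. The conjugation $e^{-B_t}\cV_Ne^{B_t}$ (see \eqref{eq:cGN4}, from \cite[Lemma~4.10]{BSS1}) produces $\cV_N$ plus \emph{quadratic} terms carrying the kernel $N^2V_N(x-y)k_t(x,y)$, plus constants, plus $d$-operator remainders; no cubic piece survives at leading order, and nothing here feeds into $\cC_{N,t}$. These quadratic $k_t$-terms are essential: combined with $-\Delta\eta_t$ from the kinetic conjugation and the $N^3V_N\wt\ph_t\wt\ph_t\,b^*b^*$ term from $\cL_{N,t}^{(2)}$, they collapse via \eqref{eq:scatlN} into the three regular $b^*b^*$-terms in lines~2--4 of \eqref{eq:quadgenfinal} (this is \eqref{eq:deltaeta} in the paper). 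You also omit two further exact cancellations that are part of the bookkeeping: the linear terms $\sqrt{N}\,b^\#(\gamma h_{N,t})$, $\sqrt{N}\,b^\#(\sigma\overline{h_{N,t}})$ from $\cG_{N,t}^{(1)}$ cancel against matching linear terms coming out of $\cG_{N,t}^{(3)}$ (compare \eqref{eq:cGN1} and \eqref{eq:cGN3}); and the $d$-operator leftovers from $\cG_{N,t}^{(2,V)}$, $\cG_{N,t}^{(K)}$, $\cG_{N,t}^{(4)}$ are gathered into a single operator $\cD_{N,t}$ whose expansion \eqref{eq:cDNt} participates in the cancellation \eqref{eq:cancellations} of all terms proportional to $\|\sigma_t\|^2$ and to $\int du\,b^\#(\gamma_u)b^\#(\sigma_u)$. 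None of this requires $\cV_N\leq\tfrac{C}{N}(\cK^2+\cN^2)$ or $\cV_N\leq C\cK\cN$; those bounds are used later, not here.
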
  

Prop. \ref{prop:cGNt} is analogous to \cite[Prop. 2.5, part b]{BSS1} in the time-independent setting, where the solution of the Gross-Pitaevskii equation (\ref{eq:GPmod}) (entering the definition of $\cL_N (t)$ and $\cG_N (t)$) is replaced by the minimizer of the Gross-Pitaevskii energy functional \footnote{Notice that the energy functional used in \cite{BSS1} was the limiting Gross-Pitaevskii functional, defined in terms of the scattering length $\frak{a}$. In the present paper, on the other hand, we find it more convenient to work with the modified $N$-dependent Gross-Pitaevskii equation (\ref{eq:GPmod}); as a consequence, some of the expressions emerging in the computation of $\cG_N (t)$ are slightly different from the corresponding expressions in \cite{BSS1})}. The main difference is the fact that Prop. \ref{prop:cGNt} provides bounds for general matrix elements, while \cite[Prop. 2.5, part b]{BSS1} only deals with expectations. Since the two proofs are very similar, we only sketch the main steps in the proof of Prop. \ref{prop:cGNt}, referring to \cite{BSS1} for all details (we give some more details for the term $(i\partial_te^{-B_t})e^{B_t}$, which is absent in \cite{BSS1}).  

\begin{proof}[Sketch of proof of Prop. \ref{prop:cGNt}]  
We write $\cG_N (t) = (i\partial_te^{-B_t})e^{B_t} + \sum_{j=0}^4 \cG^{(j)}_N (t)$, with $\cG^{(j)}_N (t) = e^{-B_t} \cL^{(j)}_N (t) e^{B_t}$, for $j=0,\dots , 4$.  We have 
\begin{equation} \label{eq:cGN0} \begin{split}
	\cG_{N,t}^{(0)} &= \frac{N}{2} \la \ptt, [N^3\VN(1-2\fln)*|\ptt|^2] \ptt \ra - \frac{1}{2} \la \ptt, [N^3\VN*|\ptt|^2] \ptt \ra \\
	&\quad - \la \ptt, [N^3\VN\wln*|\ptt|^2] \ptt \ra \\
	&\hspace{0.5cm} \times\left( \int dx [\bgxs\bgx + \bsxs\bsx + \bgxs\bsxs +\bgx\bsx] +\norm{\s_t}^2\right) +\cE_{N,t}^{(0)}
\end{split} \end{equation} 
where 
\begin{equation*}
		\label{eq:errorcG0}
		\begin{split}
		\vert \la \x_1, \cE_{N,t}^{(0)}\, \x_2\ra \vert & \leq \frac{C}{\sqrt{N}}\norm{\cNplusoneh\x_1} \norm{\cNplusone\x_2}
\end{split}	
\end{equation*}
for all $\xi_1, \xi_2 \in \cF_{\perp \wt{\ph}_t}^{\leq N}$ and all $t \in \bR$. Details can be found in \cite[Cor. 4.3]{BSS1}.  
	
Proceeding as in \cite[Lemma 4.4]{BSS1}, we find 
\begin{equation}\label{eq:cGN1} 
\cG_{N,t}^{(1)} = 
\sqrt{N}\big[b(\g(h_{N,t})) + b^*(\s(\bar{ h}_{N,t})) +\hc\big] + \cE_{N,t}^{(1)}\,.
\end{equation}
with $h_{N,t} = (N^3V_Nw_{N,\ell}*|\wt{\ph}_t|^2) \wt{\ph}_t$ and 
\begin{equation*}
		\label{eq:errorcG1}
		\begin{split}
		|\langle \xi_1 , \cE_{N,t}^{(1)}\, \xi_2\rangle| &\leq  \frac{C}{\sqrt N} \|(\cN+1)^{1/2}\xi_1\|\|(\cN+1) \xi_2\|\,.\\
		\end{split}
	\end{equation*}

We decompose $\cG_{N}^{(2)} (t) = e^{-B_t} \cK e^{B_t} + \cG_N^{(2,V)} (t)$. Following \cite[Lemma 4.6]{BSS1}, we obtain 
\begin{equation}\label{eq:cGN2V}  
\begin{split} 
\cG_{N,t}^{(2,V)} &=\,	\int dx\, (N^3V_N*|\wt{\ph}_t|^2)(x)  \\
	&\hspace{.5cm} \times \big(b^*(\gxt)b(\gxt) +b(\sxt)b(\gxt)+ b^*(\gxt)b^*(\sxt)+ b^*(\sxt)b(\sxt) + \langle \sxt, \sxt\rangle  \big)\\
	&\quad+ \int dxdy\, N^3V_N(x-y)\wt{\ph}_t(x)\bar{\wt{\ph}}_t(y) \\
	&\hspace{.5cm} \times \big(b^*(\gxt)b(\gyt) +b(\sxt)b(\gyt)+ b^*(\gxt)b^*(\syt)  
	+ b^*(\syt)b(\sxt) + \langle \sxt, \syt\rangle  \big) \\
	&\quad+\frac 12 \int dxdy\, N^3\VN(x-y)\wt{\ph}_t(x)\wt{\ph}_t(y)  \\
	&\hspace{.5cm} \times \Big[b^*(\gyt)b^*(\gxt) + b^*(\gyt)b(\sxt)+ b^*(\gxt)b(\syt)+ b(\sxt)b(\syt) + \langle\sxt,\gyt\rangle(1-\cN/N) \\
	&\hspace{3cm}+ d^*_x\big(b^*(\gyt)+ b(\syt)\big) + \big(b^* (\gxt)+ b (\sxt)\big)d^*_y + \hc\Big] + \cE_{N,t}^{(2,V)} 
\end{split}
\end{equation} 
with 
\[ \begin{split} |\langle \xi_1, \cE_{N,t}^{(2,V)} \xi_2 \rangle | &\leq \frac{C}{N^{1/2}}e^{c|t|}\|(\cV_N + \cN+1)^{1/2} \xi_1\|\|(\cV_N + \cN+1)^{1/2} (\cN+1)^{1/2} \xi_2\| \end{split}  \]
where $d_x, d_x^*$ are the operator valued distributions associated with the fields $d_t$ defined in \eqref{eq:defd}. 
On the other hand, analogously to \cite[Lemma 4.8]{BSS1},  we conclude that  
\begin{equation}\label{eq:cGNK}  
\begin{split} \cG_{N,t}^{(K)} =& \,\cK + \int dxdy  \, (-\D_x \eta_t) (x,y) \, b^*_x b^*_y + \int dxdy \overline{(-\D_x\eta_t) (x,y)} \, b_x b_y \\
&+ \int dx \Big[ b^*(-\D_xp_x)b_x+  b^*_xb(-\D_xp_x)+\nabla_x b^*(p_x)\nabla_xb(p_x) \\
& \hspace{1cm} + b^*(-\D_xp_x)b^*(\eta_x) + b^*(p_x)b^*(-\D_xr_x) + b_x^*b^*(-\D_xr_x)\\
& \hspace{1cm} + b(-\D_xr_x)b_x + b(-\D_xr_x)b(p_x) + b(\eta_x)b(-\D_xp_x)\\
&\hspace{1cm}  +\nabla_x b^*(\eta_x)\nabla_xb(\eta_x)+ b^*(\s_x)b(-\D_xr_x)+ b^*(-\D_xr_x)b(\eta_x)\Big]\\
& + \int dx \big[b(\nabla_x\eta_x)\nabla_x d_x + \nabla_x d_x^* \, b^* (\nabla_x \eta_x) \big] + \|\nabla_1\s_t \|^2(1-\cN/N) + \frac 1N \|\nabla_1\eta_t\|^2\\
& +\frac{\|\nabla_1\eta_t\|^2}{N} \Big(\int dx \big[b^*(\g_x)b(\g_x)+b^*(\s_x)b(\s_x)+b^*(\g_x)b^*(\s_x)+b(\g_x)b(\s_x)\big]\Big)\\
&+ \frac{\|\nabla_1\eta_t\|^2}{N} \|\s_t \|^2+ \cE_{N,t}^{(\cK)}\,,\\
\end{split}\end{equation}
with 
\[ \begin{split} 
 |\langle \xi_1, \cE_{N,t}^{(K)} \xi_2 \rangle |  \leq  &\,
		\frac{C}{N^{1/2}}e^{c|t|} \|(\cK+\cN+1)^{1/2}\xi_1\| \|(\cK+\cN^3+1)^{1/2}(\cN+1) \xi_2\|\,.
	\end{split} \]

As for the cubic term, we find
\begin{equation}\label{eq:cGN3}  
\begin{split}
\cG_{N,t}^{(3)} = \intxy N^{5/2} \VNxy\ptt(y) \bxxs\byys[\bgx + \bsxs]\\
	- \sqrt{N}\big(b(\g ( h_{N,t})) + b^*(\s (\overline{ h_{N,t}})) \big)+\hc + \cE^{(3)}_{N,t},
\end{split}
\end{equation} 
with 
\[ \begin{split} |\langle \xi_1, \cE_{N,t}^{(3)} \, \xi_2 \rangle | \
	& \leq\frac{Ce^{c|t|}}{\sqrt{N}} 
	\norm{\cHNplusNh\x_1}\norm{\cHNplusNh\cNplusone\x_1}\,.
	\end{split} \]
The proof of (\ref{eq:cGN3}) is very similar to the proof of \cite[Lemma 4.9]{BSS1}. 

Furthermore, following the strategy used in \cite[Lemma 4.10]{BSS1}, we obtain 
\begin{equation}\label{eq:cGN4}
\begin{split}   
\cG_{N,t}^{(4)} = &\cV_N
	+ \frac{1}{2}  \intxy N^2 \VNxy \vert \la \g_y,\s_x \ra\vert^2  \left(1 +\frac{1}{N}- \frac{2\cN}{N}\right)\\
	&\quad+\frac{1}{2} \intxy N^2 \VNxy [\bgxs\bgys
	+\bgxs\bsy
	+\bgys\bsx\\
	&\hspace{7cm}
	+\bsx\bsy] k_t(x,y) + \hc\\
	&\quad+\frac{1}{2} \intxy N^2 \VNxy [
	\dxxs(\bgys+\bsy)
	+(\bgxs+\bsx)\dyys
	] k_t(x,y)+\hc\\
	&\quad+\intxy N^3 (V_N \wln^2)(x-y)\vert \wt\ph_t(x) \vert^2 \vert \wt\ph_t(y) \vert^2\\
	&\hspace{1cm}\times \left( \int du [\bgus\bgu + \bsus\bsu + \bgus\bsus +\bgu\bsu] +\norm{\s}^2 \right )\\
	&\quad+ \cE_{N,t}^{(4)}\,.
\end{split} 
\end{equation} 
with 
\[ \begin{split} |\langle \xi_1, \cE_{N,t}^{(4)}  \, \xi_2 \rangle |  \leq
	&\,\frac{C e^{c|t|}}{\sqrt{N}} \norm{\cHNplusNh\x_1}\norm{\cHNplusNh\cNplusone\x_1}\,.
\end{split} \]

Finally, we claim that 
\begin{multline}\label{eq:partialexpB}
(i\partial_te^{-B_t})e^{B_t} =  -\int_0^1ds \int dxdy\, \dot{\eta}_t(x,y)  \;\Big[b^*(\g_x^{(s)})b^*(\g_y^{(s)})+b^*(\g_x^{(s)}) b(\s_y^{(s)})+b(\s_x^{(s)})b(\s_y^{(s)}) \\+ b^*(\g_y^{(s)})b(\s_x^{(s)}) + \la\s_x^{(s)},\g_y^{(s)}\ra \Big] +\hc  + \cE_{\partial_t}\,, 
\end{multline}
where $\cE_{\partial_t}$ satisfies 
\begin{equation}\label{eq:partialtB} \begin{split}
    |\la \xi_1,\cE_{\partial_t} \xi_2\ra|
    \leq &\, \frac CNe^{c|t|}\|(\cN+1)^{1/2}\xi_1\|\|(\cN+1)^{3/2} \xi_2\|\,,\end{split}\end{equation} 
for any $\xi_1,\xi_2 \in \Fperpt$, $t \in \bR$, and $N \in \bN$ large enough.

To prove (\ref{eq:partialexpB}), we use \eqref{eq:defd} to expand   
\[ \begin{split}
(i\partial_te^{-B_t})e^{B_t} &= \int_0^1 ds \, e^{-sB_t} \, \big( i\partial_t B_t \big) e^{s B_t} \\ &= -\int_0^1ds \int dxdy\, \dot{\eta}_t(x,y) \big[b^*(\g_x^{(s)})b^*(\g_y^{(s)})+b^*(\g_x^{(s)})b(\s_y^{(s)})+b(\s_x^{(s)})b(\s_y^{(s)}) \\ &\hspace{5cm} + b^*(\g_y^{(s)})b(\s_x^{(s)}) + \la\s_x^{(s)},\g_y^{(s)}\ra \big] +\hc  + \cE_{\partial_t}
\end{split} \] 
where the superscript $s$ denotes that we consider cosh and sinh of $s\eta_t$ instead of $\eta_t$.
Here, $\cE_{\partial_t}$ is given by 
\[ \begin{split} \cE_{\partial_t} = \; &-\int_0^1ds \int dxdy\, \dot{\eta}_t(x,y) \\ &\times \big[d_{x,s}^* (b^* (\gamma_y^{(s)}) + b (\sigma^{(s)}_y)) + (b^* (\gamma^{(s)}_x) + b (\sigma^{(s)}_x)) d_{y,s}^* +d^*_{x,s}d^*_{y,s} - \la\s_x^{(s)},\g_y^{(s)}\ra\cN/N\big] \\ &+\hc\, \end{split} \]
where $d_{x,s}, d_{x,s}^*$ are the operator valued distributions associated with operators $d_t^{(s)}$, which are defined as $d_t$ in (\ref{eq:defd}), but with $\eta_t$ replaced by $s \eta_t$.  
With (\ref{eq:d-bds}) and with the bounds from Lemma \ref{lm:propeta}, we have
\[\begin{split}
    |\la \xi_1,\cE_{\partial_t} \xi_2\ra| \leq\,
    &\frac{C e^{c|t|}}{N}\intxy \,|\dot\eta_t(x,y)|\|(\cN+1)^{1/2} \xi_1\| \\ & \times \Big[ (1 + |\eta_t(x,y)|) \|(\cN+1)^{3/2}\xi_2\| + \|a_x (\cN+1)\xi_2 \| + \| a_x a_y(\cN+1)^{1/2}\xi_2 \|\Big] \\
    \leq & \frac{C e^{c|t|}}{N} \| (\cN+1)^{1/2} \xi_1 \| \| (\cN+1)^{3/2} \xi_2 \|
\end{split}\]
which concludes the proof of (\ref{eq:partialtB}). 

Collecting the contributions on the r.h.s. of (\ref{eq:cGN2V}), (\ref{eq:cGNK}) and (\ref{eq:cGN4}) containing the operators $d, d^*$, we define
\[ \begin{split}  \cD_{N,t} = \; &\int dx \, \big[ b (\nabla_x \eta_x) \nabla_x d_x + \text{h.c.} \big] \\ &+ \frac{1}{2} \int dx dy \, N^3 (V_N  f_{N,\ell}) (x-y) \wt{\ph}_t (x) \wt{\ph}_t (y) \\ &\hspace{4cm} \times \big[ d_x^* (b^* (\gamma_y) + b (\sigma_y)) + (b (\gamma_x) + b^* (\sigma_x)) d_y^* + \text{h.c.}  \big] \end{split} \]

Proceeding as in \cite[Section 4.5]{BSS1}, we obtain 
\begin{equation}\label{eq:cDNt} \begin{split}  \cD_{N,t} = \;&  \frac 12 \int dx dy N^3(V_Nf_{N,\ell} w_{N,\ell})(x-y) |\wt\ph_t(x)|^2|\wt\ph_t(y)|^2\\
 &\hspace{3cm}\times\int dv \big[b^*(\s_v)b^*(\g_v) +b(\s_v)b(\g_v) +2b^*(\s_v)b(\s_v)\big]\\
 & + \frac 12 \int dx dy N^3(V_N\wln \fln)(x-y)|\wt\ph_t(x)|^2|\wt\ph_t(y)|^2\|\s\|^2 + \cE_{N,t}^{(D)} 
 \end{split} \end{equation} 
with the error 
\[|\langle \xi_1,\cE_{N,t}^{(D)} \xi_2\rangle|\leq \frac CNe^{c|t|} \|(\cN+1)^{1/2}\xi_1\| \|(\cVN + \cN^2+1)^{1/2}(\cN+1)^{1/2}\xi_2\| \,.\]
To obtain the constant term \eqref{eq:phasek1}, we first recombine the following terms, appearing in \eqref{eq:cGN0}, \eqref{eq:cGNK} and \eqref{eq:cGN4}: 
\[
\frac1N\|\nabla_1\eta_t\|^2 +\frac{1}{2N}  \intxy N^2 \VNxy \vert \la \g_y,\s_x \ra\vert^2 - \frac{1}{2}\la\wt\ph_t, [N^3V_N*|\wt\ph_t|^2]\wt\ph_t\ra\,.
\]
Using the scattering equation \eqref{eq:scatlN} and the estimate $|\la\s_x,\g_y\ra -k_t(x,y)|\leq C|\wt\ph_t(x)||\wt\ph_t(y)|$, we end up with the second summand on the r.h.s. of \eqref{eq:phasek1} (up to an error of order $1/N$). Moreover, observing that 
\begin{equation}\label{eq:cancellations}
    \begin{split}
 \Big| &\frac{1}{2}\intxy N^3(V_Nw_{N,\ell}f_{N,\ell})(x-y) |\wt\ph_t(x)|^2|\wt\ph_t(y)|^2 + \frac{1}{N}\|\nabla_1\eta_t\|^2 \\
        &+ \intxy N^3(V_Nw_{N,\ell}^2)(x-y) |\wt\ph_t(x)|^2|\wt\ph_t(y)|^2   -\la\wt\ph_t, N^3V_Nw_{N,\ell}*|\wt\ph_t|^2\wt\ph_t\ra\Big|\leq \frac{C}{N}\,,
    \end{split}
\end{equation}
we conclude that the contributions proportional to $\| \s_t \|^2$ from Eq.  \eqref{eq:cGN0}, \eqref{eq:cGNK}, \eqref{eq:cGN4} and \eqref{eq:cDNt} cancel (again, up to an error of the order $1/N$); this leads to \eqref{eq:phasek1}. 

To derive (\ref{eq:quadgenfinal}), we use again (\ref{eq:cancellations}) to show that the sum of all terms  proportional to $\cN$ and all terms proportional to 
\[  \int du \big[ b^* (\g_u)b(\g_u) + b^* (\s_u) b(\s_u) + b^* (\g_u) b^* (\s_u) + b (\g_u) b(\s_u) \big] \]
appearing on the r.h.s. of \eqref{eq:cGN0}, \eqref{eq:cGN2V}, \eqref{eq:cGNK}, \eqref{eq:cGN4}, \eqref{eq:cDNt} produces a small error, which can be absorbed in $\cE_{\cG_N}$ (in \eqref{eq:cDNt}, we first decompose $2 \s_t\bar\s_t= \gamma^2_t + \s_t\bar\s_t - \mathbbm{1}$). Furthermore, in the terms on the first line on the r.h.s. of (\ref{eq:cGNK}), we split $\eta_t = \mu_t + k_t$ and we combine the contribution associated with $k_t$ with the contributions  
\[ \begin{split} 
&\frac{1}{2} \int dx dy N^3 V_N (x-y) \wt{\ph}_t (x) \wt{\ph}_t (y) b^*_x b^*_y \, ,  \qquad \frac{1}{2} \int dx dy \, N^2 V_N (x-y) k_t (x;y) b_x^* b_y^*  \end{split} \]
extracted from the third summands on the r.h.s. of (\ref{eq:cGN2V}) and on the r.h.s. of (\ref{eq:cGN4}) (expanding $\gamma_t = \mathbbm{1} + p_t$). Observing that 
\begin{equation*}\label{eq:kernqterms}
\begin{split}
	N \big(\D_x +\D_y\big)w_{N,\ell}(x-y) &\wt\ph_t(x)\wt\ph_t (y)
	+ N^3(V_Nf_{N,\ell})(x-y)\wt{\ph}_t(x)\wt{\ph}_t(y)  \\
	&= N[2\D w_{N,\ell}(x-y) +  N^2(V_Nf_{N,\ell})(x-y)]\wt\ph_t(x)\wt\ph_t(y) \\
	&\quad + Nw_{N,\ell}(x-y)[\D\wt\ph_t(x)\wt\ph_t(y) +\wt\ph_t(x)\D\wt\ph_t(y)]\\
	&\quad+ 2N\nabla w_{N,\ell}(x-y)[\nabla\wt\ph_t(x)\wt\ph_t(y) - \wt\ph_t(x)\nabla\wt\ph_t(y)]
	\end{split}\end{equation*}
and that, from the scattering equation (\ref{eq:scatlN}), 
\[\begin{split}N[2\D w_{N,\ell}(x-y) + N^2(V_Nf_{N,\ell})(x-y)]\wt\ph_t(x)\wt\ph_t(y) =&\, 2N^3 \l_\ell f_{N,\ell}\c_\ell(x-y)\wt\ph_t(x)\wt\ph_t(y) 
\end{split}\]
we conclude that 
\begin{equation}\label{eq:deltaeta} \begin{split} 
\int dx dy \, &\Big[ \frac12(-\Delta_x-\D_y) \eta_t (x;y) + \frac{1}{2}  N^3 V_N (x-y) \wt{\ph}_t (x) \wt{\ph}_t (y) + \frac{1}{2} N^2 V_N (x-y) k_t (x;y) \Big] \, b_x^* b_y^* \\ = \; &\int dx dy \, (-\Delta \mu_t) (x;y) b_x^* b_y^* + N^3 \lambda_\ell \int dx dy \, \chi_\ell (x-y) \, f_{N,\ell} (x-y) \wt{\ph}_t (x) \wt{\ph}_t (y) b_x^* b_y^* \\ &+ \frac12\int dx dy N w_{N,\ell} (x-y) \big[ \Delta \wt{\ph}_t (x) \wt{\ph}_t (y) + \wt{\ph}_t (x) \Delta \wt{\ph}_t (y) \big] b_x^* b_y^* \\ &+ 
\int N \nabla w_{N,\ell} (x-y) \big[ \nabla \wt{\ph}_t (x) \wt{\ph}_t (y) - \wt{\ph}_t (x) \nabla \wt{\ph}_t (y) \big] b_x^* b_y^* \,. \end{split} \end{equation}
In the second term on the r.h.s. we can write $f_{N,\ell} = 1 - w_{N,\ell}$ and check that the contribution of $w_{N,\ell}$ 
can be absorbed in the small error. Thus, combining (\ref{eq:cGN0}), (\ref{eq:cGN1}), (\ref{eq:cGN2V}), (\ref{eq:cGNK}), (\ref{eq:cGN3}), (\ref{eq:cGN4}) and (\ref{eq:partialexpB}) with (\ref{eq:cDNt}) and (\ref{eq:deltaeta}) and comparing with the definitions (\ref{eq:phasek1}), (\ref{eq:quadgenfinal}), (\ref{eq:cubic}), we obtain the claim of Prop. \ref{prop:cGNt}. 
 \end{proof}

With (\ref{eq:cGNt}), we can rewrite the generator $\cJ_N (t)$ defined in (\ref{eq:fluct-dyn}) as 
\begin{equation}\label{eq:cJN2}  \cJ_N (t) = (i\partial_t e^{-A_t}) e^{A_t} + e^{-A_t} \cG_N (t) e^{A_t} \end{equation}
To conclude the proof of Prop. \ref{prop:cJNt}, we need therefore to control the action of $A_t$ on the terms on the r.h.s. of (\ref{eq:cGNt-prop}). We already have some information about the action of $A_t$ on the Hamilton operator $\cH_N = \cK + \cV_N$, thanks to Lemma \ref{lm:commutatorHA}. The action of $A_t$ on the quadratic terms in $\cG_{2,N} (t)$ (excluding the kinetic energy operator $\cK$) is determined by the next lemma. 
\begin{lemma}\label{lm:cubicquadraticterms}
Let $A_t$ be defined as in (\ref{eq:defA}). Let $F\colon\RRR^3\times\RRR^3\to\mathbb{C}$. For any $\x_1, \x_2\in\Fperpt$ we have
\begin{equation}\label{eq:F2bd} \begin{split} \Big| \int dr ds F(r,s) & \left\langle \xi_1, \big[ b_r^* b^*_s , A_t \big] \xi_2 \right\rangle \Big| , \Big| \int dr ds \bar{F} (r,s)  \, \left\langle \xi_1, \big[ b_r b_s , A_t \big] \xi_2 \right\rangle \Big| \\ &\hspace{2cm} \leq C e^{c|t|} \| F \|_2 N^{-3/4} \| (\cN+1)^{1/2} \xi_1 \| \| (\cN+1) \xi_2 \|\,. \end{split} \end{equation} 
Moreover, assuming additionally that $F(r,s)=\overline{F(s,r)}$, we also have 
\begin{equation}\label{eq:Fopbd} \begin{split} 
\Big| \int & dr ds   F(r,s)   \left\langle \x_1,  [\brrs\bss, A_t] \x_2 \right\rangle \Big| \\ & \leq C\expt N^{-3/4} \min \Big( \norm{F}_2, \sup_s \int dr \abs{F(r,s)} \Big) \norm{\cNplusoneh\x_1}\norm{\cNplusone\x_2}. \end{split} \end{equation} 

\end{lemma}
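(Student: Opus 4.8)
The plan is to expand the commutators $[b_r^\sharp b_s^\sharp, A_t]$ (for the three choices $b_r^* b_s^*$, $b_r b_s$, $b_r^* b_s$) with the generalized commutation relations (\ref{eq:bcomm2}) and their analogues for the projected operators $\tilde b,\tilde b^*$, and then to estimate the resulting terms one by one. As in the proofs of Lemma \ref{lm:commutatorHA} and Lemma \ref{lm:gron-A}, I would first reduce to the operator $A_t^1$ obtained from (\ref{eq:defA}) by replacing $\tilde b,\tilde b^*$ with $b,b^*$: the contributions produced by the difference $A_t-A_t^1$ are controlled via the identity (\ref{eq:differenceprojectionA}), and since every term there carries an extra factor $\wt\ph_t$ together with $b^*(\wt\ph_t)$ or $b(\wt\ph_t)$, it has the same structure as the main terms and satisfies even better bounds. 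Writing $A_t^1 = \tfrac{\Th(\cN)}{\sqrt N}\int \nu_t(x,y)\,b_x^*b_y^*\,[\,b(\gamma_x)+b^*(\sigma_x)\,]\,dxdy-\hc$, the commutator of $\int F(r,s)\,b_r^\sharp b_s^\sharp$ with $A_t^1$ produces: (i) \emph{cubic} terms, in which one annihilation operator contained in $A_t^1$ is contracted against one of $b_r^\sharp,b_s^\sharp$, leaving a new expression $\int G(u,v,w)\,b_u^\sharp b_v^\sharp b_w^\sharp$; (ii) \emph{cutoff} terms proportional to $(\Th(\cN)-\Th(\cN\pm 2))\,A_t^1\cdot b_r^\sharp b_s^\sharp$, coming from $[b_r^\sharp b_s^\sharp,\Th(\cN)]$; (iii) lower-order terms carrying an extra $1/N$, produced by the $-\cN/N$ and $-\tfrac1N a^*a$ pieces of the commutation relations.

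For the cubic terms of type (i), the kernel $G$ is obtained from $\tfrac1{\sqrt N}\nu_t$ by contracting one of its coordinates, or a coordinate produced by composition with $\gamma_t$ or $\sigma_t$, against $F$. Splitting $\gamma_t=1+p_t$, the "$1$" part simply identifies one coordinate of $F$ with a coordinate of $\nu_t$, whereas the $p_t$ part (and $\sigma_t$, which is already Hilbert--Schmidt) is handled using $\|p_t\|_2,\|\sigma_t\|_2\leq Ce^{c|t|}$ and $\|\gamma_t\|_{\rm op}\leq Ce^{c|t|}$ (Lemma \ref{lm:propeta}, Prop. \ref{prop:propertiesphit}). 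In all cases a direct computation using $\sup_x\|\nu_{t,x}\|_2,\,\|\nu_t\|_2\leq Ce^{c|t|}\sqrt m\leq Ce^{c|t|}N^{-1/4}$ from Lemma \ref{lm:propnu} gives $\|G\|_2\leq Ce^{c|t|}N^{-3/4}\|F\|_2$ — the mechanism being that the coordinate of $F$ that is integrated out is always one shared with $\nu_t$ (or tied to it through the "$1$" part of $\gamma_t$), so that a single factor $\|F\|_2$ survives together with $N^{-1/2}$ from the explicit prefactor and $N^{-1/4}$ from $\nu_t$. The cubic expression $\int G\,b^\sharp b^\sharp b^\sharp$ is then estimated by moving one or two of the field operators to the bra and using the standard bounds $\int\|b_ub_v\chi\|^2\,du\,dv\leq C\|(\cN+1)\chi\|^2$, $\int\|b_u^\sharp\chi\|^2\,du\leq C\|(\cN+1)^{1/2}\chi\|^2$ on $\cF^{\leq N}$; since the three field operators can be distributed so that $\xi_1$ carries $(\cN+1)^{1/2}$ and $\xi_2$ carries $(\cN+1)$, this yields a contribution $Ce^{c|t|}N^{-3/4}\|F\|_2\,\|(\cN+1)^{1/2}\xi_1\|\,\|(\cN+1)\xi_2\|$, as in (\ref{eq:F2bd}). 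The type-(ii) terms are controlled with $0\leq\Th(\cN)-\Th(\cN\pm 2)\leq CM^{-1}\mathbbm{1}(\cN\geq M/2)$, Markov's inequality $\mathbbm{1}(\cN\geq M/2)\leq(2\cN/M)^k$, and an a priori bound on $A_t^1$ of the form in Lemma \ref{lm:aprioribounds}; since $M=N^{1/2}$ this is even smaller than $N^{-3/4}$. The type-(iii) terms are negligible by the same reasoning with an extra $1/N$. The estimate for $[b_rb_s,A_t]$ follows identically (or by taking adjoints and relabelling $F$), the freedom in distributing the $\cN$-weights allowing us to keep them in the stated positions.

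For (\ref{eq:Fopbd}), i.e. $\int F(r,s)\,b_r^*b_s$ with $F(r,s)=\overline{F(s,r)}$, the expansion is the same; the only addition is an alternative estimate of $\|G\|_2$. In all the cubic kernels produced, $F$ enters through an integral $\int dr\,F(r,s)\,H(x,r)$ with $H$ built from $\nu_t$, $p_t$ or $\sigma_t$, and a Schur-type bound gives $\int ds\,\big|\int dr\,F(r,s)H(x,r)\big|^2\leq\big(\sup_s\int dr\,|F(r,s)|\big)\big(\sup_r\int ds\,|F(r,s)|\big)\,\|H(x,\cdot)\|_2^2$, where hermiticity of $F$ makes the two suprema equal. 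Integrating in $x$ and again using $\sup_x\|\nu_{t,x}\|_2\leq Ce^{c|t|}N^{-1/4}$, one obtains $\|G\|_2\leq Ce^{c|t|}N^{-3/4}\sup_s\int dr\,|F(r,s)|$; combined with the $\|F\|_2$ bound this gives the minimum in (\ref{eq:Fopbd}).

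I expect the main obstacle to be organisational rather than conceptual: the commutators $[b_r^\sharp b_s^\sharp,A_t]$ unfold into a sizeable list of cubic, cutoff and $1/N$ terms (plus the $A_t-A_t^1$ corrections), and for each one must check both that the $L^2$ (or Schur) norm of the emerging kernel is $\lesssim N^{-3/4}\|F\|$ and that the three field operators can be arranged so that $\xi_1$ carries only $(\cN+1)^{1/2}$ and $\xi_2$ at most $(\cN+1)$ — never $(\cN+1)^{3/2}$ on either vector. None of the individual estimates is hard, but verifying that the contracted coordinate of $F$ is always one shared with $\nu_t$ (which is what produces a single factor $\|F\|$) and keeping the bookkeeping complete is where the care lies.
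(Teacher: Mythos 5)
The paper gives no proof of this lemma; it simply cites \cite[Lemma 5.2]{BSS1} and asserts that the adaptation (matrix-element bounds in place of expectations, and the presence of $\Theta(\cN)$, which "plays no role") is straightforward. So there is no paper proof to compare against line by line, but your reconstruction follows exactly the route that the citation points to, and the main mechanism — the $N^{-1/2}$ prefactor in $A_t$ combines with $\|\nu_t\|_2,\ \sup_x\|\nu_{t,x}\|_2 \lesssim \sqrt m = N^{-1/4}$ to give $N^{-3/4}$ — is correct, as is the distribution of $(\cN+1)$-weights on the cubic terms (move one annihilation operator to the bra, so that $\xi_1$ carries $(\cN+1)^{1/2}$ and $\xi_2$ carries $(\cN+1)$ via Lemma \ref{lm:aprioribounds}).

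Two inaccuracies in the sketch are worth flagging, although neither affects the conclusion. First, the type-(ii) cutoff terms are \emph{not} "even smaller than $N^{-3/4}$"; they are exactly of that size. After the commutator with $\Theta(\cN)$ you are left with a quintic expression (quadratic in $F$ times the cubic $A_t^1$) multiplied by $(\Theta(\cN)-\Theta(\cN\pm2))$, which contributes $(\cN+1)^{5/2}$ in place of the allowed $(\cN+1)^{3/2}$. The $M^{-1}$ from $|\Theta(\cN)-\Theta(\cN\pm2)|$ must be spent \emph{entirely} to cancel the extra power of $\cN$, because the cutoff is supported precisely where $\cN\sim M$; so the correct power-counting is $N^{-3/4}\cdot M^{-1}\cdot M = N^{-3/4}$, with no gain. (A state with $\xi_1,\xi_2$ concentrated in the $\cN\approx M/2$ sector saturates the claimed bound.) Markov's inequality applied naively only \emph{adds} powers of $\cN$, so the decisive step is to redistribute $(\cN+1)^{\pm\alpha}$ across the inner product and use $(\cN+1)\le M$ on the support of $\Theta'$ — this point deserves to be spelled out. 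Second, your "Schur-type" description of the kernels in (\ref{eq:Fopbd}) is slightly off: when the "$1$" part of $\gamma_t=1+p_t$ enters, $F$ is contracted against a delta rather than an $L^2$ function, so the kernel has the form $F(r,x)\nu_t(x,y)$, not $\int dr\,F(r,s)H(x,r)$ with $H\in L^2$, and the displayed Schur inequality does not literally apply. The fix is a direct Cauchy--Schwarz in $(r,x)$ after first summing out $y$ (using $\sup_x\|\nu_t(x,\cdot)\|_2\lesssim\sqrt m$), which again produces the factor $\sup_s\int dr|F(r,s)|$; the mechanism you describe is correct, only the justification is looser than stated. A minor cosmetic point: $\int\|b^*_u\chi\|^2\,du$ diverges, so the bound $\int\|b_u^\sharp\chi\|^2\,du\le C\|(\cN+1)^{1/2}\chi\|^2$ holds only for annihilation operators; in your distribution you only apply it to those, so no harm is done, but the sentence as written is wrong.
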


Lemma \ref{lm:cubicquadraticterms} is very similar to \cite[Lemma 5.2]{BSS1}. The main differences are the presence of the cutoff $\Theta (\cN)$ in the definition (\ref{eq:defA}) of $A_t$ (which plays no role in the proof) and the fact that bounds in \cite[Lemma 5.2]{BSS1} only control commutators in expectation, while (\ref{eq:F2bd}), (\ref{eq:Fopbd}) control all matrix elements. However, the proof only requires straightforward adaptations. We skip the details. 

Finally, we need to control the action of $A_t$ on the cubic operator $\cC_{N,t}$ 
on the r.h.s. of (\ref{eq:cGNt-prop}). This is the aim of the next lemma. 
\begin{lemma}\label{lm:commutatorCNA}
Let $A_t$ be defined as in \eqref{eq:defA}, with parameters $M = m^{-1} = \sqrt{N}$ and $\cC_{N,t}$ as in (\ref{eq:cubic}). Furthermore, let 
\begin{equation}\label{eq:X1X2} \begin{split} \X_1 = &- 2 \intxy N^3 \VNxy\wmnxy \overline{\ptx}\pty\\
    &\hspace{1cm} \times
    \left(\bgys\bgx+ \bgys\bsxs + \bgx\bsy + \bsxs\bsy + \la \s_y,\s_x\ra\right) \\
    \X_2 = &-2 \int dx  \left( (N^3 V_N w_{N,m})\ast \abs{\ptt}^2 \right)(x)\\
    &\hspace{1cm} \times\left(\bgxs\bgx+ \bgxs\bsxs + \bgx\bsx + \bsxs\bsx + \norm{\s_x}^2 \right)\,.
    \end{split} \end{equation} 
Then, on $\Fperpt\times\Fperpt$, we have
    \begin{equation*}\label{eq:commutatorCNA}
    \begin{split}
    [\cC_{N,t}, A_t]= &\; \X_1+\X_2 +\cE_{[\cC_N, A_t]}
    \end{split}
    \end{equation*}
    where $\cE_{[\cC_N, A_t]}$ is such that
    \begin{equation}\label{eq:ECA}
    \begin{split}
    &\abs{\la \x_1, \cE_{[\cC_N, A_t]} \x_2\ra} \leq C\expt N^{-1/4} 
    \norm{\cHNplusNh\x_1}\norm{\cHNplusNh\cNplusoneh\x_2}\,. 
    \end{split}
    \end{equation} 
\end{lemma}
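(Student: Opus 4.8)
The plan is to compute the commutator $[\cC_{N,t},A_t]$ directly, by distributing the commutator over the various creation/annihilation operators appearing in $\cC_{N,t}$ and $A_t$, keeping track of which contractions produce a ``large'' term (the ones collected in $\X_1,\X_2$) and which produce errors bounded by the right-hand side of \eqref{eq:ECA}. Recall $\cC_{N,t}=\intxy N^{5/2}\VNxy\pty\,\bxxs\byys[\bgx+\bsxs]+\hc$ and $A_t=\frac{\Th(\cN)}{\sqrt N}\intrs\nu_t(r,s)\,\brrstilde\bssstilde[\bgrtilde+\bsrstilde]-\hc$, so that $[\cC_{N,t},A_t]$ is, schematically, a sum of commutators between two cubic expressions in $b,b^*$. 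Since each of $\cC_{N,t}$ and $A_t$ contains three operators, the commutator produces a sum of terms with four operators (one contraction) plus terms with two operators times a scalar (two contractions); because of the structure $b^*b^*[\,\cdot\,]$ in both, the contractions are heavily constrained. The dangerous (leading) contributions come precisely when a $b^*$ from the $A_t$-part (in $\cC_{N,t}$ this means from the hermitian conjugate piece $\int N^{5/2}\VNxy\bar\pty\,[\bgxs+\bsx]\byy\bxx+\ldots$) is contracted against a $b$ from $A_t$, producing $N^{5/2}\cdot N^{-1/2}=N^2$ times $\VNxy\wmnxy$ (using $\nu_t(x,y)=-N\wmnxy\pty$), integrated against $N\cdot N^{-2}$ worth of phases — i.e. exactly the kernels $N^3\VNxy\wmnxy\bar\ptx\pty$ and $N^3(V_Nw_{N,m})*|\ptt|^2$ appearing in $\X_1,\X_2$.

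The key steps in order: (i) expand $[\cC_{N,t},A_t]=[\cC_{N,t}^{\,\text{main}},A_t]+[\hc,A_t]$ and, within each, use $\comm{b^\#_\cdot b^\#_\cdot b^\#_\cdot}{b^\#_\cdot b^\#_\cdot b^\#_\cdot}=\sum(\text{single contractions})+\sum(\text{double contractions})$, employing the (near-)canonical commutation relations \eqref{eq:bcomm2} $[b_x,b^*_y]=\delta(x-y)(1-\cN/N)-a^*_ya_x/N$; the correction terms $\cN/N$ and $a^*a/N$ will all be part of $\cE_{[\cC_N,A_t]}$, controlled using that $\cN\leq N$ on $\Fperpt$ together with $\|\nu_t\|_2, \sup_r\|\nu_{t,r}\|_2\leq C\expt\sqrt m$ and $\sup\|p_t\|,\sup\|\s_t\|\leq C\expt$ from Lemma \ref{lm:propnu} and Lemma \ref{lm:propeta}. (ii) Identify the two genuinely leading single-contraction terms — contracting $b^*$ of $\cC_{N,t}^{\hc}$ against $b$ of $A_t$, with the remaining $b^*(\gr+\sr^*)$-type factors re-expanded via $\g=1+p$, $\s=\eta+r$ and the projections $\bxxtilde=\bxx-\ptx b(\ptt)$ — and show that after using the pointwise bound $N^3\VNxy\wmnxy\leq C\,m^2|x-y|^{-2}\chi(|x-y|\leq m)$ (plus $N^3\VNxy\leq CN^2|x-y|^{-1}$) and Hardy's inequality, they reproduce $\X_1+\X_2$ up to errors of size $N^{-1/4}\expt$ against $\|\cHNplusNh\x_1\|\,\|\cHNplusNh\cNplusoneh\x_2\|$. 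Here the presence of $\cK$ and $\cV_N$ on both sides is essential: one $b_xb_y$ pairs with $\cVh$ or $\cHNh$ via the potential-type kernel, exactly as in the treatment of $M_{11},M_{12}$ and $N_1,N_2$ in the proof of Lemma \ref{lm:commutatorHA}. (iii) Bound all remaining single-contraction terms (those where the surviving kernel is $\nu_t$ paired with a $\nabla$ or with $p,\s$, or where the contraction lands on the ``inner'' $[\bgx+\bsxs]$ factors) by Cauchy–Schwarz, gaining the decay $\sqrt{m/N}\cdot(\text{something})\leq CN^{-1/4}\expt$; these are routine. (iv) Bound the double-contraction (scalar) terms: these give the $\la\s_y,\s_x\ra$ and $\|\s_x\|^2$-type pieces already included in $\X_1,\X_2$, plus genuinely negligible remainders. (v) Finally, deal with the difference between the projected operators $\bxxtilde,\bxxstilde$ in $A_t$ and the unprojected $b,b^*$ using identity \eqref{eq:differenceprojectionA}: since $A_t-A_t^1$ involves an extra factor $b^*(\ptt)$ or $b(\ptt)$, commuting $\cC_{N,t}$ past it recovers a commutator already estimated, and the contributions are of order $N^{-3/4}$ — exactly as in the $[\cK,A_t-A_t^1]$ and $[\cV_N,A_t-A_t^1]$ steps of Lemma \ref{lm:commutatorHA}.

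The main obstacle I expect is bookkeeping combined with the need for the correct operator-norm/Hilbert–Schmidt pairing in the leading single-contraction terms: one must arrange the four surviving operators so that the ``bad'' kernel $N^3\VNxy\wmnxy$ (which is large, of size $m^2/|x-y|^2$ near the diagonal, but has $L^1$-norm of order $m$) is always absorbed by pairing $a_xa_y$ with $\cVh$ (or, after using the scattering equation for $\wmnxy$ as in the derivation of $(-\Delta_x-\Delta_y)\nu_t$, with $\cHNh$), while the phases $\ptx,\pty,p_\cdot,\s_\cdot$ carry the remaining $L^2$-integrability; getting the powers of $N$ to land on exactly $N^{-1/4}$ for the errors, and $N^0$ for $\X_1,\X_2$, is delicate. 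A secondary subtlety is that $\|\nabla_1\s_t\|$ is only $O(\sqrt N\,\expt)$ (not $O(1)$), so any contraction that brings down a derivative of $\s_t$ must be paired with an extra $N^{-1/2}$ coming from the $1/\sqrt N$ in $A_t$ — this is what forces the use of the cutoff-free estimates in the form with $\cHNplusNh$ on both sides and is the reason the error in \eqref{eq:ECA} is $N^{-1/4}$ rather than $N^{-1/2}$. As in the analogous \cite[Lemma 5.2, Lemma 5.3]{BSS1}, once the leading terms are correctly identified the remaining estimates are systematic, so I would present the leading-term extraction in detail and only indicate the routine Cauchy–Schwarz bounds for the rest.
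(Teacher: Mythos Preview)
Your overall strategy matches the paper's proof: decompose $[\cC_{N,t},A_t]$ into commutators of cubic-in-$b,b^*$ expressions, identify the leading contractions that produce $\X_1+\X_2$, bound all other contractions as errors, and treat the projection defect $A_t-A_t^1$ via \eqref{eq:differenceprojectionA}. That is the right outline, and essentially what the paper does.

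However, two points in your proposal are muddled. First, the leading terms $\X_1,\X_2$ are \emph{double}-contraction terms, not single-contraction: in $[\wt\cC_N^*,b_x^*b_y^*B_x]$ (writing $\wt\cC_N^*\sim B_r^* b_s b_r$) it is precisely the contraction of \emph{both} $b_s$ and $b_r$ against $b_x^*,b_y^*$ that leaves the quadratic expression $B_y^*B_x$ (resp.\ $B_x^*B_x$) which, once normal-ordered, reproduces the five summands in $\X_1$ (resp.\ $\X_2$). The single-contraction terms are all quartic in $b,b^*$ and are \emph{all} errors --- these are the $R_1,\dots,R_{11}$ in the paper's argument. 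Relatedly, there is nothing to ``reproduce up to errors'' via Hardy here: after removing the cutoff $\Th(\cN)$ (by $|\Th-1|\leq\mathbbm{1}(\cN\geq M/2)$ and Markov, as in \eqref{eq:M11}) the double-contraction terms are \emph{exactly} $\X_1+\X_2$; Hardy and the scattering equation play no role in this lemma. Your remarks about pairing with $\cV_N^{1/2}$ and using $\cH_N$ belong to the error estimates, not to the extraction of the leading part.

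Second, you do not mention the commutator $[\wt\cC_N+\wt\cC_N^*,\Th(\cN)]$ with the cutoff, which the paper treats separately (the terms $T_1,\dots,T_4$): this is not automatic, since $\cC_{N,t}$ is not a function of $\cN$, and requires writing $\Th(\cN\pm k)-\Th(\cN)$ and exploiting that this difference has norm $O(M^{-1})$ with support in $\{M/2\leq\cN\leq M\}$. This is a genuine step and should be part of the plan.
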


\begin{proof}
We write $\cC_{N,t} = \wt{\cC}_N + \wt{\cC}_N^*$, with 
\[ \wt{\cC}_N = N^{5/2} \int dx dy V_N (x-y) \wt{\ph}_t (y) \, b_x^* b_y^* [ b(\gamma_x) + b^* (\sigma_x)]\,. \]
We will also use the short-hand notation $B_x = b (\gamma_x) + b^* (\sigma_x)$. We have
\be\label{eq:commutatorcubicstart}
\begin{split}
[\cC_{N,t}, A_t] 
=\, & \frac{\Th(\cN)}{\sqrt{N}}\intxy \nu_t(x,y) [\widetilde{\cC_N}, \bxxs\byys\Bxx ]
+\frac{\Th(\cN)}{\sqrt{N}}\intxy \nu_t(x,y) [\widetilde{\cC_N}^*, \bxxs\byys\Bxx]\\
&+[\widetilde{\cC_N}+\widetilde{\cC_N}^*, \Th(\cN) ] \frac{1}{\sqrt{N}}\intxy \nu_t(x,y) \bxxs\byys\Bxx
 + \hc + [\cC_{N,t}, A_t-A_t^1]\,.
\end{split}
\ee
With the notation $W=1- \cN / N$ and $\tilde{W}=W- 1/N=1- (\cN+1)/N$ and with the commutation relations (\ref{eq:bcomm2}), we find 
\begin{equation*}
\begin{split} 
\label{commutatorbzzA}
[\bzz, \bxxs\byys\Bxx]\
    = \; &W\bxxs \d (z-y)\Bxx
    +W\byys \d (z-x)\Bxx
    -\frac{2}{N}\bxxs\ayys\azz\Bxx\\
    &+\bxxs\byys W \s_t (z,x)
    -\frac{1}{N}\bxxs\byys \asxs\azz \\ 
[\bzzs, \bxxs\byys\Bxx]
=  \;&-\bxxs\byys W \overline{\g_t (z,x)}
+\frac{1}{N} \bxxs\byys\azzs \agx \\ 
[\Brr, \bxxs\byys\Bxx]
    = \; &W\byys\overline{\g_t (x,r)}\Bxx
    +W\bxxs\overline{\g_t (y,r)}\Bxx
    -\frac{2}{N}\byys\axxs\agr\Bxx\\
    &+\bxxs\byys W (\la p_r,\s_x\ra -\la p_x,\s_r\ra )
    +\frac{1}{N}\bxxs\byys \asrs\agx
    -\frac{1}{N}\bxxs\byys \asxs\agr \\
    [\Brrs, \bxxs\byys\Bxx]
    = &\;
    W \overline{\s_t (x,r)}\byys\Bxx
    +W \overline{\s_t (y,r)}\bxxs\Bxx
    -\frac{2}{N}\bxxs\ayys\asr\Bxx
    -\bxxs\byys W\d (r-x)\\
    & -\bxxs\byys W \big(p_t (x,r) +\overline{p_t (r,x)} + \la p_x, p_r\ra - \la \s_r, \s_x\ra \big)\\
    &+\frac{1}{N}\bxxs\byys\agrs\agx 
    -\frac{1}{N}\bxxs\byys \asxs\asr.
\end{split}
\end{equation*}
A lengthy but straightforward computation leads to
\be\label{eq:commutatorcubicfourbs}
\begin{split}
    &[\brrs\bsss\Brr, \bxxs\byys\Bxx]\\
     &\;= -\bxxs\byys\bsss  \overline{p_t (r,x)} \tilde{W}\Brr
    -\brrs\bxxs\byys \tilde{W} \d(s-x) \Brr
    -\brrs\bxxs\byys \tilde{W} \overline{p_t (s,x)} \Brr
    +\frac{2}{N}\brrs  \bxxs\byys\asss \agx\Brr\\
    &\;\;+\brrs\bsss \byys\tilde{W}\overline{p_t (x,r)}\Bxx
    +\brrs\bsss \bxxs \tilde{W}\d (y-r)\Bxx
    +\brrs\bsss \bxxs \tilde{W}\overline{p_t (y,r)}\Bxx
    -\frac{2}{N}\brrs\bsss\byys\axxs\agr\Bxx\\
    &\;\;+\brrs\bsss\bxxs\byys W (\la p_r,\s_x\ra -\la p_x,\s_r\ra )
    +\frac{1}{N}\brrs\bsss\bxxs\byys \asrs\agx
    -\frac{1}{N}\brrs\bsss\bxxs\byys \asxs\agr\,.
\end{split}
\ee
Let us label by $R_1,\dots, R_{11}$ the contributions to $[\cC_{N,t}, A_t]$ associated with the terms in (\ref{eq:commutatorcubicfourbs}). We claim that these terms can all be included into the error 
$\cE_{[\cC_N, A_t]}$. Let us consider first terms with 6 creation and annihilation operators. As an example, 
using Prop. \ref{prop:propertiesphit}, Lemma \ref{lm:propeta} and the cutoff $\cN \leq M$, we bound 
\[
\begin{split}
\abs{\la\x_1, R_4 \x_2\ra}
&\leq \frac{C\expt}{N^{3/2}} \norm{\cVh\cNplusone\cutoffNlessM\x_1}\\
&\times\left( \int dxdydrds N^3\VNrs N^2\wmnxy^2 \norm{\agx\Brr\cutoffNlessM\x_2}^2\right)^{1/2} \\
&\leq C\expt \sqrt{m\frac{M^2}{N^3}} \norm{\cHNplusNh\x_1}\norm{\cHNplusNh\cNplusoneh\x_2}\,.
\end{split}
\]
Here we used that $a (\gamma_x) b^* (\sigma_r) = (1-\cN/N)^{1/2} a^* (\sigma_x) a_y + (1-\cN/N) \sigma_t (x,r)$ and the fact that $\| \sigma_t \|_2 \leq C$. The other terms with 6 creation and annihilation operators, namely $R_8, R_{10}$ and $R_{11}$, can be bounded in the same way. Next, we consider terms with a contraction, quartic in creation and annihilation operators. Let us start with 
\[
\begin{split}
\abs{\la\x_1, R_1 \x_2\ra}
&\leq \frac{C\expt}{N} \norm{\cNplusone\cutoffNlessM\x_1}\\
&\times\left( \int dxdydrds N^3\VNrs  N^2\wmnxy^2 \abs{p_t(r,x)}^2\norm{\cNplusoneh\Brr\x_2}^2\right)^{1/2} \\
&\leq C\expt \sqrt{m\frac{M}{N^2}} \norm{\cNplusoneh\x_1}\norm{\cNplusone\x_2}
\end{split}
\]
where we used Lemma \ref{lm:propeta} to bound $\sup_r \| p_r \|_2 \leq C e^{c|t|}$. The term $R_3$ is estimated in exactly the same way. Furthermore, 
\[
\begin{split}
\abs{\la\x_1, R_5\x_2\ra}
&\leq \frac{C\expt}{N^{1/2}} \left( \int dxdydrds N^2\VNrs \abs{p_t (r,x)}^2\norm{\cNplusoneto{-1/2}\ayy\arr\ass \cutoffNlessM\x_1}^2\right)^{1/2}\\
&\times \left( \int dxdydrds N^3\VNrs \abs{\nu_t(x,y)}^2 \norm{\cNplusoneh\Brr\x_2}^2\right)^{1/2} \\
&\leq C\expt \sqrt{\frac{m}{N}} \norm{\cHNplusNh\x_1}\norm{\cHNplusNh\cNplusoneh\x_2}\,,
\end{split}
\]
and $R_7$ can be bounded analogously. Also $R_9$ satisfies the same estimate, since $\abs{\la p_r, \s_x\ra}\leq\norm{p_r}\norm{\s_x}$. Finally, we control 
\[
\begin{split}
\abs{\la\x_1, R_2 \x_2\ra}
&\leq \frac{C\expt}{N^{1/2}} \left( \int dxdydr N^2\VN (r-x)\norm{\cNplusoneto{-1/2}\ayy\arr\axx \cutoffNlessM\x_1}^2\right)^{1/2}\\
&\times \left( \int dxdydr N^3\VN (r-x) N^2\wmnxy^2 \norm{\cNplusoneh\Brr\x_2}^2\right)^{1/2} \\
&\leq C\expt \sqrt{\frac{m}{N}} \norm{\cHNplusNh\x_1}\norm{\cHNplusNh\cNplusoneh\x_2}\,,
\end{split}
\]
and we observe that $R_6$ is essentially the same as $R_2$, after renaming variables. 

Let us now consider the second term on the r.h.s. of (\ref{eq:commutatorcubicstart}). Here, we have to compute the commutator $[\Brrs\bss\brr, \bxxs\byys\Bxx]$. The computations are more involved than in \eqref{eq:commutatorcubicfourbs}, because now there can be multiple contractions, leading to contributions that are 
quadratic or even constant in creation and annihilation operators. The main contributions are those where $b_s$ and $b_r$ are contracted with $b_x^*, b_y^*$. There are two such contributions. Assuming that $b,b^*$ satisfy canonical commutation relations (it is easy to check that the corrections are negligible), they are given by 
\begin{equation*}\label{eq:fully} \begin{split} &- \Theta (\cN) \int dx dy N^3 V_N (x-y) w_{N,m} (x-y) \overline{\wt{\ph}_t (x)} \wt{\ph} (y) B_y^* B_x \\ &- \Theta (\cN) \int dx (N^3 V_N w_{N,m} * |\wt{\ph}_t|^2) (x) B_x^* B_x + \text{h.c.}  \\ &= \X_1 + \X_2 + \cE \end{split}  \end{equation*} 
with the error 
\[ |\langle \xi_1, \cE \xi_2 \rangle | \leq C e^{c|t|} N^{-1/4} \| (\cN+1)^{1/2} \xi_1 \| \| (\cN+1) \xi_2\|\,, \]
needed to remove the cutoff $\Theta (\cN)$ (arguing similarly as in (\ref{eq:M11}), with the choice $M = N^{1/2}$). Terms involving contractions between $b_r, b_s$ and $B_x$ (or between $B_r^*$ and $b_x^*, b_y^*$) are smaller, because they produce factors of $\sigma_t (r,x)$ or $\sigma_t (s,x)$, which are in $L^2$. As an example, consider the term $\delta (s-x) \sigma_t (r,x) B_r^* b_y^*$ (where we contracted $b_s$ with $b_x^*$ and $b_r$ with $B_x$ (ignoring again corrections to the canonical commutation relations). It produces a contribution $S$ to  (\ref{eq:commutatorcubicstart}), which can be bounded by 
\[
\begin{split}
\abs{\la\x_1, S \x_2\ra}
&\leq C\expt 
\left( \int dxdydr N^3 V_N(r-x)  \norm{\cNplusoneto{-1/2}\ayy\Brr \x_1}^2\right)^{1/2}\\
&\hspace{1cm} \times\left( \int dxdydr N^3 V_N(r-x) |\nu_t (x,y)|^2 \right)^{1/2} \norm{\cNplusoneh\x_2} \\
&\leq C\expt N^{-1/4} \norm{\cNplusoneh\x_1}\norm{\cNplusoneh\x_2}
\end{split}
\]
where we used $\abs{\s_t (r,x)}\leq C  N \abs{\ptr\ptx}$ from Lemma \ref{lm:propeta} and $\| \nu_t \| \leq C \sqrt{m}$ from Lemma \ref{lm:propnu} (and the choice $m= N^{-1/2}$). Terms arising from $[\Brrs\bss\brr, \bxxs\byys\Bxx]$ containing 6 or 4 creation and annihilation operators can be bounded as we did above with $R_1, \dots , R_{11}$. 

The next term in \eqref{eq:commutatorcubicstart} has the form 
\be\label{eq:cubiccommutatorTheta}
\begin{split}
&[\widetilde{\cC_N}+\widetilde{\cC_N}^*, \Th(\cN) ] \frac{1}{\sqrt{N}}\intxy \nu_t(x,y) \bxxs\byys\Bxx +\hc\\
&= \intrs N^2\VNrs \pts (\Th( \cN -1)-\Th( \cN ))\brrs\bsss\bgr\intxy \nu_t(x,y) \bxxs\byys\Bxx\\
&+ \intrs N^2\VNrs \pts (\Th( \cN -3)-\Th( \cN ))\brrs\bsss\bsxs\intxy \nu_t(x,y) \bxxs\byys\Bxx\\
&+ \intrs N^2\VNrs \overline{\pts} (\Th( \cN +1))-\Th( \cN )\bgrs\bss\brr\intxy \nu_t(x,y) \bxxs\byys\Bxx\\
&+ \intrs N^2\VNrs \overline{\pts} (\Th( \cN +3))-\Th( \cN )\bsr\bss\brr\intxy \nu_t(x,y) \bxxs\byys\Bxx
+\hc\\
&=\sum_{i=1}^4T_i +\hc
\end{split}
\ee
The contribution $T_2$ is already normally ordered. It can simply be bounded by Cauchy-Schwarz. We find 
\[\begin{split}  |\langle \xi_1, T_2 \xi_2 \rangle | &\leq \int dx dy dr ds N^2 V_N (r-s) |\wt{\ph}_t (s)| |\nu_t (x,y)| \\ &\hspace{2cm} \times \| b_r b_s b (\sigma_x) b_x b_y (\cN+1)^{-3/2} \xi_1 \| \| B_x (\cN+1)^{3/2} \mathbbm{1}(\cN\leq M) \xi_2 \| \\ 
&\leq C e^{c|t|} \sqrt{\frac{m}{N}} \| \cV_N^{1/2} \xi_1 \| \| (\cN+1)^{2} \mathbbm{1}(\cN\leq M)\xi_1 \|  \\ &\leq C e^{c|t|} N^{-1/4} \| \cV_N^{1/2} \xi_1 \| \| (\cN+1) \xi_2 \| \end{split} \]
where we used $\sup_x \| \sigma_x \|_2 \leq C e^{c|t|}$ from Lemma \ref{lm:propeta} and $\sup_x \| \nu_x \|_2 \leq C \sqrt{m} e^{c|t|}$ from Lemma \ref{lm:propnu} (with $m = N^{-1/2}$). All other normally ordered term emerging from (\ref{eq:cubiccommutatorTheta}) can be treated analogously. On the other hand, terms involving commutators (produced through normal ordering) are closely related with the contributions discussed above from the first two terms on the r.h.s. of (\ref{eq:commutatorcubicstart}). Due to the presence of the differences $\Theta (\cN+1) - \Theta (\cN)$ (or similar), also the contributions where $b_s, b_r$ are contracted with $b_x^* b_y^*$ (arising from $T_3$ and $T_4$) are negligible, here (since $\norm{(\Theta (\cN+1) - \Theta (\cN))\xi}\leq C/M \norm{\mathbbm{1}(M/2\leq\cN\leq M)\xi}$, we can gain a factor $M^{-1}$, arguing similarly as in (\ref{eq:M11})). 

Finally, we deal with the commutator $[\cC_{N,t}, A_t-A_t^1]$ using the identity \eqref{eq:differenceprojectionA}. The resulting terms can be treated analogously as we did with the contributions to $[\cC_{N,t} , A_t^1]$ (but these terms are less singular and thus simpler to handle). They all satisfy the estimate (\ref{eq:ECA}). We skip the details. 
\end{proof}

We are now ready to proceed with the proof of Prop. \ref{prop:cJNt}. 
\begin{proof}[Proof of Prop. \ref{prop:cJNt}] 
Recall from (\ref{eq:cJN2}) that 
\begin{equation}\label{eq:cJN1} \cJ_N (t) = (i\partial_t e^{-A_t}) e^{A_t} + e^{-A_t} \cG_N (t) e^{A_t} \end{equation} 
where, by Prop. \ref{prop:cGNt},  
\[ \cG_{N}(t) = \k_{\cG}(t) +  \cG_{2,N}(t) + \cC_{N,t}+ \cVN + \cE_{\cG_N} (t), \]
with the eror $\cE_N (t)$ satisfying the bound \eqref{eq:cEGN}. From Lemma \ref{lm:gron-A}, we find 
\[ \begin{split} |\langle \xi_1, e^{-A_t} &\cE_{\cG_N} (t) e^{A_t} \xi_2 \rangle | \leq \frac{C e^{c|t|}}{\sqrt{N}}  \| (\cH_N + \cN+1)^{1/2} \xi_1 \| \| (\cH_N + \cN^3 + 1)^{1/2} (\cN+1) \xi_2 \|\,.  \end{split} \]

With Lemma \ref{lm:commutatorHA}, Lemma \ref{lm:commutatorCNA} and Lemma \ref{lm:cubicquadraticterms}, we claim that 
\begin{equation}\label{eq:propJ1}  e^{-A_t} (\cG_{2,N} (t) + \cC_{N,t} +\cV_N) e^{A_t} = \cG_{2,N} (t) + \frac{1}{2} \X_1 + \frac{1}{2} \X_2 + \cV_N + \cE  \end{equation} 
where $\Xi_1,\X_2$ are defined in Eq. \eqref{eq:X1X2} and 
   \[
    \begin{split}
    \abs{\la \x_1, \cE \x_2\ra}
    &\leq C\expt 
    N^{-1/4}
    \norm{\cHNplusNh\x_1}\norm{\cHNplusNh\cNplusoneh\x_2}\,.
    \end{split}
    \]
To prove (\ref{eq:propJ1}), we start by observing that 
\[ e^{-A_t} (\cG_{2,N} (t) - \cK) e^{A_t} = (\cG_{2,N} (t) - \cK) + \int_0^1 ds \, e^{-sA_t}  \big[ (\cG_{2,N} (t) - \cK) , A_t \big]  e^{sA_t} \]
Going through the terms in $\cG_{2,N} (t) - \cK$ in (\ref{eq:quadgenfinal}), we can check with Prop. \ref{prop:propertiesphit} and Lemma \ref{lm:propeta} that they all have one of the forms considered in Lemma \ref{lm:cubicquadraticterms}. It follows that 
\[ e^{-A_t} (\cG_{2,N} (t) - \cK) e^{A_t} = (\cG_{2,N} (t) - \cK) + \cE' \]
where, applying also Lemma \ref{lm:gron-A},  
\[ \begin{split}  | \langle \xi_1, \cE' \xi_2 \rangle | &= C e^{c|t|} N^{-3/4} \int_0^1 ds \, \| (\cN+1)^{1/2} e^{s A_t} \xi_1 \| \| (\cN+1) e^{sA_t} \xi_2 \|  \\ & \leq C e^{c|t|} N^{-3/4} \| (\cN+1)^{1/2} \xi_1 \| \| (\cN+1) \xi_2 \| \,.\end{split} \]

On the other hand, with Duhamel's formula, we can write
\[
\begin{split}
    e^{-A_t} \cHN  e^{A_t}
    =&\; \cHN + \int_0^1 ds e^{-s A_t}[\cHN,A_t]e^{s A_t}\\
    =&\; \cHN + \int_0^1 ds e^{-s A_t}(-\cC_{N,t} +\cE_{[\cHN,A_t]}) e^{s A_t}\\
    =&\; \cHN - \cC_{N,t} 
    + \int_0^1 ds e^{-s A_t}\cE_{[\cHN,A_t]} e^{s A_t} 
    \\
    &+\int_0^1 ds\int_0^s dr e^{-r A_t}(-\Xi_1-\Xi_2 +\cE_{[\cC_N,A_t]}) e^{r A_t}\\
    =&\; \cHN - \cC_{N,t} -\frac{1}{2} (\Xi_1+\Xi_2) 
    \\ &+ \int_0^1 ds e^{-s A_t}\cE_{[\cHN,A_t]} e^{s A_t} +  \int_0^1 ds\int_0^s dr e^{-r A_t}\cE_{[\cC_N,A_t]} e^{r A_t} \\ &- \int_0^1 ds\int_0^s dr \int_0^r d\tau \, e^{-\tau A_t} \big[\X_1 + \X_2 , A_t \big] e^{\tau A_t}\,. 
\end{split}
\]
Similarly,
\[
\begin{split}
    e^{-A_t} \cC_{N,t}  e^{A_t}
    = \; &\cC_{N,t}+ \int_0^1 ds e^{-s A_t}[\cC_{N,t} ,A_t]e^{s A_t}\\
    = \; &\cC_{N,t} + \int_0^1 ds e^{-s A_t}\big(\Xi_1+\Xi_2 +\cE_{[\cC_N,A_t]}\big) e^{s A_t}\\
    = \; &\cC_{N,t} +\Xi_1+\Xi_2 \\ &+ \int_0^1 ds e^{-s A_t}\cE_{[\cC_N,A_t]} e^{s A_t} + \int_0^1 ds \int_0^s dr \, e^{-r A_t} \big[ \X_1 + \X_2 , A_t \big] e^{r A_t} \,.
\end{split}
\]
Applying Lemma \ref{lm:commutatorHA}, Lemma \ref{lm:commutatorCNA} and Lemma \ref{lm:cubicquadraticterms} (noticing that the commutator of the quadratic operators $\X_1, \X_2$ with $A_t$ is a sum of terms that can be bounded with (\ref{eq:F2bd}), (\ref{eq:Fopbd})) and propagating the estimates through the cubic phase with Lemma \ref{lm:gron-A}, we arrive at (\ref{eq:propJ1}). 

As for the first term on the r.h.s. of (\ref{eq:cJN1}), we observe that, since $\sup_x \| \dot{\nu}_{t,x} \| \leq C e^{c|t|} N^{-1/4}$ from Lemma \ref{lm:propnu}, 
\begin{equation*}
	\begin{split}
	\abs{ \la\x_1, (i\partial_te^{-A_t})e^{A_t} \x_2\ra}
	&\leq \int_0^1ds  \abs{\la e^{s A_t} \x_1, [i\partial_t A_t]e^{s A_t} \x_2\ra} \\
	&\leq C\expt N^{-3/4} \norm{\cNplusoneh\x_1}\norm{\cNplusone\x_2}.
	\end{split}
\end{equation*}

We conclude that
\[ \cJ_N (t) = \kappa_\cG (t) + \cG_{2,N} (t) + \frac{1}{2} \Xi_1 + \frac{1}{2} \Xi_2 + \cE' \]
where 
\begin{equation}\label{eq:cE'-J} | \langle \xi_1, \cE'  \xi_2 \rangle | \leq C e^{c|t|} N^{-1/4} \norm{\cHNplusNh\x_1}\norm{(\cH_N + \cN^3 + 1)^{1/2} (\cN+1) \x_2}\,.\end{equation} 
Next, we observe that in the terms $\Xi_1, \Xi_2$, as defined in (\ref{eq:X1X2}), we can replace the parameter $m = N^{-1/2}$ with the fixed, $N$-independent, parameter $\ell \in (0;1)$, at the expense of small error. In fact, setting 
\begin{equation}\label{eq:X1X2'} \begin{split} \X'_1 = &- 2 \intxy N^3 \VNxy w_{N,\ell} (x-y) \overline{\ptx}\pty\\
    &\hspace{1cm} \times
    \left(\bgys\bgx+ \bgys\bsxs + \bgx\bsy + \bsxs\bsy + \la \s_y,\s_x\ra\right) \\
    \X'_2 = &-2 \int dx  \left( (N^3 V_N w_{N,\ell})\ast \abs{\ptt}^2 \right)(x)\\
    &\hspace{1cm} \times\left(\bgxs\bgx+ \bgxs\bsxs + \bgx\bsx + \bsxs\bsx + \norm{\s_x}^2 \right)
    \end{split} \end{equation} 
we find 
\begin{equation}\label{eq:mtoell} |\langle \xi_1, (\X_j - \X'_j) \xi_2 \rangle | \leq \frac{C e^{c|t|}}{\sqrt{N}} \| (\cN+1)^{1/2} \xi_1 \| \| (\cK + \cN+1)^{1/2}  \xi_2 \| \end{equation}
for $j=1,2$. To show (\ref{eq:mtoell}), we argue as we do in the proof of Prop. \ref{prop:cJN-cJinfty} to control the terms $\text{I}, \text{II}, \text{III}$ (in that Proposition, we control convergence of $N^3 V_N f_{N,\ell}$ towards a $\delta$-distribution, but the same argument implies convergence of $N^3 V_N f_{N,m}$ towards a $\delta$-distribution and therefore allows us to control the difference $N^3 V_N (f_{N,\ell} - f_{N,m}) = N^3 V_N (w_{N,m} - w_{N,\ell})$; note that, to handle $N^3 V_N f_{N,m}$ we need to use (\ref{eq:Vfa0}) with $\ell$ replaced by $m = N^{-1/2}$, which makes some of the estimates, like the one for the second term on the r.h.s. of (\ref{eq:DtDinf}), worse). 

Combining the operator $\cG_{2,N} (t)$, as defined in (\ref{eq:quadgenfinal}), with the terms $\Xi'_1, \Xi'_2$ from (\ref{eq:X1X2'}) we obtain, as quadratic form on $\cF_{\perp \wt{\ph}_t}^{\leq N}$ (so that the ``projected'' operators $\tl{b}, \tl{b}^*$ are the same as $b,b^*$), the operator $\cJ_{2,N} (t)$ in (\ref{eq:generatorapprox}), up to a small error due to the term on the seventh line of (\ref{eq:generatorapprox}), whose matrix elements can be bounded by  
\begin{equation}\label{eq:errJG} \Big| \int dx \, N^3 (V_N f_{N,\ell} * |\wt{\ph}_t|^2) (x) \la \xi_1, (a_x^* a_x - b_x^* b_x) \xi_2 \rangle \Big|  \leq C N^{-1}  \| (\cN+1)^{1/2} \xi_1 \| \| (\cN+1)^{3/2} \xi_2 \|\,. \end{equation} 
Remark that, despite its smallness on $\cF_{\perp \wt{\ph}_t}^{\leq N}$, we inserted this term in the definition (\ref{eq:generatorapprox}) to make sure that $\cU_{2,N} (t;s)$ maps $\cF_{\perp \wt{\ph}_s}^{\leq N}$ into 
$\cF_{\perp \wt{\ph}_t}^{\leq N}$. 
Absorbing (\ref{eq:errJG}), together with (\ref{eq:cE'-J}), into the error term $\cE_{\cJ_N} (t)$, we conclude the proof of Prop.  \ref{prop:cJNt}. 
\end{proof}

\appendix

\section{Properties of $f_\ell, \ph_t, \wt{\ph_t}, \eta_t, \eta_{\infty,t}, \nu_t$} 
\label{app:eta}

In this appendix, we collect some analytic properties of functions and kernels that are used throughout the paper to construct the approximation of the many-body dynamics. 

In the first lemma, whose proof can be found in \cite{BS,BBCS3,ESY0}, we consider the ground state solution of the Neumann problem (\ref{eq:scatl}) on the ball $|x| \leq N \ell$, with the normalization $f_\ell (x) = 1$ for $|x| = N \ell$. 
 
	\begin{lemma} \label{3.0.sceqlemma}
		Let $V \in L^3 (\bR^3)$ be non-negative, compactly supported and spherically symmetric. Fix $\ell > 0$ and let $f_\ell$ denote the solution of \eqref{eq:scatl}. For $N$ large enough the following properties hold true.
		\begin{enumerate}
			\item [i)] We have 
			\begin{equation}\label{eq:lambdaell} 
				\lambda_\ell = \frac{3\aa }{(\ell N)^3} \left(1 +\mathcal{O} \big(\aa  / \ell N\big) \right)\,.
			\end{equation}
			\item[ii)] We have $0\leq f_\ell, w_\ell\leq1$. Moreover there exists a constant $C > 0$ such that     
			\begin{equation} \label{eq:Vfa0} 
				\left|  \int  V(x) f_\ell (x) dx - 8\pi \aa   \right| \leq \frac{C \aa^2}{\ell N} \, 	\end{equation}
			for all $\ell \in (0;1/2)$ and $N \in \bN$.
			\item[iii)] There exists a constant $C>0 $ such that 
			\begin{equation}\label{3.0.scbounds1} 
				w_\ell(x)\leq \frac{C}{|x|+1} \quad\text{ and }\quad |\nabla w_\ell(x)|\leq \frac{C }{x^2+1}
			\end{equation}
			for all $x \in \bR^3$, $\ell \in (0;1/2)$ and all $N\in \bN$ large enough.   
		\end{enumerate}        
	\end{lemma}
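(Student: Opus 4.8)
The final statement is Lemma \ref{3.0.sceqlemma}, which collects standard properties of the solution $f_\ell$ of the Neumann problem \eqref{eq:scatl} on the ball $|x| \leq N\ell$.

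\medskip

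\textbf{Proof proposal.} The plan is to exploit the explicit radial structure of the zero-energy scattering problem, following the classical analysis of \cite{ESY0} and its adaptation to the Neumann boundary condition in \cite{BBCS3,BS}. First I would pass to the radial variable: writing $f_\ell(x) = u_\ell(|x|)/|x|$ for a radial profile and $V$ spherically symmetric, the equation \eqref{eq:scatl} becomes a one-dimensional Sturm--Liouville problem for $u_\ell$ on $[0, N\ell]$ with $u_\ell(0) = 0$, Neumann condition at $r = N\ell$, and normalization $u_\ell(N\ell) = N\ell$. Since $V$ is compactly supported, say in $\{|x| \leq R_0\}$, for $r > R_0$ the profile $u_\ell$ is affine, $u_\ell(r) = r - \mathfrak{b}_\ell$ for some constant $\mathfrak{b}_\ell$ determined by matching; the Neumann condition at $N\ell$ then fixes $\mathfrak{b}_\ell$ up to corrections, and comparison with the infinite-volume scattering solution (whose profile is $r - \mathfrak{a}$ outside the support of $V$) shows $\mathfrak{b}_\ell = \mathfrak{a}(1 + O(\mathfrak{a}/\ell N))$. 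This, combined with the eigenvalue relation obtained by integrating \eqref{eq:scatl} over the ball, yields item i), namely $\lambda_\ell = 3\mathfrak{a}/(\ell N)^3 (1 + O(\mathfrak{a}/\ell N))$.

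\medskip

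For item ii), the bounds $0 \leq f_\ell, w_\ell \leq 1$ follow from the maximum principle (or positivity/monotonicity of the radial profile) together with $V \geq 0$; the key estimate \eqref{eq:Vfa0} comes from integrating the scattering equation: $\int V f_\ell = \int (2\Delta f_\ell + 2\lambda_\ell f_\ell)$, and the first term equals $8\pi \mathfrak{b}_\ell$ by the divergence theorem applied to the affine asymptotics, while the $\lambda_\ell$ term and the difference $\mathfrak{b}_\ell - \mathfrak{a}$ contribute the error $O(\mathfrak{a}^2/\ell N)$. For item iii), the decay bounds \eqref{3.0.scbounds1} on $w_\ell = 1 - f_\ell$ and its gradient follow again from the explicit form $f_\ell(x) = 1 - \mathfrak{b}_\ell/|x| + (\text{lower order})$ outside the support of $V$, with the correction controlled by $\lambda_\ell$; inside and near the support one uses elliptic regularity for \eqref{eq:scatl} with $V \in L^3$, which gives $f_\ell \in C^{0,\alpha}$ and hence local boundedness of $w_\ell$ and $\nabla w_\ell$, uniformly in $N$ and $\ell \in (0;1/2)$.

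\medskip

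The main technical point — though still routine given the references — is tracking the uniformity of all error constants in both $N$ and $\ell \in (0;1/2)$, since $\ell$ is later taken small (but fixed in $N$) and the constants in \eqref{eq:Vfa0}, \eqref{3.0.scbounds1} must not blow up as $\ell \downarrow 0$. This is handled by scaling: setting $g(x) = f_\ell(N x)$ one works on the fixed ball $|x| \leq \ell$ and separates the $V$-dependent region (of size $\sim R_0/N \to 0$) from the affine region, where the estimates are explicit. Since this is precisely the content of \cite[and analogous lemmas therein]{ESY0,BBCS3,BS}, I would simply invoke those references for the detailed verification rather than reproduce the radial ODE analysis.
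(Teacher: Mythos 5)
Your proposal is correct and takes essentially the same approach as the paper, which simply refers to \cite{BS,BBCS3,ESY0} for the proof; you sketch the underlying radial ODE analysis and then defer to the same references. One small imprecision: for $\lambda_\ell \neq 0$ the radial profile outside the support of $V$ is trigonometric rather than exactly affine (as the paper uses explicitly in the proof of Lemma \ref{lm:bds-infty}), but since $\lambda_\ell (N\ell)^2 = O(1/(N\ell))$ the affine approximation holds to the claimed order, so your sketch captures the right mechanism.
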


Next, we consider solutions of the Gross-Pitaevskii equation (\ref{eq:GPtd}) and of the modified $N$-dependent 
Gross-Pitaevskii equation \eqref{eq:GPmod}. The proof of Prop. \ref{prop:propertiesphit} is essentially contained (up to straightforward changes) in \cite[Theorem 3.1]{BDS}, \cite[Prop. 4.2]{BS} and \cite[Prop. B.1]{BCS}.
	\begin{prop}\label{prop:propertiesphit}
		Let $V \in L^3(\bR^3)$ be a non-negative, spherically symmetric, compactly supported potential. Let  $\ph \in L^2(\bR^3)$ with $\|\ph\| =1$. Recall the scattering solution \eqref{eq:scatlN} which enters the modified Gross-Pitaevskii equation \eqref{eq:GPmod}; assume $\ell\in (0,1/2)$.
		\begin{itemize}
			\item[i)]  For $\ph \in H^1(\bR^3)$, there exist unique global solutions $t \to \ph_t$ and $t \to \wt{\ph}_t$ in  $\cC(\bR, H^1(\bR^3))$ of the Gross-Pitaevskii equation \eqref{eq:GPtd} and, respectively, of the modified Gross-Pitaevskii equation \eqref{eq:GPmod} with initial datum $\ph$. We have $\|\ph_t\|=\|\wt{\ph}_t\|=1$ for all $t\in \bR$. Furthermore, there exists a constant $C >0$ such that
			\begin{equation*}
				\label{eq:normH1phi}
				\|\ph_t\|_{H^1}, 	\|\tilde\ph_t\|_{H^1}\leq C\,.
			\end{equation*}
			\item[ii)]  If $\ph \in H^m(\bR^3)$ for some $m \geq 2$, then $\ph_t, \tilde\ph_t \in H^m(\bR^3)$ for every $t \in \bR$. Moreover there exist constants $C$ depending on $m$ and on $\|\ph\|_{H^m}$, and $c>0$ depending on $m$ and on $\|\ph\|_{H^1}$ such that for all $t \in \bR$ 
			\begin{equation*}
				\label{eq:normHmphi}
				\|\ph_t\|_{H^m}, 	\|\tilde\ph_t\|_{H^m}\leq Ce^{c|t|}\,.
			\end{equation*}
			\item[iii)]  Suppose $\ph \in H^4(\bR^3)$. Then there exist constants $C>0$ depending on $\|\ph\|_{H^4}$, and $c>0$ depending on $\|\ph\|_{H^1}$ such that for all $t \in \bR$ 
			\begin{equation*}\label{eq:normH2partialphi}
				\|\dot{\tilde\ph}_t\|_{H^2}, 	\|\ddot{\tilde\ph}_t\|_{H^2}\leq Ce^{c|t|}\,.
			\end{equation*}
			Furthermore, if $\ph \in H^6(\bR^3)$ there exist constants $C>0$ depending on $\norm{\ph}_{H^6}$, and $c>0$ depending on $\norm{\ph}_{H^1}$ such that for all $t \in \bR$ 
			\begin{equation*}\label{eq:normH4partialphi}
				\norm{\dot\ph_t}_{H^4}, \norm{\dptt}_{H^4} \leq C\expt.
			\end{equation*}
			\item[iv)] Suppose $\ph \in H^2(\bR^3)$. Then there exist constants $C,c_1,c_2 >0$ such that for all $t \in \bR$
			\begin{equation*}
				\label{eq:diffdynamics}
				\|\ph_t-\tilde\ph_t\| \leq CN^{-1}\exp(c_1\exp(c_2|t|))\,.
			\end{equation*}
			For $\ph\in H^6(\RRR^3)$ there are constants $C, c > 0$ such that
			\begin{equation*}\label{eq:differencephiH4}
			\norm{\ph_t-\ptt}_{H^4}\leq C N^{-1}\expexpt
			\end{equation*}
			and
            \begin{equation*}\label{eq:differencephidotH2}
			\norm{\dot\ph_t-\dptt}_{H^2}\leq C N^{-1}\expexpt.
			\end{equation*}
		\end{itemize}
\end{prop}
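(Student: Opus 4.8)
The proposition collects standard well-posedness and regularity facts for the two cubic Schr\"odinger equations \eqref{eq:GPtd} and \eqref{eq:GPmod}, and the plan is to adapt the arguments of \cite[Theorem 3.1]{BDS}, \cite[Prop. 4.2]{BS} and \cite[Prop. B.1]{BCS}, keeping track that every constant is uniform in $N$. For part i), I would first note that both equations have a gauge-invariant cubic nonlinearity which is locally Lipschitz on $H^1(\bR^3)$; for \eqref{eq:GPmod} this uses Young's inequality, the Sobolev embedding $H^1\hookrightarrow L^6$, and the fact, from Lemma \ref{3.0.sceqlemma}, that the kernel $N^3V(N\cdot)f_\ell(N\cdot)$ is nonnegative with $L^1$-norm bounded uniformly in $N$. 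Hence both equations are locally well-posed in $H^1$, with conserved $L^2$-norm. The crucial point is that each equation has a conserved energy which is a \emph{sum of nonnegative terms} --- for \eqref{eq:GPmod} it is $\int|\nabla\wt{\ph}|^2 + \tfrac12\iint N^3(Vf_\ell)(N(x-y))|\wt{\ph}(x)|^2|\wt{\ph}(y)|^2$, nonnegative because $V,f_\ell\geq0$ --- so $\|\nabla\ph_t\|_2$ and $\|\nabla\wt{\ph}_t\|_2$ are bounded by the (finite, $N$-uniform) initial energy; together with mass conservation this gives the $H^1$ bounds of part i) and, by the blow-up alternative, global existence.

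For parts ii)--iii), the plan is a persistence-of-regularity argument. First I would prove local well-posedness in $H^m$, $m\geq2$, on a time interval whose length depends only on $\|\ph\|_{H^1}$ (and, for \eqref{eq:GPmod}, on $\|V\|_{L^3}$ and $\ell$, but \emph{not} on $N$), by running the contraction in $H^m$ and estimating the nonlinearity through the Moser--Kato--Ponce product estimate in $H^m\cap L^\infty$ together with $H^2(\bR^3)\hookrightarrow L^\infty(\bR^3)$, the $N$-dependent kernel again entering only through its $N$-uniform $L^1$-norm. Iterating this local statement on time intervals of order one, feeding in the uniform $H^1$ bound of part i) at each step, yields $\|\ph_t\|_{H^m},\|\wt{\ph}_t\|_{H^m}\leq Ce^{c|t|}$ with $c$ depending only on $m$ and $\|\ph\|_{H^1}$. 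For part iii) I would differentiate the equations in time once or twice and use the equation itself to trade time-derivatives for space-derivatives, e.g. $\dot\ph_t = i\Delta\ph_t - i\,(\text{nonlinear term})$ gives $\|\dot\ph_t\|_{H^k}\lesssim\|\ph_t\|_{H^{k+2}}+(\text{lower order})$, and then invoke part ii) with $m$ large enough; this is exactly where the hypotheses $\ph\in H^4$ (for the bounds on $\dot{\wt{\ph}}_t,\ddot{\wt{\ph}}_t$ in $H^2$) and $\ph\in H^6$ (for the bounds on $\dot\ph_t,\dot{\wt{\ph}}_t$ in $H^4$) are used.

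For part iv), I would set $\delta_t=\ph_t-\wt{\ph}_t$, subtract the two equations to get $i\partial_t\delta_t=-\Delta\delta_t+R_t$ with $\delta_0=0$, and split
\[ R_t = 8\pi\aa\big(|\ph_t|^2\ph_t-|\wt{\ph}_t|^2\wt{\ph}_t\big) + \big(8\pi\aa|\wt{\ph}_t|^2 - N^3(Vf_\ell)(N\cdot)\ast|\wt{\ph}_t|^2\big)\wt{\ph}_t\,. \]
The first piece is bounded in $L^2$ by $C(\|\ph_t\|_\infty^2+\|\wt{\ph}_t\|_\infty^2)\|\delta_t\|$. The second is $O(N^{-1})$: the kernel is a nonnegative approximate identity concentrated on scale $1/N$ with total mass $8\pi\aa+O(1/N)$ by \eqref{eq:Vfa0}, so a first-order Taylor expansion bounds it by $CN^{-1}(\|\wt{\ph}_t\|_\infty+\|\nabla\wt{\ph}_t\|_\infty)\|\wt{\ph}_t\|_{H^2}\lesssim CN^{-1}e^{c|t|}$ using parts ii)--iii). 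A Gr\"onwall estimate for $\|\delta_t\|$ then closes, and since its coefficient $\|\ph_t\|_\infty^2+\|\wt{\ph}_t\|_\infty^2\lesssim e^{c|t|}$ grows exponentially (part ii) and $H^2\hookrightarrow L^\infty$), integrating it produces the double exponential $\exp(c_1\exp(c_2|t|))$. The higher-order differences $\|\ph_t-\wt{\ph}_t\|_{H^4}$ and $\|\dot\ph_t-\dot{\wt{\ph}}_t\|_{H^2}$ follow from the same scheme at the appropriate Sobolev level (in the second case applied to the differentiated equation), now with the exponentially growing higher Sobolev norms of $\ph_t,\wt{\ph}_t$ entering the right-hand side, which again accounts for the double exponential.

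The only real obstacle, beyond routine adaptation of semilinear Schr\"odinger theory, is maintaining \emph{uniformity in $N$}: all local existence times and regularity constants must depend on $\|V\|_{L^3}$, $\ell$ and Sobolev norms of $\ph$ but not on $N$, even though $N^3V(N\cdot)f_\ell(N\cdot)$ has $L^r$-norms with $r>1$ diverging as $N\to\infty$. The way around this is to always use the kernel either through its $N$-uniform $L^1$-norm (in the energy and contraction estimates) or in convolution against a regular function (as in the bound on $R_t$), trading the divergent $L^r$-norms for derivatives on $\wt{\ph}_t$; the $N$-uniform control on $\lambda_\ell$, $f_\ell$, $w_\ell$ and on $\int Vf_\ell$ provided by Lemma \ref{3.0.sceqlemma} and \eqref{eq:Vfa0} is precisely what makes this go through.
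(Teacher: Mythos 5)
Your proposal is correct and mirrors the approach the paper itself takes: the paper does not spell out a proof but simply refers to \cite[Theorem 3.1]{BDS}, \cite[Prop. 4.2]{BS} and \cite[Prop. B.1]{BCS}, and your sketch is a faithful reconstruction of the arguments in those references (conserved energy with nonnegative potential part for the uniform $H^1$ bound; persistence of regularity and iteration for the exponential $H^m$ growth; the equation to trade time for space derivatives in iii); and a Gr\"onwall estimate for the difference in iv) with the exponentially growing coefficient producing the double exponential). You also correctly identify the one non-routine point, namely that all constants must be uniform in $N$, which is achieved by using the kernel $N^3V(N\cdot)f_\ell(N\cdot)$ only through its $N$-uniform $L^1$-norm or in convolution against smooth functions, precisely the mechanism made available by Lemma \ref{3.0.sceqlemma}.

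One small caution on the phrasing of part ii): a contraction run purely with Moser--Kato--Ponce and the crude bound $\|\cdot\|_{L^\infty}\lesssim\|\cdot\|_{H^2}$ would produce a local existence time depending on $\|\ph\|_{H^2}$ rather than $\|\ph\|_{H^1}$, and the resulting iteration would yield double- rather than single-exponential growth of the higher Sobolev norms. What makes the existence time depend only on $\|\ph\|_{H^1}$ (and hence gives the claimed $Ce^{c|t|}$ bound) is the use of Strichartz estimates, which control the spacetime $L^\infty_x$ norm on short intervals by $\|\ph\|_{H^1}$ alone; this is indeed what the cited references do, and the rest of your outline is consistent with it, but it is worth stating explicitly.
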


Recall now the definition (\ref{eq:defeta}), depending on the parameters $N,\ell$, of the kernel $\eta_t$ appearing in the generalized Bogoliubov transformation $e^{B_t}$ and the notation $\gamma_t = \cosh \eta_t$, $\sigma_t = \sinh \eta_t$. Furthermore, we set $p_t = \gamma_t - \mathbbm{1}$, $r_t = \sigma_t - \eta_t$ and $\mu_t = \eta_t - k_t$ (recall (\ref{eq:defk})). Several bounds for the operators $\eta_t, \gamma_t, \sigma_t, p_t, r_t$  (for their integral kernels) and for their time-derivatives are established in the next lemma, whose proof is a straightforward adaptation of \cite[Lemma 3.3 and 3.4]{BDS}, \cite[Lemma 4.3]{BS}, \cite[Appendix C]{BCS}. 
\begin{lemma}\label{lm:propeta}
		Let $\wt{\ph}_t$ be the solution of \eqref{eq:GPmod} with initial datum $\ph \in H^4(\bR)$. Let $w_\ell =1-f_\ell$ with $f_\ell$  the ground state solution of the Neumann problem \eqref{eq:scatl} and let $\ell \in (0;1/2)$. Let $k_t, \eta_t,\mu_t$ be defined as in \eqref{eq:defk},\eqref{eq:defeta}. Then there exist constants $C,c >0$ depending only on  $\|\ph\|_{H^4}$ (or lower Sobolev norms of $\ph$) and on $V$ such that the following bounds hold uniformly in $\ell$, for all $t\in \bR$.
		\begin{itemize}
			\item [i)] We have 
			\begin{equation*}
				\label{eq:boundeta}
				\|\eta_t\| \leq C\ell^{1/2}
			\end{equation*} 
			and also
			\[ \|\nabla_j \eta_t\| \leq C\sqrt N, \quad \|\nabla_j\mu_t\| \leq C 
   \]
			for $j =1,2$.
   With $\nabla_1\eta_t$ and $\nabla_2\eta_t$ we indicate the kernels $\nabla_x\eta_t(x;y)$ and $\nabla_y\eta_t(x;y)$, similar definitions hold for $\D_j\eta_t$, for $j=1,2$. 
			Let $\s_t, p_t, r_t$ be defined as in \eqref{eq:defcoshsinh} and after \eqref{eq:cJ2NV}, we obtain 
			\begin{equation*}
				\label{eq:boundsrp}\begin{split}
				\|\s_t\|, \|p_t\|,\|r_t\|, \|\nabla_j p_t\|, \|\nabla_j r_t\|\leq C\,,\\
				\norm{\D_j p_t}, \norm{\D_j r_t}, \norm{\D_j \mu_t}\leq C\expt,  \norm{\nabla_j \s_t}\leq C \expt \sqrt{N} \,.
			\end{split}
		\end{equation*}
			\item[ii)] for a.e. $x,y \in \bR^3$, $n \in \bN$, $n \geq 2$,  we have the pointwise bounds 
			\begin{equation*}
				\label{eq:boundeta2}
				\begin{split}
					|\eta_t(x;y)| &\leq \frac{C}{|x-y|+N^{-1}}|\wt\ph_t(x)|||\wt\ph_t(y)|\\
					|\mu_t(x;y)|, |p_t(x;y)|,|r_t(x;y)| &\leq C|\wt\ph_t(x)|||\wt\ph_t(y)|\,\\
					|\nabla_x\eta_t(x,y)| & \leq C \big(|\nabla\wt\ph_t(x)+|\wt\ph_t(x)|\big) |\wt\ph_t(y)|\Big(\frac{\chi(|x-y| \leq \ell)}{|x-y|^2}+1\Big)\,.
			\end{split}\end{equation*}
			\item[iii)] Moreover we have
			\begin{equation*}
				\label{eq:normetax}
				\sup_x \|\eta_x\|^2 ,\; \sup_x \|k_{t,x}\|^2,\;\sup_x \|\mu_{t,x}\|^2\leq C\|\tilde\ph_t\|_{H^2}\leq Ce^{c|t|}
			\end{equation*}
			where we indicate with $\sup_{x}\|\eta_x\|^2 = \sup_{x}\int |\eta_t(x;y)|^2dy$
            and 
			\[
			\|\s_{t,x}\|, \|p_{t,x}\|, \|r_{t,x}\| \leq Ce^{c|t|}\,.
			\]
			\item[iv)] For $j=1,2$ we have the following bounds for the time derivatives
			\begin{equation*}
				\label{eq:timederivative}
				\|\partial_t\eta_t\|, \|\partial_t^2\eta_t\| \leq Ce^{c|t|}\,,
			\end{equation*}
			and also 
			\[
			\|\partial_t\nabla_j\eta_t\|\leq C\sqrt Ne^{c|t|},\, \|\partial_t\nabla_j\mu_t\|\leq Ce^{c|t|}.
			\]
			Furthermore
			\begin{equation*}\label{eq:mixedderivatives}
			\|\partial_t\s_t\|,
			\|\partial_tr_t\|,
			\|\partial_tp_t\|, 
			\|\nabla_j\partial_t p_t\|,
			\|\nabla_j\partial_t r_t\|,
			\norm{\D_j\partial_t p_t},
			\norm{\D_j\partial_t r_t},
			\norm{\D_j\partial_t \mu_t}
			\leq Ce^{c|t|}\,.
			\end{equation*}
			\item[v)] For a.e. $x,y \in \bR^3$ we have the pointwise bounds
			\[|\partial_t\eta_t(x;y)| \leq C \Big[1+ \frac{1}{|x-y| + N^{-1}}\Big]\big[|\dot{\tilde\ph}_t(x)||\tilde\ph_t(y)| + |\tilde\ph_t(x)||\dot{\tilde\ph}_t(y)|+ |\tilde\ph_t(x)||\tilde\ph_t(y)|\big]\,,\]
			and
			\[\begin{split}
				&|\partial_t\mu_t(x;y)|,  |\partial_tr_t(x;y)|, |\partial_tp_t(x;y)|\\
				&\hspace{2cm}\leq C e^{c|t|}\big[|\dot{\tilde\ph}_t(x)||\tilde\ph_t(y)| + |\tilde\ph_t(x)||\dot{\tilde\ph}_t(y)|+ |\tilde\ph_t(x)||\tilde\ph_t(y)|\big]\,.
			\end{split}\]
			\item[vi)] Finally, we have
			\[
			\|\partial_t\eta_x\|, \|\partial_tk_{t,x}\|, \|\partial_t\mu_{t,x}\| \leq Ce^{c|t|}
			\]
			and
			\[
			\|\partial_t\s_{t,x}\|, \|\partial_tp_{t,x}\|, \|\partial_tr_{t,x}\| \leq Ce^{c|t|}\,.
			\]
\end{itemize}
\end{lemma}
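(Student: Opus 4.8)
The plan is to derive every bound from two sources already at hand: the pointwise estimates $w_\ell(z)\le C/(|z|+1)$, $|\nabla w_\ell(z)|\le C/(|z|^2+1)$, the asymptotics $\lambda_\ell\sim 3\aa/(\ell N)^3$ and the scattering identity \eqref{eq:scatlN}, all from Lemma \ref{3.0.sceqlemma}; and the Sobolev bounds on $\wt\ph_t,\dot{\wt\ph}_t,\ddot{\wt\ph}_t$ (with $e^{c|t|}$ growth) from Prop. \ref{prop:propertiesphit}. I would organise the argument in three layers: first the unprojected kernel $k_t$; then $\eta_t=(\wt q_t\otimes\wt q_t)k_t$ together with the projection remainder $\mu_t=\eta_t-k_t$; and finally the functions $\gamma_t=\cosh\eta_t$, $\sigma_t=\sinh\eta_t$, $p_t=\gamma_t-\mathbbm 1$, $r_t=\sigma_t-\eta_t$. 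Throughout, the guiding principle is to decide for each kernel whether it is to be controlled pointwise, in Hilbert--Schmidt norm, or in operator norm, the rule being that the genuinely singular objects ($N\nabla w_\ell(N\cdot)$ and $N\Delta w_\ell(N\cdot)$) must always be kept composed with a further factor of $\eta_t$ or of a convolution of $\wt\ph_t$, so that the singularity is absorbed --- except inside $\eta_t$ and $\sigma_t$ themselves.

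For the first layer, the pointwise bound $|k_t(x,y)|\le C(|x-y|+N^{-1})^{-1}|\wt\ph_t(x)||\wt\ph_t(y)|$ is immediate, and all the $L^2$ and $\sup_x$ bounds then follow by the substitution $z=N(x-y)$ together with $\wt\ph_t\in H^1\hookrightarrow L^4$ and $\wt\ph_t\in H^2\hookrightarrow L^\infty$: for instance $\|k_t\|_2^2\le CN^{-1}\int_{|z|\le N\ell}(|z|+1)^{-2}\,dz\,\|\wt\ph_t\|_4^4\le C\ell$, while a derivative landing on $w_\ell$ costs a factor $N$, giving $\|\nabla_j k_t\|_2\le C\sqrt N$. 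For the second layer, since $\wt q_t\otimes\wt q_t$ is a contraction all these bounds pass to $\eta_t$; a derivative falling instead on a projection produces $\nabla\wt\ph_t$ or $\Delta\wt\ph_t$ times a quantity $\langle\wt\ph_t,k_t(\cdot,y)\rangle=-\wt\ph_t(y)\,(Nw_\ell(N\cdot)\ast|\wt\ph_t|^2)(y)$, which is bounded uniformly in $N$ because the convolution smears out the singularity. The remainder $\mu_t=(\wt q_t\otimes\wt q_t-\mathbbm 1)k_t$ is exactly a sum of such projection corrections --- products of $\wt\ph_t$'s with convolutions of $Nw_\ell(N\cdot)$ against $|\wt\ph_t|^2$ --- so it carries no $N^{-1}$-singularity, which yields $|\mu_t(x;y)|\le C|\wt\ph_t(x)||\wt\ph_t(y)|$, $\|\nabla_j\mu_t\|\le C$ and $\|\Delta_j\mu_t\|\le C\expt$.

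For the third layer I would first note that, for $\ell$ small, $\|\eta_t\|_{\mathrm{op}}\le\|\eta_t\|_2\le C\ell^{1/2}<1$, so the series \eqref{eq:defcoshsinh} converge absolutely in Hilbert--Schmidt norm, giving $\|\gamma_t\|,\|\sigma_t\|,\|p_t\|,\|r_t\|\le C$. The structural point for the finer bounds is that every term of $p_t$ and of $r_t$ contains at least two factors of $\eta_t$, and a composition $\int\eta_t(x,z)\bar\eta_t(z,y)\,dz$ of kernels obeying the pointwise bound of ii) already loses the singularity, since $\int|\wt\ph_t(z)|^2|x-z|^{-1}|z-y|^{-1}\,dz$ is bounded uniformly; the same mechanism handles derivatives, a $\nabla_x$ or $\Delta_x$ on such a composition landing on the leftmost $\eta_t$ and then being controlled either by the locally integrable pointwise bound for $\nabla_x\eta_t$, or by the operator-norm bound $\|\Delta_1\eta_t\|_{\mathrm{op}}\le C\expt$ (obtained from the scattering identity, see below) composed with $\|\bar\eta_t(\eta_t\bar\eta_t)^{j-1}\|_2\le C\ell^{j-1/2}$. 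Summing the series yields $\|\nabla_j p_t\|,\|\nabla_j r_t\|\le C$, $\|\Delta_j p_t\|,\|\Delta_j r_t\|\le C\expt$, the pointwise bounds $|p_t(x;y)|,|r_t(x;y)|\le C|\wt\ph_t(x)||\wt\ph_t(y)|$ and $\|p_{t,x}\|,\|r_{t,x}\|\le C\expt$; by contrast $\sigma_t=\eta_t+r_t$ and $\gamma_t=\mathbbm 1+p_t$ inherit the \emph{bad} bound $\|\nabla_j\sigma_t\|\le C\expt\sqrt N$ only through their single $\eta_t$-summand. Finally, because $w_\ell,f_\ell,\lambda_\ell$ are $t$-independent, every $\partial_t$ (resp. $\partial_t^2$) acting on $k_t$, $\eta_t$, $\mu_t$ or the $\wt q_t$'s merely replaces one factor $\wt\ph_t$ by $\dot{\wt\ph}_t$ (resp. produces $\ddot{\wt\ph}_t$ or a product of two $\dot{\wt\ph}_t$'s), leaving the spatial structure of every estimate unchanged; using $\|\dot{\wt\ph}_t\|_{H^m},\|\ddot{\wt\ph}_t\|_{H^2}\le C\expt$ from Prop. \ref{prop:propertiesphit} (which is where the assumption $\ph\in H^6$ is used, through $\|\dot{\wt\ph}_t\|_{H^4}$ needed for the $\nabla w_\ell$-type terms), parts iv)--vi) follow from i)--iii) with one extra factor $e^{c|t|}$.

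The work is essentially bookkeeping over many terms rather than a single deep estimate, and the one place that genuinely requires care --- and the main obstacle --- is the second-derivative bound. The kernel $N\Delta_x w_\ell(N(x-y))$ cannot be controlled pointwise (its $L^2$-weight scales like $N^3$), so one must use the scattering equation \eqref{eq:scatlN} to rewrite it as $\tfrac12 N^3(V_Nf_{N,\ell})(x-y)-N^3\lambda_\ell f_{N,\ell}(x-y)$, with $N^3\lambda_\ell\le C$ by \eqref{eq:lambdaell} and $0\le f_{N,\ell}\le 1$, and then keep $N^3V_Nf_{N,\ell}$ in the form of a convolution operator of $L^2\to L^2$ norm $\le\hat V(0)$, using it only in operator norm and only ever composed with a further Hilbert--Schmidt factor (an $\eta_t$ or a $\wt\ph_t$-convolution), never on its own --- which is precisely the reason $\|\Delta_j\eta_t\|$ has no good Hilbert--Schmidt bound while $\|\Delta_j p_t\|,\|\Delta_j r_t\|,\|\Delta_j\mu_t\|\le C\expt$.
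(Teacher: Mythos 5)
Your three--layer strategy ($k_t\to\eta_t,\mu_t\to\gamma_t,\sigma_t,p_t,r_t$), the trichotomy of pointwise / Hilbert--Schmidt / operator-norm controls with the singular object always paired to a smoothing factor, and in particular the use of the scattering identity \eqref{eq:scatlN} to rewrite $N\Delta w_\ell(N\cdot)$ as $N^3\lambda_\ell f_{N,\ell}-\tfrac12 N^3 V_N f_{N,\ell}$ when treating second derivatives, is exactly the argument the paper relies on (it defers the proof to adaptations of \cite[Lemma 3.3, 3.4]{BDS}, \cite[Lemma 4.3]{BS} and \cite[Appendix C]{BCS}). The key structural observations---$\mu_t$ carries no $N$-singularity because the projection convolves $Nw_\ell(N\cdot)$ against $|\wt\ph_t|^2$; $p_t$ and $r_t$ always contain two copies of $\eta_t$, and a composition $\eta_t\bar\eta_t$ smears the $|x-y|^{-1}$ singularity; time derivatives only swap $\wt\ph_t\mapsto\dot{\wt\ph}_t$---are all the right ones, and the separation between the good $\|\Delta_j p_t\|, \|\Delta_j r_t\|, \|\Delta_j\mu_t\|$ bounds and the absence of a corresponding one for $\Delta_j\eta_t$ is correctly diagnosed.

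One inaccuracy: you attribute the need for $\ph\in H^6$ (via $\|\dot{\wt\ph}_t\|_{H^4}$) to this lemma, but the lemma assumes only $\ph\in H^4$ and the constants depend on $\|\ph\|_{H^4}$; with that hypothesis, Prop.~\ref{prop:propertiesphit}~iii) already gives $\|\dot{\wt\ph}_t\|_{H^2},\|\ddot{\wt\ph}_t\|_{H^2}\le Ce^{c|t|}$, which suffices for all the time-derivative bounds in parts iv)--vi). The $H^6$ assumption (and the $\|\dot{\wt\ph}_t\|_{H^4}$ bound) is used elsewhere in the paper, specifically in the proof of Prop.~\ref{pr:boundsquadraticgenerator} for the $\nabla w_{N,\ell}$ contribution to the generator $\cJ_{2,N}(t)$, not in Lemma~\ref{lm:propeta}. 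This is a misattribution rather than a flaw in the estimates themselves. A second place where your sketch is thin, though not wrong in spirit, is the uniform-in-$t$ (no $e^{c|t|}$) bound $\|\nabla_j p_t\|,\|\nabla_j r_t\|\le C$: since $\|\wt\ph_t\|_\infty$ and $\|\wt\ph_t\|_{H^2}$ grow like $e^{c|t|}$, these must be derived using only the conserved $H^1$ data (e.g. $\wt\ph_t\in L^6$ and the operator-norm boundedness of the singular convolution $N^2\nabla w_\ell(N\cdot)$), and the ``locally integrable pointwise bound'' route you mention is the right device but needs to be paired with a H\"older argument that avoids $L^\infty$ of $\wt\ph_t$; as written the step is plausible but would need to be spelled out.
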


While the kernels $\eta_t, \gamma_t, \sigma_t, p_t, r_t$ considered in the last lemma are used in the definition of the fluctuation dynamics $\cU_N$ and of its quadratic approximation $\cU_{2,N}$, the limiting quadratic evolution $\cU_{2,\infty}$ is defined in (\ref{eq:limitingdynamics}) in terms of limiting kernels $\eta_{\infty,t} , \gamma_{\infty,t}, \sigma_{\infty ,t}$, $p_{\infty,t}, r_{\infty,t}$. To show the well-posedness of $\cU_{2,\infty}$ and to compare it with $\cU_{2,N}$, we need some bound on these limiting objects. 
\begin{lemma}\label{lm:bds-infty} 
Let $w_{\infty, \ell}, \eta_{\infty,t}, \gamma_{\infty,t}, \sigma_{\infty,t}, p_{\infty,t} , r_{\infty,t}$ be defined as in \eqref{eq:defwio}-\eqref{eq:ginfty}. 
\begin{itemize}
\item[i)] The limiting kernels satisfy
\begin{equation*}\label{eq:normbounds}
\begin{split}
    &\norm{\eta_{t, \io}},
    \norm{\s_{ \io}},
    \norm{p_{\io}},
    \norm{r_{ \io}}\leq C\\
    &\norm{\D_j p_{ \io}},
    \norm{\D_j r_{\io}},
    \norm{\D_j \mu_{t, \io}}\leq C\expt\\
    &\norm{\dot\eta_{t, \io}},
    \norm{\dot\s_{ \io}},
    \norm{\dot p_{ \io}},
    \norm{\dot r_{ \io}}\leq C\expt\\
    &\norm{\D_j \dot p_{ \io}},
    \norm{\D_j \dot r_{ \io}},
    \norm{\D_j \dot \mu_{t, \io}}\leq C\expt\\
    &\norm{\ddot\eta_{t,\io}}\leq C\expt
\end{split}
\end{equation*}
for almost all $x\in\RRR^3$ and $j=1, 2$.
\item[ii)] Let $R > 0$ such that $V(x) = 0$ for $|x| > R$. Then, we have 
\begin{equation}
	\label{eq:differencewwioglobal}
\begin{split} 
	\abs{N \wln(x) - w_{\io,\ell} (x)} &\leq 
	\begin{cases}
	\frac{C}{\abs{x}} \quad &0 \leq \abs{x} \leq R/N \quad\\
	 \frac{C}{N\abs{x}} \quad &R/N \leq \abs{x} \leq \ell\\
	0 & \ell\leq \abs{x}
	\end{cases} \\
	\abs{N\nabla \wln(x) - \nabla w_{\io,\ell} (x)} &\leq 
	\begin{cases}
	\frac{C}{\abs{x}^2} \quad &0 \leq \abs{x} \leq R/N \quad\\
	 \frac{C}{N\abs{x}^2} \quad &R/N \leq \abs{x} \leq \ell\\
	0 & \ell\leq \abs{x}
	\end{cases}
\end{split} 
\end{equation}
\item[iii)] We have 
\begin{align}
  &\norm{\eta_t-\eta_{t, \io}},
  \norm{\dot\eta_t-\dot\eta_{t, \io}},
    \norm{\s-\s_{\io}},
    \norm{p-p_{ \io}},
    \norm{r-r_{ \io}}\leq \frac{C\expexpt}{\sqrt N}\nonumber\\
  \label{eq:differencelaplace}  &\norm{\D_2 p- \D_2 p_{ \io}},
    \norm{\D_2 r- \D_2 r_{ \io}},
    \norm{\D_2 \mu_t - \D_2 \mu_{t, \io}}\leq \frac{C\expexpt}{\sqrt N}\nonumber.
\end{align}
\end{itemize}
\end{lemma}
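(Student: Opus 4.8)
The idea is to reduce all three parts to two inputs: the explicit formula \eqref{eq:defwio} for the limiting profile $w_{\io,\ell}$, and the regularity and convergence estimates for $\ph_t$ and $\ptt$ recorded in Lemma~\ref{lm:propeta} and Prop.~\ref{prop:propertiesphit}. For part i), I would first note the elementary consequences of \eqref{eq:defwio} that mirror \eqref{3.0.scbounds1}: $0\le w_{\io,\ell}(x)\le C|x|^{-1}\chi(|x|\le\ell)$ and $|\nabla w_{\io,\ell}(x)|\le C|x|^{-2}\chi(|x|\le\ell)$, so that $w_{\io,\ell}\in L^1(\bR^3)\cap L^2(\bR^3)$ uniformly in $\ell$; and — crucially — that on $\{|x|\le\ell\}$ one has the limiting scattering identity $-\Delta w_{\io,\ell}=-3\aa\,\ell^{-3}\,\chi(|x|\le\ell)-4\pi\aa\,\delta$, which plays the role of \eqref{eq:scatlN}. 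Since $k_{t,\io}(x;y)=-w_{\io,\ell}(x-y)\ph_t(x)\ph_t(y)$, the bound $\|k_{t,\io}\|\le C$ (uniform in $t$) follows from Hardy's inequality together with $\|\ph_t\|_{H^1}\le C$, exactly as in Lemma~\ref{lm:propeta} i). The bounds on $\eta_{t,\io}=(q_t\otimes q_t)k_{t,\io}$, on $\mu_{t,\io}=(q_t\otimes q_t-\mathbbm{1})k_{t,\io}$ and on $\s_\io,p_\io,r_\io$ (defined by the series \eqref{eq:defcoshsinh} with $\eta_t$ replaced by $\eta_{t,\io}$), together with their Laplacians and time derivatives, then follow the same pattern: every term in the series for $p_\io,r_\io$ carries at least two factors of $\eta_{t,\io}$, so a Laplacian falling on the rightmost kernel either contracts the $\delta$ against a neighbouring $\eta_{t,\io}$ — producing a term like $\eta_{t,\io}(x;y)\,\overline{\ph_t(y)}^{\,2}\in L^2$ — or hits the bounded density $-3\aa\ell^{-3}\chi$ or a factor $\Delta\ph_t$, while $\mu_{t,\io}$ is regularised by the projections; the time derivatives bring in $\dot\ph_t,\ddot\ph_t$, controlled in $H^4,H^2$ by Prop.~\ref{prop:propertiesphit} iii). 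The $e^{c|t|}$ factors arise only through $\|\ph_t\|_{H^m},\|\dot\ph_t\|_{H^k}\le Ce^{c|t|}$.

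For part ii), the two functions $N\wln$ and $w_{\io,\ell}$ vanish together with their radial derivative at $|x|=\ell$ — for $N\wln$ this is the Neumann normalisation of $f_\ell$, for $w_{\io,\ell}$ a direct check from \eqref{eq:defwio}. On the annulus $R/N\le|x|\le\ell$, where $V(N\cdot)$ vanishes, both are radial and satisfy, by \eqref{eq:scatlN} resp.\ \eqref{eq:defwio}, $-\Delta(N\wln)=-N^3\l_\ell(1-\wln)$ and $-\Delta w_{\io,\ell}=-3\aa\,\ell^{-3}$. Subtracting, bounding $|3\aa\ell^{-3}-N^3\l_\ell|\le C(N\ell^{4})^{-1}$ via \eqref{eq:lambdaell} and $|\wln(x)|\le C(N|x|)^{-1}$ via \eqref{3.0.scbounds1}, gives $|{-}\Delta g(x)|\le C(N|x|)^{-1}$ on this annulus for $g:=N\wln-w_{\io,\ell}$. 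Integrating the radial identity $(r^2g')'=-r^2(-\Delta g)$ outward from $r=\ell$ (where $g=g'=0$) once yields $|\nabla g(x)|\le C(N|x|^2)^{-1}$ and a second time $|g(x)|\le C(N|x|)^{-1}$ on the annulus. On $\{|x|\le R/N\}$ one argues directly: $N|\wln(x)|=N|w_\ell(Nx)|\le CN/(N|x|+1)\le C|x|^{-1}$ and $w_{\io,\ell}(x)\le\aa|x|^{-1}$, together with the analogous $|x|^{-2}$ bounds for the gradients, give the stated $C|x|^{-1}$ resp.\ $C|x|^{-2}$.

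For part iii), I would split $k_t-k_{t,\io}=-[N\wln-w_{\io,\ell}](x-y)\,\ptx\pty-w_{\io,\ell}(x-y)\,[\ptx\pty-\ph_t(x)\ph_t(y)]$. The first summand is estimated in $L^2(dx\,dy)$ using part ii): the region $|x-y|\le R/N$ contributes $\|\ptt\|_\infty^4\int_{|z|\le R/N}|z|^{-2}\,dz\le CN^{-1}e^{c|t|}$ and the region $R/N\le|x-y|\le\ell$ contributes $CN^{-2}\|\ptt\|_\infty^4\int_{|z|\le\ell}|z|^{-2}\,dz$, so its $L^2$-norm is $O(N^{-1/2}e^{c|t|})$. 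The second summand is bounded by $\|w_{\io,\ell}\|_2\,\|\ptt\|_\infty\,\|\ptt-\ph_t\|_2=O(N^{-1}e^{ce^{c|t|}})$ by Prop.~\ref{prop:propertiesphit} iv). Passing from $k_t-k_{t,\io}$ to $\eta_t-\eta_{t,\io}$ costs a further term $[(\tilde q_t\otimes\tilde q_t)-(q_t\otimes q_t)]k_{t,\io}$, bounded by $C\|\ptt-\ph_t\|_2\,\|k_{t,\io}\|_2=O(N^{-1}e^{ce^{c|t|}})$ since $\|\tilde q_t-q_t\|_{\mathrm{op}}\le C\|\ptt-\ph_t\|_2$; altogether $\|\eta_t-\eta_{t,\io}\|\le Ce^{ce^{c|t|}}N^{-1/2}$. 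The differences $\s-\s_\io$, $p-p_\io$, $r-r_\io$ and the $\Delta_2$-versions then follow by telescoping the series \eqref{eq:defcoshsinh}, each term carrying exactly one factor $\eta_t-\eta_{t,\io}$ (or its Laplacian) and otherwise only bounded factors, since $\|\eta_t\|,\|\eta_{t,\io}\|\le C$; the bound on $\dot\eta_t-\dot\eta_{t,\io}$ uses in addition $\|\dot\ptt-\dot\ph_t\|_{H^2}\le CN^{-1}e^{ce^{c|t|}}$ from Prop.~\ref{prop:propertiesphit} iv).

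The main obstacle is part ii): one must carefully separate the $\delta$-singularity of $-\Delta w_{\io,\ell}$ — located exactly where $V(N\cdot)$ is supported, hence never entering the comparison annulus — from the smooth part, and then carry the two successive radial integrations while tracking both the $1/N$ gain and the small coefficient discrepancy $3\aa\ell^{-3}-N^3\l_\ell=O((N\ell)^{-1}\ell^{-3})$ from \eqref{eq:lambdaell}. The remaining arguments are routine bookkeeping of Sobolev norms, the double exponentials in $t$ entering only through $\|\ptt-\ph_t\|$ and $\|\dot\ptt-\dot\ph_t\|$.
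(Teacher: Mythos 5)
Your proof is correct, and parts i) and iii) follow essentially the same path the paper takes (properties of the explicit profile $w_{\io,\ell}$ transferred to the kernels via the same $H^1/H^4$ bounds on $\ph_t,\ptt$ and the telescoping of the $\cosh/\sinh$ series, with the $N^{-1/2}$ loss coming from ii) and the $N^{-1}$ loss from $\|\ptt-\ph_t\|$).

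The genuine difference is in part ii). The paper instead invokes the \emph{closed-form} radial solution of the scattering ODE on the annulus $R/N\le|x|\le\ell$,
\[
w_{N,\ell}(x) = 1-\frac{\sin\!\big(\sqrt{\l_{N,\ell}}\,(|x|-\ell)\big)}{\sqrt{\l_{N,\ell}}\,|x|}-\frac{\ell}{|x|}\cos\!\big(\sqrt{\l_{N,\ell}}\,(|x|-\ell)\big)\,,
\]
and Taylor-expands $\sin$ and $\cos$ using $\l_{N,\ell}\ell^2\simeq (N\ell)^{-1}\ll1$, subtracting the explicit $w_{\io,\ell}$ termwise. You instead set $g=N\wln-w_{\io,\ell}$, observe that the Neumann normalisation of $f_\ell$ and the explicit form of $w_{\io,\ell}$ give $g(\ell)=g'(\ell)=0$, bound the right-hand side $-\Delta g=(3\aa\ell^{-3}-N^3\l_\ell)+N^3\l_\ell\,\wln$ using \eqref{eq:lambdaell} and \eqref{3.0.scbounds1}, and integrate the radial ODE $(r^2g')'=r^2\Delta g$ twice inward from $r=\ell$. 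Both give the stated $C(N|x|)^{-1}$ and $C(N|x|^2)^{-1}$ bounds on the annulus. Your argument is a bit more robust — it does not require knowing the closed-form solution, only the ODE, the Neumann boundary data and the a priori bound $\wln\le C(N|x|)^{-1}$ — while the paper's method yields the sharper explicit asymptotics of $N\wln$ which are sometimes useful elsewhere. (Minor: the sign of the $\delta$-contribution to $\Delta w_{\io,\ell}$ should be $-4\pi\aa\delta$, not $+4\pi\aa\delta$; this slip is also present in the paper's own statement and does not affect the estimates, since the $\delta$ sits at the origin, outside the comparison annulus.)
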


\begin{proof}
The bounds in i) follow from (\ref{eq:defwio}), which implies that $|w_{\infty,\ell} (x)| \leq C/|x|$, $|\nabla w_{\infty, \ell} (x)| \leq C / |x|^2$ and $\Delta w_{\infty, \ell} (x) = 4 \pi \frak{a} \delta (x) + 3\frak{a} \chi (|x| \leq \ell) / \ell^3$. 

The bounds (\ref{eq:differencewwioglobal}) are trivial for $|x| > \ell$. In the region $|x| < R/N$, they follow by combining the estimates for $w_{\infty,\ell}, \nabla w_{\infty, \ell}$ with (\ref{3.0.scbounds1}). In the region 
$R/N < |x| < \ell$, we apply the identity (see \cite{BBCS4,BBCS3}) 
\[ 
w_{N,\ell}(x) 
= 1- \frac{\sin(\sqrt{\l_{N,\ell}} \, (|x|-\ell))}{\sqrt{\l_{N,\ell}} \, |x|} -\frac{\ell}{|x|}\cos(\sqrt{\l_{N,\ell}} \, (|x|-\ell))
\]
and the corresponding expression for $\nabla w_{N,\ell}$. Using $\lambda_{N,\ell}  \ell^2 \simeq 1/ (N\ell) \ll 1$ and Taylor expanding $\sin$ and $\cos$, we obtain (\ref{eq:differencewwioglobal}).

The bounds in iii) follow from ii) and from (\ref{eq:diffdynamics}) (because the limiting kernels $\eta_{t,\infty}, \dot{\eta}_{t,\infty}, \sigma_\infty , p_\infty, r_\infty, \mu_{t,\infty}$ are defined like $\eta_t, \dot{\eta}_t, \sigma, p , r, \mu_t$, with $Nw_{N,\ell}$ and $\wt{\ph}_t$ replaced by $w_{\infty, \ell}$ and $\ph_t$. We leave the details to the reader.
\end{proof}

Finally, in the next lemma we collect some estimates for the kernel $\nu_t$ introduced in (\ref{eq:defnu}) entering the definition of the cubic phase $A_t$ in (\ref{eq:defA}). The proof is a straightforward adaptation of the proof of Lemma \ref{lm:propeta} above (with the parameter $\ell$ replaced now by $m$). 

\begin{lemma}\label{lm:propnu}
	Under the same conditions as in Lemma \ref{lm:propeta}, the kernel $\n_t$ defined as in Eq. \eqref{eq:defnu} satisfies the following bounds: 
	\[
	\norm{\nu_t}\leq C\sqrt{m}\,, \quad
	\sup_x \norm{\nu_{t,x}}\leq C \expt \sqrt{m}\,, \quad
	\norm{\nu_{t,y}}\leq C \abs{\pty} \sqrt{m}\,,
	\]
	and the pointwise bound
	\[
	\abs{\nu_t(x,y)}\leq \frac{C \abs{\pty}}{\abs{x-y}+N^{-1}}.
	\]
	Furthermore, the time derivative satisfies similar estimates:
	\[
	\norm{\dot{\nu}_t}\leq C\expt \sqrt{m}\,, \quad
	\sup_x \norm{\dot{\nu}_{t,x}}\leq C \expt \sqrt{m}\,, \quad
	\norm{\dot{\nu}_{t,y}}\leq C \abs{\dpty} \sqrt{m}\,.
	\]
	
\end{lemma}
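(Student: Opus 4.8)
The plan is to reduce all seven estimates to two elementary properties of the rescaled Neumann ground state $w_{N,m}(z) = w_m(Nz)$. First, since $w_m$ is supported in the ball $\{|x| \leq Nm\}$, we have $w_{N,m}(z) = 0$ whenever $|z| > m$. Second, by Lemma \ref{3.0.sceqlemma} applied with the parameter $\ell$ replaced by $m$ (all its conclusions hold for any $m \in (0;1/2)$, hence for $m = N^{-\a}$, uniformly in $N$), the pointwise bound \eqref{3.0.scbounds1} gives $w_m(Nz) \leq C/(N|z|+1)$, hence
\[ N w_{N,m}(z) \leq \frac{CN}{N|z|+1} = \frac{C}{|z| + N^{-1}} \leq \frac{C}{|z|}\,. \]
Because $\nu_t(x,y) = - N w_{N,m}(x-y)\, \pty$ is a bare product of a translation-invariant kernel with a one-body factor — with neither the orthogonal projections $(\wt q_t \otimes \wt q_t)$ nor the $\cosh/\sinh$ series that complicate $\eta_t$ in Lemma \ref{lm:propeta} — the adaptation announced is in fact a simplification of that proof.

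The pointwise estimate $|\nu_t(x,y)| \leq C|\pty|/(|x-y|+N^{-1})$ is now immediate from the second property. All the $L^2$-type bounds follow from the single computation
\[ \int_{\RRR^3} \big(N w_{N,m}(z)\big)^2 \, dz \;\leq\; C \int_{|z| \leq m} \frac{dz}{|z|^2} \;\leq\; C m\,, \]
which uses the support property in the first inequality and the pointwise decay in the second. Indeed, $\|\nu_t\|^2 = \int |N w_{N,m}(x-y)|^2\,|\pty|^2\, dx\,dy \leq Cm \,\|\ptt\|_2^2 = Cm$, since $\|\ptt\|_2 = 1$ by Prop. \ref{prop:propertiesphit} i); this gives the first bound. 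For $\|\nu_{t,y}\|$ one fixes the variable carrying the factor $\pty$, so $\|\nu_{t,y}\|^2 = |\pty|^2 \int |N w_{N,m}(x-y)|^2\, dx \leq C |\pty|^2 m$. For $\sup_x \|\nu_{t,x}\|$ the remaining integral runs over the variable on which $\ptt$ lives, so one pulls out $\|\ptt\|_\infty^2$, which is $\leq C\expt$ by the Sobolev embedding $H^2(\RRR^3) \hookrightarrow L^\infty(\RRR^3)$ and Prop. \ref{prop:propertiesphit} ii), yielding $\sup_x \|\nu_{t,x}\|^2 \leq C\expt\, m$.

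Finally, the key observation for the time derivatives is that $w_m$ — and hence $w_{N,m}$ — does not depend on $t$, so
\[ \dot\nu_t(x,y) = - N w_{N,m}(x-y)\, \dpty\,, \]
which has exactly the structure of $\nu_t$ with $\pty$ replaced by $\dpty$. Repeating the three arguments above verbatim, now with $\|\dptt\|_2, \|\dptt\|_\infty \leq C\expt$ from Prop. \ref{prop:propertiesphit} iii) (available since $\ph \in H^4(\RRR^3)$ by the hypotheses inherited from Lemma \ref{lm:propeta}), gives the last three bounds. There is no real obstacle in this lemma; the only point demanding a little attention is the bookkeeping of the two inequivalent ways of contracting the non-symmetric kernel $\nu_t$ — one keeping the pointwise factor $|\pty|$, the other replacing it by the supremum norm $\|\ptt\|_\infty$ — which is precisely why the factor $\expt$ appears in the $\sup_x$-bounds but not in the bounds for $\|\nu_{t,y}\|$.
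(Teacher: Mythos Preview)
Your proof is correct and is precisely the ``straightforward adaptation of the proof of Lemma \ref{lm:propeta} (with the parameter $\ell$ replaced now by $m$)'' that the paper invokes in lieu of a written argument. In fact you make explicit the two ingredients --- the support condition $w_{N,m}(z)=0$ for $|z|>m$ and the pointwise decay $Nw_{N,m}(z)\leq C/(|z|+N^{-1})$ from Lemma \ref{3.0.sceqlemma} --- and correctly note that the absence of projections and of the $\cosh/\sinh$ series makes this case strictly simpler than $\eta_t$.
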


\end{document}